\def\submit{0}   
	\newcommand{\full}[1]{}
	\newcommand{\confer}[1]{#1}
	\newcommand{\full}[1]{#1}
	\newcommand{\confer}[1]{}
\newtheorem{claim}[theorem]{Claim}
\newtheorem{prop}[theorem]{Proposition}
\newcommand{\cA}{{\cal A}}
\newcommand{\cB}{{\cal B}}
\newcommand{\cC}{{\cal C}}
\newcommand{\cD}{{\cal D}}
\newcommand{\cE}{{\cal E}}
\newcommand{\cF}{{\cal F}}
\newcommand{\cH}{{\cal H}}
\newcommand{\cL}{{\cal L}}
\newcommand{\cP}{{\cal P}}
\newcommand{\cR}{{\cal R}}
\newcommand{\cS}{{\cal S}}
\newcommand{\cT}{{\cal T}}
\newcommand{\cU}{{\cal U}}
\newcommand{\cV}{{\cal V}}
\newcommand{\cW}{{\cal W}}
\newcommand{\Sec}[1]{\hyperref[sec:#1]{\S\ref*{sec:#1}}} 
\newcommand{\Eqn}[1]{\hyperref[eq:#1]{(\ref*{eq:#1})}} 
\newcommand{\Fig}[1]{\hyperref[fig:#1]{Fig.\,\ref*{fig:#1}}} 
\newcommand{\Tab}[1]{\hyperref[tab:#1]{Tab.\,\ref*{tab:#1}}} 
\newcommand{\Thm}[1]{\hyperref[thm:#1]{Thm.\,\ref*{thm:#1}}} 
\newcommand{\Lem}[1]{\hyperref[lem:#1]{Lem.\,\ref*{lem:#1}}} 
\newcommand{\Prop}[1]{\hyperref[prop:#1]{Prop.~\ref*{prop:#1}}} 
\newcommand{\Cor}[1]{\hyperref[cor:#1]{Cor.~\ref*{cor:#1}}} 
\newcommand{\Def}[1]{\hyperref[def:#1]{Defn.~\ref*{def:#1}}} 
\newcommand{\Alg}[1]{\hyperref[alg:#1]{Alg.\,\ref*{alg:#1}}} 
\newcommand{\Ex}[1]{\hyperref[ex:#1]{Ex.~\ref*{ex:#1}}} 
\newcommand{\Clm}[1]{\hyperref[clm:#1]{Claim~\ref*{clm:#1}}} 
\newcommand{\Step}[1]{\hyperref[step:#1]{Step~\ref*{step:#1}}} 
\newcommand{\superimpose}[2]{%
  {\ooalign{$#1\@firstoftwo#2$\cr\hfil$#1\@secondoftwo#2$\hfil\cr}}}
\newcommand{\searrowcirc}{\mathpalette\superimpose{{\searrow}{*}}}
\newcommand{\RR}{\mathbb{R}}
\newcommand{\bx}{Box}
\newcommand{\lis}{{\tt lis}}
\newcommand{\est}{{\tt est}}
\newcommand{\loss}{{\tt loss}}
\newcommand{\alis}{{\tt alis}}
\newcommand{\aloss}{{\tt aloss}}
\newcommand{\inside}{{\tt in}}
\newcommand{\outside}{{\tt out}}
\newcommand{\phase}{{\tt ph}}
\newcommand{\In}{In}
\newcommand{\Out}{Out}
\newcommand{\terminalbox}{{\rm{\bf TerminalBox}}}
\newcommand{\Good}{Good}
\newcommand{\deltapar}{\overline{\delta}}
\newcommand{\deltaval}{1/\errorcont}
\newcommand{\err}{\xi}
\newcommand{\gridprecval}{\frac{1}{(C_2\errorcont)^4}}
\newcommand{\kappaest}{\widehat{\kappa}}
\newcommand{\suff}{{21}}
\newcommand{\sszval}{100\errorcont^3}
\newcommand{\taint}{\eta}
\newcommand{\taintval}{1/10\errorcont}
\newcommand{\taintprob}{\phi}
\newcommand{\taupar}{\overline{\tau}}
\newcommand{\pr}{{\rm Pr}}
\newcommand{\poly}{\textrm{poly}}
\newcommand{\eps}{\varepsilon}
\newcommand{\EX}{\hbox{\bf E}}
\newcommand{\prob}{{\rm Prob}}
\newcommand{\strip}[2]{#1|#2}
\newcommand{\width}{{\bf {\rm  w}}}
\newcommand{\errorprob}{\eps}
\newcommand{\griderror}{\xi}
\newcommand{\gridprec}{\alpha}
\newcommand{\errorcont}{\Psi}
\newcommand{\neterror}{\xi}
\newcommand{\netprec}{\alpha}
\newcommand{\errorterm}{\zeta}
\newcommand{\algcomment}[1]{$\backslash * \backslash$ {#1}}
\newcommand{\ssz}{\sigma}
\newcommand{\classify}{{\rm {\bf Classify}}}
\newcommand{\basicmain}{{\rm {\bf BasicMain}}}
\newcommand{\improvedmain}{{\rm {\bf ImprovedMain}}}
\newcommand{\approxlis}{{\rm {\bf ApproxLIS}}}
\newcommand{\critbox}{{\rm {\bf CriticalBox}}}
\newcommand{\findsplitter}{{\rm {\bf FindSplitter}}}
\newcommand{\splitterfound}{splitter\_found}
\newcommand{\buildgrid}{{\rm {\bf BuildGrid}}}
\newcommand{\buildnet}{{\rm {\bf BuildNet}}}
\newcommand{\lischain}{{\rm {\bf LIS.Chain}}}
\newcommand{\gridchain}{{\rm {\bf GridChain}}}
\newcommand{\point}[2]{\langle #1,#2 \rangle}
\newcommand{\maxval}{{\tt valbound}}
\newcommand{\liserror}{\nu}
\title{Estimating the longest increasing sequence in polylogarithmic time}
\date{}
\begin{document} 


\author{M. SAKS
\affil{Rutgers University} 
C. SESHADHRI
\affil{Sandia National Laboratories, Livermore} 
}

\begin{abstract}
Finding the length of the longest increasing subsequence (LIS) is
a classic algorithmic problem. Let $n$ denote the size
of the array. Simple $O(n\log n)$ algorithms
are known for this problem.  We develop a polylogarithmic time
randomized algorithm that for any constant $\delta > 0$,
estimates the length of the LIS of an array to within an additive error of $\delta n$.
More precisely, the running time of the algorithm
is $(\log n)^c (1/\delta)^{O(1/\delta)}$ where the exponent $c$
is independent of $\delta$.
Previously, the
best known polylogarithmic time algorithms could only
achieve an additive $n/2$ approximation.  With a suitable choice of parameters,
our algorithm also gives, for any fixed $\tau>0$,  a multiplicative
$(1+\tau)$-approximation to the distance to monotonicity $\varepsilon_f$ (the fraction of entries not in the LIS),
whose running time is polynomial in $\log(n)$ and $1/varepsilon_f$.  The best previously known algorithm
could only guarantee an approximation within a factor (arbitrarily close to) 2.
%
\end{abstract}

\category{F.2.2}{Nonnumerical Algorithms and Problems}{Computations on discrete structures}[]
\category{G.2}{Discrete Mathematics}{Combinatorics}[Combinatorial Algorithms]
\terms{Algorithms, Theory}

\keywords{Longest increasing subsequence, property testing, sublinear algorithms, monotonicity}


\begin{bottomstuff}
A preliminary version of this result appeared as \cite{SaSe10}.

This work was supported in part by NSF under grants CCF 0832787 and CCF 1218711.
This work was mainly performed when C. Seshadhri was at IBM Almaden.
It was also
performed at Sandia National
Laboratories, a multiprogram laboratory operated by Sandia
Corporation, a wholly owned subsidiary of Lockheed Martin Corporation,
for the United States Department of Energy's National Nuclear Security
Administration under contract DE-AC04-94AL85000.
\end{bottomstuff}

\maketitle

\section{Introduction}

Finding the length of longest increasing subsequence (LIS) of an array
is a classic algorithmic problem. We are given
a function $f:[n] \rightarrow \mathbb{R}$, which we think of as an array.
An \emph{increasing subsequence} of this array is a sequence
of indices $i_1 < i_2 < \cdots < i_k$ such that 
$f(i_1) \leq f(i_2) \leq \cdots \leq f(i_k)$. An LIS is an 
increasing subsequence
of maximum size. The LIS problem is a standard elementary application
of dynamic programming used in basic algorithms textbooks (e.g. ~\cite{CLRS}).
The obvious dynamic program yields
an $O(n^2)$ algorithm. Fredman~\cite{F75} gave a clever way of maintaining the dynamic program,
leading to an $O(n\log n)$ algorithm.
Aldous and Diaconis~\cite{AD99} use the elegant algorithm of 
\emph{patience sorting} to find
the LIS.  

The size of the \emph{complement} of the LIS is called
the \emph{distance to monotonicity}, and  is equal to the minimum number of values that need to be changed to make
$f$ monotonically nonde
creasing.  We write $\lis_f$ for the length of the LIS
and set $\loss_f=n-\lis_f$. The distance to monotonicity is conventionally defined
as $\eps_f = \loss_f/n$.
For exact algorithms, of course,
finding $\lis_f$ is equivalent to finding $\loss_f$. Approximating these quantities can be 
very different problems.

In recent years, motivated by the increasing ubiquity of massive
sets of data, there has been considerable attention given to
the study of approximate solutions of computational problems on huge
data sets by judicious sampling of the input.
In the context of property testing it was
shown in~\cite{EKK+00,DGLRRS99,FischerSurvey,ACCL1} 
that for any $\eps > 0$,
$O(\eps^{-1}\log n)$ random samples are necessary and sufficient to
distinguish the case that $f$ is increasing ($\loss_f=0$)
from the case that $\loss_f \geq \eps n$. 

In
of~\cite{PRR04,ACCL1}, algorithms for estimating  
distance 
to monotonicity were given.  Both of these algorithms  gave a $2+o(1)$-approximation to $\loss_f$  and
had running time $(\loss_f (\log n)/n)^{O(1)}$, and were the best such algorithms known prior to the present work.
These algorithms 
provide little information about the LIS if $\lis_f$ is between $0$ and $n/2$. 
In this case $\loss_f \geq n/2$, so a 2-approximation to $\loss_f$ may produce the (trivial) estimate $n$ to $\loss_f$.
Indeed, there are simple examples where $\loss_f=n/2$
and the algorithms of ~\cite{PRR04,ACCL1} do exactly this.

Note that for small $\varepsilon>0$,
the situation that $\loss_f=\varepsilon n$ and $\lis_f=(1-\varepsilon) n$
is qualitatively different than the situation $\loss_f=(1-\varepsilon)n$ and $\lis_f = \varepsilon n$.
In the former case the array is ``nearly"
increasing and the known algorithms exploit this structure. 


In this paper, we show how to get $\delta n$-additive approximations
to $\lis_f$ in time polylogarithmic in $n$ for any $\delta >0$.
With high probability, our algorithm outputs an estimate $\est$ such that $|\est - \lis_f| \leq \delta n$.
This is equivalent to getting an additive $\delta n$-approximation
for $\loss_f$.
The existing multiplicative 2-approximation algorithm for $\loss_f$ 
gives the rather weak consequence of an additive $n/2$-approximation for $\lis_f$.
Prior to the present paper, this was the best additive error guarantee that was known.  Here we prove:

\begin{theorem} \label{thm:main} Let $f$ be an array of size $n$
and $\lis_f$ the size of the LIS.
There is a randomized algorithm 
which takes as input an array $f$ and  parameter $\delta > 0$, and outputs
a number $\est$ such that
$|\est - \lis_f| \leq \delta n$ with probability at least 3/4. The running time
is $(1/\delta)^{O(1/\delta)} (\log n)^c$, for some
absolute constant $c$ independent of $n$ and $\delta$.
\end{theorem}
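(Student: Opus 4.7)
The plan is to design a recursive estimation procedure that operates on two-dimensional \emph{boxes} in index-value space. Identify each array entry with a point $(i,f(i))\in[n]\times\RR$; an increasing subsequence is then a monotone non-decreasing chain of such points, and $\lis_f$ is the length of the longest chain in the full box $B_0=[1,n]\times\RR$. The algorithm takes a box $B$, tries to estimate $\lis(B)$ (the length of the longest chain using only points inside $B$), and, when necessary, subdivides $B$ into a grid of $k\times k$ sub-boxes with $k=\poly(1/\delta)$.

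The key structural identity is that $\lis(B)$ decomposes along \emph{compatible chains} of sub-boxes. Call two grid cells compatible if one lies weakly northeast of the other; along any compatible chain $C=(B_1,\dots,B_t)$, the best chain of points threading all of them has length $\sum_j\lis(B_j)$. Hence
\[\lis(B)\;=\;\max_{\text{compatible chain }C}\ \sum_{B'\in C}\lis(B').\]
So once each $\lis(B')$ is estimated to additive error $\alpha$, a dynamic program on the $k\times k$ grid returns $\lis(B)$ to within additive $k\alpha$, at cost $\poly(k)=\poly(1/\delta)$. The sub-boxes supply at most $k^2$ recursive subproblems.

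A straight recursion of this kind produces a tree with $\poly(n)$ boxes, so the engine of the result is an aggressive termination rule implemented via random sampling. I would classify each sub-box as one of three types. It is \emph{sparse} if it contains so few points relative to its index range that a short random sample effectively reads them all and computes $\lis(B')$ exactly; it is \emph{disposable} if $\lis(B')$ is so small that treating it as zero spends only a tiny part of the $\delta n$ error budget; otherwise it is \emph{heavy} and we recurse. A short random sample suffices to make the classification with high probability, since the thresholds separating the regimes are $\poly(\delta)$-wide. A charging argument then shows that the number of heavy boxes along any root-to-leaf path is $\poly(1/\delta)$ and the recursion depth is $O(1/\delta)$, yielding a tree of size $(1/\delta)^{O(1/\delta)}\poly(\log n)$.

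The main obstacle will be the error accounting. Per-box additive errors accumulate additively along a chain (multiplied by chain length $\poly(1/\delta)$) and must be compounded across the $O(1/\delta)$ recursion levels; shrinking the per-box error inflates the sample complexity, while enlarging it blows the $\delta n$ budget. A level-indexed sequence of error parameters $\varepsilon_0,\varepsilon_1,\ldots$ (exactly the style of parameters hinted at by the macros in the preamble) must be chosen so that the cumulative effect is $\le\delta n$ while the per-level sampling cost stays polylogarithmic; balancing these across $O(1/\delta)$ levels is what forces the $(1/\delta)^{O(1/\delta)}$ factor. A secondary difficulty is that the value axis is continuous and the grid decomposition needs a discrete partition of values: this partition is built from a small random sample approximating a $1/k$-net of the value distribution inside $B$, which introduces a further $O(\delta n)$ additive error that the analysis absorbs into the same budget.
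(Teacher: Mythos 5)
Your proposal captures the geometric framing that the paper uses --- points in index--value space, boxes, box chains, a grid-based dynamic program over $\lis$ of sub-boxes, and a random value net to discretize the $y$-axis --- but it is missing the two ideas that make the paper's running time $(1/\delta)^{O(1/\delta)}(\log n)^c$ rather than polynomial in $n$, and as written the proposal does not establish the theorem.

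The fatal gap is your recursion depth bound. With a $k\times k$ grid and $k=\mathrm{poly}(1/\delta)$, each recursion level shrinks the index-width by a factor $k$, so reaching width-one boxes takes $\Theta(\log n/\log k)$ levels and the recursion tree has size $k^{\Theta(\log n/\log k)}=n^{\Theta(1)}$. Your classification into \emph{sparse}, \emph{disposable}, and \emph{heavy} does not escape this: take $f$ to be the identity permutation. At every scale the grid cells along the diagonal are dense (so not sparse) and each has LIS equal to its full width (so not disposable); all of them are heavy and you recurse to full depth $\Theta(\log_k n)$. Your assertion that ``a charging argument shows the recursion depth is $O(1/\delta)$'' is exactly where the proof breaks --- no such charging argument exists for this classification. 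You also cannot patch it by terminating when a random sample of a box looks increasing: the staircase function in the paper's Figure 1 has $\lis_f=n/K$ yet a polylogarithmic sample is completely increasing with high probability, so ``sample looks increasing'' is not evidence that $\lis(\cB)\approx\width(\cB)$.

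The paper closes this gap with two coupled mechanisms your proposal does not contain. First, a \emph{splitter} search ($\terminalbox$) that repeatedly finds a balanced, low-violation splitter point in the current box by sampling (cost $(\log n)^{O(1)}$ per split) and descends to the half containing the queried index; this drives the box down to a small ``terminal'' box over $\Theta(\log n)$ iterations \emph{without} spending any recursion levels of $\approxlis$. Second, the error is measured as $\tau\,\loss(\cB)+\delta\,\width(\cB)$ rather than the flat $\delta\,\width(\cB)$ your proposal implicitly uses, and the \emph{dichotomy lemma} shows that precisely when splitters fail (so $\terminalbox$ stops), $\loss(\cB)$ is a constant fraction of $|\cB\cap\cF|$, which is what lets the grid DP \emph{boost} $\tau$ from $4/t$ to $4/(t-1)$. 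The explicit recursion parameter $t$ then caps the depth at $t_{\max}=O(1/\delta)$ while the branching factor (sample size times grid size) is $\mathrm{poly}(1/\delta)$, giving the $(1/\delta)^{O(1/\delta)}$ tree size. Without the $\loss$-relative error measure, the splitter-driven narrowing, and the dichotomy that ties the two together, there is no mechanism in your proposal to cap the recursion depth at $O(1/\delta)$.
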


The algorithm of this theorem is obtained from a specific choice of parameters within
our main algorithm.  By using a different  choice of parameters, the same algorithm
provides a
multiplicative $(1+\tau)$-approximation to $\loss_f$ for any $\tau>0$ (improving
on the  $(2+\tau)$-approximations
of~\cite{PRR04,ACCL1}).

\begin{theorem} \label{thm:dist} Let $0<\tau < 1$ and $\eps_f=\loss_f/n$. 
There exists an algorithm with running time
$(1/(\eps_f\tau))^{O(1/\tau)}(\log n)^c$ (where $c$ is an absolute constant) that computes a real number $\eps$
such that with probability at least 3/4,  $\eps_f \in [\eps ,(1+\tau)\eps]$.
\end{theorem}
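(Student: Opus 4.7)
The plan is to reduce Theorem \ref{thm:dist} to the additive-approximation guarantee of Theorem \ref{thm:main} (or, more precisely, to the underlying main algorithm under a different choice of internal parameters) using a geometric search over the unknown magnitude of $\eps_f$. The base reduction is immediate: an additive $\delta n$-approximation with $\delta = \tau\eps_f/4$ gives an estimate of $\loss_f$ with multiplicative error at most $\tau$, which is what is required. Since $\eps_f$ is unknown, I would run the additive procedure iteratively with $\delta_j = 2^{-j}$ for $j=1,2,\dots$, stopping at the smallest $j$ for which $\est_j \ge (1 + 2/\tau)\delta_j n$, and returning $\eps = (\est_j - \delta_j n)/n$. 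A short calculation shows that this stopping rule forces $\eps \le \eps_f \le (1+\tau)\eps$, and that termination is guaranteed once $\delta_j \le \tau\eps_f/4$, i.e., after $O(\log(1/(\tau\eps_f)))$ iterations. Overall failure probability is handled by amplifying each additive-approximation call to, say, $1 - 1/\log^2 n$ and applying a union bound.

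The main obstacle is the running time. A black-box use of Theorem \ref{thm:main} at the terminal iteration with $\delta \approx \tau\eps_f$ would cost $(1/(\tau\eps_f))^{O(1/(\tau\eps_f))}(\log n)^c$, which is much worse than the claimed $(1/(\tau\eps_f))^{O(1/\tau)}(\log n)^c$ whenever $\eps_f$ is small. To close this gap I would revisit the parametrization of the main algorithm. The bound $(1/\delta)^{O(1/\delta)}$ in Theorem \ref{thm:main} arises because a single precision parameter $\delta$ plays two distinct roles: it controls the additive error introduced by the grid/net quantization (sitting in the polynomial base of the cost), and it controls the maximum depth of the recursive search (sitting in the exponent). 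For the multiplicative guarantee I would decouple these, setting the quantization precision to $\Theta(\tau\eps_f)$ but keeping the recursion depth at $\Theta(1/\tau)$. The technical heart of the proof is to verify that, with this coarser recursion, the errors accumulated at each recursive level scale like a $\tau$ fraction of $\loss_f$ rather than a $\tau$ fraction of $n$, so that the final additive loss is still $\le \tau\loss_f$. This is the step I expect to be most delicate, since it requires re-bounding every loss contribution in the analysis of the main algorithm in terms of $\loss_f$ instead of $n$.

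Once the parameters are calibrated in this way, the cost of the terminal iteration matches the claimed bound $(1/(\tau\eps_f))^{O(1/\tau)}(\log n)^c$, and the geometric decay of the $\delta_j$ makes the total cost across all iterations dominated by this last call. Correctness and the $3/4$ success probability then follow from those of the underlying main algorithm together with the union bound described above.
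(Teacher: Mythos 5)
Your high-level plan is correct and essentially the same as the paper's: fix the recursion-depth parameter at $\Theta(1/\tau)$, decouple it from the precision parameter, do a geometric search over the precision (halving it until the returned estimate is large compared to the current precision), and output the lower endpoint of the resulting confidence interval. This is precisely what the paper does in its proof of \Thm{dist}, using a geometric search over $\deltapar$ while $\taupar$ stays fixed at $q\tau$.

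Two comments on where your proposal diverges from (or underestimates) the actual mechanism. First, the ``reparametrization'' you correctly identify as necessary is already packaged as \Thm{alg2}: the algorithm $\improvedmain$ is a $(\taupar,\deltapar)$-approximation, meaning $|\approxlis(\cB)-\lis(\cB)|\le\taupar\loss(\cB)+\deltapar\width(\cB)$, with running time $(1/(\deltapar\taupar))^{O(1/\taupar)}(\log n)^c$. With that black box in hand, the proof of \Thm{dist} is just arithmetic plus the geometric search; nothing about the internal analysis of the main algorithm needs to be revisited. Second, your characterization of the ``technical heart'' is not how the decoupling is actually achieved. You describe ``re-bounding every loss contribution in terms of $\loss_f$ instead of $n$,'' but several of the error sources (sampling error in $\approxlis$, grid approximation error, splitter-safety slack) cannot individually be bounded by a fraction of $\loss_f$; they genuinely scale with $\width(\cB)$. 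The paper's resolution is to \emph{not} force them to; it keeps the two-part bound $\taupar\loss(\cB)+\deltapar\width(\cB)$ throughout, with the $\deltapar\width$ part never rescaled to $\loss$, and then makes the $\deltapar n$ term negligible post hoc by the external choice $\deltapar\lesssim\taupar\eps_f$. Trying to literally bound each error term by $\tau\loss_f$ internally would get stuck; the separation into primary and secondary error is the move that makes the bookkeeping go through. Finally, a small correction: once you switch from the pure additive guarantee to the two-parameter one, your return value $\eps=(\est_j-\delta_j n)/n$ is no longer a lower bound on $\eps_f$ (it can overshoot by a $(1+q\tau)$ factor); you need to divide by $1+\taupar$, as in the paper's $b_1=(b-\deltapar)/(1+\taupar)$.
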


The error probability of 1/4  in each of these theorems can be reduced to any desired value $\errorprob>0$ by the following standard
method: for an appropriate constant $C$,  repeat the algorithm $C \log(1/\errorprob)$ times and output the median output of the trials.  This output will be outside
the desired estimation interval only if at least half of the trials produce an output outside of the desired estimation interval.    Since for each trial this happens
with probability at most 1/4, we can use a binomial tail bound
 (e.g. \Prop{hoeff1} below), to conclude that the probability that the median lies outside the desired interval
is at most $\errorprob$.

\subsection{Related work and relation to other models}

The field of \emph{property testing}~\cite{RS96,GGR98} deals with finding sublinear, or even constant,
time algorithms for distinguishing whether an input has a property,
or is far from the property (see surveys~\cite{FischerSurvey,RonSurvey,GoldreichSurvey}). The property
of monotonicity has been studied over a various partially ordered domains, especially
the boolean hypercube and the set $[n]$~\cite{GGLRS00,DGLRRS99,ELNRR02,HK03,ACCL1,PRR04,BGJRW09}. Our result
can be seen as a \emph{tolerant tester}~\cite{PRR04}, which can closely
approximate the distance to monotonicity.

The LIS has been studied in detail in the streaming model~\cite{GJKK07,SW07,GG07,EJ08}. Here,
we are allowed a small number (usually, just a single) of passes over
the array and we wish to estimate either the LIS or the
distance to monotonicity. The distance approximations
in the streaming model
are based on the sublinear time approximations.
The technique of counting inversions used in the property
testers and sublinear time distance approximators is a major
component of these algorithms. This problem has
also been studied in the communication
models where various parties may hold different
portions of the array, and the aim is to compute
the LIS with minimum communication. This is usually
studied with the purpose of proving streaming lower bounds~\cite{GJKK07,GG07,EJ08}.
Subsequent work of the authors use dynamic programming methods to design
a streaming algorithm for LIS, giving estimates similar to those obtained here~\cite{SaSe12}.
This does not use the techniques from this result, and is a much easier problem
than the sampling model.

There has been a body of work on studying the Ulam
distance between strings~\cite{AK07,AK08,AIK09,AN10}. For permutations, the Ulam
distance is twice the size of the complement
of the longest common subsequence. Note that Ulam
distance between a permutation and the identity permutation
is basically the distance to monotonicity. There has
been a recent sublinear time algorithm for approximating
the Ulam distance between two permutations~\cite{AN10}. We again note
that the previous techniques for distance approximation
play a role in these results. Our results
may be helpful in getting better approximations
for these problems.

\subsection{Obstacles to additive estimations of the LIS} 

A first approach to estimating the length of the LIS is to take a small random sample $S$ of entries of the array, and
exactly compute the length of the LIS of the sample, $\lis_f(S)$.  Scaling this up to
$n\frac{\lis_f(S)}{|S|}$ gives a natural estimator for $\lis_f$.
A little consideration shows that this estimator can be very inaccurate.
Consider the following example.  Let $K$ be a large constant
and $n=Kt$. For $0 \leq i \leq t-1$ and $0 \leq j < K$,
set $f(iK+j+1)=iK-j$. Refer to \Fig{func1}.  The LIS of this function has size $t=n/K$, 
but a small random sample will almost certainly be completely
increasing and so the estimator is likely to equal $n$.

\begin{figure}[tb]
  \centering
 \includegraphics[width=0.25\textwidth]{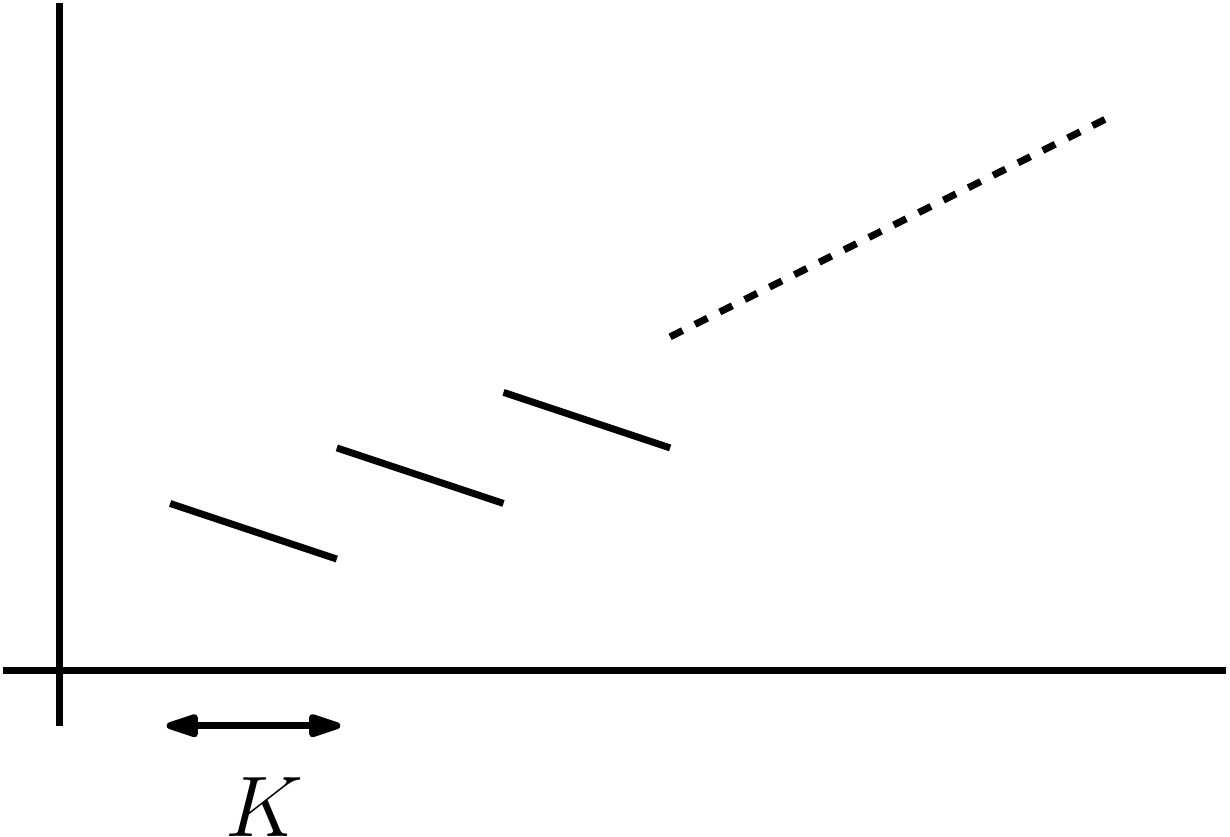}
    \caption{Small random samples will almost always be totally increasing}
  \label{fig:func1}
\end{figure}

An alternative approach to estimating the LIS
is to give an algorithm which, given an index $i \in [n]$, classifies
$i$ as {\em good} or {\em bad} in such a way that:

\begin{itemize}
\item The good indices form an increasing sequence.
\item The number of good indices is close to the size of the LIS, so the number of bad indices
can be bounded.
\end{itemize}

This approach was used in \cite{ACCL1} and \cite{PRR04}.
The classification algorithms presented in those papers essentially work as follows.
Given an index $i$,
for each $k$ between $1$ and $O(\log(n))$ consider each of the index intervals of
the form $[i-(1+\gamma)^k,i+(1-\gamma)^k]$, for all $k$ (for a suitably small $\gamma$), and for each such interval, examine a randomly chosen
subset of indices of polylogarithmic size.   
If, for any one of these samples, 
$i$ is in violation with at least half of the sample then $i$ should be classified as bad.
The analysis of this algorithm shows that the fraction of indices that
are delared bad is at most $(2+o(1))\loss_f/n$ which gives a multiplicative $2+o(1)$-approximation
to $\loss_f$.    This analysis is essentially tight since for the 
function shown in \Fig{func1} where  $\lis_f = n/K$),
these algorithms will classify all indices as bad.  In particular, when $K=2$, the distance to monotonicity
is $n/2$ but the algorithm returns an estimate of $n$.  

If we abstract away the details from this algorithm we see that an index $i$ is classified based on
an $log(n)^{O(1)}$ size sample of indices where the probability that an index $j$ is in the sample is roughly proportional
to $1/|j-i|$.   Call this a {\em sparse proximity-based sample}. It is natural to ask whether there is a better way to use this sample to classify $i$.
The following example shows that there is a strong limitation on the quality of approximation that can be provided by
a classification algorithm based on a sparse proximity-biased sample.
Set $n=64$ and divide the indices
into three contiguous blocks, where the first has size $r$, the second
has size $2r$ and the third has size $3r$.  Consider the sequence $f$ whose
first block is $100r+1,\ldots,101r$,
whose second block is
$1,101r+1,2,101r+2,\ldots, r,101r$ 
and whose third block is some increasing 
subsequence  of
$r+1,\ldots,99r$.  Let $f'$ be a sequence that
agrees with $f$ on the first two blocks. The
final $3r$ positions is some sequence with values in the range
$r+1,\ldots,99r$ but looks like the function in \Fig{func1}.
Refer to \Fig{funcprob} for a pictorial representation
of these sequences.

Notice that in classifying an index $i$ in the first block, a sparse proximity-based is unlikely to be able
to distinguish $f$ from $f'$ and so it will classify $i$ as good or bad the same in both cases.
The LIS of $f$ has size $4r$ (and excludes the first block
of elements) and an increasing sequence that uses any element from the first block has size
at most $2r$.  Hence the algorithm must classify indices $i$ as bad, or incur an additive $2r=n/3$ error.
On the other hand,  the LIS for $f'$ has size $2r$ (and includes all indices in the first block),
and if the algorithm classifies such indices as bad then the algorithm will classify at most $r$ indices as good
and the additive error will be at least $r=n/6$.

Roughly speaking, this example shows that for a classification algorithm that provides better than an $n/6$
approximation, the classification of an index $i$ may involve small scale properties of the sequence
far away from $i$.
Since one can build
many variants of this example, where the size and location of the
critical block is different, and the important scale within the critical
block may also vary, it seems that very global information at all scales
may be required to make a satisfactory decision about any particular index.

\begin{figure}[tb]
  \centering
 \includegraphics[width=1\textwidth]{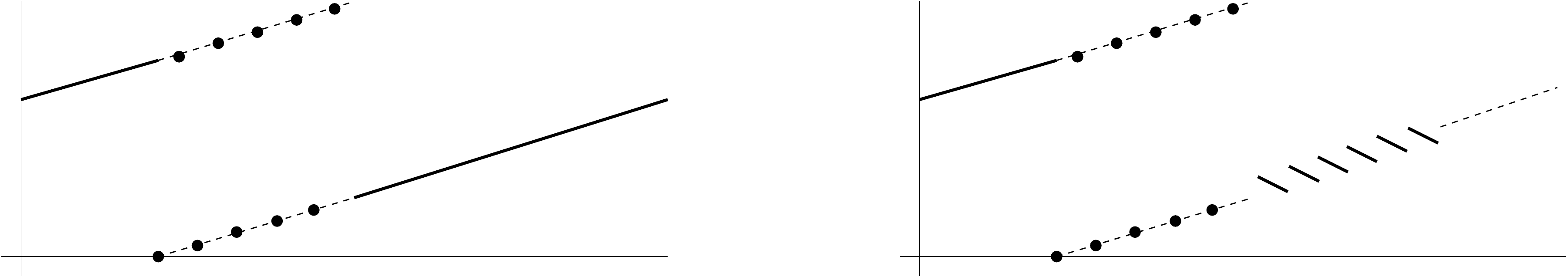}
    \caption{{The sequences/functions $f$ and $f'$ where small scale properties in one block affect
	points of a distant block}\label{fig:funcprob}}\hfill\break
\end{figure}
%
%
%
%
Another perspective is to consider the dynamic program that computes
the LIS.  The dynamic program starts by building and storing
small increasing sequences. Eventually it tries to join them to build
larger and larger sequences. Any one of the currently stored increasing
sequences may extend to
the LIS, while the others may turn out to be incompatible with any
increasing sequence close in size to the LIS.
Deciding among these alternatives requires accurate knowledge of
how partial sequences all over the sequence fit together.
Any sublinear time algorithm that attempts to 
approximate the LIS arbitrarily well has to be 
able to (in some sense) mimic this.

\section{The algorithmic idea and intuition} \label{sec-intuit}

In this section we give an overview of the algorithm.
It is convenient to identify the array/function $f$ with the set
of points $\{\langle i,f(i) \rangle: i \in [n]\}$ in $\RR^2$. We take the natural partial order where $\langle a_1,a_2 \rangle \preceq \langle b_1,b_2 \rangle$ 
if and only if $a_1 \leq b_1$ and $a_2 \leq b_2$.
The LIS corresponds to the longest chain in this partial order.
The axes of the plane will
be denoted, as usual, by $x$ and $y$. We use \emph{index interval} to denote
an interval of indices, and typically denote such an interval by the notation $(x_L,x_R]$ 
which is the set of indices $x$ satisfying $x_L<x \leq x_R$.  The {\em width} of
an interval $I=(x_L,x_R]$, denoted $\width(I)$ is the number $x_L-x_R$ of indices it contains.
We use \emph{value} to denote $y$-coordinates.  Intervals of values
are denoted by closed intervals $[y_L,y_R]$.   A {\em box} $\cB$ is a Cartesian product
of an index interval and value interval.  The width of box, $\width(\cB)$, 
is the width of the corresponding index interval.
We write $X(\cB)$ for the index set of $\cB$.

\subsubsection*{An interactive protocol}

The first idea, which takes its inspiration from complexity theory,	
is to consider an easier problem, that of giving an  
\emph{interactive protocol} for proving a lower bound on $\lis_f$. (Note that 
we will not make any mention of these protocols in the
actual algorithm or in any proof but they provide a useful intuition to keep in mind.)
Suppose that we have a sequence $f$
and two players, a prover and verifier. The prover
has complete knowledge of $f$  and the verifier has query access to $f$.
The prover makes a claim of the form $\lis_f \geq b n$ for
some $b \in (0,1)$.
The verifier wishes to check this claim
by asking the prover questions and querying $f$ on a small
number of indices.
At the end of the interaction, the verifier either accepts or rejects and we require
the following (usual) properties. If $\lis_f \geq b n$, then
there is a strategy of the prover that makes the verifier accept with high
probability. If the prover is lying and $\lis_f < (b -\delta)n$, then for any strategy of the
prover it is unlikely that the verifier will accept.  

The protocol consists of $R$ rounds.  In each round the verifier either accepts or rejects the round, and the round
is designed to have the following properties: (1) If the LIS has size at least $bn$ then the prover can make the verifier
accept with probability at least $b$.  (2) If the LIS has size less than $(b-\delta)n$, then no matter how the prover
behaves the verifier will accept the round with probability less than $(b-\delta)$.  
After performing the $r$ rounds, the verifier will then accept
the prover's claim if the number of accepted rounds is at least $(b-\frac{\delta}{2})R$.  A standard application of the
Chernoff-Hoeffding bound then gives that for $R=\Omega(\log n)$, with high probability the verifier will accept if
the LIS size is at least $bn$ (and the prover follows the protocol) and will reject if the LIS size is at most $(b-\delta)n$.

\begin{figure}[tb]
  \centering
 \includegraphics[width=0.25\textwidth]{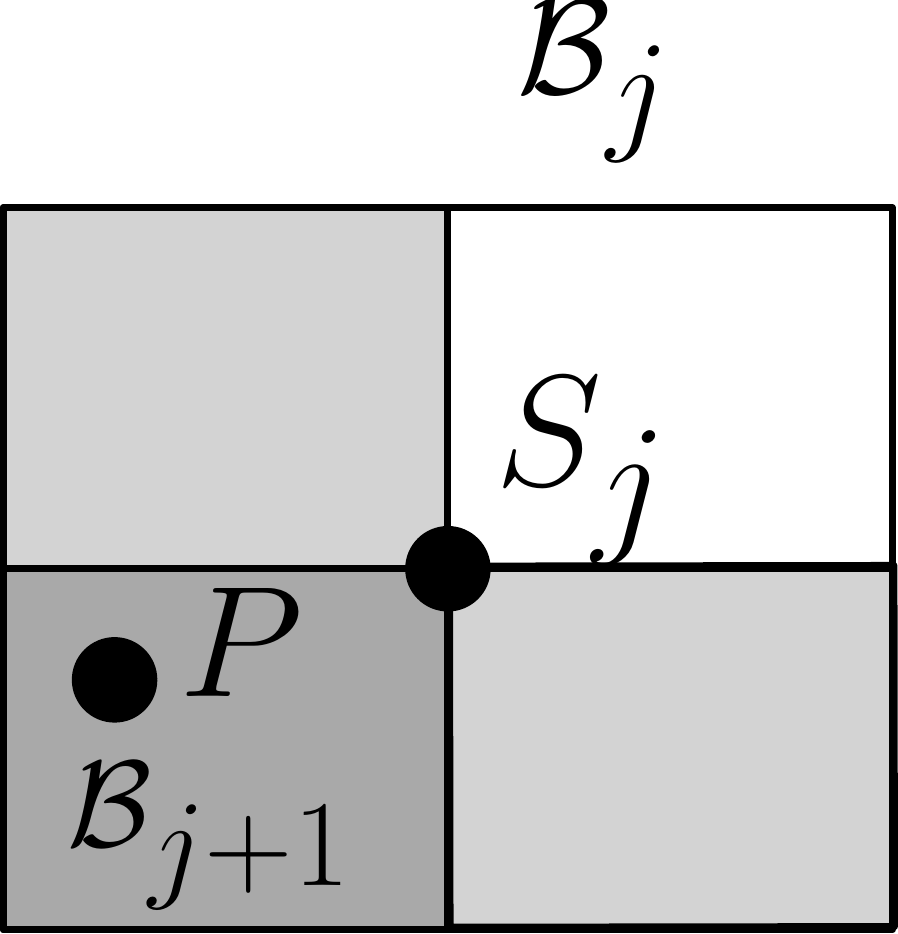}
    \caption{The interactive protocol: if $P$ is in the light gray regions, the verifier rejects. In this case,
    $P \prec S_j$, leading to $\cB_{j+1}$ as shown in dark gray.}
  \label{fig:prot}
\end{figure}

So now we describe how a single round works.  the verifier (secretly) selects an index $i$ uniformly at random from  $(0,n]$.
Let $F(i) = \langle i,f(i) \rangle$.
The prover and verifier jointly generate a nested sequence $\cB_1 \supset \cdots \supset \cB_k$ of boxes
all containing $F(i)$
and a sequence $S_1,\ldots,S_k$ of points. For all $j$, $S_j$ is contained in $\cB_j$.
At the beginning of the $j$th round, $\cB_1,\ldots,\cB_j$ and $S_1,\ldots,S_{j-1}$
are already determined. (We initialize with $\cB_1$ as a box
containing all input points.) Now, the verifier selects some point $S_j$ inside $\cB_j$.
This is called a {\em splitter} for $\cB_j$. The verifier compares $F(i)$ with $S_j$
and rejects if $F(i)$ is incomparable to $S_j$. (Refer to \Fig{prot}.) If $F(i) \preceq S_j$,
then the verifier declares $\cB_{j+1}$ to be the box formed by the bottom-left corner of $\cB_j$ and $S_j$.
If $F(i) \succ S_j$, then $\cB_{j+1}$ is the box formed by $S_j$ and the top-right corner of $\cB_j$.
This ends the $j$th round. 
If this process finally leads to a box $\cB_k$ containing the single point $F(i)$, then
the verifier accepts the index $i$.
%
%
%
%
%

Suppose $\cL$ is some LIS. The prover can make the verifier accept whenever $F(i)$ belongs to $\cL$
by always selecting $S_i$ so that $S_i \in \cL \cap \cB_i$.  (Note that at each step the set $\cL \cap \cB_i$
is an LIS for the box $\cB_i$.) We leave it to the reader to show that
the prover cannot make the verifier to accept with probability higher than $|\cL|/n = \lis_f/n$.

The number of rounds of the protocol depends on how \emph{balanced} the splitters
selected by the prover are.   A splitter $S_i$ is $\rho$-balanced in $\cB_i$ if
it does not belong to the leftmost or rightmost $\rho$ fraction of points of $\cB_i$,
which ensures that each successive box is reduced in size by a factor $1-2\rho$. This would bound the number of rounds
by $O(\log n/\rho)$.  We require that the prover select a $\rho$-balanced splitter for some $\rho=\Theta(\delta)$ (if there
is such a splitter) and halt otherwise.  Iimposing the $\rho$-balance requirement on the prover can decrease the probability
of convincing the verifier by at most $\Theta(\rho)$).

\subsubsection*{Algorithmically searching for a splitter}

Can we  simulate this protocol by an algorithm?  At first glance this seems impossible, since the prover has complete knowledge
 of the array, while the algorithm must ``pay" for any information.

At each step, the all-knowing prover selects a $\rho$-balanced splitter in the current box that belongs to the LIS within that box.
Since our algorithm does not know the LIS, it seems impossible to simulate this.  But since we are only
doing an approximation, we do not need the splitter to be on the LIS, we only need that the splitter
belongs to an increasing sequence that is close to the LIS.  How can we recognize such a splitter?

%
Let $\cB_j$ be the current box being split and $\cL_j$ be the LIS of $\cB_j$.
We say that an input point $P$ is a \emph{violation} with a splitter $S$
if they are incomparable. A good splitter is one which is a violation
with few points in $\cL_j$.
%
How few?  If
we can guarantee that the number of violations of the LIS with the splitter
is at most $\gamma \width(\cB)$, then the total error will be $\gamma \width(\cB)$ times the number
of rounds which is $O(\log n/\rho)$.  The overall error will be a small fraction of $\width(\cB)$ if 
$\gamma$ is a small fraction of $\rho/\log n$.

It will be convenient in this informal discussion to assume that splitters are not necessarily input point (although it turns
out that our algorithm will restrict its search for splitters to input points).
We do not know what $\cL_j$ is, so we look for a splitter $S$ with
a much stronger property: $S$ has a total of $\gamma \width(\cB)$ (input point) violations in $\cB_j$.
We call this a {\em conservative} splitter.  This is a stronger requirement than what is needed for a good splitter because it includes
violations of the splitter with points that are not on the LIS.    If we can find a conservative splitter, it is safe to use it
as the splitter in the interactive proof.  Whether a candidate point is a conservative splitter for box $\cB_j$ can be quickly (approximately) checked by estimating
the fraction of violations that $i$ has within $\cB_j$ by examining a small random sample of
indices from $\cB_j$.    Furthermore, if a non-trivial fraction of points are 
conservative splitters, then one can be found and identified quickly by random sampling. 

Thus conservative splitters are easy to find (if there are enough of them)
and if one is found then we can use it to simulate the prover.  
Of course, there may not be enough conservative splitters for the random search algorithm to succeed; indeed there
may be no conservative splitters.   
A new idea is required to deal with this problem:
boosting the quality of the approximation.  
%

\subsubsection*{Boosting the quality of approximation}

Let us restart our quest for an algorithm from a different starting point:
Given an algorithm that guarantees an additive $\delta n$-approximation to $\lis_f$,
can we use it to get an additive $\delta'$-approximation for some
smaller $\delta'$?  If we could, then by using the known
additive $1/2$-approximation algorithm as a starting point,
we might be able to apply this error reduction method recursively to
achieve any desired error.

Consider the following divide-and-conquer dynamic program for
estimating the LIS of points in a box $\cB$.  
Fix $s$ to be some (balanced) index in $\cB$.
Let $y$ be the value of some point in $\cB$ and set $P = \langle s,y\rangle$.
By varying $y$, we get different points $P$.
Define the box $\cB_{L}(P)$ formed by the bottom-left corner of $\cB$ and $P$.
Analogously, $\cB_{R}(P)$ is formed by $P$ and the top-right corner of $\cB$.
Observe that $\lis(\cB) = \max_P \{\lis(\cB_{L}(P)) + \lis(\cB_{R}(P))\}$.
%
%
This recurrence can be viewed as a  dynamic program for $\lis(\cB)$.
Let $\alis(\cB_{L}(P))$ and $\alis(\cB_{R}(P))$ denote estimates obtained by running our
base approximation algorithm on $\cB_{L}(P)$ and $\cB_{R}(P)$. We can maximize $\alis(\cB_{L}(P))+\alis(\cB_{R}(P))$ over $P$
to get 
an approximation for $\lis(\cB)$ 
While it is too costly to search over all points $P$,
a good approximation can be obtained by maximizing over a polylogarithmic
sample from the points in $\cB$. For the sake of this discussion, we will
assume that the true maximizer $P^*$ can be determined.
We denote the corresponding boxes $\cB_{L}(P^*)$ and $\cB_{R}(P^*)$ by $\cB_L$ and $\cB_R$.

An initial analysis suggests that we gain nothing from this.
If $\alis(\cB_L)$ is a $\delta \width(\cB_L)$-additive approximation and $\alis(\cB_R)$
is a $\delta \width(\cB_R)$-additive approximation then the best we can say is
that the sum is an additive $\delta (\width(\cB_L)+\width(\cB_R))=\delta \width(\cB)$
approximation to $\lis(\cB)$, so we get no advantage.

However, if we make a subtle change in the notion of additive error, then 
an advantage emerges.  
Instead of measuring the additive approximation error
as a multiple of $\width(\cB)$, we measure it as a multiple of $\loss(\cB)=|\cB \cap \cF|-\lis(\cB)$, where $\cF=
\{F(i):i \in [n]\}$ is the set of input points.
Note that in general $\loss(\cB)$
may be much smaller than $\width(\cB)$.  So suppose we have an algorithm that
whose approximation error is at most $\tau \loss(\cB)$.  
Then the additive error of $\alis(\cB_L)+\alis(\cB_R)$ will be at most
$\tau (\loss(\cB_L)+\loss(\cB_R))$, which may be significantly  less than $\tau \loss(\cB)$.
If this can be bounded above by $\tau' \loss(\cB)$ for some $\tau'<\tau$,
the quality of approximation can be boosted.

The nice surprise is that 
$\loss(\cB)-(\loss(\cB_L)+\loss(\cB_R))|
= |\cB\cap \cF| - (|\cB_L \cap \cF| + |\cB_R\cap \cF|)$. This quantity is precisely the number of
points in $\cB \cap \cF$ that are in violation with $P^*$.  This gives the following dichotomy: if the number of such violations is at least
$\mu |\cB \cap \cF|$ then we can take $\tau'$ to be $(1-\mu)\tau$ and boost the quality of approximation from $\tau$
to $\tau'$. Otherwise, the number of violations is less than $\mu |\cB|$ and $P^*$
is  a conservative splitter!  

It is not clear how to apply this dichotomy to interleave the simulation of the interactive protocol
and the boosting idea to  get an algorithm.
But there is a more significant difficulty.
In the search for a good splitter we measured the number of violations
as $\gamma \width(\cB)$ and argued that $\gamma$ should be $O(1/\log n)$.
For the recursive boosting to
work efficiently, we measured the quality of the splitter by $\mu |\cB|$
and for this we need $\mu$ to be at least $\Omega(1/\log\log n)$.
Why? For each level of recursion, we can improve the 
additive approximation from $\tau$ to $\tau(1-\mu)$.
So we will need $1/\mu$ levels of
recursion to improve from $\delta$ to $\delta/2$.
At each level of the recursion, we make at least $2$
recursive calls, for the left and right subproblems
generated by any choice of a splitter. So the total number of iterated recursive calls
is exponential in $1/\mu$.  Since we want the running time to be $\log(n)^{O(1)}$, this  leads to $\mu = \Omega(1/\log\log n)$.

For the interactive protocol simulation, splitters can have at most
$O(|\cB \cap \cF|/\log n)$ violations inside $\cB$, and for the boosting algorithm
every (or nearly every) splitter has $\Omega(|\cB \cap \cF|/\log\log n)$
inside $\cB$.
So while the dichotomy seems promising, we have
a huge gap from an algorithmic perspective.

\subsubsection*{Closing the dichotomy gap}

We seek ways to close the gap in this dichotomy.
Upon further consideration (and using past work in the area as a guide),
it seems fruitful to modify the criterion for a good splitter.
For a candidate splitter $P$ we relax the condition on the maximum number
of violations $\gamma \width(\cB)$ to $\mu|\cB \cap \cF|+ \gamma \width(\cB)$,
where $\gamma$ will be $1/\poly\log n$ and $\mu$ will be a small constant.
(We strengthen this requirement by requiring that a similar
condition holds for various subboxes of $\cB$.)
Note that this condition is weaker than the original condition on splitters.
But we prove that it is good enough to simulate the
interactive protocol.

Now for the other side of the dichotomy. Even if there is no good splitter
satisfying this weaker condition, the above divide-and-conquer
scheme might still fail to boost the quality of approximation.
The  boosting algorithm uses an index $s$ 
to divide the box $\cB$ into two parts and then searches over different $y$ to maximize
$\alis(\cB_L)+\alis(\cB_R)$.  This is essentially solving a longest path problem on
a 3-layer DAG.  In the modified boosting algorithm we do analogous thing,
but where we divide the box
into a larger number of parts and solve a longest path problem on a DAG with more layers.
The number of layers turns out to be $1/\gamma$, where $\gamma$ is
the parameter in the relaxed definition of splitter.

These two ingredients - the simulation of the interactive protocol
and the modified boosting algorithm - are combined together
to give our algorithm.  There are various parameters such as $\gridprec,\mu,\gamma$ involved
in the algorithm, and we must choose their values carefully. We also need to determine how many
levels of boosting are required to get a desired approximation.
A direct
choice of parameters leads to a $(\log n)^{1/\delta}$
approximation algorithm.  
The better algorithm
claimed in \Thm{main} is obtained by a more delicate
version of the algorithm, which involves modifying the various parameters
as the algorithm proceeds.  This reduces
the number of recursive calls needed for boosting
from polylogarithmic in $n$ to a constant depending on $\delta$.
%
%

\section{Preliminaries}
\label{sec:prelim}
\subsection{Basic definitions}
\label{subsec:basic defs}

We write $\mathbb{N}$ for the set of nonnegative integers and $\mathbb{N}^+$
for the set of positive integers.   We typically use interval notation $[a,b]$ and $(a,b]$
to denote intervals of nonnegative integers.  Occasionally we also use interval notation
to denote intervals of real numbers; the context should make it clear which meaning is intended.

Throughout this paper $n$ is an arbitrary but fixed positive integer and we refer to the
set $(0,n]=\{1,\ldots,n\}$ as the {\em index set}.  We let $f$ denote a fixed arbitrary
function mapping $(0,n]$ to $\mathbb{N}^+$.  
Occasionally,  we abuse notation and 
view 0 as an index. 
Also for convenience,
we assume that we are given an upper bound $\maxval$ on the maximum of $f$.    
and define ${\rm range}(f)$ to be the set $[1,\maxval]$.  

For $X \subseteq (0,n]$ we write $f(X)=\{f(x):x \in X\}$, and  
for $Y \subseteq {\rm range}(f)$ we
write $f^{-1}(Y)=\{x \in [n]:f(x) \in Y\}$.

\begin{description}
\item [{\bf Points}]
As usual, $\mathbb{N}^2$ denotes the set of ordered 
pairs of nonnegative integers, which we call {\em points}.  
A point is denoted by $\point{a}{b}$ (rather than $(a,b)$, to avoid confusion
with interval notation).   
The first coordinate of point $P$ is denoted
$x(P)$ and is called the {\em index of $P$}.  The second
coordinate of $P$ is denoted $y(P)$ is called the {\em value of $P$}. 
We denote points by upper case letters, and 
sets of points
by calligraphic letters.

\item[{\bf The sets  $X(\cS)$ and $Y(\cS)$}] For a set $\cS$ of points, we write
$X(\cS)$ for the set $\{x(P):P \in \cS\}$ of indices of points
of $\cS$, and 
$Y(\cS)$ for the set $\{y(P):P \in \cS\}$
of values of points in $\cS$.

\item[{\bf The point $F(x)$, the set $\cF$, and $\cF$-points}]
For $x \in (0,n]$, $F(x)$ denotes the point $\langle x,f(x)\rangle$.
We define $\cF=\{F(x):x \in (0,n]\}$. 
We refer to points in $\cF$ as $\cF$-points.

Observe that for 
a set $\cS$ of points, $F^{-1}(\cS) \subseteq X(\cS)$ is the set of
indices $x$ for which $F(x) \in \cS$.  Since $F$ is a 1-1 function
it follows that the sets $F^{-1}(\cS)$ and $\cF \cap \cS$ are in 1-1 correspondence.

\item [{\bf Relations $P \leq Q$, $P \prec Q$, $P \searrow Q$, $P \searrowcirc Q$ and   sets  $P^{NE}$, $P^{NW}$, $P^{SE}$,$P^{SW}$}]

For points $P,Q$, 
\begin{description}
\item[$P \leq Q$] means
$x(P) \leq x(Q)$ and $y(P) \leq y(Q)$
\item[$P \prec Q$] means $x(P)<x(Q)$ and $y(P) \leq y(Q)$.
\item[$P \searrow Q$] means $x(P) < x(Q)$ and $y(P)>y(Q)$.
\item[$P \searrowcirc Q$] means $x(P) \leq x(Q)$ and $y(P) > y(Q)$.
\end{description}

Observe that for points $P,Q$ with $x(P) \neq x(Q)$, for example, if $P,Q$ are distinct points of $\cF$,
then the conditions $x(P) \leq x(Q)$ and $x(P) \prec x(Q)$ are equivalent,
and the conditions $P \searrow Q$ and $P \searrowcirc Q$ are equivalent. 

For a point $P$, we define the following sets. Refer to \Fig{position}.
\begin{description}
\item
[$P^{NW}$] (for northwest) The set of points $Q$ such that $Q \searrowcirc P$.
\item
[$P^{SW}$] (for southwest) The set of points $Q$ such that $Q \leq P$.
\item
[$P^{NE}$] (for northeast) The set of points $Q$ such that $P \prec Q$.
\item
[$P^{SE}$] (for southeast) The set of points $Q$ such that $P \searrow Q$.

\end{description}

\begin{figure}[tb]
  \centering
 \includegraphics[width=0.25\textwidth]{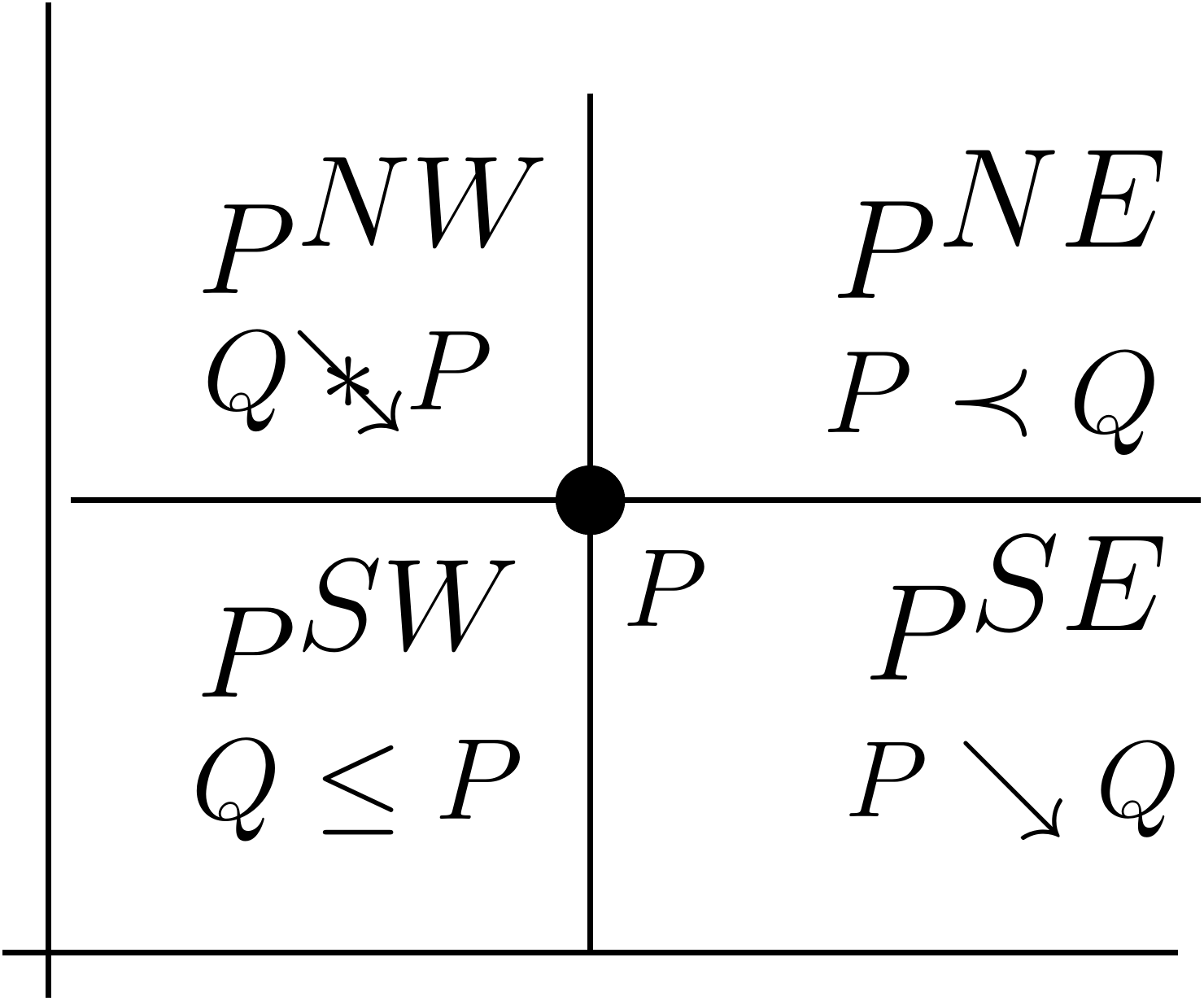}
    \caption{The various positions and sets relative to point $P$. $Q$ denotes a hypothetical point
    in the respective region.}
  \label{fig:position}
\end{figure}

There is an asymmetry in these definitions: For example one might expect that $Q \in P^{NE}$ would
be equivalent to $P \in Q^{SW}$ but this is not the case.  
Notice that this asymmetry disappears if $x(P) \neq x(Q)$.  
One reason to define the sets this way is so that for each point $P$, the sets
 $P^{NW}$,$P^{NE}$, $P^{SW}$, $P^{SE}$ partition $\mathbb{N}^2$.

\item[{\bf Relations $P \sim Q$ ($P$ is comparable to $Q$) and $P\not\sim Q$ ($P$ is a violation with $Q$)}]

For $P,Q \in \cF$ we define:

\begin{description}
\item[$P \sim Q$] means that $P$ and $Q$ are {\em comparable}, i.e., $P \leq Q$ or $Q \leq P$ .
\item[$P \not\sim Q$] means that $P$ and $Q$ are a {\em  incomparable} or a {\em violation}, i.e., either $P \searrow Q$  
or $Q \searrow P$ .  
\end{description}
We emphasize that we only use these terms in the case that $P$ and $Q$ are $\cF$-points.  Since $\cF$-points have distinct indices,
the distinctions between $\leq$ and $\prec$, and between $\searrow$ and $\searrowcirc$ disappear.

\item [{\bf Index  intervals}]   A set of consecutive indices  is called
an {\em index interval}.    An index interval is usually written using the notation
$(a,b]=\{a+1,\ldots,b\}$.  
For an index interval $I$, we define indices $x_L(I)$
and $x_R(I)$, the {\em left and right endpoints of $I$} so that $I=(x_L(I),x_R(I)]$.
Note $x_R(I) \in I$ but $x_L(I) \not\in I$. 

\item [{\bf Value intervals}]
A {\em value interval}
refers to an integer subinterval $[c,d]$ of ${\rm range}(f)$.
For value intervals we always use closed interval notation.  
For a value interval $J$ we define indices $y_B(J)$
and $y_T(J)$, the {\em top and bottom endpoints of $J$}  
such that $J=[y_B(J),y_T(J)]$.  Thus, in contrast with index intervals,
$J$ contains both of its endpoints.

\item[{\bf Box}]
A box $\cB$ is the Cartesian product of a nonempty index interval
$I$ and a nonempty value interval $J$.

Using the notation described above, we have $\cB = X(\cB) \times Y(\cB)$,
$X(\cB)=(x_L(X(\cB)),x_R(X(\cB))]$, and $Y(\cB)=[y_B(Y(\cB),y_T(Y(\cB))]$. To simplify notation we define
$x_L(\cB)=x_L(X(\cB))$ and analogously $x_R(\cB), y_B(\cB), y_T(\cB)$.
%
%
Thus $X(\cB)=(x_L(\cB),x_R(\cB)]$ and $Y(\cB)=[y_B(\cB),y_T(\cB)]$.
We also define the {\em bottom-left point of $\cB$}, 
$P_{BL}(\cB)=\point{x_L(\cB)}{y_B(\cB)}$
and the {\em top-right point of $\cB$},  $P_{TR}(\cB)=\point{x_R(\cB)}{y_T(\cB)}$.
(Note that under these definitions $P_{BL}(\cB) \not\in \cB$ since
$x_L(\cB) \not\in X(\cB)$.)

If $Q,R$ are points with $Q \prec R$, the 
{\em box spanned by $Q,R$}, denoted $\bx(Q,R)$ is the box having $P_{BL}(\cB)=Q$ and $P_{TR}(\cB)=R$.
It is easy to see that 
$\bx(Q,R)=
\{P:Q \prec P \leq R\}$

\item[{\bf Grids}]
A {\em grid} $\Gamma$ is any Cartesian product $I \times J$ where
$I$ is a set of indices and $J$ is a set of values.  Thus 
$\Gamma$ is a grid if and only if $\Gamma = X(\Gamma) \times Y(\Gamma)$.
A box is the special case of a grid in which both $X(\Gamma)$
and $Y(\Gamma)$ are intervals.  

We refer to the sets of the form $\{x\} \times Y(\Gamma)$ for $x \in X(\Gamma)$ as
{\em columns of $\Gamma$} and sets of the form $X(\Gamma) \times \{y\}$ for $y \in Y(\Gamma)$
as {\em rows of $\Gamma$}.  

\item[{\bf The universe $\cU$ and the universal grid $\Gamma(\cU)$}]  The {\em universe $\cU$}  is  the box $(0,n] \times {\rm range}(f)$.
The {\em universal grid} is the grid $(0,n] \times F(0,n]$.  This is the smallest grid that contains $\cF$, and its
size is at most $n^2$.  Every point that is encountered in any of our algorithms belongs to the universal grid.

\item[{\bf width}] The {\em width} of an interval $I$, denoted $\width(I)$, is $|I|=x_R(I)-x_L(I)$.
Similarly, for a box $\cB$, 
$\width(\cB)$ is equal to $|X(\cB)|=
x_R(\cB)-x_L(\cB)$.

\item [{\bf Relation $\triangleleft$ on index intervals and boxes}]
\begin{itemize}
\item For index intervals $I_1,I_2$ we write $I_1 \triangleleft I_2$
if $x_R(I_1)=x_L(I_2)$.  It follows that $I_1 \cap I_2=\emptyset$
and $I_1 \cup I_2$ is
equal to the index interval $(x_L(I_1),x_R(I_2)]$.  
In particular
this implies that $x_1 < x_2$ for all $x_1 \in I_1$ and $x_2 \in I_2$.
\item For boxes $\cB_1,\cB_2$ we write $\cB_1 \triangleleft \cB_2$ to mean
$P_{TR}(\cB_1)=P_{BL}(\cB_2)$.  This is equivalent to
$X(\cB_1) \triangleleft X(\cB_2)$ and $y_T(\cB_1)=y_B(\cB_2)$.
In particular,
this implies  $\cB_1 \cap \cB_2=\emptyset$ and $P_1 \prec P_2$ for all $P_1 \in \cB_1$ and $P_2 \in \cB_2$.
\end{itemize}

\item [{\bf Box sequences}]

A {\em box sequence} is a list of boxes such that each successive box is entirely to the right
of the previous.
We use the notation  $\vec{\cB}$ to denote a box sequence.
We write $\cB \in \vec{\cB}$ to mean that the box $\cB$ appears
in $\vec{\cB}$. 

\begin{figure*}[tb]
  \centering
  \subfloat[$\cB$-strip]{\includegraphics[width=0.35\textwidth]{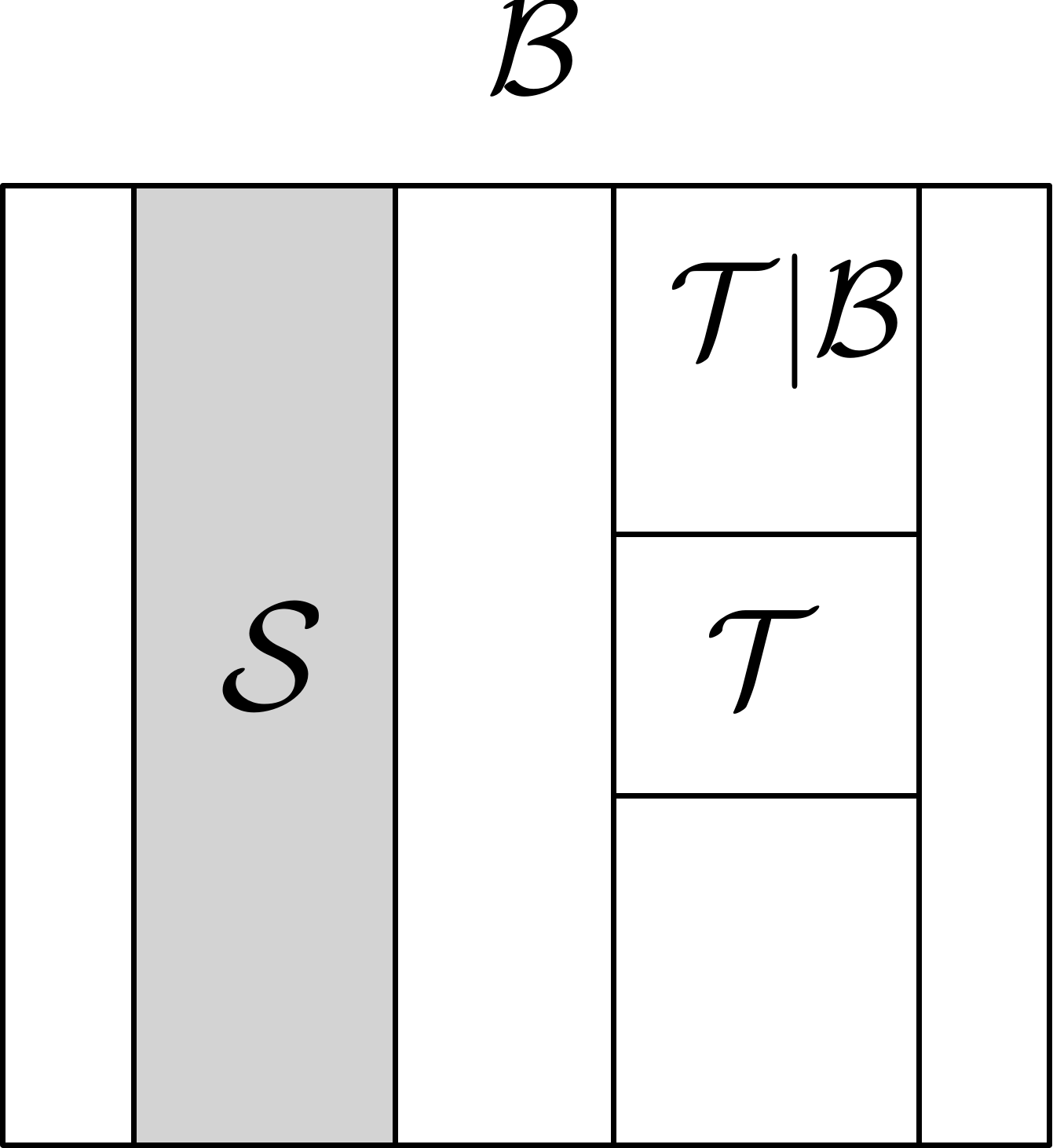} \label{fig:strip}} 
  $\qquad \qquad$
  \subfloat[Box chains]{\includegraphics[width=0.35\textwidth]{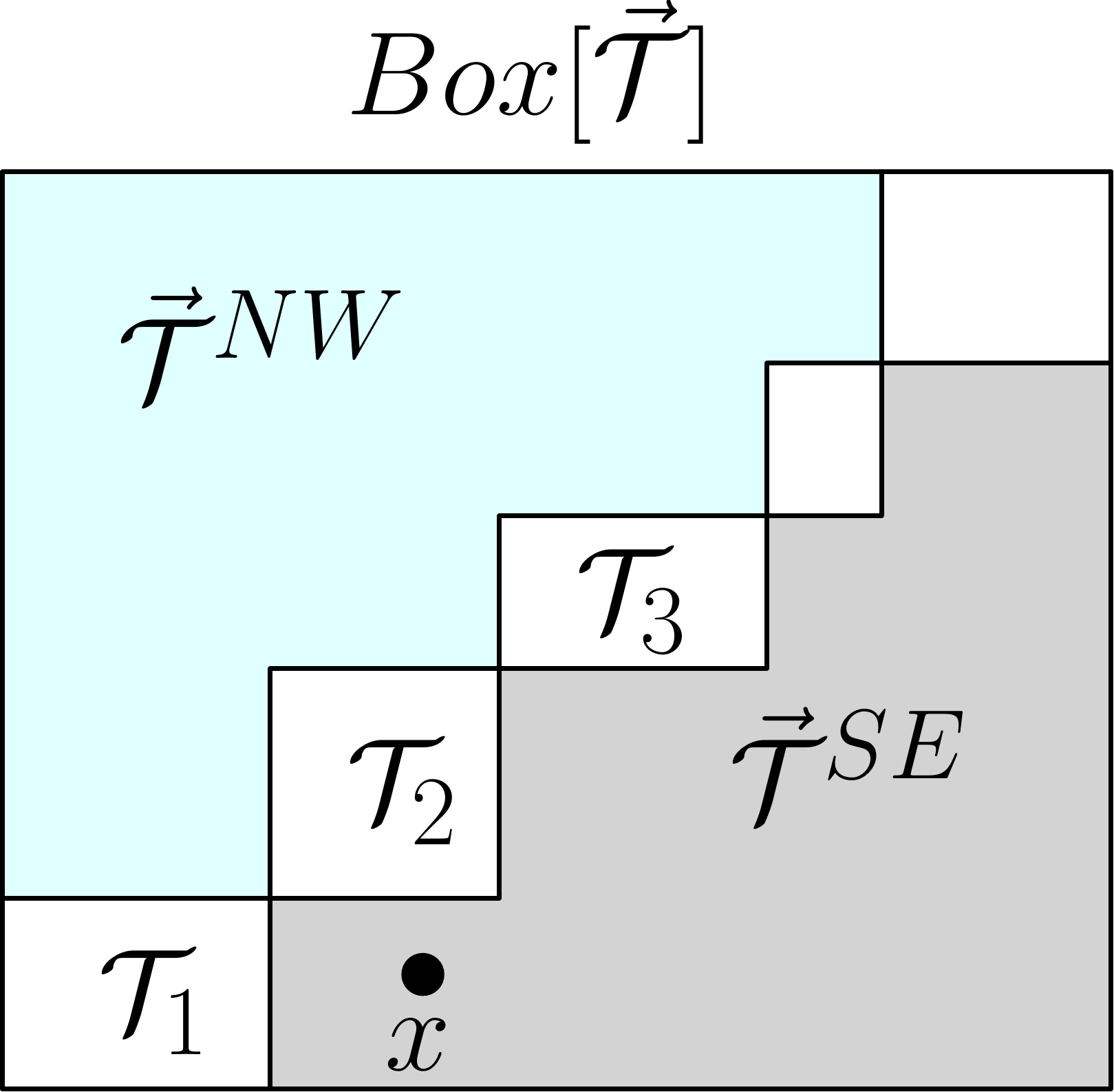} \label{fig:boxchain}}
    \caption{(a) $\cS$ is a $\cB$-strip. The figure also shows the strip $\cT | \cB$ for box $\cT$. 
    (b) The box chain $\vec{\cal T}$: The boxes $\cT_1, \cT_2, \ldots$ form a chain. The box $\vec{\cT}[x]$
    is $\cT_2$. The regions $\vec{\cT}^{NW}$ and $\vec{\cT}^{SE}$ are shaded.}
\end{figure*}

\item [{\bf $\cB$-strips and $\cB$-strip decompositions}] 

If $\cB$ is a box, a $\cB$-strip is a subbox $\cS$ of $\cB$ such that $Y(\cS)=Y(\cB)$.  Thus
a $\cB$-strip has the same vertical extent as $\cB$ and is specified relative to $\cB$
by its index set $X(\cS)$.
If $I \subseteq X(\cB)$ is an index interval then $\strip{I}{\cB}$
denotes the $\cB$-strip with index set $I$.  Similarly if $\cT$ is a subbox of $\cB$
then $\strip{\cT}{\cB}$ denotes the strip $\strip{X(\cT)}{\cB}$.
Refer to \Fig{strip}.

A $\cB$-strip decomposition is a partition of $\cB$ into strips.
A $\cB$-strip decomposition into $r$ strips is specified by a sequence $x_0=x_L(\cB) < x_1 < \cdots < x_r = x_R(\cB)$,
where the $j$th strip is $\strip{(x_{j-1},x_j]}{\cB}$.   We use the sequence notation $\vec{\cS}$ to denote a $\cB$-strip decomposition
in the  natural left-to-right order.

In particular if $\Gamma \subset \cB$ is a grid then $\Gamma$ naturally defines
a strip decomposition of $\cB$ obtained by taking the strips that end at successive columns
of $\Gamma$ with the final strip ending at $x_R(\cB)$.

\item [{\bf Increasing point sequences} and {\bf Box Chains}]

A set of points $\cP$ that can be ordered so that
$P_0 \prec \cdots \prec P_k$ is called an {\em increasing point sequence} or simply {\em increasing sequence}.
If $\cS$ is a set of points then an $\cS$-increasing point sequence is an increasing sequence
whose points all belong to $\cS$.  

A {\em box chain} $\vec{\cT}$ is a box sequence $\vec{\cT}=(\cT_1,\ldots,\cT_k)$
satisfying $\cT_1 \triangleleft \cdots \triangleleft \cT_k$. Refer to \Fig{boxchain} for
the following definitions.
\begin{itemize}
\item
There is a one-to-one correspondence between increasing point sequences 
and box chains which maps the increasing sequence $\cP$ with points 
$P_0 \prec P_1 \prec \ldots \prec P_k$  to the box chain
$\vec{\cT}$ given by $\bx(P_0,P_1) \triangleleft \cdots \triangleleft \bx(P_{k-1},P_k)$.
We refer to  $\cP$ as the {\em increasing sequence associated to box chain $\vec{\cT}$}.
If $\vec{\cT}$ is a box chain, we write $\cP(\vec{\cT})$ for the associated  increasing point sequence
and $\cP^{\circ}(\vec{\cT})$ for the {\em interior increasing sequence associated to $\vec{\cT}$},
which excludes the first and last points $P_0$ and $P_k$.

\item
The box $\bx[\vec{\cT}]$ spanned by $\vec{\cT}$ is the smallest
box containing $\vec{\cT}$.  If $P_0,\cdots,P_k$ is the increasing point sequence associated
to $\vec{\cT}$ then $\bx[\Vec{\cT}]=\bx(P_0,P_k)$.
\item If $\vec{\cT}$ spans box $\cB$ then for each $x \in X(\cB)$ there is a unique box
$\cT \in \vec{\cT}$ such that $x \in X(\cT)$.  This box is denoted $\vec{\cT}[x]$
\item
Given a strip decomposition $\vec{\cS}$ of $\cB$, a box chain
$\vec{\cT}$ is {\em compatible with $\vec{\cS}$} if $\vec{\cS}$ is composed of strips
$\strip{\cT}{\cB}$  for $\cT \in \vec{\cT}$.  In this case, the box spanned by $\vec{\cT}$
has the form $X(\cB) \times Y'$ where $Y' \subseteq Y(\cB)$.
\item  If $\vec{\cT}$ is a box chain spanning box $\cB$,
then $\vec{\cT}$ naturally defines a partition of $\cB$ into three
sets: $\vec{\cT}^{\cup}$, 
$\vec{\cT}^{NW}$ (for northwest) and $\vec{\cT}^{SE}$ (for southeast),
where:
\begin{itemize}
\item
$\vec{\cT}^{\cup}=\bigcup_{\cT \in \vec{\cT}} \cT$.  Equivalently, $\vec{\cT}^{\cup}=\cB \cap  \bigcap_{P \in \cP(\vec{\cT})} (P^{SW} \cup P^{NE})$.
\item
$\vec{\cT}^{NW}=\cB \cap \bigcup_{P \in \cP(\vec{\cT})}P^{NW}$.
\item
$\vec{\cT}^{SE}=\cB \cap \bigcup_{P \in \cP(\vec{\cT})}P^{SE}$.
\end{itemize}
\end{itemize}
\end{description}

\subsection{Increasing sequences and the functions \lis{} and \loss}

A set $X \subseteq (0,n]$ of indices is said to be {\em $F$-increasing} if the set
$F(X)=\{F(x):x \in X\}$ is an increasing point sequence.
For a box $\cB$ we say that
$X$ is $(F,\cB)$-increasing if it is $F$-increasing and $F(X) \subseteq \cB$. 
If $\vec{\cB}$ is a box chain we say that $X$ is
{\em $(F,\vec{\cB})$-increasing} if it is $F$-increasing  
and $F(X) \subseteq \vec{\cB}^{\cup}$.

\begin{itemize}
\item $\lis(\cB)=\lis_f(\cB)$ is the size of a 
longest increasing (point) sequence (LIS) contained in $\cB$,
which is also the size of the largest $(F,\cB)$-increasing set.
\item
$\loss(\cB)=|\cB \cap \cF|-\lis(\cB)$, i.e. $\loss(\cB)$ is the smallest number of $\cF$-points in $\cB$ that must
be deleted so that the remaining points of $\cB \cap \cF$ form an increasing sequence.
\end{itemize}

\subsection{The LIS approximation problem}

We develop an algorithm $\approxlis$ that takes as input a function $f$ and  box $\cB$  
and outputs an approximation $\approxlis(\cB)$ to $\lis(\cB)$.
The required quality of approximation is specified by input parameters $\taupar$ and $\deltapar$.
The algorithm is recursive and calls itself with different choices of these input parameters.
To prevent confusion, the symbols $\taupar$ and $\deltapar$ denote the initial setting,
while $\tau$ and $\delta$ are used to generically refer to these parameters in the algorithm.

In our analysis we will require a carefully chosen measure of the quality of
the estimate $\approxlis(\cB)$.  
For $\taupar,\deltapar \in (0,1)$ we say that $\approxlis$ 
is a {\em $(\taupar,\deltapar)$-approximation to $\lis$ on box $\cB$} provided that:

\[
|\approxlis(\cB)-\lis(\cB)| \leq \taupar \loss(\cB) + \deltapar \width(\cB).
\] 

A few remarks:

\begin{itemize}
\item A $(0,\deltapar)$-approximation is an 
{\em additive $\deltapar \width(\cB)$-approximation} 
to $\lis(\cB)$.
Since $\loss(\cB) \leq \width(\cB)$ a
$(\taupar,\deltapar)$-approximation is also an additive 
$(\taupar+\deltapar)$-approximation. 
\item Our initial goal is to get a good additive approximation, so the reader may wonder why we introduce
the parameter $\taupar$.  Separating the error into these two parts is important for the analysis of our algorithm.
Our algorithm
is recursive. The value of $\tau$ in the base algorithm is very large (essentially infinite), but we have the freedom
to choose $\delta$ to be very 
small.  Each level of recursion shrinks $\tau$, at the cost
of making $\delta$ larger.  By applying enough
recursive levels, we can make the final $\tau$ less than the desired bound of $\taupar$. 
By starting with a small initial $\delta$, we can keep the final $\delta$ at most $\deltapar$. This ensures 
the final algorithm is a $\taupar + \deltapar$-additive approximation.
\item We refer to the quantity $\tau \loss(\cB)$ as the {\em primary error} and
to $\delta \width(\cB)$ as the {\em secondary error}.
\end{itemize}

\section{The main theorems}
\label{sec:main}

We present two polylogarithmic time approximation algorithms for LIS, which we refer to as  the {\em basic algorithm}
and
the {\em improved algorithm}. The basic algorithm is somewhat simpler (though still fairly involved) while the improved algorithm
enhances the basic algorithm to give significantly better running time.  For the running time of the basic algorithm, 
the exponent of $\log n$ is $\Theta(1/\taupar)$.  For the improved version, the exponent of $\log n$ is a constant
independent of the error parameters $\taupar$ and $\deltapar$. 

\begin{theorem}
\label{thm:alg1}  
 There is a randomized algorithm $\basicmain$ that:
\begin{itemize}
\item  takes as input an integer $n$, an array $f$ of length $n$ and an error parameter $\taupar \in (0,1)$,
\item  runs in time $(\log n)^{O(1/\taupar)}$, and
\item outputs a value that, with probability at least $1-n^{-\Omega(\log n)}$, is a
{\em $(\taupar,\frac{5}{\taupar\log n})$}-approximation to $\lis(\cB)$. 
\end{itemize}
%
\end{theorem}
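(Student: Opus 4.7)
The plan is to construct $\basicmain$ as a particular invocation of a recursive procedure $\approxlis$ that takes a box $\cB$ together with parameters $(\tau,\delta)$ and returns a value satisfying the $(\tau,\delta)$-approximation guarantee on $\cB$. The induction hypothesis is: if recursive calls at parameters $(\tau',\delta')$ return valid approximations, then the combining step at the current level returns a valid $(\tau,\delta)$-approximation with $\tau > \tau'$. Starting from a trivial base case (when $\tau$ is larger than $1$, return, say, $|\cB \cap \cF|$, which is trivially a $(1,0)$-approximation since $\lis(\cB) \in [|\cB\cap\cF|-\loss(\cB),|\cB\cap\cF|]$), each boosting level will reduce $\tau$ by a multiplicative factor $(1-\mu)$, so $O(\log(1/\taupar)/\mu)$ levels suffice to reach primary error $\taupar$. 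I choose $\mu=\Theta(\taupar)$ and $\gamma=\Theta(\taupar/\log n)$ so this depth is $O(1/\taupar^2)$ and the secondary error terms accumulated per level remain below $5/(\taupar \log n)$.

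For a single boosting step on box $\cB$, the plan is to combine the two ideas sketched in the introduction: random sampling of $\cF$-points inside $\cB$ produces a candidate set of splitters together with a $\cB$-strip decomposition $\vec{\cS}$ of width $1/\gamma$. For each strip boundary I enumerate the candidate values $y$ (from the sample) and recursively call $\approxlis$ with parameters $(\tau(1-\mu),\delta+\Theta(\gamma))$ on each sub-box spanned by consecutive boundary points. I then solve a longest-weight path problem on the resulting layered DAG, whose vertex set is the sampled points arranged by layer and whose edge weights are the recursive estimates. The returned value is the weight of the best path. The central lemma to prove is a dichotomy: for the box chain $\vec{\cT}^{*}$ achieving the best path weight, either (i) its interior splitters cause the $\loss$ values to add up favorably, i.e.\ $\sum_{\cT\in\vec{\cT}^{*}}\loss(\cT) \le (1-\mu)\loss(\cB)$, in which case the recursive calls give a $(\tau(1-\mu),\cdot)$ primary error, or (ii) the best splitter is ``conservative'', meaning it has only $\mu|\cB\cap\cF|+\gamma\width(\cB)$ violations, which bounds the error introduced by that splitter itself. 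Either way, using $\taupar \loss + \deltapar \width$ as the quality measure, the combined estimate attains $(\tau,\delta)$.

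Error accumulation and probability analysis: let $T=O(\log(1/\taupar)/\mu)$. The primary coefficient after $T$ levels satisfies $(1-\mu)^T \leq \taupar$, giving the claimed primary error $\taupar \loss(\cB)$. The secondary error satisfies the recurrence $\delta_{k+1} \leq \delta_k + \Theta(\gamma) + \xi$, where $\xi$ is the additive sampling error for one evaluation; by taking samples of size $\poly\log n$ and applying a Hoeffding bound, $\xi$ can be made $o(1/(\taupar \log n))$ per level. Summing $T$ terms of $\Theta(\gamma)=\Theta(\taupar/\log n)$ contributes $\Theta(1/\log n)$, and padding constants give the final bound $5/(\taupar \log n)$. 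A union bound over the at-most $(\log n)^{O(1/\taupar)}$ nodes of the recursion tree, combined with the $n^{-\Omega(\log n)}$ tail bounds of each sampling step, yields the claimed success probability.

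The running time follows from the structure of the recursion tree: branching factor $O(1/\gamma)=O(\log n/\taupar)$ per level and depth $O(1/\taupar^2)$, yielding $(\log n)^{O(1/\taupar)}$ leaves, each doing only $\poly\log n$ work (sampling, estimating violation counts, and longest path on a DAG of $\poly\log n$ vertices). The main obstacle will be the proof of the dichotomy lemma: one must show that the splitter chosen by the DAG maximization is either truly conservative in the relaxed sense (with both the $\mu|\cB\cap\cF|$ and $\gamma\width(\cB)$ slack), or else its induced sub-box losses add up strictly below $(1-\mu)\loss(\cB)$, and the argument must be robust to the fact that the algorithm sees only sampled estimates of the true splitter quality rather than exact values. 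Handling this robustness cleanly is what forces the introduction of the secondary error parameter $\delta$ alongside $\tau$.
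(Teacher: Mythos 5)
Your overall plan — boost a crude $(\tau,\delta)$-approximation recursively using the $\loss$-based error measure, the dichotomy between conservative splitters and boostable boxes, and a layered-DAG longest-path step on a strip decomposition — matches the high-level idea of the paper's algorithm. But there is a genuine gap in the depth analysis that causes your running time bound to miss the claimed $(\log n)^{O(1/\taupar)}$.

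You fix a single constant $\mu=\Theta(\taupar)$ at all levels and let $\tau$ decrease multiplicatively as $\tau\to\tau(1-\mu)$. Starting from a base value $\tau_0=\Theta(1)$, reaching $\taupar$ then requires $T=\Theta\bigl(\log(1/\taupar)/\mu\bigr)=\Theta\bigl(\log(1/\taupar)/\taupar\bigr)$ recursion levels (and the $1/\taupar^2$ you write is even larger). Since each level contributes a $(\log n)^{\Theta(1)}$ branching factor, this yields running time $(\log n)^{\Theta(\log(1/\taupar)/\taupar)}$, which is \emph{not} $(\log n)^{O(1/\taupar)}$ under the standard reading of the $O$-notation with the constant independent of $\taupar$. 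The paper achieves depth $t_{\max}=\lceil 4/\taupar\rceil=O(1/\taupar)$ precisely because it lets the splitter parameter $\mu$ depend on the recursion level $t$, taking $\mu_t=2/(t+3)\approx\tau_t/2$. With this choice, the per-level improvement is not multiplicative but quadratic — roughly $\tau_{t-1}-\tau_t=\Theta(\tau_t^2)$ — which gives the harmonic decay $\tau_t=4/t$ and a linear-in-$1/\taupar$ depth. The optimization of $\mu_t$ so that the recurrence
\[
\tau_t \;\geq\; \frac{4\tau_{t-1}+3\tau_{t-1}^2}{(2+\tau_{t-1})^2}
\]
is satisfied by $\tau_t=4/t$ is the crux of the depth analysis, and it cannot be recovered from a fixed $\mu$.

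A second, smaller point: the paper's $\approxlis$ separates the ``splitting'' phase ($\terminalbox$, which simulates the interactive protocol, repeatedly splitting as long as a balanced, nearly violation-free splitter exists and does \emph{not} decrement $t$) from the ``boosting'' phase ($\gridchain$, which calls $\approxlis_{t-1}$ on grid subboxes). Your proposal collapses these into a single step where every level both samples splitters and calls the DAG longest-path with a decremented parameter. This is not automatically fatal, but it obscures why a level can be charged with a genuine $\loss$-reduction: in the paper the dichotomy lemma is applied at a terminal box where $\findsplitter$ has already failed, guaranteeing few safe indices, and that failure is exactly what makes the boost real. You would need to argue that your DAG maximization over strip boundaries always extracts that reduction — otherwise the $(1-\mu)$ factor passed to the recursive call is not justified. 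The paper avoids this by doing the accounting not as a clean case split but by carefully bounding five error terms $\errorterm_1,\ldots,\errorterm_5$ and choosing $\mu_t$ to make the coefficient of $\outside(\vec{\cE}(\cT))$ nonpositive; you should expect similar bookkeeping to be unavoidable.
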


Our improved algorithm gives:

\begin{theorem}
\label{thm:alg2}   
 There is a randomized algorithm $\improvedmain$ that:
\begin{itemize}
\item  takes as input an integer $n$, an array $f$ of length $n$ and  error parameters $\taupar, \deltapar \in (0,1)$,
\item  runs in time $(1/\deltapar\taupar)^{O(1/\taupar)}(\log n)^c$, and 
\item outputs a value that, with probability at least $3/4$, is a
$(\taupar, \deltapar)$-approximation to $\lis_f$.
\end{itemize}
\end{theorem}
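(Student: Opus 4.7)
The plan is to build $\improvedmain$ by iteratively applying a boosting procedure on top of $\basicmain$, using a carefully tuned schedule of parameters. The basic algorithm of \Thm{alg1} provides a $(\taupar, 5/(\taupar \log n))$-approximation that serves as the starting ``oracle'' $A_0$. I would then produce a sequence $A_0, A_1, \ldots, A_k$ of progressively better approximation procedures, with $A_i$ being a $(\tau_i, \delta_i)$-approximation where $\tau_i$ decreases toward $\taupar$ while the accumulated secondary error stays below $\deltapar$.

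The boosting step, following the intuition outlined in Section~2, works as follows. Given an oracle $A$ producing $(\tau, \delta)$-approximations, to approximate $\lis(\cB)$ we fix a balanced index $s$ inside $\cB$, pick a polylogarithmic-size sample of candidate $y$-values from $\cF \cap \cB$, and for each candidate $P = \langle s,y\rangle$ compute $A(\cB_L(P)) + A(\cB_R(P))$, outputting the maximum. The key dichotomy is: at the maximizer $P^*$, either the total loss of the two subboxes is at most $(1-\mu)\loss(\cB)$ (so the primary error shrinks by the factor $1-\mu$), or $P^*$ is a conservative splitter, in the sense that the number of $\cF$-points of $\cB$ in violation with $P^*$ exceeds $\mu|\cB \cap \cF|$. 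Because conservative splitters in this purely multiplicative sense do not always suffice to simulate the interactive protocol from Section~2, I use the relaxed splitter definition (allowing up to $\mu|\cB \cap \cF| + \gamma \width(\cB)$ violations) and replace the $2$-way split with an $r$-way strip decomposition where $r = \Theta(1/\gamma)$; the boosting step then becomes a longest-path computation on an $(r+1)$-layer DAG whose vertices are candidate splitter points at the strip boundaries.

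For the parameter schedule, the number of boosting levels and the values of $\mu$, $\gamma$, and $\delta_i$ at each level must be carefully coordinated. The naive choice (with $\delta$ held roughly fixed) gives only a $(\log n)^{O(1/\taupar)}$ running time, matching the basic algorithm; the improved algorithm follows the strategy sketched at the end of Section~2 of shrinking $\delta$ aggressively between levels so that the total number of recursive calls needed for boosting is a function of $\taupar$ alone rather than of $\log n$. This keeps the exponent of $\log n$ in the running time a universal constant at the cost of a $(1/\deltapar\taupar)^{O(1/\taupar)}$ pre-factor. Confidence amplification from constant per-step failure probability to the stated $3/4$ success probability is obtained by running each subroutine $O(\log n)$ times and taking medians, with a union bound over the polynomially many subboxes visited by the recursion.

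The main obstacle will be making the dichotomy quantitatively tight under the relaxed splitter definition: showing that the interactive-protocol simulation still yields a valid $(\tau_i, \delta_i)$-approximation at each recursive level, while simultaneously ensuring that the $r$-way strip decomposition preserves the $(1-\mu)$-factor improvement in the primary error when no good splitter exists. Running alongside this is the delicate secondary-error bookkeeping: each boosting level contributes additive $\delta$-error from the sampling used to certify splitters and from the $r$ subcall errors, and finding a parameter schedule that drives both error components below their targets while keeping the exponent of $\log n$ a universal constant is the main technical challenge.
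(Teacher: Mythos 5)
Your proposal stays at the level of the Section~2 intuition and does not close the gaps that the paper's proof has to close. The most serious missing ideas are the following.

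First, the improved algorithm does \emph{not} invoke $\basicmain$ as an oracle and wrap a boosting loop around it. Instead, the paper keeps the exact recursive structure of $\approxlis$/$\classify$/$\critbox$/$\gridchain$ and only rewires the internal subroutine $\terminalbox$ and the global parameter settings (via a new ``error controller'' $\errorcont = \max(C_2, t_{\max}/\deltapar)$). Your plan of a sequence $A_0, A_1, \ldots, A_k$ where $A_{i+1}$ calls $A_i$ is essentially the viewpoint the paper explicitly \emph{rejects} in Section~7.2 because it piles recursion upon recursion; the paper's contribution is to ``unwind'' that second recursion into a phase structure inside $\terminalbox$ (the counter $j$ together with the geometrically-growing $\gamma_j,\rho_j$ and the adaptively rising threshold $\theta$). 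Your proposal never identifies this phase mechanism, and it is precisely what lets the grid-precision parameter $\gridprec$ become a function of $\errorcont$ rather than of $\log n$, which is where the exponent of $\log n$ gets flattened to a constant.

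Second, your error-amplification argument is wrong for the regime the theorem needs. You propose running each subroutine $O(\log n)$ times and union-bounding over the polynomially many subboxes that the recursion visits. In the basic algorithm this works because $\ssz = (\log n)^{\Theta(1)}$ makes each per-call failure probability $n^{-\Omega(\log n)}$; but the whole point of $\improvedmain$ is to shrink $\ssz$ to $\poly(1/\taupar,1/\deltapar)$, independent of $n$, so each recursive call errs with \emph{constant} (or merely $\poly(\errorcont^{-1})$) probability and no union bound over $n^{O(1)}$ boxes can possibly succeed. The paper handles this with the recursive definition of \emph{tainted} box-level pairs and the inductive bound (\Lem{tainted}) that the taint probability stays below $\taintprob = \gridprec^5/C_2$ at every level; errors from a small fraction of recursive calls are tolerated and shown not to accumulate. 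Without something playing the role of this tainting argument your scheme cannot reach constant sample size, and hence cannot reach the claimed running time $(1/\deltapar\taupar)^{O(1/\taupar)}(\log n)^c$.

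Finally, even setting these aside, your ``$r$-way strip decomposition'' step is a description of what the \emph{basic} algorithm already does (the $\gridchain$/grid-digraph longest-path computation). It is not the new ingredient. And the third secondary-error claim (\Clm{et5i}) in the improved analysis is qualitatively harder than in the basic algorithm: because a terminal box can now be reached without $\findsplitter$ having just failed on it, the dichotomy lemma has to be applied \emph{globally} to the concatenated chain $\vec{\cD}$ across all terminal strips, not per strip; your proposal gives no hint of why per-strip reasoning fails or how to recover. So the proposal identifies the correct goal and the correct high-level dichotomy, but it does not contain the algorithmic change (phases in $\terminalbox$) or the probabilistic bookkeeping (tainted pairs) or the global dichotomy application that together actually establish \Thm{alg2}.
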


In the second theorem the probability of error is 1/4, as compared to $n^{-\Omega(\log n)}$ in the first.    It just happens that the analysis
of the first algorithm  gives a better error probability.  As we noted in the introduction, this difference
is not significant:
we can always reduce the error probability of the second algorithm to any desired $\errorprob>0$ by the standard trick of doing $O(\log(1/\errorprob)$
independent trials of the algorithm, and outputing the median of the trials.

We deduce \Thm{main} and \Thm{dist} from  \Thm{alg2}.

\Thm{dist} requires a few calculations.

\begin{proof}[of \Thm{main}]
A $(\taupar,\deltapar)$-approximation is also an additive $\taupar+\deltapar$ approximation.  Given a desired additive error
$\delta$ in \Thm{main}, we set $\deltapar=\taupar=\delta/2$ and run $\improvedmain$. \Thm{alg2}
gives us the desired guarantee.
\end{proof}

\begin{proof}[of \Thm{dist}]
For convenience, we will assume that $\improvedmain$ has an error of $n^{-\Omega(\log n)}$.
(Since we will make at most $\poly(n)$ runs of $\improvedmain$ and can union bound, 
we henceforth assume no error.)
Suppose we run $\improvedmain$ with parameters $\taupar,\deltapar$ and $a$ is the estimate.
Then, $|n-a - \loss_f| \leq \taupar \loss_f + \deltapar n$.
We divide by $n$, note that $\eps_f = \loss_f/n$ and denote $b = 1 - a/n$.
Hence, $b \in [\eps_f(1-\taupar) - \deltapar, \eps_f(1+\taupar) + \deltapar]$.
Rearranging, $\eps_f \in [(b-\deltapar)/(1+\taupar),(b+\deltapar)/(1-\taupar)]$.
For convenience, we denote this interval by $[b_1, b_2]$.
%
%

Our aim is to choose $\taupar$ and $\deltapar$ such that $b_2/b_1 \leq (1+\tau)$.
Suppose we set $\taupar = q\cdot\tau$ and $\deltapar \leq q\cdot\taupar\eps_f$,
for some sufficiently small absolute constant $q$. Then,
\begin{eqnarray*}
	\frac{(b+\deltapar)/(1-\taupar)}{(b-\deltapar)/(1+\taupar)} 
	= \frac{(b+\deltapar)(1+\taupar)}{(b-\deltapar)(1-\taupar)} 
	& \leq & \frac{(\eps_f(1+\taupar) + 2q\cdot\taupar\eps_f)(1+\taupar)}{(\eps_f(1-\taupar)-2q\cdot\taupar\eps_f)(1-\taupar)} \\
	& \leq & \frac{(1+q(1+2q)\tau)(1+q\tau)}{(1-q(1-2q)\tau)(1-q\tau)} \leq 1+\tau.
\end{eqnarray*}
%
%
%
%
%

But the value of $\eps_f$ is not known in advance. We fix $\taupar = q\cdot\tau$
and run $\improvedmain$ iteratively 
where the value of $\deltapar$ during the $j$th run is $\taupar/2^j$. 
%
The algorithm will terminate before $\deltapar \leq q\cdot\taupar\eps_f/2$.
The running time is dominated by that of the last iteration,
which is $(1/\tau \eps_f)^{O(1/\tau)}(\log n)^c$.
%
%
%
%
%
%
%
%
%
\end{proof}

\section{Algorithmic and analytical  building blocks}
\label{sec:blocks}
We present some procedures  used in  our algorithms, and some analytic tools we'll need.
First we establish some conventions and review  basic tail bounds that will be useful in analyzing the use of randomness in our algorithms
We define the notion of a good {\em splitter} of a box and present a subroutine for finding
splitters.  We present the important {\em dichotomy lemma} that roughly says: if there are few good splitters in a box $\cB$
then $\loss(\cB)$ must be a non-trivial fraction of $|\cB \cap \cF|$. 
We present a simple subroutine that given a box constructs a grid inside it that is suitably representative of the box. 
\subsection{Conventions for random bits}
\label{subsec:random}

Random sampling is needed  in the following procedures:

\begin{itemize}
\item
The procedures $\findsplitter$, $\buildgrid$ which are presented later in this section. 
We refer to the random bits used
in these procedures as {\em secondary random bits}. 
\item 
The main procedure $\approxlis$ which is presented in \Sec{first algorithm}.  
The random bits used in this procedure are called
the {\em primary random bits}.
\end{itemize}

All procedures depend on the function $f$. We treat the function $f$ as fixed, so the dependence on $f$ is implicit.  
The other arguments to these procedures are boxes, indices, and auxiliary precision and error parameters.
We can enumerate the possible arguments to each of these procedures.
Indices have $n$ possible values, and boxes always have their corners on the universal grid, so there are at most
$n^4$ of them.  The auxiliary parameters will be from a restricted set of size at most $((\log n)/\tau)^{O(1)}$ where
$\tau$ is the primary error parameter.
 
For purposes of analysis, we imagine that  before running the algortihm we pregenerate
all of the random bits needed for each procedure and each possible set of arguments for the procedure, 
Once all of these bits are (conceptually) fixed, the running of the algorithm
is deterministic.   This viewpoint has three advantages for us.

\begin{enumerate}
\item In our overall algorithm, the same procedure may be called multiple times with the same arguments. For each
such call, we use the \emph{same sequence of random bits}, as generated at the outset of the algorithm.
Therefore, every such duplicate
call produces the same output.
\item The randomized procedure $\classify$ (defined in Section) will itself be run on a few randomly chosen indices.
In the analysis, we need to reason about the output of $\classify$
on all inputs, not just the ones evaluated by the algorithm. By generating random bits for every possible call
to a procedure, the output of $\classify$ can be seen as fixed on all possible inputs.
If the random bits were not fixed in advance, then the output of $\classify$ on a new input 
is a random variable.  Both points of view lead to the same mathematical conclusions,
but viewing the randomness as fixed from the  beginning simplifies the analysis.
\item In the analysis, we identify some useful (deterministic) assumptions of the ensemble of all of the random bits.   
The main analysis
shows that if the ensemble of random bits satisfies these assumptions, 
then the output of our algorithm is guaranteed to have the right properties. This involves no probabilistic reasoning.  
Probabilistic arguments are only required to show that these two assumptions 
hold with very high probability over the choice of the ensemble of random bits.  
\end{enumerate}

The pre-generation of random bits is, of course, simply an analytical
device.  Since we want our algorithm to be fast, we cannot afford
to generate all  of bits in advance. So 
we generate random bits only as we need them, and only for those choices of parameters to procedures that actually arise.  Once we evaluate
a procedure with a given set of arguments, we record the input parameters and output. If the same procedure is called again
with the same arguments, we return the same value.    It is evident that this online approach to generating randomness produces the same
distribution over executions and outputs as the inefficient offline approach. Hence, we execute the online algorithm,
but use the offline viewpoint for analysis.

\subsection{Tail bounds for sums of random variables}
\label{subsec:tail}

We recall a version of the Chernoff-Hoeffding bound for sums
of random variables:

\begin{prop}
\label{prop:hoeff1}
\cite{Ho63}
Let $X_1,\ldots,X_n$ be independent random variables with $X_i \in [a_i,b_i]$
and $\mu_i=\mathbb{E}[X_i]$ and $\mu=\sum_i \mu_i$.  Then for any $T>0$:
\begin{eqnarray*}
\pr[\sum_i X_i - \mu \geq T] & \leq & e^{-2T^2/\sum_i (b_i-a_i)^2}\\
\pr[\sum_i X_i - \mu \leq -T] & \leq & e^{-2T^2/\sum_i (b_i-a_i)^2}.
\end{eqnarray*}

\end{prop}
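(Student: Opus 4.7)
The plan is to prove this via the standard exponential moment (Chernoff) method. The upper and lower tail bounds are symmetric: applying the upper bound to $-X_1,\ldots,-X_n$, which take values in $[-b_i,-a_i]$ (intervals of the same widths $b_i-a_i$), immediately gives the lower tail. So I would focus on proving $\pr[\sum_i X_i - \mu \geq T]\leq \exp(-2T^2/\sum_i (b_i-a_i)^2)$.

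First I would apply Markov's inequality to the nonnegative random variable $\exp(\lambda(\sum_i X_i - \mu))$ for an arbitrary $\lambda>0$, reducing the tail probability to an expression involving the moment generating function:
\[
\pr\bigl[\textstyle\sum_i X_i - \mu \geq T\bigr] \leq e^{-\lambda T}\, \mathbb{E}\bigl[\exp(\lambda \textstyle\sum_i (X_i - \mu_i))\bigr].
\]
By independence of the $X_i$, the MGF on the right factors as $\prod_i \mathbb{E}[e^{\lambda(X_i - \mu_i)}]$, so the task reduces to bounding the MGF of each centered variable $Y_i := X_i - \mu_i$, which lies in an interval of width $b_i - a_i$.

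The main technical step is Hoeffding's lemma: for any random variable $Y\in[a,b]$ with $\mathbb{E}[Y]=0$, $\mathbb{E}[e^{\lambda Y}]\leq \exp(\lambda^2(b-a)^2/8)$. To prove it I would use convexity of $y\mapsto e^{\lambda y}$: for $y\in[a,b]$, $e^{\lambda y}\leq \frac{b-y}{b-a}e^{\lambda a}+\frac{y-a}{b-a}e^{\lambda b}$; taking expectations and using $\mathbb{E}[Y]=0$ gives $\mathbb{E}[e^{\lambda Y}]\leq \frac{b}{b-a}e^{\lambda a} - \frac{a}{b-a}e^{\lambda b}$. Setting $p = -a/(b-a)\in[0,1]$ and $h = \lambda(b-a)$, this equals $e^{\psi(h)}$ with $\psi(h) := -ph + \ln(1 - p + p e^h)$. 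A short calculation gives $\psi(0)=\psi'(0)=0$ and $\psi''(h) = q(1-q)\leq 1/4$ where $q = pe^h/(1 - p + pe^h)\in[0,1]$, so by Taylor's theorem $\psi(h)\leq h^2/8$. This second-derivative estimate is the only nontrivial step and is the main obstacle in the argument.

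Combining the pieces yields, for every $\lambda > 0$,
\[
\pr\bigl[\textstyle\sum_i X_i - \mu \geq T\bigr] \leq \exp\Bigl(-\lambda T + \tfrac{\lambda^2}{8}\sum_i (b_i - a_i)^2\Bigr).
\]
The final step is to optimize over $\lambda$: differentiating the exponent and setting it to zero gives $\lambda^* = 4T/\sum_i (b_i - a_i)^2$, and substituting back produces the stated bound $\exp(-2T^2/\sum_i (b_i - a_i)^2)$. Everything other than the bound on $\psi''$ is routine calculus and algebra.
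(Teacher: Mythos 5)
Your proof is correct; this is the standard exponential-moment argument via Hoeffding's lemma, and all the calculations (the convexity bound, the identities $\psi(0)=\psi'(0)=0$ and $\psi''(h)=q(1-q)\le 1/4$, and the optimization $\lambda^*=4T/\sum_i(b_i-a_i)^2$) check out. The paper states this proposition as a cited result from Hoeffding's 1963 paper without reproducing a proof, and what you've written is essentially the classical argument from that reference.
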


A standard application of is to bound
the probability of error when estimating the size of a subpopulation within
a given population. 
For a finite set $X$,  a {\bf random sample of 
size $m$ from a set $X$} means a sequence
$x_1,\ldots,x_m$ of elements each drawn uniformly and independently  from $X$. 

\begin{prop}
\label{prop:hoeff2}
Let $X$ be an index interval and $\gamma \in [0,1]$ and
$s \in \mathbb{N}^+$.
Let $A \subseteq X$ be fixed.
For a random sample  $x_1,\ldots,x_s$ from $X$,
let $r$ denote the fraction of points 
that belong to $A$.
Then
$\Pr[|r - \frac{|A|}{|X|}| \geq \gamma] \leq 2e^{-2\gamma^2s}$.
\end{prop}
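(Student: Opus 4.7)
The plan is to reduce directly to Proposition \ref{prop:hoeff1} by expressing the empirical fraction $r$ as a normalized sum of i.i.d.\ bounded random variables. First I would introduce, for each $i \in \{1,\ldots,s\}$, the indicator $X_i = \mathbf{1}[x_i \in A]$. Since $x_1,\ldots,x_s$ are drawn independently and uniformly from $X$, the $X_i$ are independent, each takes values in $[0,1]$, and each has mean $\mu_i = |A|/|X|$. Thus the total mean is $\mu = s|A|/|X|$, and the empirical fraction satisfies $r = \frac{1}{s}\sum_{i=1}^s X_i$.

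Next I would apply Proposition \ref{prop:hoeff1} with $a_i=0$, $b_i=1$, and threshold $T = \gamma s$. The denominator in the Hoeffding exponent becomes $\sum_i (b_i - a_i)^2 = s$, so each of the two one-sided bounds gives
\[
\Pr\!\left[\sum_{i=1}^s X_i - \mu \geq \gamma s\right] \leq e^{-2\gamma^2 s}, \qquad \Pr\!\left[\sum_{i=1}^s X_i - \mu \leq -\gamma s\right] \leq e^{-2\gamma^2 s}.
\]
Dividing the event inside each probability by $s$ rewrites these as $\Pr[r - |A|/|X| \geq \gamma] \leq e^{-2\gamma^2 s}$ and $\Pr[r - |A|/|X| \leq -\gamma] \leq e^{-2\gamma^2 s}$.

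Finally I would finish with a union bound over the two one-sided tails to obtain
\[
\Pr\!\left[\,\left| r - \frac{|A|}{|X|} \right| \geq \gamma\,\right] \leq 2 e^{-2\gamma^2 s},
\]
which is the claimed inequality. There is no real obstacle here; the only things to be slightly careful about are (i) verifying that sampling with replacement from $X$ makes the $X_i$ genuinely independent (the statement specifies that the samples are drawn ``uniformly and independently''), and (ii) matching the parameter $T$ in Proposition \ref{prop:hoeff1} to the desired deviation $\gamma$ after the $1/s$ rescaling. Both are routine.
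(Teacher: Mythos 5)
Your proof is correct and is precisely the ``standard application'' the paper alludes to without writing out: the paper gives no explicit proof of Proposition~\ref{prop:hoeff2}, treating it as an immediate consequence of Proposition~\ref{prop:hoeff1}, and your derivation (indicator variables, $T=\gamma s$, union bound over the two tails) is exactly how that consequence is obtained.
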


We will also need the following upper tail bound:

\begin{prop}
\label{prop:cher} (Theorem 1, Eq. (1.8) in~\cite{DuPa})
Let $X = \sum_i X_i$, where $X_i$'s are independently distributed in $[0,1]$. 
If $T > 2e\EX[X]$, then $\pr[X > T] \leq 2^{-T}$.
\end{prop}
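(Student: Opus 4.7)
The plan is to apply the standard Chernoff / exponential moment method: bound the moment generating function of $X$ using independence, then optimize the tilt parameter. Let $\mu = \EX[X] = \sum_i \mu_i$ where $\mu_i = \EX[X_i]$. For any $t > 0$, Markov's inequality applied to $e^{tX}$ together with independence of the $X_i$ gives
\[
\pr[X > T] \;\leq\; e^{-tT}\, \EX[e^{tX}] \;=\; e^{-tT} \prod_i \EX[e^{tX_i}].
\]

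To bound each factor I would use the fact that for $x \in [0,1]$ the function $e^{tx}$ lies below the secant through its endpoints, namely $e^{tx} \leq 1 + (e^t - 1)x$. Taking expectations and using $1 + y \leq e^y$ yields $\EX[e^{tX_i}] \leq e^{(e^t - 1)\mu_i}$, and multiplying over $i$ gives
\[
\pr[X > T] \;\leq\; \exp\bigl((e^t - 1)\mu - tT\bigr).
\]

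Finally I would optimize by choosing $e^t = T/\mu$, which is legitimate because the hypothesis $T > 2e\mu$ forces $T/\mu > 2e > 1$ and hence $t > 0$. With this choice, $(e^t - 1)\mu = T - \mu \leq T$ and $tT = T\ln(T/\mu)$, so the exponent is at most $-T\bigl(\ln(T/\mu) - 1\bigr)$. The hypothesis $T/\mu > 2e$ gives $\ln(T/\mu) > 1 + \ln 2$, so the exponent is strictly less than $-T \ln 2$, yielding $\pr[X > T] \leq 2^{-T}$ as claimed. There is no real obstacle; the only care needed is recognizing that the constant $2e$ in the hypothesis is precisely tuned so that the Chernoff optimization produces a clean base-$2$ exponent, and handling the (trivial) edge case $\mu = 0$ by letting $t \to \infty$. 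Since the proposition is cited directly from Dubhashi–Panconesi, the paper can simply invoke that reference rather than reproduce the calculation.
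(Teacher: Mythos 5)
Your proof is correct, and it is exactly the standard Chernoff/exponential-moment argument that Dubhashi--Panconesi use to derive inequality (1.8); the paper itself does not reproduce any proof here but simply cites that reference. Your verification that the $2e$ threshold yields base-$2$ after optimizing $e^t = T/\mu$ is the whole content of the bound, and the $\mu = 0$ edge case is handled appropriately.
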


\subsection{Splitters and the subroutine $\findsplitter$}
\label{subsec:splitters}

A basic operation in our algorithm is to take a box $\cT$ and choose an index $s \in F^{-1}(\cT)$, that is used 
to ``split'' the box $\cT$ into the two boxes $\bx(P_{BL}(\cT),F(s))$ and $\bx(F(s),P_{TR}(\cT))$.
This gives a box chain of size two spanning $\cT$.  
For this reason, the index $s \in F^{-1}(\cT)$ is called a {\em splitter} for $\cT$; we also refer to
the point $\langle s,f(s) \rangle$ as a splitter.   Note that $s>x_L(\cT)$ for all $s \in F^{-1}(\cT)$ and therefore
 $\bx(P_{BL}(\cT),F(s))$ is always a nonempty box.  
It could happen that $s = x_R(\cT)$, in which case $\bx(F(s),P_{TR}(\cT))$ is a trivially empty box.
In this case we say that the splitter
 is {\em degenerate};  any other splitter is said to be  {\em nondegenerate}.

We give a subroutine that, given a pair of boxes $\cT \subseteq \cB$,
looks for a ``useful" splitter $s$ for $\cT$ with $F(s) \in \cT$.
We require a splitter to be \emph{balanced} and \emph{safe} (in a precise sense to be specified).  Roughly,
a balanced splitter is not too close to either the  left or right edge of $\cT$ (and in particular is not degenerate).
A splitter is safe if most $\cF$-points of $ \strip{\cT}{\cB}$ are comparable to $F(s)$. 

The definitions in this section are inspired by the classic ideas of \emph{inversion counting}~\cite{EKK+00,ACCL1}
for estimating the distance to monotonicity.

We begin by formally defining balanced indices.

\begin{definition} \label{def:balanced} For a box $\cT$ and $\rho \in [0,1]$, an index $x$
is said to be {\em $\rho$-balanced in $\cT$}
if $x-x_L(\cT)$ and $x_R(\cT)-x$
are both at least  $\rho \width(\cT)$, and is \emph{$\rho$-unbalanced} otherwise.
\end{definition}
It follows that the number of $\rho$-balanced indices is at least $\lfloor (1-2\rho)\width(\cT) \rfloor$ and the number of $\rho$-unbalanced indices
is at most $2\rho \width(\cT) + 1$.  Excluding the degerate splitter $x_R(\cT)$ we get:
\begin{prop}
\label{prop:unbalanced} For any $\rho>0$,
the number of nondegenerate $\rho$-unbalanced splitters of $\cT$ is at most $2\rho \width(\cT)$.
\end{prop}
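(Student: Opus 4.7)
The proposition is essentially a one-line corollary of the counting statement already asserted in the paragraph that precedes it, so my plan is to spell out that reduction carefully.

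First, I will take for granted the count from the preceding paragraph: the set $U \subseteq X(\cT)$ of $\rho$-unbalanced indices has $|U| \leq 2\rho\, \width(\cT) + 1$. (If needed I would verify this by separately counting the integer points in the left slab $(x_L(\cT),\, x_L(\cT)+\rho\width(\cT))$ — there are strictly fewer than $\rho\width(\cT)$ of them — and in the right slab $(x_R(\cT)-\rho\width(\cT),\, x_R(\cT)]$ — at most $\lfloor \rho\width(\cT)\rfloor + 1$ of them — and adding.)

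Next I will observe that the unique degenerate splitter $x_R(\cT)$ is itself $\rho$-unbalanced whenever $\rho > 0$: indeed $x_R(\cT) - x_R(\cT) = 0 < \rho\, \width(\cT)$, so the second defining inequality of $\rho$-balance fails. Hence $x_R(\cT) \in U$.

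Finally, every nondegenerate $\rho$-unbalanced splitter of $\cT$ is, by definition, an index in $F^{-1}(\cT) \subseteq X(\cT)$ that is $\rho$-unbalanced and not equal to $x_R(\cT)$. Therefore the set of nondegenerate $\rho$-unbalanced splitters is contained in $U \setminus \{x_R(\cT)\}$, which has size at most $2\rho\,\width(\cT) + 1 - 1 = 2\rho\,\width(\cT)$, proving the claim.

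There is no real obstacle here: the only subtlety is to remember that whether or not $x_R(\cT)$ actually lies in $F^{-1}(\cT)$ is irrelevant, since we are using it purely as an element of $U$ that we are entitled to discard before we further restrict to splitters; the containment of nondegenerate unbalanced splitters in $U\setminus\{x_R(\cT)\}$ holds unconditionally.
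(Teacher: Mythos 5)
Your proof is correct and follows the same route as the paper: the paper simply states that the number of $\rho$-unbalanced indices is at most $2\rho\width(\cT)+1$ and then says ``Excluding the degenerate splitter $x_R(\cT)$ we get'' the proposition, and you have spelled out exactly this subtraction, including the (worth-noting) observation that $x_R(\cT)$ is always $\rho$-unbalanced when $\rho>0$ and that the argument does not care whether $x_R(\cT)$ is actually a splitter.
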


\medskip

The definition of safe has several parameters and ``moving parts'' and some preliminary discussion may be helpful.   The definition
involves three parameters: a box $\cR$, a real number $\mu \in (0,1)$ and a positive real number $L$.  
The notion of safety is expressed by saying that $s$ is a {\em $(\mu,L)$-safe splitter for $\cR$}.   The requirements are 
that $s$ pass a collection of tests, one for each substrip $\cS$ of $\cR$ that is
adjacent to $s$ in the following sense: either the maximum index in $X(\cS)$ is $s-1$ or the minimum index is $s+1$.
The requirement corresponding to substrip $\cS$ of $\cR$ is that the number 
of $\cF$-points inside of $\cS$ that are in violation with $F(s)$ should be ``not too large'' compared with the
total number of $\cF$-points inside of $\cS$: specifically it should be at most $L+\mu|\cF\cap \cS|$.

\begin{definition} \label{def:safe} This is a series of definitions used to formalize safeness of splitters.
 \begin{itemize}
\item $viol(s,\cS)$: This is the number of points $P \in \cS \cap \cF$ that are in violation with
$F(s)$.
\item $Z(s,\cS) := viol(s,\cS)-\mu|\cF \cap \cS|$.  
\item $Z_x(s,\cS)$: This is defined for index $x \in X(\cS)$. Suppose $F(x) \in \cS$. If $F(x) \sim F(s)$, $Z_x(s,\cS)=-\mu$
and if $F(x) \not\sim F(s)$, $Z_x(s,\cS)=1-\mu$.  If $F(x) \not \in \cS$, $Z_x(s,\cS)$ is 0.
Note that $Z(s,\cS) = \sum_{x \in X(\cS)} Z_x(s,\cS)$.
\item  $(\mu,L)$-accepting: A strip $\cS$ is {\em $(\mu,L)$-accepting for $s$} if $Z(s,\cS)\leq L$ and is
{\em $(\mu,L)$-rejecting for $s$} otherwise.
\item Adjacent strips: A strip $\cS$ is said to be {\em adjacent to $s$} if either $x_L(\cS)=s$ (i.e. the lowest index  in  $X(\cS)$ is $s+1$)
or $x_R(\cS)=s-1$ (the highest index in $X(\cS)$ is $s-1$).   
\item $(\mu,L)$-unsafe: We say that $s$  is {\em $(\mu,L)$-unsafe  for $\cR$}  if there is
some $\cR$-strip $\cS$ that is  adjacent to $s$  and $(\mu,L)$-rejecting for
$s$.  
\item $(\mu,L)$-safe:We say that $s$ is {\em $(\mu,L)$-safe for $\cR$} if every $\cR$-strip $\cS$ that is adjacent to $s$
is $(\mu,L)$-accepting for $s$. We remark that the $(\mu,L)$-safe and $(\mu.L)$-unsafe indices for $\cR$ partition the set of splitters of $\cR$.
\end{itemize}
\end{definition}

With this preamble, we can state the main definition of \emph{adequate splitters}.

\begin{definition} Let $\cT \subseteq \cB$ be a pair of boxes and $\mu,\rho \in (0,1)$ and $L>0$ .
A splitter $s \in X(\cT)$ is {\em $(\mu,L,\rho)$-adequate for $\cT,\cB$} if:
\begin{itemize}
\item $F(s) \in \cT$
\item $s$ is $(\mu,L)$-safe for $\strip{\cT}{\cB}$
\item $s$ is $\rho$-balanced in $\cT$.
\end{itemize}
\end{definition}

We describe a procedure $\findsplitter$ that takes as input
a pair of boxes $\cT \subseteq \cB$  and parameters $\rho,\mu$ and $L \geq 1$ and
searches for such a splitter. The procedure
returns a pair $(\splitterfound, s)$
where $\splitterfound$ is set to {\tt TRUE} or {\tt FALSE}  
and $s \in X(\cT)$.

We say that an execution
of  $\findsplitter$ on input $(\cT,\cB,\mu,L,\rho)$ is 
{\em reliable} if the following holds:
\begin{itemize}
\item If $\cT$ has at least $\rho \width(\cB)$ splitters that are $(\mu,L,\rho)$-adequate  for $\cT,\cB$ then $\splitterfound$
is set to {\tt TRUE}.
\item If $\splitterfound={\tt TRUE}$ then $s$ is a $(\mu,2L,\rho)$-adequate splitter for $\cT,\cB$.  (Note that here
we relax the parameter $L$ to  $2L$.)
\end{itemize}
An execution where either of these two conditions fails is called {\em unreliable}. 
The procedure $\findsplitter$ is designed so that each execution
is reliable with high probability. The construction  is  straightforward application of
random sampling, though the details are a bit technical. We first state the claims related to $\findsplitter$.
On first reading the reader may wish to skip the details and simply take note of these claims.

\begin{prop}
\label{prop:splitter}
For any input $\cT,\cB,\mu,L,\rho$, an execution of $\findsplitter(\cT,\cB,\mu,L,\rho)$ is reliable with probability at least $1-n^{-\Omega(\log n)}$
and has running time $O((\frac{\width(\cT)}{L})^3(\log n)^4/\rho)$.
\end{prop}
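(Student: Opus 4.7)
The plan is to design $\findsplitter$ as a straightforward sample-and-test procedure. First, draw $\Theta((\log n)/\rho)$ candidate indices uniformly at random from $X(\cT)$. For each candidate $s$, deterministically verify the two easy conditions ($F(s)\in\cT$ and $\rho$-balance of $s$ in $\cT$), then perform a randomized safety test described below; return the first candidate passing all three tests with $\splitterfound=\mathtt{TRUE}$, or $\splitterfound=\mathtt{FALSE}$ if none passes.

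The safety test for a candidate $s$ iterates over every strip $\cS$ of $\strip{\cT}{\cB}$ adjacent to $s$; there are at most $2\width(\cT)$ such strips, formed by fixing $x_L(\cS)=s$ or $x_R(\cS)=s-1$ and varying the other endpoint within $X(\cT)$. For each $\cS$, take a uniform sample $x_1,\ldots,x_{m_\cS}$ from $X(\cS)$ of size $m_\cS = \Theta((|X(\cS)|/L)^2 \log^{O(1)} n)$ (capped at $|X(\cS)|$, in which case $Z(s,\cS)$ is computed exactly) and form the unbiased estimator
\[
\widehat Z(s,\cS) \;=\; \frac{|X(\cS)|}{m_\cS}\sum_{i=1}^{m_\cS} Z_{x_i}(s,\cS).
\]
Each $Z_x\in[-\mu,1-\mu]$ has magnitude at most $1$, so \Prop{hoeff1} yields $\pr[\,|\widehat Z(s,\cS)-Z(s,\cS)|\geq L/2\,]\leq n^{-\Omega(\log n)}$. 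Accept $\cS$ iff $\widehat Z(s,\cS)\leq 3L/2$, and declare $s$ safe iff every adjacent $\cS$ is accepted.

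Reliability then follows from two standard union bounds. For completeness, if $\cT$ has at least $\rho\width(\cB)\geq\rho\width(\cT)$ adequate splitters, each sampled candidate is adequate with probability $\geq\rho$, so some sampled candidate $s^{*}$ is adequate except with probability $(1-\rho)^{\Theta((\log n)/\rho)}=n^{-\Omega(\log n)}$. Conditioning on this, every adjacent $\cS$ has true $Z(s^{*},\cS)\leq L$, and a union bound over the $O(n)$ adjacent strips shows that all estimates for $s^*$ are within $L/2$ of the truth, so the test accepts $s^*$. For soundness, a single union bound over all $O((n\log n)/\rho)$ estimates performed during the execution gives that every estimate is within $L/2$ of truth, so whenever a candidate $s$ is returned it satisfies $Z(s,\cS)\leq 2L$ on every adjacent strip, i.e.\ $s$ is $(\mu,2L)$-safe; combined with the exact balance and membership checks this yields $(\mu,2L,\rho)$-adequacy.

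The main obstacle is essentially bookkeeping: tuning the polylogarithmic factor in $m_\cS$ so that both union bounds close simultaneously at the $1-n^{-\Omega(\log n)}$ level. Once the tuning is fixed, the running time is a short summation. The per-candidate work is $\sum_\cS m_\cS$, which on each side of $s$ is at most $\sum_{w=1}^{\width(\cT)} (w/L)^2 \log^{O(1)} n = O((\width(\cT)/L)^3 \log^{O(1)} n)$; multiplying by the $O((\log n)/\rho)$ candidates yields the claimed $O((\width(\cT)/L)^3(\log n)^4/\rho)$ bound. Nothing deeper than Hoeffding's inequality is required.
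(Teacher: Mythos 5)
Your reliability argument is sound and essentially matches the paper's (same Hoeffding-based sampling, completeness/soundness via union bounds over candidates and over estimates), but the running time does not come out to the claimed $O((\width(\cT)/L)^3(\log n)^4/\rho)$, and the discrepancy traces to a design difference plus an arithmetic error.

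You test \emph{every} adjacent strip of $\strip{\cT}{\cB}$, of which there are $\Theta(\width(\cT))$. Summing the per-strip sample sizes gives
\[
\sum_{w=1}^{\width(\cT)} \Bigl(\frac{w}{L}\Bigr)^2 (\log n)^{O(1)}
\;=\; \Theta\!\Bigl(\frac{\width(\cT)^3}{L^2}\Bigr)(\log n)^{O(1)},
\]
not $O((\width(\cT)/L)^3(\log n)^{O(1)})$ as you wrote; those two quantities differ by a factor of $L$. Since $L = \gamma\,\width(\cT)$ with $\gamma$ set to an inverse polylog in the basic algorithm, $L$ can be polynomially large in $n$, so this is not a negligible slip --- your procedure is a factor of $L$ slower than the proposition asserts.

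The paper avoids this by only testing adjacent strips whose width is a multiple of $\lceil L/3 \rceil$. That reduces the number of tested strips to $O(\width(\cT)/L)$, and since each has sample size $O(((\log n)\width(\cS)/L)^2)$, the per-candidate cost becomes $O((\width(\cT)/L)^3(\log n)^2)$, which with $O((\log n)^2/\rho)$ candidates gives the stated bound. The soundness argument then needs one extra step you did not anticipate: a $(\mu,2L)$-unsafe witness strip $\cS$ may not itself be tested, but the nearest tested strip $\cS'$ (largest width that is a multiple of $\lceil L/3\rceil$ and at most $\width(\cS)$) differs from $\cS$ by at most $L/3$ indices, each contributing at most $1$ to $Z$, so $Z(s,\cS') \geq Z(s,\cS) - L/3 > 5L/3$; with estimation error $L/3$ the test still rejects. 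You need this discretization (and the accompanying slack in the rejection threshold) to make the claimed running time hold.
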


For further reference. we note the following direct corollary.

\begin{corollary}
\label{cor:fails}
Let $\mu,\rho \in (0,1)$ and $L>0$.
Let $\cT$ be a subbox of $\cB$. 
Assume that a reliable run of $\findsplitter(\cT,\cB,\mu,L,\rho)$ fails to find a splitter.
Then the number of nondegenerate $(\mu,L)$-safe splitters for $\strip{\cT}{\cB}$ is at most $3\rho\width(\cT)$. 
\end{corollary}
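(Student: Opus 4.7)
The plan is to read off the corollary directly from the two guarantees that define reliability, together with Proposition \ref{prop:unbalanced} for the unbalanced splitters. The corollary is essentially a contrapositive of the first reliability clause, combined with an accounting of safe splitters that are not adequate.

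First, I would observe that a splitter $s$ of $\cT$ (that is, an $s$ with $F(s) \in \cT$) that is $(\mu,L)$-safe for $\strip{\cT}{\cB}$ and $\rho$-balanced in $\cT$ is, by definition, $(\mu,L,\rho)$-adequate for $\cT,\cB$. So the set of nondegenerate $(\mu,L)$-safe splitters partitions into two pieces: (i) those that are $\rho$-balanced in $\cT$, which are precisely the $(\mu,L,\rho)$-adequate splitters; and (ii) those that are nondegenerate but $\rho$-unbalanced in $\cT$. I plan to bound these two pieces separately.

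For piece (i), I would apply the contrapositive of the first reliability clause: because the run is reliable and $\splitterfound$ was set to \texttt{FALSE}, the number of $(\mu,L,\rho)$-adequate splitters for $\cT,\cB$ is strictly less than the threshold $\rho\,\width(\cT)$ used inside $\findsplitter$. For piece (ii), Proposition \ref{prop:unbalanced} gives at most $2\rho\,\width(\cT)$ nondegenerate $\rho$-unbalanced splitters in $\cT$ (in particular, among those that happen to be $(\mu,L)$-safe). Summing the two bounds yields strictly fewer than $\rho\,\width(\cT) + 2\rho\,\width(\cT) = 3\rho\,\width(\cT)$ nondegenerate $(\mu,L)$-safe splitters for $\strip{\cT}{\cB}$, which is the desired conclusion.

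The only conceptual subtlety, and the place where I would take the most care in writing, is the bookkeeping between the notion of ``splitter'' (where $F(s)\in\cT$ is built in), the condition of being $\rho$-balanced (which already forces nondegeneracy), and the mild relaxation of $L$ to $2L$ in the second clause of reliability — the relaxation is irrelevant here since we are only using the first clause. Beyond this, no further estimation or probabilistic argument is needed: the probabilistic content is packaged inside Proposition \ref{prop:splitter}, and what remains is a clean counting argument.
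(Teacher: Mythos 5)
Your proof is correct and follows exactly the paper's argument: split the nondegenerate $(\mu,L)$-safe splitters into the $\rho$-balanced ones (bounded by $\rho\,\width(\cT)$ via the first reliability clause) and the $\rho$-unbalanced ones (bounded by $2\rho\,\width(\cT)$ via Proposition~\ref{prop:unbalanced}), then sum. Your observations about strictness and the irrelevance of the $2L$ relaxation are accurate, though the paper simply uses the weaker ``at most'' bound, which already suffices.
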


\begin{proof} (of \Cor{fails} assuming  \Prop{splitter}) 
Since the run is reliable and no splitter was found, there are at most $\rho \width(\cT)$
splitters that are $\rho$-balanced and $(\mu,L)$-safe for $\cT,\cB$. By  \Prop{unbalanced},
the number of nondegenerate $\rho$-unbalanced splitters is at most $2\rho \width(\cT)$.  Summing, the total
number of nondegenerate $(\mu,L)$-safe splitters is at most $3\rho \width(\cT)$.
\end{proof}

The procedure $\findsplitter$ uses the following auxiliary procedures.
 
\begin{itemize}
\item {\bf $approxZ$:} The input is an index $s$, box $\cC$, and integer $m \leq \width(\cC)$
and the output is an approximation of $Z(s,\cC)$. If $m=\width(\cC)$ then this returns the exact
value $Z(s,\cC)=\sum_{x \in X(\cC)}Z_x(s,\cC)$.  Otherwise, this is obtained by taking a random sample $M$ of size $m$ from $X(\cC)$
and outputting $\frac{\width(\cC)}{m}\sum_{x \in M} Z_x(s,\cC)$.
\item {\bf $Test\_Safe$:} Input is an index $s$, box $\cR$ and parameter $L$. Output is either {\em accept} or {\em reject}.
For each strip
$\cS$ of $\cR$ that is adjacent to $s$ and has width a multiple of $\lceil L/3 \rceil$, evaluate $approxZ(s,\cS,m)$
with $m = \min(\width(\cS),10((\log n)\width(\cS)/L)^2)$.  $Test\_Safe(s,\cR,L)$ \emph{accepts} if every evaluation of $approxZ$ returns a value
less than $4L/3$ and otherwise {\em rejects}.
\item  $\findsplitter(\cT,\cB,\mu,L,\rho)$ takes a random sample $R$ of size $10(\log n)^2/\rho$ 
from  the interval $X_{\rho}(\cT)$ consisting of all indices that are $\rho$-balanced in $\cT$.
For each $s \in R$, we first check if both $F(s) \in \cT$ and $Test\_Safe(s,\strip{\cT}{\cB},L)$ accepts.  If some $s \in R$
is accepted, then
$\findsplitter$ returns $\splitterfound={\tt TRUE}$ and $splitter$ is set to $s$.  If all are rejected,
then $\splitterfound={\tt FALSE}$.
\end{itemize}

\begin{proof} (of \Prop{splitter})
The running time of  $\findsplitter(\cT,\cB,\mu,L,\rho)$ is at most $O((\frac{\width(\cT)}{L})^3(\log n)^4/\rho)$.
There are $O((\log n)^2/\rho)$ invocations to $Test\_Safe$. Note that $\width(\cT|\cB) = \width(\cT)$.
Each call to $Test\_Safe$ runs $approxZ$ with $m = O(((\log n) \width(\cS)/L)^2)$ on at most $3\width(\cT)/L$ substrips.
Multiplying all of these numbers together gives the final bound.
%
(Later, we select $\rho = 1/(\log n)^{O(1)}$ and $L = \width(\cT)/(\log n)^{O(1)}$, to bound the running time by $(\log n)^{O(1)}$.)

The random seed of $\findsplitter$ is used to specify the sample $R$  as well as the samples $M$ for each
call to $approxZ$.  We say that a particular value of the
random seed is {\em sound for input $(\cT,\cB,\mu,L,\rho)$} provided that:
\begin{itemize}
\item If the number of  $(\mu,L,\rho)$-adequate splitters for $\cT,\cB$ is at least $\rho \width(\cT)$, 
then at least one such index is selected for $R$.
\item For every call to $approxZ(s,\cS,m)$, the estimate returned is within $L/3$ of $Z(s,\cS)$.
\end{itemize}

We say that the random seed is {\em sound} if it is sound for all possible inputs $(\cT,\cB,L,\rho)$.  We now
show (1) the probability that the random seed is not sound is $n^{-\Omega(\log(n))}$, and (2)  if the random
seed is sound then for any input $(\cT,\cB,L,\rho)$ to $\findsplitter$, 
the execution $\findsplitter(\cT,\cB,L,\rho)$ is reliable.  This will complete the proof of the  proposition.

First consider (1).   First we fix an input $(\cT,\cB,L,\rho)$ to $\findsplitter$ and upper bound the probability that the random seed is not
sound for that input.  Consider the probability that the first condition of soundness is violated.
 by the hypothesis of (1), a randomly selected index from $X(\cT)$ is $(\mu,L,\rho)$-adequate for $\cT,\cB$
with probability at least $\rho$.  The probability that $R$ contains no such index is at most $(1-\rho)^{10(\log n)^2/\rho}=n^{-\Omega(\log n)}$.
Next consider the probability that the second condition of soundess is violated.
 Consider some call to $approxZ(s,\cS,m)$.
The output is $X = X_1 + X_2 + \ldots + X_m$, where each random variable $X_i = (\width(\cS)/m)Z_{x_i}(s,\cS)$ for the  $i$th sample $x_i$.
Note that $\EX[Z_{x_i}] = Z(s,\cS)/\width(\cS)$, so $\EX[X] = Z(s,\cS)$. Also, $X_i \in [-\width(\cS)/m,0]$ (since $\mu \in (0,1)$).
By \Prop{hoeff1}, since $m = 10((\log n)\width(\cS)/L)^2$, the probability that
the estimate has error more than $L/3$ is $\exp(-\Omega(L^2m/\width(\cS)^2)) = n^{-\Omega(\log n)}$. Thus the overall
probability that the seed is not sound for $(\cT,\cB,L,\rho)$ is at most $n^{-\Omega(\log n)}$.

Note that the number of possible settings of the parameters for $approxZ(s,\cS,m)$ is at most $n^3$ (at most $n$ choices each for $s$ and $n$
and the boundary of the $\cB$-strip $\cC$), so we can sum the probability of errors over all possible calls and conclude that
the probability that a random seed is not sound is $n^{-\Omega(\log n)}$.

%
%

Next we show (2).  Assume the seed is sound and fix the input $(\cT,\cB,L,\rho)$ to $\findsplitter$.  For the first condition of reliability,
 suppose that there  are at least $\rho \width(\cT)$
splitters that are $(\mu,L,\rho)$-adequate for $\cT,\cB$.  By the soundness of the seed, at least one such splitter $s$ is chosen to be in $R$. 
 When $Test\_Safe(s,\strip{\cT}{\cB},L)$  is performed, 
for each examined strip $\cS$, $approxZ(s,\cS,m)$ will be at most $4L/3$.  This is because $s$ is $(\mu,L)$-safe for $\strip{\cT}{\cB}$
and thus $Z(s,\cS) \leq L$ and the second condition of soundness guarantees that $approxZ(s,\cS,m) \leq Z(s,\cS)+L/3$.
So $s$ will be accepted, and thus $\splitterfound$ will be set to {\tt TRUE}. 

For the second condition defining reliable, it suffices to show that
no $(\mu,2L)$-unsafe splitter is accepted.  Suppose $s$ is $(\mu,2L)$-unsafe.  Then there is a strip $\cS$ adjacent to $s$ such that $Z(s,\cS) > 2L$.  If $\cS'$
is the strip adjacent to $s$ whose width is the largest multiple of $\lceil L/3 \rceil$ below $\width(\cS)$
then $Z(s,\cS') \geq Z(s,\cS)-L/3 > 5L/3$. By the soundness of the seed, $Test\_Safe(s,\strip{\cT}{\cB},L)$ evaluates
$approxZ(s,\strip{\cT}{\cB},m)$, which returns a value greater than $4L/3$. Hence, $Test\_Safe(s,\strip{\cT}{\cB},L)$ rejects.
\end{proof}

\subsection{The dichotomy lemma}
\label{subsec:dichotomy}

In this subsection, we prove a key technical lemma. The lemma expresses the 
dichotomy discussed in the overview of the algorithm in the introduction.
If a given box has few good splitters, then any increasing sequence in the box must
miss a significant fraction of $\cF$-points in the box. This lemma can be seen as a generalization
(and a different viewpoint) of Lemma 2.3 in~\cite{ACCL1}. That result was a key part
of the $2$-approximation for the distance to monotonicity, and related the distance to
(roughly speaking) the number of unsafe points.

\begin{figure*}[tb]
  \centering
  \subfloat[The setting]{\includegraphics[width=0.35\textwidth]{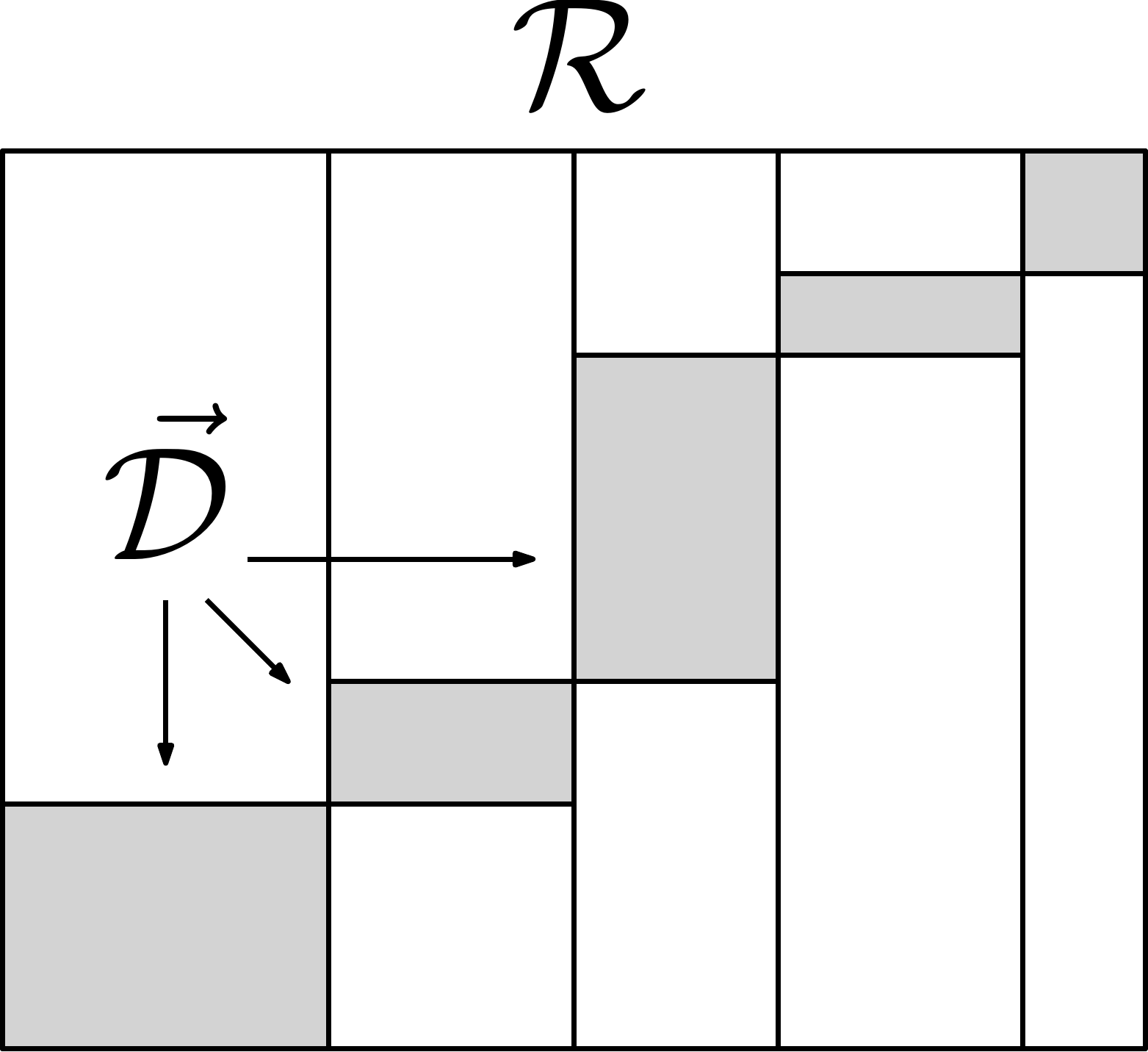} \label{fig:dichot1}} 
  $\qquad \qquad$
  \subfloat[The strategy]{\includegraphics[width=0.35\textwidth]{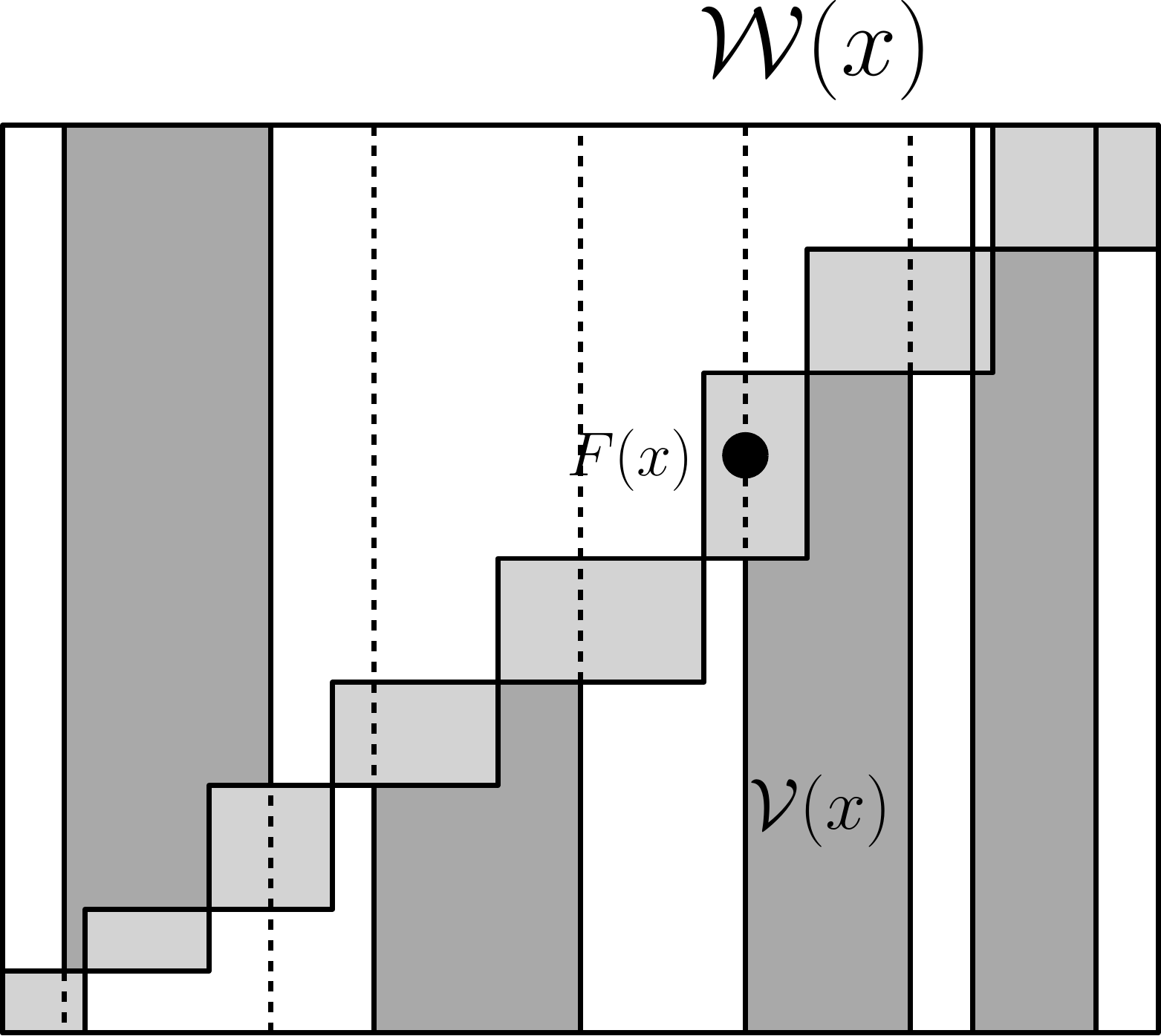} \label{fig:dichot2}}
    \caption{(a) We have a boxchain $\vec{\cD}$ compatible with strip decomposition $\vec{\cS}$, spanning box $\cR$.
    (b) We bound $\chi^{out}$ through the dark gray regions. These disjoint regions are constructed through a set
    of strips. $\cV(x)$ consists of violations in the strip that are not contained in the main box chain.}
\end{figure*}

The set up for the lemma is:
\begin{description}
\item[D1] $\cR$ is a box.
\item[D2]  $\vec{\cS}$ is a strip decomposition of  $\cR$, i.e., a partition of $\cR$ into $\cR$-strips. 
\item[D3] For $x \in X(\cR)$, $s(x)$ is the width of the strip containing $x$.
\item[D4] $\mu \in (0,1)$.
\item[D5]  An index $x \in X(\cR)$ is called {\em safe} or {\em unsafe},
depending on whether it is a $(\mu,s(x))$-safe splitter for $\cR$.
\item[D6] $\vec{\cD}$ is an arbitrary box-chain that is 
compatible with $\vec{\cS}$. (This means that $\vec{\cS}$ consists of the strips $\strip{\cD}{\cR}$ for $\cD \in \vec{\cD}$.
Refer to \Fig{dichot1}.)
\item[D7] An index $x$ such that $F(x) \in \vec{\cD}^{\cup}$ is called a {\em  $\vec{\cD}$-index}. 
\item[D8] $S$ is the set of $\vec{\cD}$-indices $x$ that are safe, and $U$ 
is the set of  $\vec{\cD}$-indices that are unsafe. 
\item[D9] $\chi=|\cR \cap \cF|$ is the number of $\cF$-points in $\cR$, $\chi^{in}$ is the number of such points
that lie inside $\cD^{\cup}$ and $\chi^{out}=\chi-\chi^{in}$ is the number of such points that lie outside of
$\cD^{\cup}$.
\end{description}

\begin{lemma}
\label{lem:dichotomy}
Under hypotheses D1-D9,
we have  $\chi^{out} \geq \frac{\mu}{1-\mu}|U|$  and
$\chi^{in} \leq (1-\mu)\chi + \mu(|S|)$.
\end{lemma}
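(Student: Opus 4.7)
First I would observe that the two stated inequalities are algebraically equivalent and reduce to proving only one of them. Since every $\vec{\cD}$-index is either safe or unsafe, $|S|+|U|=\chi^{in}$, and together with $\chi=\chi^{in}+\chi^{out}$ the bound $\chi^{in}\le(1-\mu)\chi+\mu|S|$ rearranges exactly to $\mu|U|\le(1-\mu)\chi^{out}$. So it suffices to establish the latter.

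For each $x\in U$ fix one witness $\cR$-strip $\cS_x$ adjacent to $x$ with $Z(x,\cS_x)>s(x)$, and split $U=U_R\sqcup U_L$ according to whether $\cS_x$ is right- or left-adjacent. By the left-right symmetry (mirroring swaps $\vec{\cD}^{SE}$ and $\vec{\cD}^{NW}$) it suffices to prove $\mu|U_R|\le(1-\mu)|\cF\cap\vec{\cD}^{SE}|$ together with the mirror statement for $U_L$ and $\vec{\cD}^{NW}$. Fix $x\in U_R$ and let $i$ be the index with $F(x)\in\cD_i$, so $s(x)=\width(\cD_i)$. Let $V(x)=\{P\in\cS_x\cap\cF:P\not\sim F(x)\}$ and split $V(x)=V_{in}(x)\sqcup V_{out}(x)$ with $V_{in}(x)=V(x)\cap\cD_i$ and $V_{out}(x)=V(x)\cap\vec{\cD}^{SE}$. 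The key geometric step is to check $V(x)\subseteq\cD_i\cup\vec{\cD}^{SE}$: for $F(y)\in V(x)$ we have $y>x$, which rules out $F(y)\in\cD_j$ for $j<i$ since $X(\cD_j)\subseteq(x_L(\cD_j),x_L(\cD_i)]$; and $f(y)<f(x)\le y_T(\cD_i)=y_B(\cD_{i+1})$ rules out $\cD_j$ for $j>i$ and also $\vec{\cD}^{NW}$, since any witnessing corner $P_j$ with $j\ge i$ has $y(P_j)\ge y_T(\cD_i)>f(y)$. Since $V_{in}(x)\subseteq\cS_x\cap\cD_i$ has size at most $\width(\cD_i)=s(x)$, and since the witness condition combined with $viol(x,\cS_x)\le|\cF\cap\cS_x|$ gives $(1-\mu)\,viol(x,\cS_x)>s(x)$, we obtain
\[
|V_{out}(x)|\ge viol(x,\cS_x)-|V_{in}(x)|>\frac{s(x)}{1-\mu}-s(x)=\frac{\mu\, s(x)}{1-\mu}.
\]

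The main obstacle will be to pass from these per-$x$ bounds to a global bound on $|\cF\cap\vec{\cD}^{SE}|$ without overcounting, since a single $\cF$-point in $\vec{\cD}^{SE}$ can belong to $V_{out}(x)$ for many different unsafe $x$. To resolve this I plan to exploit the clean decomposition $\vec{\cD}^{SE}=\bigsqcup_i R_i$ with $R_i=X(\cD_i)\times[y_B(\cR),y_B(\cD_i))$ -- the region directly below $\cD_i$ inside its strip -- and to pick the witness $\cS_x$ canonically (for instance, as the minimal right-adjacent strip $(x,c_x]\times Y(\cR)$ that is $(\mu,s(x))$-rejecting) so that from each $V_{out}(x)$ one can extract a subset $\cV(x)\subseteq V_{out}(x)\cap R_i$ of size $>\mu s(x)/(1-\mu)$ with the $\cV(x)$'s pairwise disjoint. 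These are precisely the dark-gray regions of \Fig{dichot2}. Concretely, within each $R_i$ a left-to-right greedy allocation of the $\cF$-points of $R_i$ to the unsafe indices $x\in X(\cD_i)\cap U_R$, assigning each index a chunk of $\lceil\mu s(x)/(1-\mu)\rceil$ consecutive points inside $V_{out}(x)$, should carve out the required disjoint $\cV(x)$'s; the per-$x$ lower bound guarantees that each chunk fits. Summing then yields $|\cF\cap\vec{\cD}^{SE}|\ge\sum_{x\in U_R}|\cV(x)|>\mu|U_R|/(1-\mu)$, and combining with the mirrored bound for $U_L$ and $\vec{\cD}^{NW}$ gives $\mu|U|\le(1-\mu)\chi^{out}$, which completes the proof.
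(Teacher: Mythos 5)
Your reduction to a single inequality, the choice of a witness strip $\cS_x$ for each unsafe $x$, and the geometric observation that the violations $V(x)$ lie inside $\cD_i\cup\vec{\cD}^{SE}$ are all correct and match the paper. The per-index lower bound $|V_{out}(x)|>\mu s(x)/(1-\mu)$ is also correct. The gap is in the allocation step. You claim that one can extract $\cV(x)\subseteq V_{out}(x)\cap R_i$ of size $>\mu s(x)/(1-\mu)$, but this is false in general: $V_{out}(x)\cap R_i$ consists precisely of the violations of $\cS_x$ with $F(x)$ whose \emph{indices} lie in $X(\cD_i)$, which is at most $x_R(\cD_i)-x$ points and can easily be empty (e.g.\ when $x=x_R(\cD_i)$). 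The witness strip $\cS_x$ can span many grid strips, and all the violations can sit in distant regions $R_j$, $j>i$, so nothing forces $V_{out}(x)$ to meet $R_i$ at all. Even after dropping the $R_i$ restriction and allowing $\cV(x)$ to live anywhere inside $V_{out}(x)$, the claim that a left-to-right greedy can carve out pairwise-disjoint chunks of the prescribed sizes is a Hall-type matching condition that you do not verify and that can fail: several unsafe indices in the same grid box can have nearly identical $V_{out}$ sets, so the union $\bigcup_{x\in A}V_{out}(x)$ need not be large enough to accommodate $\sum_{x\in A}\lceil\mu s(x)/(1-\mu)\rceil$ disjoint chunks.

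The paper sidesteps both difficulties by changing what is being charged. It proves the stronger per-witness bound $|\cV(x)|\ge\frac{\mu}{1-\mu}\,|\vec{\cD}^{\cup}\cap\cW(x)\cap\cF|$ --- the charge scales with the number of chain-points in the \emph{entire} witness strip, not with $s(x)$ --- and then, instead of allocating a disjoint chunk to every unsafe $x$, it greedily selects a subfamily $L'\subseteq L$ (resp.\ $R'\subseteq R$) whose strips $\cW(x)$ are pairwise disjoint yet whose union still covers $F(L)$ (resp.\ $F(R)$). Disjointness of the selected strips makes the $\cV(x)$'s disjoint for free, and the covering property ensures that the right-hand sides sum to at least $|U|$. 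So the unit you should be budgeting is ``unsafe indices covered by a selected strip,'' not ``violations extracted inside the home region $R_i$.'' To repair your argument you would need to replace the $R_i$-local chunk extraction with exactly this strip-level counting.
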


The point of the lemma is that if there are few safe indices then the fraction of $\cF$-points of $\cR$ that 
lie in $\vec{\cD}$ can't be much larger than $1-\mu$.
The idea of the proof is that each unsafe index in $\vec{\cD}^{\cup} \cap \cF$ can be associated
to a non-trivial set of points in $\cF$ that are outside of $\vec{\cD}$.

\begin{proof} For each $x \in U$, there exists a strip $\cW(x)$ with $x$ on either 
the left or right end such that $\cW(x)$ is $(\mu,s(x))$-rejecting for $x$.  
In \Fig{dichot2}, we show such a point $F(x)$ with $\cW(x)$ to the right.
Let $\cV(x)$ be the set of violations with $F(x)$, contained in $\cW(x)$ but not in $\vec{\cD}^\cup$.
In other words, $\cV(x) = \{F(y) | F(y) \not\sim F(x), F(y) \in \cW(x) \setminus \vec{\cD}^\cup\}$.
In \Fig{dichot2}, the corresponding region in $\cW(x)$ is marked in dark gray. Observe
that it is contained in $\vec{\cD}^{SE}$. We now lower bound $|\cV(x)|$. 

\begin{claim} 
\label{clm:cV}
$|\cV(x)| \geq \frac{\mu}{1-\mu} |\vec{\cD}^{\cup} \cap \cW(x)\cap \cF|$.
\end{claim}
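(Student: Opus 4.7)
The plan is to combine the hypothesis that $\cW(x)$ is $(\mu,s(x))$-rejecting for $x$, i.e.\
\[
\viol(x,\cW(x)) \;>\; \mu\,|\cF\cap\cW(x)| + s(x),
\]
with a structural bound showing that at most $s(x)$ of the $\cF$-points of $\vec{\cD}^{\cup}\cap\cW(x)$ can be in violation with $F(x)$. First I would partition $\cF \cap \cW(x)$ into four classes according to whether each point is a violation with $F(x)$ and whether it lies in $\vec{\cD}^{\cup}$; let the respective sizes be $v_{\text{in}}, v_{\text{out}}, c_{\text{in}}, c_{\text{out}}$. Then $|\cV(x)| = v_{\text{out}}$ and the quantity appearing on the right-hand side of the target inequality is $v_{\text{in}}+c_{\text{in}}$, while the rejecting condition rearranges to $(1-\mu)(v_{\text{in}}+v_{\text{out}}) > \mu(c_{\text{in}}+c_{\text{out}}) + s(x)$.

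The key step will be to establish $v_{\text{in}} \le s(x)$. Let $\cD_i \in \vec{\cD}$ be the box containing $F(x)$, so $s(x)=\width(\cD_i)$. Without loss of generality I take $\cW(x)$ to be the strip immediately to the right of $x$ (the left case is symmetric). For any $F(y) \in \vec{\cD}^{\cup}\cap\cW(x)$ with $F(y)\in\cD_j$, the condition $y>x$ rules out $j<i$ since the $X$-intervals of $\cD_1 \triangleleft \cD_2 \triangleleft \cdots$ are strictly left-to-right. And if $j>i$, the chain relation forces
\[
F(x) \;\preceq\; P_{TR}(\cD_i) \;=\; P_{BL}(\cD_{i+1}) \;\preceq\; P_{BL}(\cD_j) \;\prec\; F(y),
\]
so $F(y) \sim F(x)$ and is not a violation. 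Hence the violations counted by $v_{\text{in}}$ have indices in $X(\cD_i)\cap X(\cW(x))$, which forces $v_{\text{in}} \le \width(\cD_i) = s(x)$.

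To close, I would substitute $v_{\text{in}}\le s(x)$ into the rearranged rejecting inequality and drop $\mu c_{\text{out}}\ge 0$; using $v_{\text{in}}\le s(x)$ a second time then collapses the right-hand side to $\mu(c_{\text{in}}+v_{\text{in}})$, giving $(1-\mu)v_{\text{out}} > \mu(c_{\text{in}}+v_{\text{in}}) = \mu\,|\vec{\cD}^{\cup}\cap\cW(x)\cap\cF|$, from which the claim follows by dividing through by $1-\mu$.

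The hard part is the structural step. Without the bound $v_{\text{in}}\le s(x)$, the rejecting mass could in principle be absorbed entirely by violations already inside $\vec{\cD}^{\cup}$, leaving no useful lower bound on $v_{\text{out}}$. The bound is exactly what is needed because $s(x)$ is, by definition, the width of the very box $\cD_i$ that contains $F(x)$, and the box-chain ordering of $\vec{\cD}$ is precisely the property that confines in-chain violations with $F(x)$ to this single box.
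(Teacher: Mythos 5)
Your proof is correct and follows the same approach as the paper. The paper's proof uses exactly your key structural observation — that any violation with $F(x)$ lying in $\vec{\cD}^{\cup}\cap\cW(x)$ must be confined to the single box $\vec{\cD}[x]$, bounding $v_{\text{in}}\le s(x)$ — and then combines it with the $(\mu,s(x))$-rejecting condition via the same algebraic rearrangement (the paper routes through the intermediate bounds $|\cV(x)|\ge\mu|\cF\cap\cW(x)|$ and $|\vec{\cD}^{\cup}\cap\cW(x)\cap\cF|\le(1-\mu)|\cF\cap\cW(x)|$, while you collapse the steps into one inequality, but these are equivalent). The one difference is that the paper justifies the structural step only by a pointer to a figure, whereas you spell out the box-chain argument, so your write-up is actually more explicit on the part that carries the real content.
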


\begin{proof}
The number of violations with $x$ in $\cW(x)$ is at least $\mu|\cF \cap \cW(x)|+s(x)$.
Any violation with $x$ contained in $\vec{\cD}^\cup$ must be contained in $\vec{\cD}[x]$ (the unique box of $\vec{\cD}$ whose index set includes $x$).
This is because $\vec{\cD}$ is a box chain (a look at \Fig{dichot2} should make this clear). 
Hence, the number of such violations is at most $\width(\vec{\cD}[x]) \leq s(x)$.
Combining, $|\cV(x)| \geq \mu|\cF \cap \cW(x)|$. Since $\cV(x)$ is disjoint with $\vec{\cD}^\cup$,
$|\vec{\cD}^\cup \cap \cW(x)\cap \cF| \leq (1-\mu)|\cF \cap \cW(x)|$ and $|\cV(x)| \geq \frac{\mu}{1-\mu} |\vec{\cD}^{\cup} \cap \cW(x) \cap \cF|$.
\end{proof}
%
%
%

Let $L$ (resp. $R$) be the indices $x \in U$ such that $x$ lies to the left (resp. right) of  $\cW(x)$. 
For $x \in R$, $\cV(x) \subseteq \vec{\cD}^{NW}$ (refer to \Fig{dichot2}).
Similarly for $x \in L$, $\cV(x) \subseteq \vec{\cD}^{SE}$.  

We construct  $L' \subseteq L$ and $R' \subseteq R$
such that:

\begin{itemize}
\item The family   $\{\cW(x):x \in L'\}$ of strips 
is pairwise disjoint and the 
family $\{\cW(x):x \in R'\}$ 
is pairwise disjoint.
\item $\bigcup_{z \in L'} \cW(z)$ contains $\{F(z) | z \in L\}$
and $\bigcup_{z \in R'} \cW(z)$ contains $\{F(z) | z \in R\}$. 
\end{itemize}

The sets $L'$ and $R'$ are constructed separately by simple greedy algorithms.  To construct $L'$ (resp. $R'$), start with $L'$ (resp. $R'$) set to
the $\emptyset$ and repeatedly select
the least index $x \in L$ (resp. largest index $x \in R$) for which $F(x)$ is not already covered by $\cW(z)$ for a previously selected $z$.

Let us define $\cW'$ to be the union $\bigcup_{x \in L' \cup R'} \cW(x)$.
We lower bound $\chi^{out}$ by  $|\bigcup_{x \in L' \cup R'} \cV(x)|$ which is equal to $|\bigcup_{x \in L'}|\cV(x)|+|\bigcup_{x \in R'}|$, since  $\cV(x) \subseteq \vec{\cD}^{NW}$ for $x \in R$ and   $\cV(x) \subseteq \vec{\cD}^{SE}$ for $x \in L$. 
Furthermore, since
$\cW(x)$ are disjoint for $x \in R'$ and for $x \in L'$, this is equal to $\sum_{x \in L' \cup R'}|\cV(x)|$.
Using   \Clm{cV} 
and the fact that $U \subseteq \vec{\cD}^\cup \cap \cW'$, we obtain:
\begin{eqnarray*}
\chi^{out} \geq \Big|\bigcup_{x \in L' \cup R'} \cV(x) \Big|& = &
\sum_{x \in L' \cup R'} |\cV(x)| \\
 &  \geq & \sum_{x \in L' \cup R'}\frac{\mu}{1-\mu}|\vec{\cD}^{\cup} \cap \cW(x) \cap \cF|\\
& \geq & \frac{\mu}{1-\mu}|\vec{\cD}^{\cup} \cap \cW'(x)| \geq \frac{\mu}{1-\mu}|U|.
\end{eqnarray*} 
%
For the second conclusion of the lemma, using  $\chi^{in}= |U|+|S|$ we get:

\begin{eqnarray*}
\chi^{in} = \chi - \chi^{out} & \geq & \chi - \frac{\mu}{1-\mu}|U| \\
& = & \chi - \frac{\mu}{1-\mu} (\chi^{in} - |S|).
\end{eqnarray*}
Multiplying both sides by $1-\mu$ and adding $\mu \chi^{in}$ to both sides yields the desired inequality.
\qed\\
\end{proof}

\subsection{Value nets and the subroutine \buildnet}
\label{subsec:nets}

Given a box, we want to select a suitably  representative set of values from the box.
Let us say that a value interval $J$ is {\em $\netprec$-popular for box $\cB$} if there
are at least $\netprec \width(\cB)$ indices $x \in X(\cB)$ such that $f(x) \in J$. 
If $\cB$ is a box and $\netprec \in (0,1)$, 
a {\em $\netprec$-value net} for $\cB$ is a subset $V$ of $Y(\cB)$ such that:
\begin{itemize}
\item $y_T(\cB) \in V$.
\item For all subintervals $J$ of $Y(\cB)$ that are $\netprec$-popular for $\cB$, $V \cap J \neq \emptyset$.
\end{itemize}

Refer to \Fig{valuenet}. The value net contains a value in $J$ whenever the corresponding dark gray region contains at least
$\netprec \width(\cB)$ $\cF$-points.

\begin{figure*}[tb]
  \centering
  \subfloat[Value nets]{\includegraphics[width=0.25\textwidth]{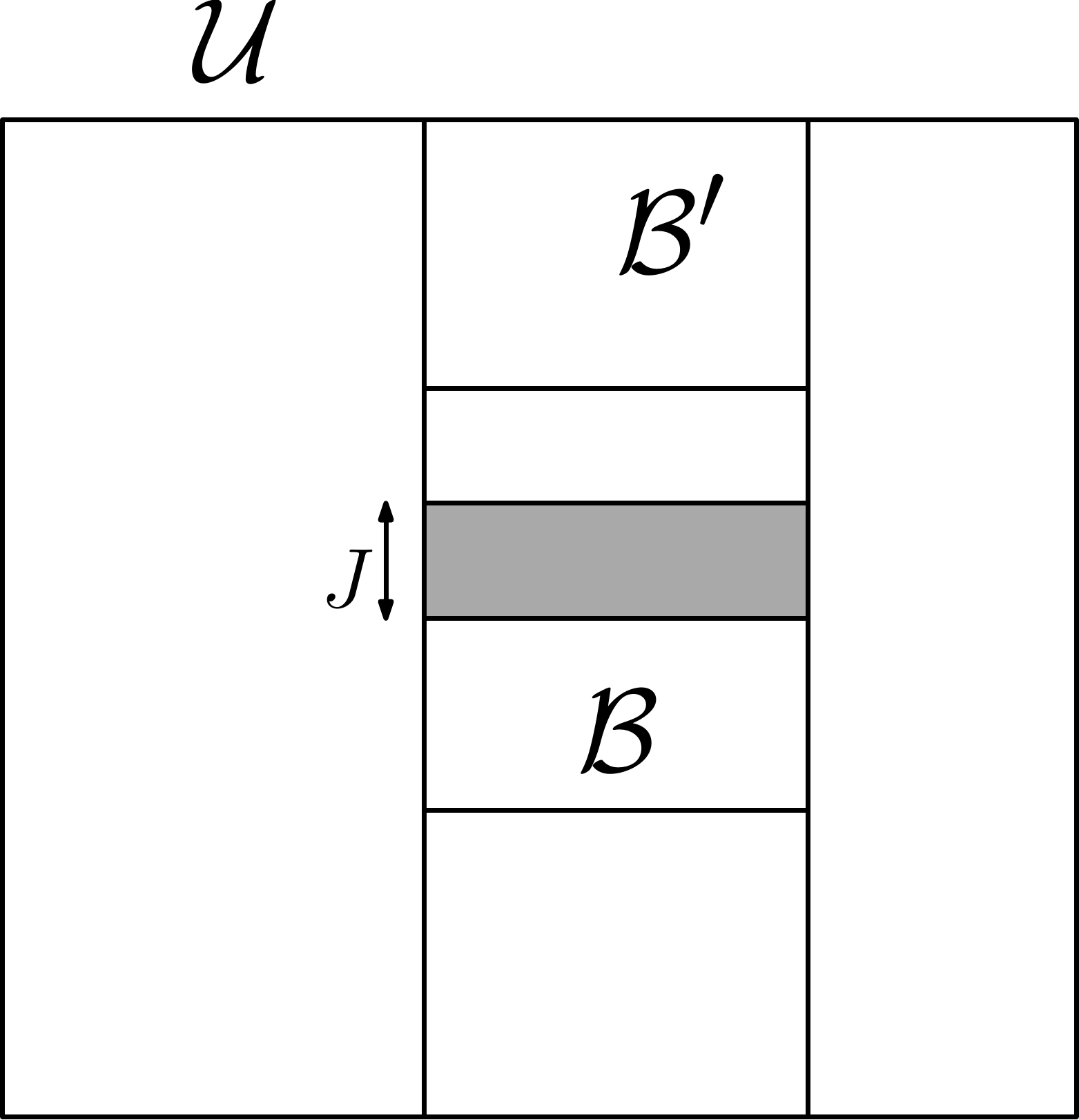} \label{fig:valuenet}} 
  $\qquad \qquad$ 
  \subfloat[The digraph $D(\Gamma)$]{\includegraphics[width=0.45\textwidth]{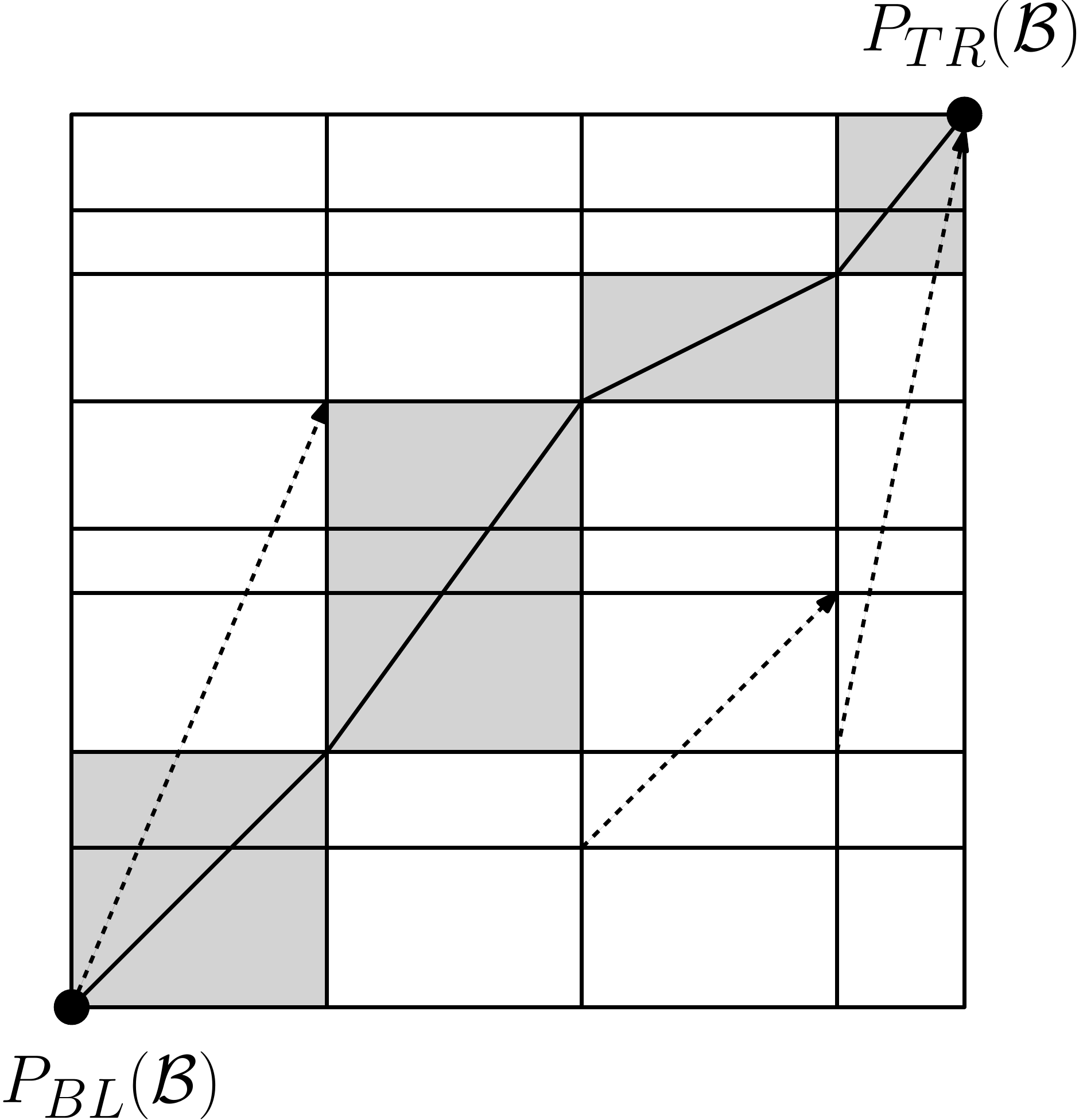} \label{fig:digraph}}
    \caption{(a) If the number of points in the gray region is large enough, then a value
    net must contain a value in $J$.
    (b) All grid points (which are internal to the box) are vertices. The dashed arrows show the 3 different types of arcs in $D(\Gamma)$. The solid line traces a $D(\Gamma)$ path,
    and the corresponding box chain is colored gray.}
\end{figure*}

If we had access to  the set of values $Y(\cB) \cap f(X(\cB))$ in nondecreasing order, then we could
construct a $\netprec$-value net for $\cB$ by taking those values whose position in the order is a multiple of  
$\lceil \netprec \width(\cB) \rceil$.  However, we can only access the values indirectly
by evaluating $f$ at indices in $X(\cB)$ so constructing this order would require evaluating $f$ at every
index in $X(\cB)$, which is too time consuing.

To  construct a  $\netprec$-value net quickly we use random sampling.


%
\begin{prop}
\label{prop:buildnet}
There is a randomized procedure $\buildnet$ that takes
as input a triple $(\cB,\netprec,\neterror)$ where $\cB$ is a box
and $\netprec,\neterror \in (0,1)$, runs in time $(1/\netprec)^{O(1)}\log(1/\neterror)$ and outputs 
a subset $V$ of $Y(\cB)$ of size at most $4\lceil 1/\gridprec \rceil$
such that with probability at least $1-\neterror$, $V$  is a $\netprec$-value net for $\cB$. 
\end{prop}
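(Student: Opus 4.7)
My plan is to use random sampling followed by empirical-quantile extraction. I will draw a uniform random sample $S$ of indices from $X(\cB)$, evaluate $f$ on $S$, and output $y_T(\cB)$ together with evenly-spaced empirical quantiles of the sampled values that fall inside $Y(\cB)$. The intuition is that any $\netprec$-popular value interval $J$ must contain roughly $\netprec m$ sampled values by concentration, and since the emitted quantiles are spaced every $\lceil \netprec m/4 \rceil$ samples in sorted order, at least one must land in $J$.

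\textbf{Procedure and resource bounds.} I will set $m = C(1/\netprec)^{2}\log(1/\neterror)$ for a sufficiently large absolute constant $C$ and draw an independent uniform sample $S \subseteq X(\cB)$ of size $m$. Compute $\{f(x):x\in S\}$, discard samples whose value lies outside $Y(\cB)$, and sort the rest as $v_1\leq v_2\leq\cdots\leq v_{m'}$. Setting $k=\lceil \netprec m/4\rceil$, the output is
\[
V=\{y_T(\cB)\}\cup\{v_{jk}:1\leq j\leq\lfloor m'/k\rfloor\}.
\]
By construction $V\subseteq Y(\cB)$, $|V|\leq 1+m'/k\leq 4\lceil 1/\netprec\rceil$ (absorbing the $+1$ by slightly inflating $k$), and the runtime is dominated by sorting, giving $O(m\log m)=(1/\netprec)^{O(1)}\log(1/\neterror)$.

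\textbf{Correctness.} Define the true and empirical cumulative distribution functions $F_\cB(y)=|\{x\in X(\cB):f(x)\leq y\}|/\width(\cB)$ and $\hat F(y)=|\{x\in S:f(x)\leq y\}|/m$. A value interval $J=[a,b]\subseteq Y(\cB)$ is $\netprec$-popular exactly when $F_\cB(b)-F_\cB(a-1)\geq \netprec$. The key step is to show that, with probability at least $1-\neterror$,
\[
\sup_{y\in Y(\cB)}|\hat F(y)-F_\cB(y)|\ \leq\ \netprec/8.
\]
Conditioned on this event, any $\netprec$-popular $J$ satisfies $\hat F(b)-\hat F(a-1)\geq 3\netprec/4$, so at least $3\netprec m/4\geq 2k$ sampled values lie in $J$; these values form a contiguous block of length $\geq 2k$ in the sorted list $v_1,\ldots,v_{m'}$, and since successive emitted quantiles are $k$ positions apart, at least one $v_{jk}$ lies inside $J$. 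Together with $y_T(\cB)\in V$, this shows $V\cap J\neq\emptyset$ for every $\netprec$-popular $J$.

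\textbf{Main obstacle.} The only delicate point is establishing the uniform concentration bound above, since a naive union bound over all integer thresholds in $Y(\cB)$ would pay an unwanted $\log\maxval$ factor. I will avoid this either by appealing directly to the Dvoretzky--Kiefer--Wolfowitz inequality (whose tail probability is $2e^{-2m(\netprec/8)^2}\leq \neterror$ for the chosen $m$), or, more elementarily, by applying \Prop{hoeff2} at only the $O(1/\netprec)$ true $(\netprec/8)$-quantiles of $F_\cB$ and then using monotonicity of both CDFs to extend the bound to every $y\in Y(\cB)$. Either route gives the stated sample size and running time.
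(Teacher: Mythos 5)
Your algorithm is the same skeleton as the paper's (draw a polynomial-in-$1/\netprec$ random sample, sort by value, emit evenly-spaced order statistics together with $y_T(\cB)$), and your proof is correct, but the concentration argument is organized differently. The paper avoids any uniform-CDF statement: it sorts the \emph{true} indices by $f$-value, partitions the first $qB$ of them into $q \leq s/2$ ``bins'' of size $B = \lceil\netprec\width(\cB')/2\rceil$, and uses \Prop{hoeff1} on each of the $O(1/\netprec)$ per-bin sample counts $A_i$ to conclude every bin receives at least $M$ samples. From there, every bin is crossed by some emitted order statistic, and any $\netprec$-popular interval contains a full bin. You instead bound $\sup_y |\hat F(y) - F_\cB(y)|$ directly via Dvoretzky--Kiefer--Wolfowitz (or the Hoeffding-at-$O(1/\netprec)$-true-quantiles-plus-monotonicity trick, which is essentially the paper's argument recast), then conclude any popular interval attracts $\geq 2k$ samples so an emitted quantile lands in it. The two analyses are dual: the paper bounds sample counts per fixed bin, you bound the uniform CDF deviation. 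Your route is the textbook quantile-sketch argument and is somewhat cleaner to state; the paper's bin-counting is entirely self-contained, needing only the Hoeffding proposition already stated and no appeal to DKW. Both yield $(1/\netprec)^{O(1)}\log(1/\neterror)$ sample complexity. Two small bookkeeping points you rightly flagged: your $|V| \leq 1 + 4/\netprec$ exceeds the stated $4\lceil 1/\netprec\rceil$ bound exactly when $1/\netprec$ is (or is just below) an integer, so the spacing $k$ does need to be inflated (say to $\lceil\netprec m/3\rceil$) to absorb the extra $y_T(\cB)$; and because you discard samples with values outside $Y(\cB)$, the quantile positions $v_{jk}$ index the surviving list while $\hat F$ is over all $m$ samples --- this is fine since all samples landing in a popular $J \subseteq Y(\cB)$ necessarily survive and form a contiguous block in the surviving sort, but the distinction is worth stating explicitly.
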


\begin{proof}
Given a box $\cB$, let $\cB'$ be the strip $\strip{\cB}{\cU}$. (This is the set
of all points whose index belongs to $X(\cB)$.)
It suffices to  construct a $\netprec$-value net $V'$ for $\cB'$.  Once we do this
we take $V=V' \cap Y(\cB) \cup \{y_T(\cB)\}$. This                                                                                                           is a $\netprec$-value net for $\cB$
since every $\netprec$-popular value  interval $J$  for $\cB$ is also
$\netprec$-popular for $\cB'$.

So let us construct a $\netprec$-value net for $\cB'$.
Let $s=4\lceil 1/\gridprec \rceil$.
Let $M=\lceil s\ln(s/2\neterror)\rceil$.  
Let $R$ be a sequence of $sM-1$ uniform random samples from $X(\cB')$
and let $y_1 \leq \cdots \leq y_{sM-1}$ be their $y$-values in sorted order.
Note that all of these are in $Y(\cB')$.
Let $y_0=0$ and $y_{sM}=\maxval$.  Define $V=\{y_{iM}:i \in [1,s-1]\}$.

We analyze the probability that $V$ is not a $\netprec$-value net for $\cB'$.

Let $B=\lceil \gridprec \width(\cB')/2 \rceil$ (which is also equal to $\lceil \gridprec \width(\cB)/2 \rceil$).
Let $x_1,\ldots,x_{\width(\cB)}$ be the indices of $X(\cB')$ ordered so that $f(x_1) \leq \cdots \leq f(x_{\width(\cB')})$.
Write $\width(\cB')$ as $qB+r$ where $r < B$, and note that $q \leq \width(\cB')/B \leq 2/\gridprec \leq s/2$.  Partition $x_1,\ldots,x_{qB}$ into
$q$ ``bins'', where each bin is a sequence of $B$ consecutive indices.
Let  $A_i$ for $i \in [1,q]$ be the number of samples from $R$ that fall in bin $i$.   

We now show that (1) with probability at least $1-\neterror$, $A_i \geq M$ for all $i \in [1,q]$, and
(2) 
if $A_i \geq M$ for each $i$, then $V$ is a $\netprec$-value net for $\cB'$. 

We prove (2) first.   Note that $V$ contains
at least $1$ value from each bin.
Let $J$ be a $\netprec$-popular subinterval of $Y(\cB')$.  Then $|F^{-1}(J) \cap X(\cB')| \geq \netprec \width(\cB') > 2B-2$.
Since $F^{-1}(J) \cap X(\cB')$ is a consecutive subsequence of $x_1,\ldots,x_{\width(\cB')}$, it must contain at least one bin,
and hence intersects $V$.

It remains to prove (1).  Fix $i \in [1,q]$ and  consider the event that  $A_i \leq M-1$. 
$A_i$ is the sum of $sM-1$ Bernoulli trials each having probability at least $\frac{B}{\width(\cB')} \geq \gridprec/2$, and so has expectation at least
$\frac{\gridprec}{2}(sM-1)\geq 2M-1$. By
\Prop{hoeff1}, $\prob[A_i \leq  M-1] \leq e^{-M/s}$. By the union bound, the probability that some $A_i\leq M-1$
is at most $se^{-M/s}/2$ and by the choice of $M$ this is at most $\neterror$.  This completes the proof of the value net construction for $\cB'$. 
\end{proof}

\subsection{Grids, the subroutine \buildgrid and grid digraphs}
\label{subsec:grids}

A grid $\Gamma$ is a {\em $\cB$-grid} if $X(\Gamma) \subset X(\cB)-\{x_L(\cB),x_R(\cB)\}$ and
$Y(\cB)$ is a value net for $\cB$. If $x_1<\cdots<x_k$ are the indices of $X(\Gamma)$, then
they define a $\cB$-strip decomposition  of $X(\cB)$ whose associated index partition is $(x_L(\cB),x_1],(x_1,x_2],\cdots,(x_k,x_R(\cB)]$.
We call this the strip decomposition of $\cB$ induced by $\Gamma$.

\begin{definition} \label{def:fine}
For a box $\cB$ and $\gridprec>0$, a grid $\Gamma$ is an {\em $\gridprec$-fine} $\cB$-grid if:

\begin{itemize}
\item $X(\Gamma)$ contains an index from every subinterval $I$ of  $X(\cB)$ having size exceeding $\gridprec \width(\cB)$.
\item $Y(\Gamma)$ is a $\frac{\gridprec}{|X(\Gamma)|}$-value net.
\end{itemize}
\end{definition}

We define a procedure $\buildgrid{}$ that  takes as input a triple $(\cB,\gridprec,\griderror)$ (as does
\buildnet)
and outputs a $\cB$-grid $\Gamma$ that is {\em $\gridprec$-fine} with probability at least $1-\griderror$.  

The procedure works as follows.
If $\width(\cB) \leq 1/\gridprec$ then $X(\Gamma)=X(\cB)$ and $Y(\Gamma)=Y(\cB)$. Otherwise, set $r= \lceil 1/\gridprec \rceil$ and for $i \in [0,r]$, define
$x_i=x_L(\cB) + \lfloor i \gridprec  \width(\cB) \rfloor$.  We take $X(\Gamma)=\{x_1,\ldots,x_{r-1}\}$.
It follows that $x_0=x_L(\cB)$, $x_r=x_R(\cB)$ and  $x_i-x_{i-1} \leq \lceil \gridprec \width(\cB) \rceil$ for each $i \in [1,r]$. 
The set $Y(\Gamma)$ is constructed by applying $\buildnet(\cB,\gridprec^2/2,\griderror)$.  
We say that  $\buildgrid(\cB,\gridprec,\griderror)$ {\em is reliable} if the grid $\Gamma$ that it outputs is
$\gridprec$-fine.  The definition of the grid, together with  \Prop{buildnet} implies:

\begin{prop}
\label{prop:buildgrid}  $\buildgrid(\cB,\gridprec,\griderror)$ produces a grid $\Gamma$ with 
$|X(\Gamma)| \leq 1+\lceil 1/\gridprec \rceil \leq 3/\gridprec$  and 
$|Y(\Gamma) \leq \lceil 8/\gridprec^2 \rceil \leq 16/\gridprec^2$  that is reliable with probability at least $1-\griderror$.
\end{prop}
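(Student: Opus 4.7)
The plan is to verify the three assertions of the proposition (the bounds on $|X(\Gamma)|$, on $|Y(\Gamma)|$, and the reliability probability) by direct inspection of the construction of $\buildgrid$, combining the deterministic construction of $X(\Gamma)$ with the probabilistic guarantee of $\buildnet$ (\Prop{buildnet}) for $Y(\Gamma)$. Nothing new is required beyond what is already assembled.

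First I would handle the two size bounds. For $X(\Gamma)$, I would split into the two cases of the construction: if $\width(\cB) \leq 1/\gridprec$, then $|X(\Gamma)|=\width(\cB)\leq 1/\gridprec$; otherwise $|X(\Gamma)|=r-1 = \lceil 1/\gridprec \rceil -1$. Either way, $|X(\Gamma)|\leq \lceil 1/\gridprec \rceil \leq 1+\lceil 1/\gridprec \rceil$, and since $\lceil 1/\gridprec \rceil \leq 2/\gridprec$ for $\gridprec \in (0,1)$, we get the cruder bound $\leq 3/\gridprec$. For $Y(\Gamma)$, the construction invokes $\buildnet(\cB,\gridprec^2/2,\griderror)$, and \Prop{buildnet} guarantees the output has size at most $4\lceil 2/\gridprec^2 \rceil$, from which $\leq \lceil 8/\gridprec^2 \rceil \leq 16/\gridprec^2$ follows by an elementary ceiling estimate.

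Next I would verify reliability, which requires the grid to be $\gridprec$-fine (as in \Def{fine}). The first clause of $\gridprec$-fineness concerns $X(\Gamma)$, and is purely deterministic: by construction $x_i - x_{i-1} \leq \lceil \gridprec \width(\cB) \rceil$ for every $i \in [1,r]$, so any subinterval $I\subseteq X(\cB)$ of size strictly more than $\gridprec \width(\cB)$ cannot fit between two consecutive $x_i$'s without crossing one, and therefore contains some $x_i$ with $i\in[1,r-1]$. (The cases where $I$ sits at the extreme left or right of $X(\cB)$ are handled using the boundary indices $x_0=x_L(\cB)$ and $x_r=x_R(\cB)$ along with the same gap bound.) The second clause of $\gridprec$-fineness requires $Y(\Gamma)$ to be a $\frac{\gridprec}{|X(\Gamma)|}$-value net; since $|X(\Gamma)|\leq \lceil 1/\gridprec\rceil \leq 2/\gridprec$, we have $\frac{\gridprec}{|X(\Gamma)|}\geq \frac{\gridprec^2}{2}$, and any $\gridprec^2/2$-value net is a fortiori a $\frac{\gridprec}{|X(\Gamma)|}$-value net (popular intervals at the coarser scale are popular at the finer scale). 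Invoking \Prop{buildnet}, $Y(\Gamma)$ is a $\gridprec^2/2$-value net with probability at least $1-\griderror$, which gives the required failure bound.

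The proof is essentially a bookkeeping exercise; there is no real obstacle. The only subtlety worth being careful about is the integer rounding in the $X$-part, ensuring that strips induced at the boundary (between $x_L(\cB)$ and $x_1$, and between $x_{r-1}$ and $x_R(\cB)$) really do have width at most $\lceil \gridprec \width(\cB) \rceil$ so that no ``large'' subinterval can slip past $X(\Gamma)$. Once this integer check is done, the reliability claim reduces verbatim to \Prop{buildnet}, and no additional probabilistic reasoning is needed.
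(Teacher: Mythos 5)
Your proof is correct and takes essentially the same route the paper intends: the paper states the proposition as an immediate consequence of the construction of \buildgrid{} together with \Prop{buildnet}, and you spell out that reasoning---the deterministic gap bound for $X(\Gamma)$, the size bound inherited from \Prop{buildnet} for $Y(\Gamma)$, and the observation that a $\gridprec^2/2$-value net is automatically a $\gridprec/|X(\Gamma)|$-value net. One small point worth noting: your $Y$-argument (and, implicitly, the paper's) only addresses the branch of \buildgrid{} where $\width(\cB) > 1/\gridprec$ and $\buildnet$ is invoked; in the other branch the paper sets $Y(\Gamma)=Y(\cB)$ verbatim, which as written need not satisfy the stated size bound, so this case is being glossed over on both sides (presumably the intent is that $Y(\Gamma)$ is restricted to the at most $\width(\cB)\leq 1/\gridprec$ values actually attained by $f$ on $X(\cB)$). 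Your ``a fortiori'' step and the boundary-interval handling for the first clause of $\gridprec$-fineness are at the same (informal) level of rigor as the paper itself, so this is a fair reproduction of the argument.
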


{\bf Grid digraph $D(\Gamma)$:} This is associated with the
$\cB$-grid $\Gamma$. The vertex set is 
$\Gamma\cup \{P_{BL}(\cB),P_{TR}(\cB)\}$. The arc sets
consists of pairs $(P_{BL}(\cB),Q)$ where $Q$ lies in the leftmost columnn of $\Gamma$,
$(Q,P_{TR}(\cB))$ where $Q$ belongs to the rightmost column of $\Gamma$,  and $(P,Q)$ where
$P \prec Q$ and $P$ and $Q$ are in adjacent columns of $\Gamma$. $D(\Gamma)$ is acyclic and has unique
source $P_{BL}(\cB)$ and unique sink $P_{TR}(\cB)$.  A {\em $D(\Gamma)$-path} is a source-to-sink path in $D(\Gamma)$.
Every arc $(P,Q)$ of $D(\Gamma)$ corresponds to a box $\bx(P,Q)$ and a $D(\Gamma)$-path
corresponds to a $\cB$-box chain.  A box chain arising in this way is a $D(\Gamma)$-chain.
Each box correponding to an arc is called a \emph{grid box}.
Refer to \Fig{digraph}.

\medskip

The following lemma says that if $\Gamma$ is a $\gridprec$-fine $\cB$-grid then for any increasing sequence in $\cB$
there is a $D(\Gamma)$-chain that contains all but a small number of points from the sequence.  

\begin{lemma}(Grid approximation lemma)
\label{lem:grid-approx}
Suppose $\Gamma$ is a $\gridprec$-fine $\cB$-grid.
Let $\cL$ be an increasing sequence of $\cF$-points in $\cB$.
Then there is
a $D(\Gamma)$-chain $\vec{\cD}=(\cD_1,\dots,\cD_r)$
such that the subset $\cL-\vec{\cD}^{\cup}$ 
has size at most $\gridprec \width(\cB)$.
\end{lemma}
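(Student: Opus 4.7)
The plan is to construct an explicit $D(\Gamma)$-chain by choosing one grid vertex per column, using $\cL$ to guide the choice, and then to bound the number of $\cL$-points missed via the value-net property of $\Gamma$.

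Write $k=|X(\Gamma)|$ and enumerate $X(\Gamma)=\{x_1<\cdots<x_k\}$; set $x_0=x_L(\cB)$ and $x_{k+1}=x_R(\cB)$, so the strips induced by $\Gamma$ are indexed by $j\in[1,k+1]$ with index set $(x_{j-1},x_j]$. For each $i\in[1,k]$, let $h_i=\max\{y(P):P\in\cL,\ x(P)\le x_i\}$ (with the convention $h_i=y_B(\cB)$ if no such point exists), and let $y_i$ be the smallest value in $Y(\Gamma)$ satisfying $y_i\ge h_i$; this is well-defined since $y_T(\cB)\in Y(\Gamma)$. Set $Q_i=\langle x_i,y_i\rangle\in\Gamma$, and put $Q_0=P_{BL}(\cB)$, $Q_{k+1}=P_{TR}(\cB)$. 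Because $\cL$ is increasing, $h_i$ and hence $y_i$ are nondecreasing; combined with strictly increasing $x_i$, this gives $Q_{i-1}\prec Q_i$ for every $i$, so $Q_0,\ldots,Q_{k+1}$ is a source-to-sink path in $D(\Gamma)$. Take $\vec{\cD}=(\cD_1,\ldots,\cD_{k+1})$ with $\cD_j=\bx(Q_{j-1},Q_j)$ to be the corresponding $D(\Gamma)$-chain.

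To analyze which $\cL$-points are missed, note that $P\in\cL$ with $x(P)$ in strip $j$ lies in $\cD_j$ iff $y_{j-1}\le y(P)\le y_j$. The upper bound is automatic since $y(P)\le h_j\le y_j$; the lower bound can fail, but when it does $y(P)<y_{j-1}$, and because $\cL$ is increasing with $x(P)>x_{j-1}$ we also have $y(P)\ge h_{j-1}$. Hence every excluded $\cL$-point in strip $j$ has $y$-value in $[h_{j-1},y_{j-1})$. By the defining property of $y_{j-1}$, this interval contains no element of $Y(\Gamma)$ (it is empty when $h_{j-1}\in Y(\Gamma)$; otherwise, since values are positive integers, it agrees with the closed interval $[h_{j-1},y_{j-1}-1]$, still free of grid values). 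Using that $\Gamma$ is $\gridprec$-fine, $Y(\Gamma)$ is a $(\gridprec/k)$-value net for $\cB$, so the interval fails to be $(\gridprec/k)$-popular, meaning fewer than $(\gridprec/k)\width(\cB)$ indices of $X(\cB)$ have $f$-value in it, and in particular at most this many points of $\cL$ do. Since excluded $\cL$-points in different strips are disjoint, summing over $j\in[2,k+1]$ yields at most $k\cdot(\gridprec/k)\width(\cB)=\gridprec\width(\cB)$ missed $\cL$-points, which is the required bound.

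The main conceptual obstacle is choosing the $Q_i$'s so that the telescoping works out: we must pick each $y_i$ just barely large enough to cover the current maximum value $h_i$ of $\cL$, so that the ``value gap'' $[h_{j-1},y_{j-1})$ is grid-free and the value-net bound of $\gridprec/k$ applies. The factor $k$ in the number of strips then exactly cancels the $1/k$ in the value-net density, producing the clean bound $\gridprec\width(\cB)$; the remaining arguments are routine bookkeeping about half-open intervals and the fact that $\cL$ is increasing.
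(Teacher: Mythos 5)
Your proposal is correct and follows essentially the same construction as the paper's own proof: you choose, in each column of $\Gamma$, the least grid value lying above the running maximum of $\cL$, and you bound the missed points strip-by-strip via the value-net property with the factors of $k$ cancelling. The paper phrases the last step in terms of the sets $P_i^{NW}$ and $P_i^{SE}$ rather than the explicit value interval $[h_{j-1},y_{j-1})$, but these are the same argument.
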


\begin{proof}
Let $X(\Gamma)=\{x_1<\ldots<x_{r}\}$. 
To specify a $D(\Gamma)$-chain we need to choose a nondecreasing comparable sequence of points $P_1,\ldots,P_{r}$,
such that $P_i$ is in column $x_i$ of $\Gamma$. 
%
Let $\cL_i$ be the portion of $\cL$ with $x$ coordinate at most $x_i$ and let $Q_i$ be the largest
point in $\cL_i$.  Take $P_i$ to be the least point of $\Gamma$ in column $i$ such that $Q_i \preceq P_i$.

The points of $\cL$ that lie outside of the corresponding box chain are those that belong to $P_i^{NW}$ or $P_i^{SE}$
for some point $P_i$.  Since every point of $\cL_i$ is less than $P_i$, there are no points of $\cL$ in $P_i^{NW}$.
The points of $\cL$ in $P_i^{SE}$ have $x$-coordinate greater than $x_i$ and $y$-coordinate
strictly between $y(Q_i)$ and $y(P_i)$. By choice of $P_i$, there is no value in $Y(\Gamma)$
strictly between $y(Q_i)$ and $y(P_i)$. Since $Y(\Gamma)$ is a $\gridprec/r$-value net for $\cB$,
there are at most $\gridprec\width(\cB)/r$ points of $\cL$ in $P_i^{SE}$.
%

Since each point $P_i$ is in violation with at most $\gridprec\width(\cB)/r$ points of $\cL$, the total
number of points of $\cL$ that violate some $P_i$ is at most $\gridprec \width(\cB)$.
\end{proof}
%

\section{The basic LIS approximation algorithm}
\label{sec:first algorithm}

In this section  we describe the algorithm \basicmain, which achieves the properties
stated in \Thm{alg1}.    As asserted in the theorem, \basicmain{} takes as input a natural number $n$, an array $f$
of size $n$ and an error parameter $\taupar \in (0,1)$ and outputs an estimate to $\lis_f$.  Recall that we assume the
values of the array are in the range $[1,\maxval]$, where $\maxval$ is a known integer.

The program \basicmain{} sets certain global parameters, initializes a parameter $t_{\max}$ and then calls a subroutine $\approxlis$, which is the main
part of the algorithm.  The subroutine $\approxlis$ takes as input a box $\cB$ and a nonnegative integer $t$ and outputs an estimate of $\lis_f(\cB)$.
We denote
an invocation  of $\approxlis$ on box $\cB$ with parameter $t$ by  $\approxlis_t(\cB)$. 
The global parameters are all used within $\approxlis$ and the procedures it calls. 
The values for the global parameters are chosen to make the error analysis work.
 The array size $n$ and the array $f$
are also treated as global parameters in $\approxlis$.

\vspace{-10pt}
\begin{center}
	\fbox{\begin{minipage}{\columnwidth}  
$\basicmain(n,f,\taupar)$\\
Output: Approximation to $\lis_f$.
\begin{enumerate}
\item Fix global parameters according to table (unchanged throughout algorithm).
\begin{center}
  \begin{tabular}{ | l || c | c |}
    \hline
    {\bf Name} & {\bf Symbol} & {\bf Value} \\ \hline
    Initial precision & $t_{\max}$ & $\lceil 4/\taupar \rceil$ \\ \hline
    Sample size parameter & $\ssz$ & $10(\log n)^4$ \\ \hline
    Grid precision parameter & $\gridprec$ & $\gamma$ \\ \hline
    Width threshold & $\omega$ & $1/\rho$ \\ \hline 
    Tainting parameter & $\taint$ & $\frac{1}{10 \log n}$ \\ \hline
    Primary splitter parameter & $\mu_r$ & $2/(r+3)$ \\ \hline
    Secondary splitter parameter & $\gamma$ & $\frac{1}{(C_1\log n)^3}$ \\ \hline
    Splitter balance parameter & $\rho$ & $\frac{1}{C_1\log n}$ \\ \hline
  \end{tabular}
\end{center}

\item Let box  $\cU$ be the  box $[1,n] \times [1,\maxval]$.
\item  Return $\approxlis_{t_{\max}}(\cU)$.
\end{enumerate}
\end{minipage}}
\end{center}

The algorithm $\approxlis$ is recursive, and in recursive calls will be run on r values of $t \leq t_{\max}$ and subboxes $\cB$ of $\cU$.
We will prove the following property of $\approxlis$:

\begin{theorem}
\label{thm:basic approxlis}  
Suppose $\approxlis$ is run with the global parameters set as in $\basicmain$.
On input a box $\cB \subseteq \cU$ and an integer $t$, $\approxlis_t(\cB)$:
\begin{itemize}
\item  runs in time $(\log n)^{O(t)}$, and
\item outputs a value that, with probability at least $1-n^{-\Omega(\log(n)}$, is a
{\em $(\tau_t,\delta_t )$}-approximation to $\lis(\cB)$, where:
\begin{eqnarray*}
\tau_t&=& \frac{4}{t} \\
\delta_t& = & \frac{t}{\log(n)}.
\end{eqnarray*} 
\end{itemize}
\end{theorem}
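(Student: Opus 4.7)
The plan is to prove the theorem by strong induction on $t$, carrying the pair $(\tau_t,\delta_t)=(4/t,t/\log n)$ through the recursion. The base case $t=1$ (where $\tau_1=4$) is handled by a trivial estimate: returning $|\cB\cap\cF|$ incurs error $\loss(\cB)\leq\tau_1\loss(\cB)$, and this count can be either known exactly (when $\cB=\cU$) or approximated to additive error $\delta_1\width(\cB)/2$ by polylogarithmic sampling (since $\delta_1=1/\log n$ gives ample slack).

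For the inductive step with $t\geq 2$, I assume the statement for $t-1$. The overview in \Sec{sec-intuit} indicates that $\approxlis_t(\cB)$ will perform the following: construct a $\gridprec$-fine grid $\Gamma$ via $\buildgrid(\cB,\gridprec,\cdot)$, recursively compute $\approxlis_{t-1}(\cD)$ on each grid box $\cD$ corresponding to an arc of $D(\Gamma)$, and output the maximum over $D(\Gamma)$-chains $\vec{\cD}$ of $\sum_{\cD\in\vec{\cD}}\approxlis_{t-1}(\cD)$. For the lower bound on this output, \Lem{grid-approx} applied to an LIS of $\cB$ produces a $D(\Gamma)$-chain $\vec{\cD}^\star$ retaining all but $\gridprec\width(\cB)$ LIS-points; summing the inductive $(\tau_{t-1},\delta_{t-1})$-bound over its grid boxes (whose widths telescope to $\width(\cB)$) yields an output at least $\lis(\cB)-\tau_{t-1}\sum\loss(\cD)-(\delta_{t-1}+\gridprec)\width(\cB)$. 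For the upper bound, the chain maximizing the algorithm's output satisfies $\sum\lis(\cD)\leq\lis(\cB)$ and the inductive overestimate is at most $\tau_{t-1}\sum\loss(\cD)+\delta_{t-1}\width(\cB)$.

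The crux is converting $\tau_{t-1}\sum_{\cD}\loss(\cD)$ into $\tau_t\loss(\cB)$, and this is precisely the role of the dichotomy lemma \Lem{dichotomy}: applied to the strip decomposition induced by $\Gamma$ and the chain $\vec{\cD}$, it yields $\sum\loss(\cD)\leq(1-\mu_t)\loss(\cB)+\mu_t|S|$, where $|S|$ counts safe $\vec{\cD}$-indices and $\mu_t$ is the primary splitter parameter from the table. Two sub-cases emerge: if $\findsplitter$ locates a $(\mu_t,L,\rho)$-safe splitter, the algorithm splits $\cB$ at it and recurses at depth $t$ on two sub-boxes of summed width $\width(\cB)$, controlling $\sum\loss$ directly via balance; if $\findsplitter$ fails, then by \Cor{fails} the number of safe splitters is at most $3\rho\width(\cB)$, making $\mu_t|S|$ absorbable into the secondary error. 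With $\mu_t$ tuned so that $\tau_{t-1}(1-\mu_t)\leq\tau_t$, the primary error contracts as desired.

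The main obstacle is the delicate secondary-error accounting: one must show that the grid precision $\gridprec$, the factor-$2$ slack in $\findsplitter$'s reliability guarantee, the probabilistic failure margin from $\buildgrid$, and the inherited $\delta_{t-1}$ together contribute only $1/\log n$ of additional secondary error per level, so that the total remains $\delta_t=t/\log n$; the parameter choices $\rho=1/(C_1\log n)$ and $\gridprec=\gamma=1/(C_1\log n)^3$ from the table are calibrated for exactly this purpose. Finally, the running time obeys a recurrence $T(t)\leq|X(\Gamma)|\cdot|Y(\Gamma)|\cdot T(t-1)+(\log n)^{O(1)}=(\log n)^{O(1)}T(t-1)$, giving $T(t)=(\log n)^{O(t)}$, and a union bound over the at most polynomially many recursive calls (each reliable up to probability $n^{-\Omega(\log n)}$ by \Prop{splitter} and \Prop{buildgrid}) bounds the overall failure probability.
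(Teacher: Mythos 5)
Your proposal describes a different algorithm from the one in the paper, and this leads to two serious gaps. First, the paper's $\approxlis_t(\cB)$ does \emph{not} directly build a grid on $\cB$ and maximize $\sum_{\cD}\approxlis_{t-1}(\cD)$ over chains, nor does it literally branch into two recursive calls at level $t$ when a splitter is found. Instead, $\approxlis_t$ samples $\ssz$ indices from $X(\cB)$, runs $\classify_t$ on each, and returns a scaled count. The procedure $\classify_t$, for a single index $x$, runs $\terminalbox_t$ which shrinks $\cB$ along a \emph{single} root-to-leaf path of the (conceptual) $t$-splitter tree --- all at level $t$, with no recursive calls --- and only then calls $\gridchain_t$ on the terminal box, where $\approxlis_{t-1}$ is invoked. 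Your sketch, in which a found splitter triggers recursion on both $\cB_L$ and $\cB_R$, would explore the full binary splitter tree; since its depth is $\Theta(\log n/\rho)=\Theta((\log n)^2)$, that gives $2^{\Theta((\log n)^2)}$ calls, not polylogarithmic time. Indeed this is exactly the obstacle the paper's interactive-protocol / sampling combination is engineered to avoid, and your running-time recurrence $T(t)\le |X(\Gamma)||Y(\Gamma)|\,T(t-1)+(\log n)^{O(1)}$ silently drops the splitting branch.

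Second, your upper-bound argument has a hole even if the efficiency issue is set aside. You bound the algorithm's output by $\sum\lis(\cD)+\tau_{t-1}\sum\loss(\cD)+\delta_{t-1}\width(\cB)$ for the \emph{maximizing} chain $\vec{\cD}$, but the dichotomy lemma controls $\sum\loss(\cD)$ only for a chain aligned with the LIS, and nothing forces the chain that maximizes $\sum\approxlis_{t-1}(\cD)$ to have small total loss. The paper avoids this by never directly reasoning about the maximizing chain: it introduces the set $\Good_t(\cB)$, proves (\Prop{good}) that good indices form an increasing sequence so $|\Good_t(\cB)|\le\lis(\cB)$, and derives the upper bound from this structural fact. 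That machinery --- together with the tainted-box bookkeeping in Assumption~2 and the five secondary error terms $\errorterm_1,\dots,\errorterm_5$ --- is what makes the upper and lower bounds mesh with the primary/secondary error split. Your proposal captures the boosting idea from the dichotomy lemma and the grid approximation lemma correctly in spirit, but it elides $\terminalbox$, the sampling/classification layer, and the $\Good_t$ argument, each of which is load-bearing in \Thm{basic approxlis}.
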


\Thm{alg1} follows immediately from \Thm{basic approxlis}:

\begin{proof}[of \Thm{alg1}]
$\basicmain(n,f,\taupar)$ returns the output of $\approxlis_{t_{\max}}(\cU)$.  By \Thm{basic approxlis}, this gives a
$(4/t_{\max},t_{\max}/\log(n))$-approximation to $\lis_f$ which runs in time $\log(n)^{O(t_{\max})}$.  Using the fact that $t_{\max} = \lceil 4/\taupar \rceil$ which is
between $4/\taupar$ and  $5/\taupar$ gives the desired running time and approximation error for $\basicmain$.
\end{proof}

\subsection{Description and pseudocode for \approxlis}
\label{subsec:first overview}

The procedure $\approxlis$, uses four subprocedures $\classify$, $\critbox$, $\terminalbox$, and $\gridchain$,
each taking as input a box $\cB$ and a nonnegative quality parameter $t$,
and possibly other input.  These procedures use 
the previously defined procedures $\findsplitter$ 
and $\buildgrid$. For later convenience, we put the quality parameter as a subscript to the procedure.

The main approximation algorithm $\approxlis_t(\cB)$ returns
an estimate of $\lis(\cB)$. If the
input box $\cB$ has width 1, then it outputs $|\cF \cap \cB|$. Otherwise,
$\approxlis_t$ uses a subroutine 
$\classify_t$. This takes as input $\cB$ 
and an index $x \in X(\cB)$ and outputs a classification of $x$ as {\em good} or {\em bad}.
$\approxlis_t(\cB)$ outputs an estimate of the number of good point by running $\classify_t(x,\cB)$
on a small random sample of indices.
%

Recall from Section \ref{subsec:random} 
that 
our viewpoint of fixing the random bits at the outset specifies the behavior of $\classify_t$ on every index.
We define $Good_t(\cB)$ to be the set of indices $x$ for which $\classify_t(x,\cB)$ would return {\em good}.
Thus $\approxlis_t(\cB)$ returns an estimate of $|Good_t(\cB)|$.  
The algorithm $\classify_t$ is designed so that $Good_t(\cB)$ is the index set of an increasing sequence,
and with probability close to 1, is close in size to $\lis(\cB)$.  Hence, $\approxlis_t(\cB)$
should be a good estimate of $\lis(\cB)$. 

The procedure $\classify_t$ is recursive.  If $F(x) \not\in \cB$, then $x$ is declared bad.  
Otherwise, if $\cB$ has width 1, we declare $x$ to be good,
and if $\width(\cB)>1$  and $t=0$ then $x$ is declared bad.
The main case ($F(x) \in \cB$, $\width(\cB) >1$ and $t \geq 1$) is accomplished by
calling
$\critbox_t(x,\cB)$, which returns a subbox $\cC$ of $\cB$
such that $x \in X(\cC)$. The procedure then recursively calls
$\classify_{t-1}(x,\cC)$.  The classification returned by this recursive call is the output
of $\classify_t(x,\cB)$.  

The procedure $\critbox_t(x,\cB)$  finds a subbox $\cC$, called the {\em critical subbox of $\cB$ for $x$}.
The procedure operates in two stages.
The first stage  is performed by $\terminalbox$,
which shrinks $\cB$ to a subbox $\cT$, called the {\em terminal box} with $x \in X(\cT)$.
Intuitively, $\terminalbox$  attempts to simulate the interactive protocol discussed in the introduction.
$\terminalbox(\cB,x)$ initializes $\cT$ to $\cB$.  It  uses the subroutine $\findsplitter{}$ 
to look for an index $s$ such that $F(s) \in \cT$
and $s$ is a good splitter for the $\cB$-strip $\strip{\cT}{\cB}$. 
If $\findsplitter$ succeeds
in finding $s$, then the splitter defines a box chain of size
2 spanning $\cT$, and $\cT$ is replaced by the box in the chain whose index set contains $x$.  This process is repeated 
until either $\width(\cT) \leq \omega$ ($\cT$ is {\em narrow}) or $\findsplitter$ fails to find a good splitter.
This ends the first stage.

In the second stage,  a box chain $\vec{\cC}(\cT)$ spanning $\cT$, called the {\em critical chain for $\cT$}, is constructed.
Intuitively, this part implements approximation boosting sketched in the introduction.
We
use $\buildgrid$ to build a suitably
fine grid for $\cT$ of size $(\log n)^{O(1)}$.  
We then recursively
evaluate $\approxlis_{t-1}(\cC)$ 
for every grid box $\cC$.  Think of these values as giving
a length function on the edges of the grid digraph.  The procedure
performs a longest path
computation to compute the exact longest  
path in the grid digraph 
from the lower left corner to the upper right
corner of $\cT$. (This computation takes $(\log n)^{O(1)}$ time.)

Having found the critical chain $\vec{\cC}(\cT)$, the output of $\critbox$ is
the box $\vec{\cC}(\cT)[x]$ of $\vec{\cC}(\cT)$  whose index set contains $x$.

The pseudocode for the algorithm is presented below.  
The arguments $\cB,\cT$ represent boxes, $t$ is a nonnegative integer, and $x$ is an index.
The array $f$ and the domain size $n$ are
treated as implicit global parameters.

\vspace{-10pt}
\begin{center}
	\fbox{\begin{minipage}{\columnwidth}  
$\approxlis_t(\cB)$\\
Output: Approximation to $\lis(\cB)$.
\begin{enumerate}
\item If $\width(\cB)=1$, output $|\cB \cap \cF|$. 
\item Otherwise ($\width(\cB)>1$): Select $\ssz$ uniform random indices 
from $X(\cB)$. Run $\classify_t(x,\cB)$ on
each sample point. Let $g$ be the number of points classified
as $good$ and return $g\width(\cB)/\ssz$.
\end{enumerate}
\end{minipage}}
\end{center}

\vspace{-10pt}
\begin{center}
	\fbox{\begin{minipage}{\columnwidth} 
$\classify_t(x,\cB)$\\
Output: {\em good} or {\em bad}
\begin{enumerate}
\item If $F(x) \not\in \cB$, return {\em bad}.
\item Otherwise ($F(x) \in \cB$)
\begin{enumerate}
\item Base case: If $\width(\cB)=1$, return {\em good}. If $t=0$ and $\width(\cB)>1$, return {\em bad}.
\item  Main case ($\width(\cB)>1$ and $t \geq 1$): 
\begin{enumerate}
\item $\cC \longleftarrow \critbox_t(x,\cB)$.
\item Run $\classify_{t-1}(x,\cC)$ and return its output. 
\end{enumerate}
\end{enumerate}
\end{enumerate}
\end{minipage}
}
\end{center}

\vspace{-10pt}
\begin{center}
	\fbox{\begin{minipage}{\columnwidth}  
$\critbox_t(x,\cB)$\\
Output: Subbox $\cC$ of $\cB$ such that $x \in X(\cC)$
\begin{enumerate}
\item $\cT \longleftarrow \terminalbox_t(x,\cB)$.
\item Call $\gridchain_t(\cT)$ and let $\vec{\cC}(\cT)$ be the chain of boxes returned.  
\item Return $\vec{\cC}(\cT)[x]$ (the box $\cC \in \vec{\cC}(\cT)$ with $x \in X(\cC)$).
\end{enumerate}
\end{minipage}
}
\end{center}

\vspace{-10pt}
\begin{center}
	\fbox{\begin{minipage}{\columnwidth}
$\terminalbox_t(x,\cB)$ \\
Output: subbox $\cT$ of $\cB$ such that $x \in X(\cT)$.
\begin{enumerate} 
\item Initialize $\cT$ to $\cB$ and boolean variable $\splitterfound$ to {\tt TRUE}.
\item Repeat until $\width(\cT) \leq \omega$ ($\cT$ is narrow) or $\splitterfound$ is {\tt FALSE}: 
\begin{enumerate}
\item Run $\findsplitter(\cT,\cB,\mu_t,\gamma \width(\cT),\rho)$:
returns boolean $\splitterfound$ and  index
$splitter$.
\item If $\splitterfound={\tt TRUE}$ then
\begin{enumerate}
\item If $x \leq splitter$ then replace $\cT$ by the
box $\bx(P_{BL}(\cT),F(splitter))$.
\item If $x > splitter$ then replace $\cT$ by the
box $\bx(F(splitter), P_{TR}(\cT))$.
\end{enumerate}
\end{enumerate}
\item Return $\cT$.
\end{enumerate}
\end{minipage}}
\end{center}

\vspace{-10pt}
\begin{center}
	\fbox{\begin{minipage}{\columnwidth} 
$\gridchain_t(\cT,\gridprec)$\\
Output: box chain $\vec{\cC}(\cT)$ spanning $\cT$.
\begin{enumerate}
\item Call $\buildgrid(\cT,\gridprec,n^{-2\log n})$
which returns a grid $\Gamma$.
\item  Construct the associated digraph $D(\Gamma)$
\item For each grid-box $\cD$ of $D(\Gamma)$.
recursively evaluate $\approxlis_{t-1}(\cD)$.
\item Compute the longest path in $D(\Gamma)$
from $P_{BL}(\cT)$ to $P_{TR}(\cT)$ according to the
length function $\approxlis_{t-1}(\cD)$.
\item Return the $\Gamma$-chain $\vec{\cC}(\cT)$ associated to the longest path.
\end{enumerate}
\end{minipage}}
\end{center}

\section{Properties of \approxlis}

In this section and the next we prove \Thm{basic approxlis} by showing that
a call to $\approxlis_t(\cB)$ runs in time
$(\log n)^{O(t)}$ and that with probability $1-n^{O((\log n)}$.   the output
is within $\tau_t \loss(\cB)+ \delta_t \width(\cB)$ of $\lis(\cB)$.

We first present the easy running time analysis.  We next turn to the much more difficult task of  
bounding the error of the estimate returned by the algorithm. We start with 
some important structural observations about the behavior of the algorithm. We then
identify two assumptions about the random bits used by the algorithm, which encapsulate 
all that we require from the random bits in the error analysis.  We show that these
assumptions hold with probability very close to 1.

In \Sec{correctness} 
we  formulate and prove \Thm{basic correctness}, which says
that whenever the two assumptions hold then $\approxlis$ returns a suitably accurate estimate, and which immediately
implies the desired error bounds in Theorem \ref{thm:basic approxlis}.
  
\subsection{Running time analysis}
\label{subsec:running time}

Let $A_t=A_t(n)$ be the running time of $\approxlis_t$  and
$C_t=C_t(n)$ be the running time of $\classify_t$ on boxes of width at most $n$.
In what follows we use $P_i=P_i(n)$ to denote functions of the form $a_i ((\log n))^{b_i}$, where $a_i,b_i$ are constants that
are independent of $n$ and $t$.

\begin{claim} \label{clm:runtime} For all $t \geq 1$,
\begin{eqnarray*}
A_t & \leq & P_1 C_t + P_2. \\
C_t & \leq & C_{t-1}+ P_3 A_{t-1}+P_4.
\end{eqnarray*}
\end{claim}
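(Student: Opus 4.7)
The plan is to read off both bounds directly from the pseudocode, using \Prop{splitter} and \Prop{buildgrid} to bound the cost of the auxiliary routines. Throughout, every explicit parameter ($\ssz$, $1/\rho$, $1/\gamma$, $1/\gridprec$, $\omega$) is of the form $a(\log n)^b$ for constants $a,b$ independent of $t$, so any product of a bounded number of them is again a $P_i$.

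First I would bound $A_t$. On a width-$1$ box $\approxlis_t(\cB)$ does constant work; otherwise it draws $\ssz=10(\log n)^4$ random indices, evaluates $\classify_t(x,\cB)$ on each, and averages. The sampling and averaging cost $O(\ssz)$, and each classification costs at most $C_t$. Hence
\[
A_t \;\leq\; \ssz\cdot C_t + O(\ssz) \;\leq\; P_1 C_t + P_2,
\]
with $P_1=\ssz$ and $P_2$ absorbing the $O(\ssz)$ sampling overhead.

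Next I would bound $C_t$. Outside the base cases, $\classify_t(x,\cB)$ performs a constant-time membership test, a single call to $\critbox_t(x,\cB)$, and exactly one recursive call $\classify_{t-1}(x,\cC)$, whose cost is at most $C_{t-1}$ since $\cC\subseteq\cB$ has width at most $n$. So it suffices to bound the cost of $\critbox_t$ by $P_3 A_{t-1} + P_4'$ for some $P_3,P_4'$. The procedure $\critbox_t$ consists of $\terminalbox_t$ followed by $\gridchain_t$. For $\terminalbox_t$: each iteration of its main loop calls $\findsplitter(\cT,\cB,\mu_t,\gamma\width(\cT),\rho)$, which by \Prop{splitter} runs in time $O((\width(\cT)/(\gamma\width(\cT)))^{3}(\log n)^4/\rho)=O((1/\gamma)^3(\log n)^4/\rho)$, a polylog function of $n$. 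Moreover, whenever a splitter is found it is $\rho$-balanced, so $\width(\cT)$ shrinks by a factor of at most $1-\rho$; the loop also halts as soon as $\width(\cT)\leq\omega$ or no splitter is found. Thus the number of iterations is at most $O((\log n)/\rho)$, and the total cost of $\terminalbox_t$ is polylog in $n$. For $\gridchain_t$: one call to $\buildgrid$ (polylog by \Prop{buildgrid}), construction of $D(\Gamma)$ whose number of grid-boxes is at most $|X(\Gamma)|\cdot|Y(\Gamma)| = O(1/\gridprec^3)$ (polylog), one recursive call to $\approxlis_{t-1}$ per grid-box (cost $\leq A_{t-1}$ each), and a longest-path computation in $D(\Gamma)$ taking time polynomial in the number of grid vertices (polylog). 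Summing,
\[
\text{cost of }\critbox_t \;\leq\; P_3 A_{t-1} + (\text{polylog overhead}).
\]
Adding the $C_{t-1}$ from the recursive $\classify$ call and a constant gives $C_t\leq C_{t-1}+P_3 A_{t-1}+P_4$ for appropriately chosen polylog functions $P_3$ and $P_4$.

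There is no real obstacle: the only point that requires a moment of thought is the iteration count of $\terminalbox_t$, and this is immediate from $\rho$-balancedness of every splitter produced by a reliable run of $\findsplitter$ (an unreliable run would not affect the \emph{time} bound, which holds deterministically since $\findsplitter$'s runtime in \Prop{splitter} is a worst-case bound). Everything else is bookkeeping of polylog factors, and the claim follows.
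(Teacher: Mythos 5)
Your proof is correct and follows essentially the same approach as the paper's: read both recurrences off the pseudocode, with $P_1=\ssz$ for $A_t$, the single recursive $\classify_{t-1}$ call plus the cost of $\critbox_t$ (split into $\terminalbox_t$'s $O((\log n)/\rho)$ iterations of $\findsplitter$, absorbed into $P_4$, and $\gridchain_t$'s $P_3$ calls to $\approxlis_{t-1}$) for $C_t$. Your extra observations — explicit plug-ins to \Prop{splitter} and \Prop{buildgrid}, and the remark that the runtime bound holds deterministically regardless of reliability — are fine elaborations but do not change the argument.
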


\begin{proof} The first recurrence with $P_1=\ssz=10(\log n)^4$ is immediate from the definition of 
$\approxlis_t$.  The function $P_2$ is an upper bound on the cost of operations within $\approxlis_t$ excluding the calls to $\classify_t$

For the second recurrence, the final recursive call to $\classify_{t-1}$  gives the $C_{t-1}$ term.
The rest of the cost comes from  $\critbox_t$ which invokes $\terminalbox_t$, which involves several iterations
whose cost is dominated by the cost of $\findsplitter$.  Each iteration reduces the width of $\cT$ by at least a $(1-\rho)$ 
factor, so the number of iterations is at most $(\log n)/\rho$.  The cost of $\findsplitter$ is $(\log n)^{O(1)}$,
so the cost of $\terminalbox_t$ is included in the term $P_4$.  $\critbox_t$ then calls $\gridchain_t$.  This involves building a grid of
size $(\log n)^{O(1)}$ and making one call to $\approxlis_t$ for each grid box, which accounts for the term $P_3 A_{t-1}$.  All of the rest
of the cost of $\gridchain_t$ is in doing a longest path computation on the grid digraph, which is absorbed into the $P_4$ term.  
\end{proof}

\begin{corollary} \label{cor:runtime} For all $t \geq 1$, $A_t$ and $C_t$ are in $\log n^{O(t)}$.
\end{corollary}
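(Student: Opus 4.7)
The plan is to unroll the two recurrences from Claim \ref{clm:runtime} to obtain a single recurrence in $C_t$ alone (or equivalently in $A_t$), and then bound it by induction on $t$.

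First, I would substitute the first inequality into the second: using $A_{t-1} \leq P_1 C_{t-1} + P_2$ we get
\[
C_t \;\leq\; C_{t-1} + P_3(P_1 C_{t-1} + P_2) + P_4 \;=\; (1+P_1P_3)\,C_{t-1} \;+\; (P_2P_3 + P_4).
\]
Writing $Q = 1+P_1P_3$ and $R = P_2P_3+P_4$, both of which are $(\log n)^{O(1)}$ since each $P_i$ is, this reads $C_t \leq Q\,C_{t-1} + R$.

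Next I would establish the base case $t=0$. A direct inspection of $\classify_0$ shows that it only checks whether $F(x) \in \cB$ and whether $\width(\cB)=1$, so $C_0 = O(1)$; and then $A_0 \leq P_1 C_0 + P_2 = (\log n)^{O(1)}$. A straightforward induction using $C_t \leq Q\,C_{t-1}+R$ then yields
\[
C_t \;\leq\; Q^t C_0 + R\,(Q^{t-1}+\cdots+1) \;\leq\; Q^t\cdot O(1) + R\cdot\tfrac{Q^t-1}{Q-1}.
\]
Since $Q,R \in (\log n)^{O(1)}$, we have $Q^t \in (\log n)^{O(t)}$, so $C_t \in (\log n)^{O(t)}$. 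Feeding this back into the first recurrence gives $A_t \leq P_1 C_t + P_2 \in (\log n)^{O(t)}$ as well.

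There is essentially no obstacle here: the only minor thing to be careful about is that the constants hidden in the $O(\cdot)$ of the exponent do not depend on $n$ or $t$, which is immediate from the fact that $P_1,P_2,P_3,P_4$ are themselves of the form $a_i(\log n)^{b_i}$ with $a_i,b_i$ absolute constants. Everything else is a routine substitution and geometric-sum estimate.
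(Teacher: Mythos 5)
Your proof is correct and takes essentially the same route as the paper: unroll the coupled recurrences from Claim~\ref{clm:runtime} into a single linear recurrence for $C_t$, solve it, and feed the result back to bound $A_t$. The paper simply states the closed form $C_t \leq P_5(P_1P_3+1)^t$ without writing out the substitution and geometric sum, but the content is the same.
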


\begin{proof} Note that $A_0, C_0 = (\log n)^{O(1)}$. By the recurrences of \Clm{runtime},
$C_t \leq P_5(P_1P_3+1)^t$ and $A_t \leq P_6(P_1P_3+1)^t$  which are both $(\log n)^{O(t)}$.
\end{proof}

\subsection{The t-splitter tree and terminal chain}
\label{sec:terminal}

We analyze the structure of the output of $\terminalbox_t$.
This procedure takes three parameters: the level $t$, the box $\cB$ and an index $x$.
It also uses randomness within the calls to $\findsplitter$. 
 Recall from Section \ref{subsec:random} that we classify the random bits used in $\approxlis$ as primary random bits
and all other random bits as secondary. Since $\terminalbox_t$ never calls $\approxlis$, all random bits used are secondary.

In the following discussion, the box $\cB$, level $t$, and secondary
random bits as fixed. Under this view, $\terminalbox$ maps each index $x \in X(\cB)$
to a box $\cT(x)$.   
We now define the {\em $t$-splitter tree}, which summarizes all important information
about the execution and output of $\terminalbox_t(x,\cB)$.


For each subbox $\cT$ of $\cB$, consider the output of $\findsplitter_t(\cT,\cB,\mu_t,\gamma \width(\cT),\rho)$.
If $\splitterfound$ is {\tt FALSE} we say that $\cT$ is {\em splitterless}. Otherwise, we say that $\cT$ is {\em split}.
For each split box $\cT$, $\findsplitter$ returns a splitter $s=splitter(\cT)$ which is used to define two
subboxes of $\cT$: the {\em left child} $\bx(P_{BL}(\cT),F(s))$ and the {\em right child} $\bx(F(s),P_{TR}(\cT))$.
Note that, when viewed as boxes in the plane, the left child lies below and to  the left of the right child.  The
left- and right-child relations together
define a directed acyclic graph on the subboxes of $\cB$ in which each splitterless box has out-degree 0
and each split box has out-degree 2.  Note that if there is a path from box $\cT$ to box $\cT'$ then
$\cT' \subset \cT$.

Now consider the subdigraph $R(\cB)$ induced on the set of boxes reachable from $\cB$.  

\begin{itemize}
\item $R(\cB)$ is a binary tree rooted at $\cB$.  
We refer to $R(\cB)$ as the {\em $t$-splitter tree for $\cB$} and the leaves of the tree as {\em terminal boxes}.
\item The sequence of boxes encountered along any root-to-leaf path are nested.
\item Two boxes $\cT$ and $\cT'$  in the tree such that neither is an ancestor of the other have disjoint index sets, i.e. $X(\cT) \cap X(\cT') = \emptyset$.  
\item The terminal boxes together form a box chain that spans $\cB$, called the {\em terminal chain}, which we denote by $\vec{\cT}=\vec{\cT}_t(\cB)$.
\item
Recall (from Section \ref{subsec:basic defs}) that the box chain $\vec{\cT}$ has an associated increasing sequence of points
and $\cP^{\circ}(\vec{\cT})$ denotes this sequence, excluding the first and last point.
Every point $P \in \cP^{\circ}(\vec{\cT)}$ is the splitter of a unique non-terminal box of $R(\cB)$, denoted $\cH(P)$.
\item The
point sequence $\cP^{\circ}(\vec{\cT})$
associated to $\vec{\cT}$ is the same as the sequence obtained by doing 
a depth first traversal of the tree $R(\cB)$, always visiting the left child of a node before the right child,
and recording
the sequence of splitters $splitter(\cH)$ in post-order (listing the splitter of box $\cH$ immediately after having listed all splitters in its left subtree).
\item Each terminal box $\cT$ contains a grid $\Gamma(\cT)$, as formed in the procedure $\gridchain$. We can concatenate
grid chains spanning each $\cT \in \vec{\cT}$ to get a box chain that spans $\cB$. This is called
a \emph{spanning terminal-compatible grid chain in $\cB$}. Note that there are many possible such spanning terminal-compatible grid chains.
\end{itemize}

The following lemma is evident from the above observations and the definition of $\terminalbox_t$:

\begin{lemma}
\label{lem:terminal}
For every $x \in F^{-1}(\cB)$, the set of boxes in the $t$-splitter tree of $\cB$ whose index set contains $x$
is a root-to-leaf path in the tree.   This path is equal to the sequence of boxes produced during the execution
of $\terminalbox_t(x,\cB)$.  In particular, the leaf that is reached is the terminal box that is returned
by $\terminalbox_t(x,\cB)$, and is equal to
$\vec{\cT}[x]$ (the unique box $\cT \in \vec{\cT}$ such that $x \in X(\cT)$).
\end{lemma}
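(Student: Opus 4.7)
The plan is to verify the lemma by a straightforward induction on the height of the $t$-splitter tree $R(\cB)$, with essentially no hard part: the main observation is that the splitter $s = splitter(\cT)$ of a split box $\cT$ induces a partition of the index set $X(\cT) = (x_L(\cT), x_R(\cT)]$ into the index sets $(x_L(\cT), s]$ and $(s, x_R(\cT)]$ of the left and right children respectively. Consequently, for any $x \in X(\cT)$, exactly one child has $x$ in its index set, so from the root $\cB$ the sequence of boxes of $R(\cB)$ whose index set contains $x$ forms a unique path terminating at a leaf.

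Next I would argue that this path coincides, box-by-box, with the sequence of values taken by the local variable $\cT$ during the execution of $\terminalbox_t(x,\cB)$. Maintain the invariant that after $i$ iterations of the main loop, $\cT$ equals the $i$-th box on the root-to-leaf path through $R(\cB)$ determined by $x$. The base case ($i=0$, $\cT = \cB$) is immediate from the initialization. For the inductive step, note that the loop calls $\findsplitter(\cT, \cB, \mu_t, \gamma \width(\cT), \rho)$ with arguments determined entirely by $\cT$ (and the fixed global parameters). By the convention of Section~\ref{subsec:random} the secondary random bits are pregenerated and reused across identical calls, so this invocation returns exactly the output defining $\cT$'s status in $R(\cB)$: either $\splitterfound=\texttt{FALSE}$, in which case $\cT$ is splitterless, i.e.\ a leaf of $R(\cB)$, and the loop halts; or it returns the splitter $splitter(\cT)$, and the procedure then replaces $\cT$ by $\bx(P_{BL}(\cT), F(splitter))$ if $x \leq splitter$ and by $\bx(F(splitter), P_{TR}(\cT))$ if $x > splitter$, which is precisely the unique child of $\cT$ in $R(\cB)$ containing $x$.

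The loop also halts when $\width(\cT) \leq \omega$, but such a box is by definition splitterless in $R(\cB)$ as well (the loop guard in $\terminalbox_t$ is the very condition determining whether $\cT$ has children in the tree), so this is consistent. Iterating the invariant, $\terminalbox_t(x,\cB)$ returns the unique leaf of $R(\cB)$ whose index set contains $x$; since the leaves of $R(\cB)$ are exactly the boxes in the terminal chain $\vec{\cT}$ and their index sets partition $X(\cB)$, this leaf is $\vec{\cT}[x]$, completing the proof. The only potentially subtle point—and what I would emphasize—is the reuse-of-randomness convention, which is what makes the two procedures (the definition of $R(\cB)$ and the execution of $\terminalbox_t$) deterministically agree on the splitter of each $\cT$ they both examine.
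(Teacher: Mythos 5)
Your proof is correct and supplies, by a direct induction along the path, exactly the verification the paper waves at when it declares the lemma "evident from the above observations and the definition of $\terminalbox_t$." The one place you go slightly beyond the paper's literal text is the parenthetical assertion that a narrow box ($\width(\cT)\le\omega$) is splitterless "by definition": the paper's tree definition in fact declares a box splitterless only when $\findsplitter$ returns {\tt FALSE}, so a narrow box could in principle still be "split" in the DAG even though $\terminalbox_t$ halts on it without ever calling $\findsplitter$. You correctly infer the intended reading — the tree must be truncated at narrow boxes as well, since otherwise the leaves of $R(\cB)$ would not coincide with the boxes returned by $\terminalbox_t$ and the terminal chain would not be the one the algorithm uses. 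This resolves a small imprecision in the paper's definition of the $t$-splitter tree rather than indicating a gap in your argument.
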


\subsection{Two assumptions about  the random bits}
\label{subsec:assumptions}

As described in Section \ref{subsec:random}, the random bits used in the algorithm are classified as 
either {\em secondary random bits} (those used in $\findsplitter$ and $\buildgrid$), and  {\em primary
randomness} used within $\approxlis$.
  Note that the procedures $\findsplitter$ and $\buildgrid$
do not involve calls to themselves or other procedures, while $\approxlis_t$  makes calls to $\classify_t$, and $\classify_t$ makes
calls to $\approxlis_{t-1}$.    The primary random bits used in all calls
to $\approxlis_t$ for a fixed $t$ are called the {\em level $t$ random bits}.

We now identify two assumptions about the random bits used in the algorithm
and show that these assumptions hold with probability $1-n^{\Omega(-(\log n))}$. 
These assumptions encapsulate the only properties of  the random bits needed for the error analysis.
In the main analysis performed in the next section, we assume that
all random bits are fixed so that these conditions are satisfied.  
The algorithm can then be viewed as deterministic.
The first assumption involves the secondary random bits.

\smallskip
\noindent
{\bf Assumption 1.}  For every possible 
choice of arguments, the procedures $\findsplitter$ and $\buildgrid$ are reliable according to the definitions in Sections \ref{subsec:splitters} and \ref{subsec:grids}.

\smallskip

Propositions \ref{prop:splitter} and \ref{prop:buildgrid} imply that the probability that a call to $\findsplitter$ or $\buildgrid$ is unreliable
is $n^{-\Omega((\log n)}$.  As indicated earlier, there are at most $n^{O(1)}$ different possible arguments to
either procedure, so we can apply a union bound.

\begin{prop}
\label{prop:assumption 1}
The probability that Assumption 1 fails is at most $n^{-\Omega(\log(n)}$.
\end{prop}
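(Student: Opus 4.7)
The plan is to prove \Prop{assumption 1} by a straightforward union bound over all possible argument tuples to the two subroutines, combined with the per-call reliability guarantees already established in \Prop{splitter} and \Prop{buildgrid}.

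First, I would enumerate the possible argument tuples. For $\findsplitter(\cT,\cB,\mu,L,\rho)$, the boxes $\cT$ and $\cB$ have their corners on the universal grid $(0,n] \times F(0,n]$, which has size at most $n^2$, so there are at most $n^4$ choices for each box and hence $n^8$ choices for the pair $(\cT,\cB)$. Within \basicmain{}, the numerical parameters $\mu=\mu_t$, $L=\gamma\width(\cT)$, and $\rho$ are drawn from a set of size $(\log n)^{O(1)}$ (since $t$ ranges over $[0,t_{\max}]$, $\gamma$ and $\rho$ are fixed globals, and $\width(\cT)$ is an integer in $[1,n]$). For $\buildgrid(\cB,\gridprec,\griderror)$, the box $\cB$ has at most $n^4$ choices and $(\gridprec,\griderror)$ ranges over a set of size $(\log n)^{O(1)}$, since $\gridprec=\gamma$ is fixed and $\griderror=n^{-2\log n}$ is the value used in $\gridchain_t$. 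In total, the number of distinct argument tuples encountered across the whole algorithm is bounded by $n^{O(1)}$.

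Next, I would combine this count with the per-call bounds. By \Prop{splitter}, any fixed invocation of $\findsplitter$ is unreliable with probability at most $n^{-\Omega(\log n)}$. By \Prop{buildgrid}, any fixed invocation of $\buildgrid(\cB,\gridprec,\griderror)$ is unreliable with probability at most $\griderror$; with the chosen value $\griderror=n^{-2\log n}$ this too is $n^{-\Omega(\log n)}$. Taking the union bound over the $n^{O(1)}$ possible argument tuples, the probability that \emph{some} call to either procedure is unreliable is at most $n^{O(1)} \cdot n^{-\Omega(\log n)} = n^{-\Omega(\log n)}$, which gives Assumption 1 with the claimed probability.

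The only subtle point, and what I would flag as the one thing worth being careful about, is the justification that the \emph{pre-generation} view of random bits from \Sec{random} lets us reason uniformly over all possible argument tuples rather than only the ones the algorithm actually encounters on a given execution. Since random bits for each (procedure, argument)-pair are independent, the union bound applies to the entire ensemble, and once the ensemble is fixed so that every call would be reliable, every actual call made during the run is automatically reliable. This is precisely the purpose of the offline-randomness convention, and it is what makes the union bound legitimate without having to track which calls the algorithm will make.
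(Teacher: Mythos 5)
Your proof is correct and follows essentially the same route as the paper: invoke the per-call reliability bounds from \Prop{splitter} and \Prop{buildgrid}, note that the number of distinct argument tuples is $n^{O(1)}$ (as the paper set up in \Sec{random}), and take a union bound. The paper's version is terser and leaves the enumeration and the offline-randomness justification implicit, but the argument is the same.
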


We henceforth view the secondary randomness as fixed in a way that satisfies Assumption 1.
We note an important consequence of Assumption 1 that we'll need later.
Recall that for a box $\cB$, with terminal chain $\vec{\cT}$,
each point $P \in \cP^{\circ}(\cT)$ was found as the splitter of a unique non-terminal box
$\cH(P)$.  Under Assumption 1,  and the definition of reliable, each of those splitters is 
$(\mu_t,2\gamma w(\cH(P)))$-safe for $\cH(P)$.

\begin{prop}
\label{prop:ass 1 cons}
Under assumption 1, for any box $\cB$ with terminal chain $\vec{\cT}$, each of the
points $P \in \cP^{\circ}(\cT)$ is $(\mu_t,2\gamma w(\cH(P)))$-safe for th box $\cH(P)$.
\end{prop}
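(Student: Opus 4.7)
The plan is to observe that this proposition is essentially a direct unpacking of the reliability definition for $\findsplitter$ (Assumption 1), applied to the specific invocation that produced $P$'s index as a splitter. The main work is simply matching up parameters between the algorithm's call to $\findsplitter$ and the statement of the conclusion, rather than any genuinely new argument.

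First I would fix a box $\cB$ with terminal chain $\vec{\cT}$ and fix a point $P \in \cP^{\circ}(\vec{\cT})$. By the structural observations preceding the proposition (in particular, that $\cP^{\circ}(\vec{\cT})$ corresponds bijectively to the non-terminal nodes of the $t$-splitter tree via their splitters), $P$ is the splitter of a unique non-terminal box $\cH = \cH(P)$ in the $t$-splitter tree of $\cB$. Tracing the execution of $\terminalbox_t$, at the moment $\cH$ was split the algorithm invoked $\findsplitter(\cH, \cB, \mu_t, \gamma \width(\cH), \rho)$, and since $\cH$ is non-terminal (it was actually split), this invocation returned $\splitterfound = \mathtt{TRUE}$ with $splitter$ equal to $x(P)$.

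Next I would invoke Assumption 1, which asserts that the call $\findsplitter(\cH, \cB, \mu_t, \gamma \width(\cH), \rho)$ is reliable. By the definition of reliability (Section \ref{subsec:splitters}), whenever such a reliable call returns $\splitterfound = \mathtt{TRUE}$, the returned splitter is $(\mu_t, 2\gamma \width(\cH), \rho)$-adequate for $\cH, \cB$. Unpacking the definition of adequate, this means in particular that $x(P)$ is $(\mu_t, 2\gamma \width(\cH))$-safe for $\strip{\cH}{\cB}$, which is the content claimed by the proposition (with the convention that safety ``for the box $\cH(P)$'' is shorthand for safety with respect to its vertical extension to a $\cB$-strip, as required by the safeness definition in terms of adjacent strips).

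There is no serious obstacle: the only thing to be careful about is verifying that the parameter $L$ in the reliability guarantee is precisely $\gamma \width(\cH)$, so that the conclusion yields the factor $2\gamma \width(\cH(P))$ matching the statement. This follows directly from the pseudocode of $\terminalbox_t$, where $\findsplitter$ is always called with $L = \gamma \width(\cT)$ on the current box $\cT$, which here is $\cH$. Once this matching is made explicit, the proposition is immediate.
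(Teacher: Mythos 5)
Your proposal is correct and follows exactly the same reasoning the paper uses (the paper states this proposition essentially as an immediate remark, without a separate proof, after introducing Assumption 1): you correctly identify that $P$ arose from the call $\findsplitter(\cH(P),\cB,\mu_t,\gamma\width(\cH(P)),\rho)$, invoke reliability to get $(\mu_t,2\gamma\width(\cH(P)),\rho)$-adequacy, and extract safety. You also rightly flag the notational slip in the proposition's phrase ``safe for the box $\cH(P)$,'' which per the definition of adequacy actually means safe for the strip $\strip{\cH(P)}{\cB}$ — this is indeed how the proposition is used later in the proof of the bound on $\errorterm_4$.
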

%
%
%
%

Next we turn to the second assumption.
Assumption 2 will state the conditions we need for the primary randomness.  
To formulate this assumption,
we now  introduce  a somewhat technical definition of \emph{tainted boxes}. 
We do not need this definition to analyze the basic  $\approxlis$ algorithm, but
the improved version will need this notion.   We introduce this notion here because
this will allow us to reuse the proof for the improved algorithm.

For any $x \in \cB$ and integer $t$, the output of $\classify_t(x,\cB)$ is either \emph{good}
or \emph{bad}.
The set of indices classified \emph{good} is denoted by $Good_t(\cB)$.
The procedure $\approxlis_t$ tries to approximate $|Good_t(\cB)|$ by random sampling.
The randomness used for this random sampling is primary randomness from level $t$.
 is therefore  independent of $Good_t(\cB)$.

\begin{definition} \label{def:tainted}  Let $\taint$ be the taint parameter specified 
within \basicmain{}.
A box-level pair $(\cB,t)$ is said to be \emph{tainted} if $t \geq 1$ and $\width(\cB)>1$ and at least one of the following holds:
\begin{itemize}
	\item $|\approxlis_t(\cB)  - |Good_t(\cB)|| > \taint \width(\cB)$.
	\item There exists a  spanning terminal-compatible grid chain $\vec{\cC}$ for $\cB$, such that 
	the total width of the boxes $\{\cC \in \vec{\cC} | \textrm{$(\cC,t-1)$ is tainted}\}$ is at least $\taint \width(\cB)$.
\end{itemize}
\end{definition}

\smallskip
\noindent
{\bf Assumption 2.}  There are no tainted box-level pairs.

\begin{prop}
\label{prop:assumption 2}
The probability that Assumption 2 fails   is at most  $n^{-\Omega(\log(n))}$.  
\end{prop}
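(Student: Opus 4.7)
The plan is to prove Proposition \ref{prop:assumption 2} by induction on $t$, reducing Assumption 2 to a union bound of Chernoff-Hoeffding failure events for the random samples drawn inside $\approxlis$. First I would condition throughout on Assumption 1 holding; by \Prop{assumption 1} this happens except on an event of probability $n^{-\Omega(\log n)}$, which I can absorb into the final error bound. Under Assumption 1, every call to $\findsplitter$ and $\buildgrid$ is reliable, so the terminal chain $\vec{\cT}_t(\cB)$, the grids $\Gamma(\cT)$, and therefore the collection of spanning terminal-compatible grid chains for $\cB$ at level $t$ are all deterministic functions of the (now fixed) secondary randomness.

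Next I would set up the induction. For each level $t$ and each box $\cB \subseteq \cU$ with $\width(\cB) > 1$, let $E_t(\cB)$ be the event
\[
\bigl|\approxlis_t(\cB) - |Good_t(\cB)|\bigr| \le \taint\,\width(\cB),
\]
and let $E_t = \bigcap_\cB E_t(\cB)$. The claim I would prove is: if Assumption 1 holds and $E_1,\ldots,E_{t_{\max}}$ all hold, then no box-level pair is tainted. The base case $t=0$ is immediate from \Def{tainted}. For the inductive step, assume no $(\cC,t-1)$ is tainted. For an arbitrary $(\cB,t)$ with $\width(\cB)>1$, the first tainting condition is ruled out by $E_t(\cB)$, while the second tainting condition requires some $(\cC,t-1)$ with $\cC$ in a spanning terminal-compatible grid chain of $\cB$ to be tainted, which contradicts the inductive hypothesis. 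Hence $(\cB,t)$ is not tainted, completing the induction.

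It remains to bound $\Pr[\neg E_t(\cB)]$ for a fixed pair $(\cB,t)$. Here I would use our convention that the level-$t$ primary randomness used inside the particular call $\approxlis_t(\cB)$ is an independent block of bits, disjoint from the secondary randomness and from the primary randomness of all levels $t' \ne t$ and all other level-$t$ calls. Conditioning on every bit outside this block, the set $Good_t(\cB)$ is determined (it depends only on the secondary bits and levels $<t$ primary bits via $\classify_t$). Then $\approxlis_t(\cB)=\width(\cB)\cdot g/\ssz$, where $g$ is a sum of $\ssz$ i.i.d. Bernoulli indicators of membership in $Good_t(\cB)$, so by \Prop{hoeff2},
\[
\Pr[\neg E_t(\cB)] \le 2\exp(-2\taint^2 \ssz) = 2\exp(-\tfrac{(\log n)^2}{5}) = n^{-\Omega(\log n)},
\]
using $\ssz = 10(\log n)^4$ and $\taint = 1/(10\log n)$. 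Since every box encountered has corners on the universal grid there are at most $n^4$ choices of $\cB$, and $t \le t_{\max} = O(1/\taupar)$; the overall union bound therefore yields $\Pr[\neg\bigcap_t E_t]\le n^{-\Omega(\log n)}$, and combined with the bound on the failure of Assumption 1, this gives Proposition \ref{prop:assumption 2}.

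The only subtle point — and the place I would check most carefully — is that the block of randomness controlling $\approxlis_t(\cB)$'s sampling step really is independent of the bits that define $Good_t(\cB)$. This is guaranteed by the pre-generation convention in Section \ref{subsec:random}, which assigns disjoint random-bit blocks to each procedure call with each argument tuple; once that independence is invoked, the rest is a routine Hoeffding application plus a polynomially sized union bound, which easily fits inside the $n^{-\Omega(\log n)}$ budget.
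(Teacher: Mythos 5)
Your proposal is correct and takes essentially the same route as the paper: a Hoeffding bound on the single-call estimation error (with exactly the same parameter arithmetic), a union bound over all box-level pairs, the observation that the level-$t$ primary bits driving the sample in $\approxlis_t(\cB)$ are independent of the bits that determine $Good_t(\cB)$, and the inductive argument that the first tainting condition implies the second is vacuous — which the paper dispatches in one line as ``a trivial induction on $t$.'' The only addition is your preliminary conditioning on Assumption 1; this is harmless (Assumption 1 concerns only secondary bits, so it leaves the level-$t$ primary bits uniform) but also unnecessary, since the independence argument already works without it and the paper does not invoke Assumption 1 in this proof.
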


\begin{proof}
By Proposition \ref{prop:hoeff2} for each box-level pair $(\cB,t)$,  the probability that it satisfies the first condition of tainting, $|\approxlis_t(\cB)  - |Good_t(\cB)|| > \taint \width(\cB) (= \width(\cB)/(10\log n))$,
holds with probability at most $n^{-\Omega((\log n))}$. Taking a union bound over all box-level  pairs $(\cB,t)$ ensures that with probability
at least $1-n^{-\Omega(\log(n)}$, there are no box-level pairs that satisfy the first condition for being tainted.  If no box-level pair satisfies the first condition 
for being tainted, then a trivial induction on $t$ implies
that  no such pair satisfies the second condition either.

There is a subtle point here.   The set $Good_t(\cB)$ depends on the random bits, as does the set of indices
sampled by $\approxlis_t(\cB)$. When we apply Proposition \ref{prop:hoeff2}, the set $A$ is $Good_t(\cB)$, which
is itself determined randomly.  It is crucial that the
set of indices selected for sampling is uniformly distributed after conditioning on the 
set $Good_t(\cB)$. This is indeed the case.  The set
$Good_t(\cB)$ is determined by the output of $\classify_t(x,\cB)$ on all $x \in X(\cB)$.  The random bits needed to determine
these are the secondary random bits, together with the primary random bits of level at most $t-1$.    
The choice of the sample in $\approxlis_t(\cB)$ depends on the primary bits at level $t$, and is therefore
a uniformly distributed sample of the $Good_t(\cB)$.
\end{proof}

%
\section{Analysis of correctness}
\label{sec:correctness}

We introduce some notation.

\begin{itemize}
\item $\alis_t(\cB)$ is the output of $\approxlis_t(\cB)$.
\item $\aloss_t(\cB)=|\cB \cap \cF| - \alis_t(\cB)$.
\item $\liserror_t(\cB)=\lis(\cB)-\alis_t(\cB)$.
\end{itemize}

\begin{theorem}
\label{thm:basic correctness}
Suppose the random bits satisfy Assumptions 1 and 2. For all $t \geq 1$ and boxes $\cB \subseteq \cU(f)$:
%
\begin{eqnarray}
\label{goal 1}
|\liserror_t(\cB)| & \leq & \tau_t\loss(\cB)+\delta_t \width(\cB),
\end{eqnarray}
where $\tau_t=\frac{4}{t}$ and $\delta_t=\frac{t}{\log n}$.
\end{theorem}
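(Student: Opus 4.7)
The plan is to prove Theorem~\ref{thm:basic correctness} by induction on $t$, treating $t=1$ as the base case (using that $\alis_0(\cC)$ equals $|\cF\cap\cC|$ exactly on width-$1$ boxes, and is $0$ on others via a bad $\classify_0$-output). For the inductive step, I will decompose
\[
|\alis_t(\cB)-\lis(\cB)| \;\leq\; \bigl||Good_t(\cB)| - \lis(\cB)\bigr| + \bigl|\alis_t(\cB) - |Good_t(\cB)|\bigr|,
\]
where the second term is bounded by $\taint\,\width(\cB)$ via Assumption~2. So everything reduces to bounding $\bigl||Good_t(\cB)|-\lis(\cB)\bigr|$.

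For the \emph{upper bound} $|Good_t(\cB)| \leq \lis(\cB)$, I will first establish the decomposition $Good_t(\cB) = \bigsqcup_{\cC\in\vec{\cC}}Good_{t-1}(\cC)$, where $\vec{\cC}$ is the critical chain returned by $\critbox_t(\cdot,\cB)$; this follows directly from $\classify_t(x,\cB)$ recursively calling $\classify_{t-1}(x,\vec{\cC}[x])$. By the inductive hypothesis each $Good_{t-1}(\cC)$ is $F$-increasing inside $\cC$, and the box-chain structure on $\vec{\cC}$ makes the union $F$-increasing, so $|Good_t(\cB)| \leq \lis(\cB)$.

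For the \emph{lower bound}, fix an LIS $\cL\subseteq\cB$ and set $\cL^{out}=\cL\setminus\vec{\cC}^{\cup}$. Applying the IH and Assumption~2 on each critical box gives $|Good_{t-1}(\cC)|\geq \lis(\cC)-\tau_{t-1}\loss(\cC)-(\delta_{t-1}+\taint)\width(\cC)$. Summing and using $\sum_\cC \lis(\cC)\geq |\cL\cap\vec{\cC}^{\cup}|$ together with $\sum_\cC\loss(\cC)\leq\loss(\cB)+|\cL^{out}|$ yields
\[
\lis(\cB)-|Good_t(\cB)|\;\leq\; (1+\tau_{t-1})\,|\cL^{out}|+\tau_{t-1}\loss(\cB)+(\delta_{t-1}+\taint)\width(\cB).
\]
It remains to bound $|\cL^{out}|$, which I will split into (a) LIS points outside the terminal chain $\vec{\cT}^{\cup}$, and (b) points lying in some terminal box $\cT$ but outside its grid chain $\vec{\cC}(\cT)$. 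By \Lem{grid-approx}, the aggregate contribution of (b) is at most $\gridprec\width(\cB)$. For (a) I will apply \Lem{dichotomy} with $\cR=\cB$, $\vec{\cD}=\vec{\cT}$, and $\mu=\mu_t$: by Assumption~1, any non-narrow terminal box was splitterless, and Corollary~\ref{cor:fails} then bounds its safe splitters by $3\rho\width(\cT)$, while narrow terminal boxes contribute at most $1/\rho$ in total width. A subtle point is that $\gridchain$ selects the grid chain optimal under $\alis_{t-1}$ rather than under $\lis$; I will need an exchange argument comparing the returned chain to the one that best tracks $\cL$, absorbing the discrepancy into the IH error.

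The main obstacle will be the parameter bookkeeping. With $\mu_t=2/(t+3)$, the dichotomy gives a $\loss$-error reduction by a factor of roughly $1-\mu_t$, and I must verify that the chosen $\tau_t=4/t$ closes the resulting recurrence $\tau_t \geq \tau_{t-1}(1-\mu_t)+\mu_t + O(1/\log n)$. Simultaneously, the per-level secondary errors---from $\gridprec$, the $1/\rho$ slack for narrow boxes, the $\taint$ term from Assumption~2, and the $\rho\log n$-type contribution from iterated splits---must telescope to at most $\delta_t=t/\log n$, given the parameter choices $\gamma=\gridprec=1/(C_1\log n)^3$, $\rho=1/(C_1\log n)$, and $\taint=1/(10\log n)$.
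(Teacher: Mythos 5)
Your high-level skeleton (induction on $t$, $Good_t(\cB)=\bigsqcup_\cC Good_{t-1}(\cC)$, $|Good_t|\le\lis$, pushing the IH down to the critical boxes) mirrors the paper's setup, but the crucial boosting step is missing, and as a result your recurrence does not close.

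The problem is in the step $\sum_\cC\loss(\cC)\le\loss(\cB)+|\cL^{out}|$. The sharper identity is
\[
\sum_\cC\loss(\cC)\;=\;|\vec{\cC}^{\cup}\cap\cF|-\sum_\cC\lis(\cC)\;\le\;\loss(\cB)-\chi^{out}+|\cL^{out}|,
\]
where $\chi^{out}$ is the number of $\cF$-points of $\cB$ lying outside $\vec{\cC}^{\cup}$. By dropping the $-\chi^{out}$ term you discard exactly the gain that the boosting is supposed to produce. Tracing your derivation through, the coefficient of $\loss(\cB)$ stays at $\tau_{t-1}$, not $\tau_{t-1}(1-\Omega(\mu_t))$, so there is nothing that shrinks from one level to the next. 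The $\chi^{out}$ slack is what the paper's $\outside(\vec{\cE}(\cT))$ bookkeeping and $\errorterm_5$ encode, and the dichotomy lemma is invoked precisely to relate $\chi^{out}$ to the (few) safe splitters so that the slack is quantitatively large whenever the primary error would otherwise fail to shrink. Your proposal instead invokes \Lem{dichotomy} to bound $|\cL^{out}|$, which it cannot do: the lemma gives a \emph{lower} bound on $\chi^{out}$ (equivalently an upper bound on $\chi^{in}$), not an upper bound on how much of the LIS escapes the chain. The bound on $|\cL^{out}(\cT)|$ in the paper comes from a different mechanism entirely, namely that each endpoint of a terminal box is a $(\mu_t,2\gamma\width(\cH))$-safe splitter, which caps the number of LIS points it can violate; this is $\errorterm_4$ and \Lem{et4}, not \Lem{dichotomy}.

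There is also a concrete arithmetic sanity check that the approach as written fails: plugging $\mu_t=2/(t+3)$ and $\tau_{t-1}=4/(t-1)$ into your claimed recurrence gives $\tau_{t-1}(1-\mu_t)+\mu_t=\frac{6t+2}{(t-1)(t+3)}$, which already at $t=2$ equals $14/5=2.8>2=\tau_2$; so even granting the recurrence, $\tau_t=4/t$ would not satisfy it. The correct recurrence (which the paper derives after introducing $\outside(\vec{\cE}(\cT))$ and optimizing the coefficient by setting $\mu_t=(1-K_t)/2$ with $K_t=1/(1+\tau_{t-1})$) is $\tau_t\ge\frac{4\tau_{t-1}+3\tau_{t-1}^2}{(2+\tau_{t-1})^2}$, which $\tau_t=4/t$ does satisfy with equality up to lower order. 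Getting to that recurrence requires carrying the $\outside(\cdot)$ term through the whole calculation rather than discarding it, which is the missing idea in your proposal.
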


\begin{proof}[of \Thm{basic approxlis}]  The claimed run time for
$\basicmain$ follows from \Cor{runtime}.  By \Prop{assumption 1} and \Prop{assumption 2},
Assumptions 1 and 2 hold with probability $1-n^{O(\log(n)}$ and by \Thm{basic correctness} this is enough
to guarantee the desired error bound.
\end{proof}

So it remains to prove \Thm{basic correctness}.
Recall that we refer to the term $\tau_t \loss(\cB)$ as {\em primary error} and  to 
the term $\delta_t \width(\cB)$ as {\em secondary error}.
The secondary error term comes from several sources, including sampling error.
It is a bit of a nuisance to track, but is not the main issue in the analysis;
we will have the
freedom to make $\delta_t$ as small as we like. In contrast, 
the coefficient  $\tau_t$ and primary splitter parameter $\mu_t$ are 
tightly constrained by the structure of our recursive algorithm.

In \Sec{primary},
we focus on the primary error terms. We structure the analysis to show how $\tau_t$ and  $\mu_t$ were determined. We identify 
a series of \emph{secondary error terms} $\errorterm_i$ (for $i$ betwen 1 and 5).
In Section \ref{subsec:secondary analysis} we show that the sum of these error terms is bounded above by $\delta_t \width(\cB)$. 
After isolating these secondary terms we will be left
with a recurrence that constrains $\tau_t$ by an expression involving $\mu_t$ and $\tau_{t-1}$. We 
choose $\mu_t$ to minimize this expression, which yields a recurrence inequality for $\tau_t$ in terms of
$\tau_{t-1}$. By inspection, $\tau_t=4/t$ satisfies this recurrence.

%
\subsection{Setting up the proof} \label{sec:stripnot}

We now summarize some notation about subsets of the box $\cB$.
\begin{itemize}
\item $\cL$ denotes a fixed LIS of $\cB$.
\item   The terminal chain $\vec{\cT}$ spans the box $\cB$, and the associated sequence $\vec{\cS}(\cT)$ of strips of the form
$\strip{\cT}{\cB}$ is a strip decomposition of $\cB$.
\item For  each terminal box $\cT$, there is an associated grid $\Gamma(\cT)$ which is constructed by the subroutine
$\buildgrid$, and a grid chain $\vec{\cC}(\cT)$ 
in $\Gamma(\cT)$  which is constructed by a call to $\gridchain$.  (We remind the reader that in the analysis we assume
that we have generated separate random bits for each subroutine and each choice of input parameters so that
the output of the subroutine is specified whether or not we actually execute it.)
\end{itemize}
  
Much of our analysis focuses on the behavior of $\cL$ as well as our algorithm within each strip.
This motivates the following notation. Refer to \Fig{lis-strip}.
\begin{figure*}[tb]
  \centering
  \subfloat[Dividing $\cL$ into strips]{\includegraphics[width=0.35\textwidth]{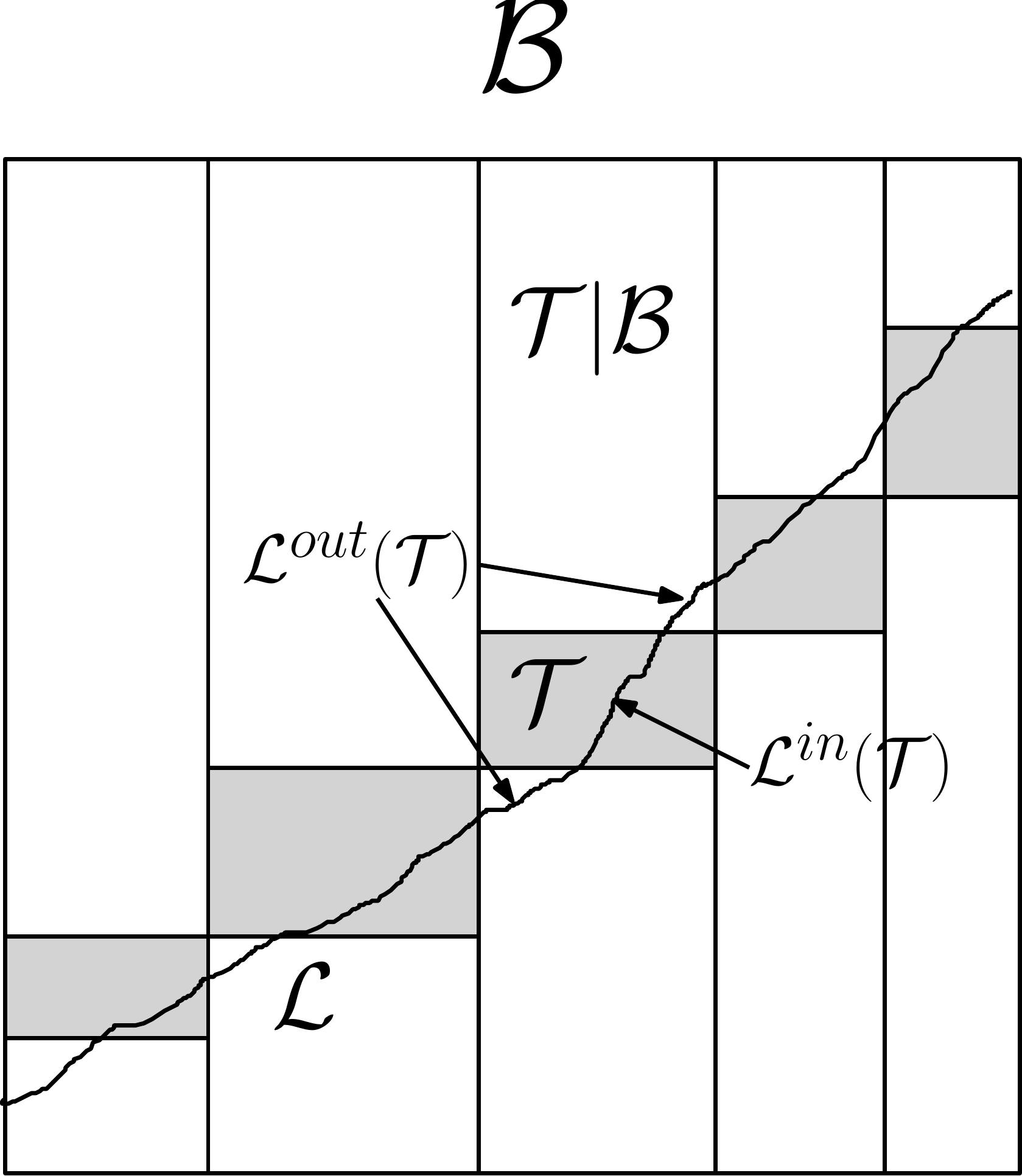} \label{fig:lis-strip}} 
  $\qquad \qquad$
  \subfloat[The grid chain $\vec{\cD}(\cT)$]{\includegraphics[width=0.35\textwidth]{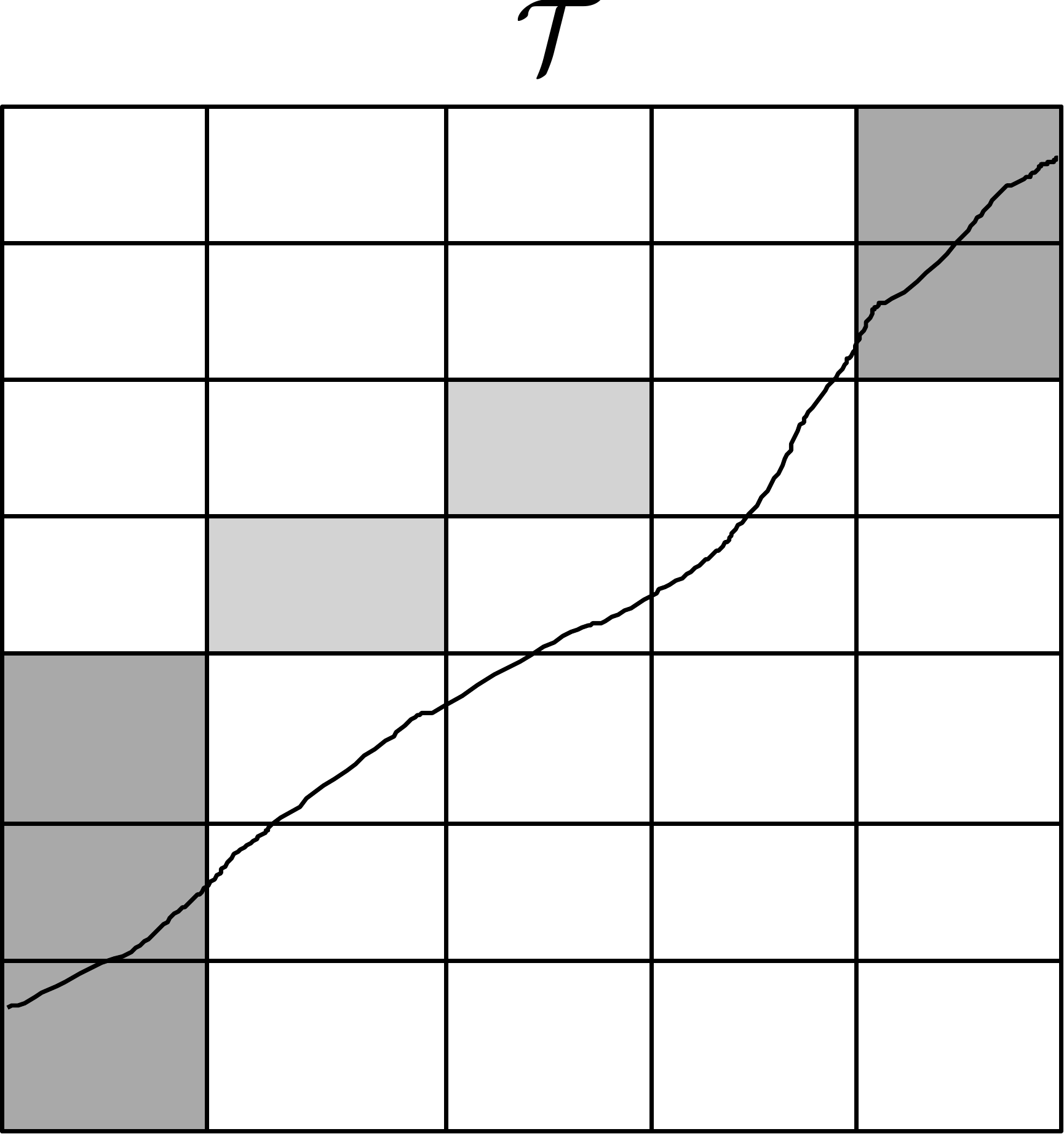} \label{fig:critical}}
    \caption{(a) The terminal chain $\vec{\cT}$ is given by shaded boxes. The LIS $\cL$ is depicted as a monotonically increasing freeform curve.
    (b) The grid chain $\vec{\cD}(\cT)$ is denoted by the shaded boxes. 
The set $\vec{\cE}$ of boxes (not indicated explicitly) are the boxes intersecting the LIS that are not tainted.}
\end{figure*}
%
%
%
For each terminal box $\cT$:
\label{subsec:more defs}
\begin{itemize}
	\item $\beta(\cT) = |(\strip{\cT}{\cB}) \cap \cF|$.
	\item $\cL(\cT)=\cL \cap (\strip{\cT}{\cB})$.
	\item $\cL^{in}(\cT) = \cL \cap \cT$.
	\item $\cL^{out}(\cT) = \cL \cap ((\strip{\cT}{\cB}) - \cT)$.
\end{itemize}

 We give some notation regarding critical boxes.
\begin{itemize}
	\item The concatenation of the $\vec{\cC}(\cT)$ for all $\cT \in \vec{\cT}$ is a box chain $\vec{\cC}$ called the {\em critical chain}. Members of this chain are {\em critical boxes}.
	\item For index $x  \in X(\cB)$, $\vec{\cC}[x]$ denotes the unique critical box $\cC$ such that $x \in X(\cC)$.
	Observe that for each $x \in F^{-1}(\cB)$, the function $\critbox_t(x)$ returns $\cC[x]$.
\end{itemize}

For each terminal box $\cT$, let $\vec{\cD}(\cT)$ denote the grid chain of $\Gamma(\cT)$ containing the maximum number
of points from $\cL$. Refer to \Fig{critical}. 
We will actually use the subsequence $\vec{\cE}(\cT)$, which consists of 
boxes $\cE \in \vec{\cD}(\cT)$ such that $\cE \cap \cL \neq \emptyset$ and $(\cE,t-1)$ is not tainted.
The following quantities are used heavily in our proof.
%
%
%
\begin{itemize}
\item $\outside(\vec{\cD}(\cT)) = \beta(\cT) - \sum_{\cD \in \vec{\cD}(\cT)} |\cD \cap \cF|$.
(Number of $\cF$-points in $(\cT|\cB)$ outside $\vec{\cD}(\cT)$.)
\item $\outside(\vec{\cE}(\cT)) = \beta(\cT) - \sum_{\cE \in \vec{\cE}(\cT)} |\cE \cap \cF|$.
(Number of $\cF$-points in $(\cT|\cB)$ outside $\vec{\cE}(\cT)$.)
\item $\loss(\vec{\cE}(\cT)) = \sum_{\cE \in \vec{\cE}(\cT)} \loss(\cE)$.  (Recall that  $\loss(\cE)=|\cE \cap \cF| - \lis(\cE)]$, so $\loss(\vec{\cE})$ is the number
of $\cF$-points from $\bigcup_{\cE \in \vec{\cE}(\cT)} \cE$  that are missed by the union of
the  LIS  for $\cE \in \vec{\cE}(\cT)$.)
\item $\aloss_t(\vec{\cE}(\cT)) = \sum_{\cE \in \vec{\cE}(\cT)} \aloss(\cE)$. (Recall that  
$\aloss(\cE)=[|\cE \cap \cF| - \alis_t(\cE)]$ so $\aloss_t(\vec{\cE}(\cT))$ is the
estimate of $\loss(\vec{\cE}(\cT))$ using $\alis_t$ in place of \lis.)
\end{itemize}
We begin with a few simple propositions.  The first, which follows immediately from the definition
of $\classify$ specifies what happens in the base case $\width(\cB)=1$.
%
\begin{prop}
\label{prop:narrow}
If $\cB$ is a box of width 1, then   $\alis_t(\cB)=\lis(\cB)$ and 
$\liserror_t(\cB)=0$. For the unique $x \in X(\cB)$, $x \in Good_t$ if and only if $F(x) \in \cB$.
\end{prop}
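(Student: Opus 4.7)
The plan is to verify the proposition directly by unwinding the definitions of $\approxlis_t$, $\classify_t$, $Good_t$, and $\lis$ in the width-1 case. Since the box has width $1$, its index set $X(\cB)$ consists of a single index $x$, so $\cB$ can contain at most one $\cF$-point, namely $F(x)$ (and it does if and only if $F(x) \in \cB$, since $F$ is one-to-one on $(0,n]$).

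First I will handle the value of $\alis_t(\cB)$. Inspecting the pseudocode for $\approxlis_t$, the first branch applies precisely when $\width(\cB)=1$, and it returns $|\cB \cap \cF|$. By the preceding observation this equals $1$ if $F(x) \in \cB$ and $0$ otherwise. On the other hand $\lis(\cB)$, the size of the longest increasing sequence of $\cF$-points contained in $\cB$, is also exactly $|\cB \cap \cF|$ in a width-$1$ box, because the only candidate sequence consists of the (at most one) $\cF$-point of $\cB$. Hence $\alis_t(\cB)=\lis(\cB)$, so $\liserror_t(\cB)=\lis(\cB)-\alis_t(\cB)=0$.

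For the characterization of $Good_t$, I apply the pseudocode of $\classify_t(x,\cB)$ with the unique $x \in X(\cB)$. If $F(x) \not\in \cB$ the procedure returns \emph{bad} in its first step, so $x \not\in Good_t(\cB)$. If $F(x) \in \cB$ the procedure enters the ``Otherwise'' branch; the base-case check $\width(\cB)=1$ fires immediately and returns \emph{good}, so $x \in Good_t(\cB)$. Combining the two cases yields the stated equivalence $x \in Good_t(\cB) \iff F(x) \in \cB$.

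There is essentially no obstacle here — the proposition is a direct consequence of the base cases in the definitions of $\approxlis_t$ and $\classify_t$, together with the trivial fact that a width-$1$ box has at most one $\cF$-point. The only very minor care needed is to observe that the branch of $\classify_t$ taken does not depend on $t$, so the conclusion holds uniformly for all $t \geq 0$, which is what later arguments will rely on when they invoke this base case as the leaves of the recursion.
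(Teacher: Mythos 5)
Your proof is correct and takes the same route the paper intends: the paper states the proposition ``follows immediately from the definition of $\classify$'' without spelling it out, and you have simply unwound those base cases of $\approxlis_t$ and $\classify_t$ along with the trivial observation that a width-$1$ box contains at most one $\cF$-point. Your closing remark that the width-$1$ branch of $\classify_t$ fires before the $t=0$ check, so the claim holds for all $t$, is a useful detail worth noting.
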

Henceforth, we assume that $\cB$ has width at least 2. We remind that $Good_t(\cB)$ denotes the set of indices
ini $X(\cB)$ classified as  {\em good} by $\classify_t(\cB)$.
The quantity $\alis_t(\cB)$ returned by the algorithm is an estimate of $|Good_t(\cB)|$. 
\begin{prop}
\label{prop:good}
For any box $\cB$ of width at least 2, and $t \geq 1$:
\begin{enumerate}
\item $Good_t(\cB)$ is equal to the union over critical boxes $\cC$ of $Good_{t-1}(\cC)$.
\item $Good_t(\cB)$ indexes an increasing sequence in $\cB$ and thus $|Good_t(\cB)| \leq \lis(\cB)$.
\end{enumerate}
\end{prop}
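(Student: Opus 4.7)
The plan is to unfold the recursive definition of $\classify_t$ directly for Part 1, then to use Part 1 together with induction on $t$ and the structural properties of a box chain for Part 2.

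For Part 1, fix a box $\cB$ with $\width(\cB) \geq 2$ and $t \geq 1$. The critical chain $\vec{\cC}$ returned by $\critbox_t$ spans $\cB$, so by the properties of $\triangleleft$ recorded in the preliminaries, its index sets partition $X(\cB)$; every $x \in X(\cB)$ lies in a unique critical box $\vec{\cC}[x]$. Reading off the pseudocode of $\classify_t$, when $F(x) \in \cB$, the procedure returns exactly the value $\classify_{t-1}(x,\critbox_t(x,\cB)) = \classify_{t-1}(x,\vec{\cC}[x])$, so $x \in Good_t(\cB)$ iff $x \in Good_{t-1}(\vec{\cC}[x])$. When $F(x) \notin \cB$, $\classify_t$ immediately rejects; but since $\vec{\cC}[x] \subseteq \cB$ we also have $F(x) \notin \vec{\cC}[x]$, so $\classify_{t-1}(x,\vec{\cC}[x])$ also rejects. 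Combined with the fact that $Good_{t-1}(\cC) \subseteq X(\cC)$ for every critical box (because $\classify_{t-1}$ rejects unless $F(x) \in \cC$, which forces $x \in X(\cC)$), this gives
\[
Good_t(\cB) = \bigsqcup_{\cC \in \vec{\cC}} Good_{t-1}(\cC),
\]
a disjoint union over the critical boxes.

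For Part 2 I would induct on $t$, proving for every box $\cB$ that $Good_t(\cB)$ indexes an increasing sequence whose points all lie in $\cB$. The base $t=0$ is immediate from the pseudocode: $Good_0(\cB)$ is nonempty only when $\width(\cB)=1$, in which case it consists of the single index $x$ with $F(x) \in \cB$ (if any), which trivially indexes an increasing sequence. For the inductive step, apply Part 1 to write $Good_t(\cB)$ as a disjoint union of the $Good_{t-1}(\cC)$ as $\cC$ ranges over the critical chain $\cC_1 \triangleleft \cC_2 \triangleleft \cdots \triangleleft \cC_k$. The inductive hypothesis says each $Good_{t-1}(\cC_i)$ indexes an increasing sequence with $F$-values in $\cC_i$. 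To see that the concatenation is increasing, take $x<x'$ in $Good_t(\cB)$: if they lie in the same $\cC_i$ apply the inductive hypothesis, and if $x \in X(\cC_i)$, $x' \in X(\cC_j)$ with $i<j$, then $\cC_i \triangleleft \cdots \triangleleft \cC_j$ guarantees (as recorded in the preliminaries) that $F(x) \prec F(x')$. Thus $Good_t(\cB)$ indexes an increasing sequence in $\cB$, and so $|Good_t(\cB)| \leq \lis(\cB)$ by definition of $\lis$.

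There is no real obstacle here; both parts are essentially bookkeeping. The only things that need care are (i) handling the case $F(x)\notin \cB$ in Part 1, so that the disjoint union identity really accounts for every index, and (ii) invoking the box-chain property $\cC_i \triangleleft \cC_j \Rightarrow P \prec Q$ for $P \in \cC_i, Q \in \cC_j$, which is precisely what lets the per-box increasing sequences be concatenated without crossings.
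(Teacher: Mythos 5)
Your proof is correct and follows essentially the same route as the paper's: Part 1 by unfolding the recursion in $\classify_t$, and Part 2 by induction on $t$ together with the box-chain property that points in distinct chain boxes are automatically comparable. You spell out a couple of details the paper leaves implicit (the $F(x)\notin\cB$ case in Part 1, the explicit base case $t=0$), but the argument is the same.
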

\begin{proof}
The first part follows from the main case of $\classify_t$.
For the second part, if $x,y \in Good_t(\cB)$
then either $F(x)$ and $F(y)$ belong to different boxes of $\vec{\cC}$
or $F(x),F(y)$ lie in the same critical box $\cC$.
In the former case, they are comparable since the boxes lie in a box chain.
In the latter case, they are both classified good by $\classify_{t-1}(\cdot,\cC)$
and must be comparable by induction on $t$.
\end{proof}

\subsection{Components of secondary error } 
\label{subsec:secondary components}

In this section we identify five components of secondary error, denoted $\errorterm_1,\ldots,\errorterm_5$.
We gather the definitions together here for easy reference, but the motivation for each term comes from the analysis
presented in Section \ref{sec:primary}.  

$$\errorterm_1 = \alis_t(\cB) - |Good_t(\cB)|.$$

Recall that the subroutine $\terminalbox_t$ together with the secondary random bits determine the terminal  box chain $\vec{\cT}$.
The remaining four secondary error terms are associated to each individual terminal box $\cT$  in $\vec{\cT}$; the secondary
error will be obtained by summing these over all $\cT$. We define:

\begin{eqnarray*}
\errorterm_2(\cT) & = & \sum_{\cC \in \vec{\cC}(\cT)} \alis_{t-1}(\cC) - \sum_{\cC \in \vec{\cC}(\cT)}|Good_{t-1}(\cC)|.\\
\errorterm_3(\cT) & = & |\cL^{in}(\cT)| - \sum_{\cE \in \vec{\cE}(\cT)} |\cE \cap \cL|.\\
\errorterm_4(\cT) &= &|\cL^{out}(\cT)| - \mu_t \cdot \outside(\vec{\cE}(\cT)).\\
\errorterm_5(\cT)
& = &  \mu_t \cdot \aloss_{t-1}(\vec{\cE}(\cT)) - (1-\mu_t)\cdot \outside(\vec{\cE}(\cT)).
\end{eqnarray*}

\subsection{Transforming our goal}
\label{sec:transform}

Our goal is to upper bound $|\nu_t(\cB)|=|\lis(\cB)-\alis_t(\cB)|$.
We start from the definition of $\liserror_t(\cB)$, substitute $|Good_t(\cB)|+ \errorterm_1$ for $\alis_t(\cB)$
and apply the inequality (from \Prop{good}) $\lis(\cB) \geq Good_t(\cB)$:
\begin{eqnarray*}
|\liserror_t(\cB)|&=&|\lis(\cB)-\alis_t(\cB)|\\
& = & |\lis(\cB) - |Good_t(\cB)| -\errorterm_1|\\
&\leq & \lis(\cB) - |Good_t(\cB)| + |\errorterm_1|
\end{eqnarray*}
%
%
Define:
\[
\Delta_t=\lis(\cB)-|Good_t(\cB)|-\tau_t\loss(\cB),
\]
Our goal (\ref{goal 1}) will follow from:

\[
\Delta_t \leq \delta_t\width(\cB)-|\errorterm_1|.
\]

Noting that $\lis(\cB)=|\cL|$ and $\loss(\cB)=|\cB \cap \cF|-|\cL|$. we have:

\[
\Delta_t= (1+\tau_t)|\cL|-|Good_t(\cB)|-\tau_t|\cB \cap \cF|
\]

%
Each term on the right counts a subset of $\cB \cap \cF$. Partitioning the box $\cB$ into the strips $(\cT|\cB)$
for each terminal box $\cT$, we define  $\Delta_t(\cT)$ as follows to be the contribution of the points in
$\cT|\cB$ to the right hand side (using the notation in Section \ref{sec:stripnot} 
and noting that $\Good_t(\cB) \cap (\cT|\cB) \subseteq \cT$):

\begin{align} \label{eq:Delta}
\Delta_t(\cT) & = (1+\tau_t) |\cL(\cT)| -  |Good_t(\cB) \cap \cT | - \tau_t \beta(\cT) 
\end{align}

Our goal now is to show:
\begin{eqnarray}
\label{eq:goal12.1}
\sum_{\cT \in \vec{\cT}} \Delta_t(T) & \leq & \delta_t \width(\cB)-|\errorterm_1|.
\end{eqnarray}
This is broken down into two steps. We use the notation $x^+ = \max(x,0)$.

\begin{claim} \label{clm:primary} (Primary error)
Let $t \geq 1$ and let $\cB$ be a box such that $(\cB,t)$ is untainted.  Let $\vec{\cT}$ be the
terminal chain associated with a call of $\approxlis_t(\cB)$.  We have:
\begin{eqnarray}
\label{sum Delta}
\sum_{\cT \in \vec{\cT}} \Delta_t(\cT) & \leq & (\delta_t - \delta_1) \width(\cB)+\sum_{\cT \in \vec{\cT}}  (\errorterm_2(\cT) + 5 \errorterm_3(\cT)^+ + 5\errorterm_4(\cT)^+ + 2 \errorterm_5(\cT)^+).
\end{eqnarray}
\end{claim}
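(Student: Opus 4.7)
The plan is to establish a per-terminal-box inequality of the form
\[
\Delta_t(\cT) \leq \errorterm_2(\cT) + 5\errorterm_3(\cT)^+ + 5\errorterm_4(\cT)^+ + 2\errorterm_5(\cT)^+ + \delta_{t-1}\sum_{\cE \in \vec{\cE}(\cT)}\width(\cE),
\]
from which the claim will follow by summing over $\cT \in \vec{\cT}$, using $\sum_{\cT}\sum_{\cE \in \vec{\cE}(\cT)}\width(\cE) \leq \sum_{\cT}\width(\cT) = \width(\cB)$ together with the identity $\delta_{t-1} = (t-1)/\log n = \delta_t - \delta_1$.

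To derive the per-box bound (for $t\geq 2$), I will first use \Prop{good}(1) and the definition of $\errorterm_2$ to rewrite $|Good_t(\cB) \cap \cT|$ as $\sum_{\cC \in \vec{\cC}(\cT)}\alis_{t-1}(\cC) - \errorterm_2(\cT)$. Since $\vec{\cC}(\cT)$ is the longest $D(\Gamma(\cT))$-path with respect to $\alis_{t-1}$, it dominates the fixed chain $\vec{\cD}(\cT)$; using $\alis_{t-1} \geq 0$ I can further drop the boxes in $\vec{\cD}(\cT)\setminus\vec{\cE}(\cT)$ to get $\sum_{\cC}\alis_{t-1}(\cC) \geq \sum_{\cE \in \vec{\cE}(\cT)}\alis_{t-1}(\cE)$. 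Rewriting each $\alis_{t-1}(\cE) = |\cE \cap \cF| - \aloss_{t-1}(\cE)$ collapses the $\cF$-counts into $\beta(\cT) - \outside(\vec{\cE}(\cT))$. Next I split $|\cL(\cT)| = |\cL^{in}(\cT)| + |\cL^{out}(\cT)|$, use $|\cL^{out}(\cT)| \leq \mu_t\outside(\vec{\cE}(\cT)) + \errorterm_4(\cT)^+$, and bound $|\cL^{in}(\cT)| \leq \sum_{\cE}|\cE \cap \cL| + \errorterm_3(\cT)^+ \leq \beta(\cT) - \outside(\vec{\cE}(\cT)) - \loss(\vec{\cE}(\cT)) + \errorterm_3(\cT)^+$. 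The three $\beta(\cT)$ contributions ($(1+\tau_t)\beta(\cT)$ from the $|\cL(\cT)|$-bound, $-\beta(\cT)$ from $\sum\alis_{t-1}(\cE)$, and $-\tau_t\beta(\cT)$ from the definition of $\Delta_t$) cancel exactly, leaving the working bound $\Delta_t(\cT) \leq [\mu_t(1+\tau_t) - \tau_t]\outside(\vec{\cE}) + \aloss_{t-1}(\vec{\cE}) - (1+\tau_t)\loss(\vec{\cE}) + \errorterm_2 + (1+\tau_t)(\errorterm_3^+ + \errorterm_4^+)$.

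Now I will apply the induction hypothesis $|\liserror_{t-1}(\cE)| \leq \tau_{t-1}\loss(\cE) + \delta_{t-1}\width(\cE)$ on each $\cE \in \vec{\cE}(\cT)$; combined with $\aloss_{t-1}(\cE) = \loss(\cE) + \liserror_{t-1}(\cE)$ this yields $\loss(\cE) \geq (\aloss_{t-1}(\cE) - \delta_{t-1}\width(\cE))/(1+\tau_{t-1})$, so summing over $\vec{\cE}(\cT)$ converts $\aloss_{t-1}(\vec{\cE}) - (1+\tau_t)\loss(\vec{\cE})$ into at most $\frac{\tau_{t-1}-\tau_t}{1+\tau_{t-1}}\aloss_{t-1}(\vec{\cE}(\cT)) + \frac{1+\tau_t}{1+\tau_{t-1}}\delta_{t-1}\sum_{\cE}\width(\cE)$. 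Finally, the defining identity $\mu_t\aloss_{t-1}(\vec{\cE}) = \errorterm_5(\cT) + (1-\mu_t)\outside(\vec{\cE}) \leq \errorterm_5(\cT)^+ + (1-\mu_t)\outside(\vec{\cE}(\cT))$ lets me eliminate the residual $\aloss_{t-1}$ term in favor of an $\errorterm_5^+$ contribution and a further contribution to the coefficient of $\outside(\vec{\cE}(\cT))$.

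The main obstacle is verifying that the accumulated coefficient of $\outside(\vec{\cE}(\cT))$ is non-positive, so the term can simply be discarded. With $\tau_t = 4/t$ and $\mu_t = 2/(t+3)$, one computes $\mu_t(1+\tau_t) - \tau_t = -2(t+2)/(t(t+3))$, while the extra contribution from eliminating $\aloss_{t-1}$ equals $(1-\mu_t)(\tau_{t-1}-\tau_t)/(\mu_t(1+\tau_{t-1})) = 2(t+1)/(t(t+3))$, summing to $-2/(t(t+3)) \leq 0$; this identity is precisely what dictates the choice of $\mu_t$ as a function of $t$. Once the $\outside$ term is dropped, the remaining coefficients are straightforward: $\errorterm_5^+$ picks up coefficient $(\tau_{t-1}-\tau_t)/(\mu_t(1+\tau_{t-1})) = 2/t \leq 2$, each of $\errorterm_3^+$ and $\errorterm_4^+$ picks up $1+\tau_t \leq 5$, and $(1+\tau_t)/(1+\tau_{t-1}) \leq 1$ dominates the $\delta_{t-1}$-term. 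The base case $t=1$ requires a brief separate treatment since the induction hypothesis does not apply at level $0$, but the claim's secondary term $(\delta_1 - \delta_1)\width(\cB) = 0$ vanishes and $\alis_0$ returns $0$ on wide critical boxes, making the direct argument manageable.
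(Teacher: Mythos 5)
Your proof is correct and follows essentially the same route as the paper's: you reproduce the chain of substitutions in the paper's Claim 7.4 (rewriting $|Good_t(\cB) \cap \cT|$ via $\errorterm_2$, using the grid-chain optimality and the splits into $\cL^{in}/\cL^{out}$ with $\errorterm_3,\errorterm_4$), then apply the level-$(t-1)$ induction hypothesis to the untainted boxes of $\vec{\cE}(\cT)$ exactly as the paper does with its intermediate loss-bound claim (your $1/(1+\tau_{t-1})$ is the paper's $K_t$), and finally eliminate $\aloss_{t-1}$ via $\errorterm_5$ and verify the resulting coefficient of $\outside(\vec{\cE}(\cT))$ is nonpositive. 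The only cosmetic differences are that you verify the chosen $\mu_t = 2/(t+3)$ directly rather than deriving it as the minimizer $(1-K_t)/2$, and your side remark about $\approxlis_0$ returning $0$ on width-$>1$ boxes is not actually needed for the $t=1$ base case — there it suffices, as in the paper, to simply drop the $\loss(\vec{\cE}(\cT)) \geq 0$ term, after which the coefficient of $\outside$ is $\mu_1(1+\tau_1) - \tau_1 + (1-\mu_1)/\mu_1 = -1/2 \leq 0$.
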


\begin{claim} \label{clm:secondary} (Secondary error)
%
With the same hypotheses as in Claim \ref{clm:primary} together with Assumption 1, we have:
\begin{eqnarray}
\label{goal 1.3}
|\errorterm_1|+\sum_{\cT \in \vec{\cT}} (\errorterm_2(\cT) + 5 \errorterm_3(\cT)^+ + 5\errorterm_4(\cT)^+ + 2 \errorterm_5(\cT)^+) & \leq & \delta_1\width(\cB).
\end{eqnarray}
\end{claim}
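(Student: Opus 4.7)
The plan is to bound each of the five contributions on the left-hand side of (\ref{goal 1.3}) individually so that their sum is at most $\delta_1\width(\cB) = \width(\cB)/\log n$. With the parameter choices $\taint = 1/(10\log n)$, $\gridprec = \gamma = 1/(C_1\log n)^3$, and $\rho = 1/(C_1\log n)$, each bound comes out as $O(\width(\cB)/\log n)$ with a small enough constant that the total satisfies the claim. The only probabilistic ingredients are already absorbed into Assumptions 1 and 2, so the whole argument is deterministic.

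For $|\errorterm_1|$ and $\sum_\cT\errorterm_2(\cT)$ I appeal directly to Assumption 2. Since $(\cB,t)$ is untainted, the first clause of Definition~\ref{def:tainted} fails, giving $|\errorterm_1| = |\alis_t(\cB) - |\Good_t(\cB)|| \leq \taint\width(\cB)$. For each critical box $\cC \in \vec{\cC}(\cT)$, Assumption 2 at level $t-1$ gives $|\alis_{t-1}(\cC) - |\Good_{t-1}(\cC)|| \leq \taint\width(\cC)$; summing over the critical chain (which spans $\cB$) yields $\sum_\cT\errorterm_2(\cT) \leq \taint\width(\cB)$. For $\errorterm_3^+$, Assumption 1 makes $\Gamma(\cT)$ a $\gridprec$-fine $\cT$-grid, so Lemma~\ref{lem:grid-approx} applied to $\cL\cap\cT$ yields a grid chain in $\Gamma(\cT)$ missing at most $\gridprec\width(\cT)$ points of $\cL$. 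Since $\vec{\cD}(\cT)$ is defined to maximize the $\cL$-intersection the same bound holds; under Assumption 2 the tainting exclusion in $\vec{\cE}(\cT)$ is vacuous, and boxes of $\vec{\cD}(\cT)$ disjoint from $\cL$ contribute nothing. Hence $\errorterm_3(\cT)^+ \leq \gridprec\width(\cT)$ and $\sum_\cT 5\errorterm_3(\cT)^+ \leq 5\gridprec\width(\cB)$.

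The main technical work is in $\sum_\cT\errorterm_4(\cT)^+$ and $\sum_\cT\errorterm_5(\cT)^+$, which exploit the structural properties of terminal boxes. For $\errorterm_4$, a short geometric argument shows that each $Q \in \cL^{out}(\cT_j)$ must form a violation with one of the corners $P_{j-1} = P_{BL}(\cT_j)$ or $P_j = P_{TR}(\cT_j)$: since $Q$ lies in $\strip{\cT_j}{\cB}$ but outside $\cT_j$, its value is either above $y(P_j)$ (so $Q\in P_j^{NW}$) or below $y(P_{j-1})$ (so $Q\in P_{j-1}^{SE}$). When these corners are interior splitters in $\cP^{\circ}(\vec{\cT})$, Prop~\ref{prop:ass 1 cons} gives $(\mu_t, 2\gamma w(\cH(P)))$-safety for the ancestor box $\cH(P)$, which bounds the violation count in each $\cH(P)$-strip adjacent to $P$ by $\mu_t|\cF\cap\cS| + 2\gamma w(\cH(P))$. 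Summing these inequalities across all splitters so the main terms aggregate to $\mu_t\sum_\cT\outside(\vec{\cE}(\cT))$ (using $\strip{\cT}{\cB}\setminus\cT \subseteq \vec{\cE}(\cT)$'s outside region), and bounding the cumulative $\gamma w(\cH(P))$ slack by summing over the $O(\log n / \rho)$-depth splitter tree, gives $\sum_\cT\errorterm_4(\cT)^+ = O(\gamma\log^2 n)\width(\cB)$. For $\errorterm_5$, apply Lemma~\ref{lem:dichotomy} with $\cR = \strip{\cT}{\cB}$ and $\vec{\cD} = \vec{\cE}(\cT)$. Since $\cT$ is terminal, Corollary~\ref{cor:fails} (using Assumption 1) bounds the safe indices by $3\rho\width(\cT)$, so the dichotomy conclusion gives $\outside(\vec{\cE}(\cT)) \geq \frac{\mu_t}{1-\mu_t}|U|$ up to an $O(\rho\width(\cT))$ additive error. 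Combining with Assumption 2, which relates $\aloss_{t-1}(\vec{\cE}(\cT))$ to the number $|U|$ of unsafe indices in $\vec{\cE}(\cT)^{\cup}$ up to $\taint\width(\cT)$, bounds $\errorterm_5(\cT)^+$ by $O((\taint+\rho)\width(\cT))$.

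The hardest step is the geometric reconciliation inside $\errorterm_4$: safety of $P\in\cP^{\circ}(\vec{\cT})$ is stated for $\cH(P)$-strips of height $Y(\cH(P))$, but $\cL^{out}(\cT)$ lives in $\cB$-strips of height $Y(\cB)$, so one must carefully peel off the vertical excess by cascading along the ancestor chain of $\cT$ in the splitter tree (the violations outside $Y(\cH(P))$ are themselves violations with a shallower splitter in the chain). Adding the five contributions yields at most $(2\taint + 5\gridprec + O(\gamma\log^2 n + \rho + \taint))\width(\cB) \leq \delta_1\width(\cB)$ for the specified parameter values, completing the claim.
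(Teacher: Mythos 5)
Your overall strategy — bound each error term separately with the grid approximation lemma, splitter safety, and the dichotomy lemma — matches the paper's. However, two of the substantive steps as you propose them would not go through as written, for concrete reasons.

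First, the "geometric reconciliation" you flag as the hardest step in bounding $\sum_\cT \errorterm_4(\cT)$ rests on a misreading of the safety definition. A splitter $s$ for a box $\cH(P)$ arising in $\terminalbox_t(\cdot,\cB)$ is found by $\findsplitter(\cH(P),\cB,\ldots)$, and the adequacy it returns certifies that $s$ is $(\mu_t,2\gamma(P)\width(\cH(P)))$-safe \emph{for the strip $\strip{\cH(P)}{\cB}$}, whose value interval is $Y(\cB)$, not $Y(\cH(P))$. So the violation bound already applies to the full height of $\cB$, and there is no vertical excess to "cascade along the ancestor chain" — that cascade would be needed only under the (incorrect) reading that safety is certified with respect to $Y(\cH(P))$. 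The actual technical work in the paper's Lemma~\ref{lem:et4} is instead the lower bound on $\outside(\vec{\cE}(\cT))$: one defines $\cS^-(\cT)$ and $\cS^+(\cT)$ as the minimal strips containing $\cL^-$ and $\cL^+$ respectively and shows only the leftmost and rightmost boxes $\cE_L,\cE_R$ of $\vec{\cE}(\cT)$ can intersect them, giving $\outside(\vec{\cE}(\cT)) \geq |(\cS^-\cup\cS^+)\cap\cF| - 2\gridprec\width(\cT)$, and then safety of the two corner splitters relates $|\cL^\pm|$ to $\mu_t|\cS^\pm\cap\cF|$. Your sketch ("the main terms aggregate to $\mu_t\sum_\cT\outside(\vec{\cE}(\cT))$") skips precisely this lower bound, which is where the $2\gridprec\width(\cT)$ slack term comes from.

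Second, the $\errorterm_5$ bound is wrong as proposed. You apply Lemma~\ref{lem:dichotomy} with $\vec{\cD}=\vec{\cE}(\cT)$, but $\vec{\cE}(\cT)$ is a proper subsequence of the grid chain $\vec{\cD}(\cT)$ and is therefore \emph{not} compatible with the strip decomposition $\vec{\cS}(\cT)$ induced by $\Gamma(\cT)$ (hypothesis D6 fails: the strips of $\vec{\cE}(\cT)$ do not partition $\strip{\cT}{\cB}$). The paper handles this by first applying Proposition~\ref{prop:et5} to replace $\vec{\cE}(\cT)$ by $\vec{\cD}(\cT)$, which \emph{is} compatible, and only then invoking the dichotomy lemma. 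Also, the paper's proof of Claim~\ref{clm:et5} does not use Assumption~2 at all: it bounds $\aloss_{t-1}(\vec{\cD}) \leq |\vec{\cD}^\cup\cap\cF| = |U|+|S|$ trivially (no sampling accuracy needed), giving $\errorterm_5(\cT)\leq\mu_t|S|$, and then bounds $|S|\leq 4\rho\width(\cT)$ via Corollary~\ref{cor:fails} for a wide terminal box where $\findsplitter$ failed. Your proposed bound $O((\taint+\rho)\width(\cT))$ and the appeal to "Assumption~2 relating $\aloss_{t-1}$ to $|U|$" do not correspond to any actual inequality in the setup; $\aloss_{t-1}$ is an algorithmic estimate of a loss, not a count of unsafe indices, and Assumption~2 gives no such relation. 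Finally, a smaller issue: your bound $\sum_\cT\errorterm_2(\cT)\leq\taint\width(\cB)$ relies on \emph{every} critical box $(\cC,t-1)$ being untainted (full Assumption~2), whereas the stated hypothesis is only that $(\cB,t)$ is untainted; the paper uses the second clause of Definition~\ref{def:tainted} to handle a small total width of tainted critical boxes and gets $2\taint\width(\cB)$, which is what makes the bound reusable for the improved algorithm where Assumption~2 cannot be imposed.
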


Summing these bounds yields \Eqn{goal12.1}, which proves \Thm{basic correctness}. 

\subsection{Bounding the primary error} \label{sec:primary}
Here we prove Claim \ref{clm:primary}.
The proof is by induction on $t$.  The base case is $t=1$.  We prove  the base case and the induction step together,
indicating where they differ.
%
%

The following claim  bounds $\Delta_t(\cT)$ in a more convenient form.

\begin{claim} \label{clm:delta}
For each terminal box $\cT$,
\begin{eqnarray*}
\Delta_t(\cT) & \leq & (1+\tau_t)|\cL^{out}(\cT)| - \tau_t \cdot \outside(\vec{\cE}(\cT))
- (1+\tau_t) \cdot \loss(\vec{\cE}(\cT)) + \aloss_{t-1}(\vec{\cE}(\cT)) \\
%
%
&&+\errorterm_2(\cT)+(1+\tau_t)\errorterm_3(\cT).
\end{eqnarray*}
\end{claim}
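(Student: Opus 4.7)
\textbf{Proposal for the proof of Claim \ref{clm:delta}.}

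The plan is a direct algebraic manipulation starting from the definition
\[
\Delta_t(\cT) = (1+\tau_t)|\cL(\cT)| - |Good_t(\cB) \cap \cT| - \tau_t \beta(\cT),
\]
substituting in the various definitions of $\errorterm_2,\errorterm_3$, the decomposition
$|\cL(\cT)| = |\cL^{in}(\cT)| + |\cL^{out}(\cT)|$, and standard identities like $\loss(\cE) = |\cE\cap\cF| - \lis(\cE)$ and $\aloss_{t-1}(\cE) = |\cE\cap\cF| - \alis_{t-1}(\cE)$. The overall structure is: upper bound each ``true'' quantity ($|\cL^{in}(\cT)|$, $|Good_t(\cB)\cap\cT|$) by an expression involving boxes of $\vec{\cE}(\cT)$ plus an error term, and then verify that the $\beta(\cT)$-contributions cancel exactly.

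First I would handle $|\cL^{in}(\cT)|$. Using the definition of $\errorterm_3(\cT)$,
\[
|\cL^{in}(\cT)| = \sum_{\cE\in\vec{\cE}(\cT)}|\cE\cap\cL| + \errorterm_3(\cT).
\]
Since $\cL\cap\cE$ is an increasing sequence contained in $\cE$, we have $|\cE\cap\cL|\leq \lis(\cE) = |\cE\cap\cF| - \loss(\cE)$. Summing over $\cE\in\vec{\cE}(\cT)$ and using $\sum_{\cE\in\vec{\cE}(\cT)}|\cE\cap\cF| = \beta(\cT)-\outside(\vec{\cE}(\cT))$ yields
\[
\sum_{\cE\in\vec{\cE}(\cT)}|\cE\cap\cL| \leq \beta(\cT)-\outside(\vec{\cE}(\cT))-\loss(\vec{\cE}(\cT)).
\]

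Next I would handle $|Good_t(\cB)\cap\cT|$. By Proposition \ref{prop:good}(1), $|Good_t(\cB)\cap\cT| = \sum_{\cC\in\vec{\cC}(\cT)}|Good_{t-1}(\cC)|$, and using the definition of $\errorterm_2(\cT)$,
\[
|Good_t(\cB)\cap\cT| = \sum_{\cC\in\vec{\cC}(\cT)}\alis_{t-1}(\cC) - \errorterm_2(\cT).
\]
The key point here, and the one minor subtlety in the argument, is that $\vec{\cC}(\cT)$ was chosen by $\gridchain_t$ as the $D(\Gamma(\cT))$-chain maximizing $\sum \alis_{t-1}$, whereas $\vec{\cE}(\cT)$ is merely a subset of the grid chain $\vec{\cD}(\cT)$. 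Since $\alis_{t-1}(\cdot)\geq 0$ and $\vec{\cD}(\cT)$ is itself a competitor in the maximization,
\[
\sum_{\cC\in\vec{\cC}(\cT)}\alis_{t-1}(\cC) \;\geq\; \sum_{\cD\in\vec{\cD}(\cT)}\alis_{t-1}(\cD) \;\geq\; \sum_{\cE\in\vec{\cE}(\cT)}\alis_{t-1}(\cE).
\]
Rewriting $\alis_{t-1}(\cE) = |\cE\cap\cF| - \aloss_{t-1}(\cE)$ and summing,
\[
-|Good_t(\cB)\cap\cT| \leq -\bigl(\beta(\cT)-\outside(\vec{\cE}(\cT))-\aloss_{t-1}(\vec{\cE}(\cT))\bigr) + \errorterm_2(\cT).
\]

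Finally I would plug both bounds into the definition of $\Delta_t(\cT)$ and collect terms. The coefficient of $\beta(\cT)$ becomes $(1+\tau_t) - 1 - \tau_t = 0$, the coefficient of $\outside(\vec{\cE}(\cT))$ becomes $-(1+\tau_t)+1 = -\tau_t$, while the remaining contributions line up exactly with the claimed bound:
\[
(1+\tau_t)|\cL^{out}(\cT)| - \tau_t\outside(\vec{\cE}(\cT)) - (1+\tau_t)\loss(\vec{\cE}(\cT)) + \aloss_{t-1}(\vec{\cE}(\cT)) + \errorterm_2(\cT) + (1+\tau_t)\errorterm_3(\cT).
\]
There is no real ``hard part'' to this claim: the bulk of the work is bookkeeping and the only nontrivial step is noticing that the maximality of $\vec{\cC}(\cT)$ in the longest-path computation of $\gridchain$ dominates the sum over $\vec{\cE}(\cT)\subseteq\vec{\cD}(\cT)$. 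No probabilistic or geometric arguments enter here; the substantive content (the dichotomy, safeness of splitters, grid approximation) will be used later to bound the five error terms themselves.
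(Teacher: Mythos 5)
Your proof is correct and follows essentially the same route as the paper's: the same use of Proposition~\ref{prop:good}, the definitions of $\errorterm_2$ and $\errorterm_3$, the maximality of $\vec{\cC}(\cT)$ in the longest-path computation (combined with $\vec{\cE}(\cT)\subseteq\vec{\cD}(\cT)$ and nonnegativity of $\alis_{t-1}$), the bound $|\cE\cap\cL|\leq\lis(\cE)$, and the $\lis/\loss$ and $\alis/\aloss$ identities, with the same cancellation of the $\beta(\cT)$ terms. The only difference is organizational (you bound the two quantities up front and then substitute, whereas the paper manipulates the expression in a running chain of inequalities), and you make the intermediate step through $\vec{\cD}(\cT)$ slightly more explicit than the paper does.
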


\begin{proof} 
By Proposition \ref{prop:good}, $|Good_t(\cB)\cap \cT|=\sum_{\cC \in \vec{\cC}(\cT)} |Good_{t-1}(\cC)|$. Thus,
(\ref{eq:Delta}) can be rewritten as:
\begin{eqnarray}
\nonumber
\Delta_t(\cT) & = & (1+\tau_t) |\cL(\cT)| - \sum_{\cC \in \vec{\cC}(\cT)} |Good_{t-1}(\cC)| - \tau_t \beta(\cT) 
\end{eqnarray}

We replace $|Good_{t-1}(\cC)|$ by $\alis_{t-1}(\cC)$, using $\errorterm_2(\cT)$. 

\begin{eqnarray*}
\label{eq:Delta2}
\Delta_t(\cT)
& = & (1+\tau_t)|\cL(\cT)| - \sum_{\cC \in \vec{\cC}(\cT)} \alis_{t-1}(\cC) - \tau_t \beta(\cT) + \errorterm_2(\cT)
\end{eqnarray*}

By construction, $\vec{\cC}(\cT)$  is a grid-chain with respect to the grid
$\Gamma(\cT)$ 
that maximizes the sum $\sum_{\cC \in \vec{\cC}(\cT)} \alis_{t-1}(\cC)$.  This sum is at least
$\sum_{\cE \in \vec{\cE}(\cT)} \alis_{t-1}(\cE)$, since $\vec{\cE}$ is a subsequence of a grid-chain.
\begin{eqnarray*}
\nonumber
\Delta_t(\cT) & \leq &(1+\tau_t)|\cL(\cT)| - \sum_{\cE \in \vec{\cE}(\cT)} \alis_{t-1}(\cE) - \tau_t \beta(\cT) + \errorterm_2(\cT)
\end{eqnarray*}
Now, $|\cL(\cT)|=|\cL^{in}(\cT)|+|\cL^{out}(\cT)|$.  Using the definition of $\errorterm_3(\cT)$,  
$|\cL^{in}|=\errorterm_3(\cT) + \sum_{\cE \in \vec{\cE}(\cT)} |\cE \cap \cL|$;  the grid approximation
lemma (\Lem{grid-approx}) will be used to show that $\errorterm_3(\cT)$ is small.  
Furthermore, for each $\cE \in \vec{\cE}(\cT)$, $|\cE \cap \cL| \leq \lis(\cE)$. Combining,
\begin{eqnarray*}
\Delta_t(\cT) & \leq &(1+\tau_t)|\cL^{out}(\cT)| + (1+\tau_t)\sum_{\cE \in \vec{\cE}(\cT)} \lis(\cE)
- \sum_{\cE \in \vec{\cE}(\cT)} \alis_{t-1}(\cE) - \tau_t \beta(\cT) \\
& & + \errorterm_2(\cT)+(1+\tau_t)\errorterm_3(\cT)
%
\end{eqnarray*}
Substituting $\lis(\cE) = |\cE \cap \cF|-\loss(\cE)$, and $\alis_t(\cE)=|\cE \cap \cF|-\aloss(\cE)$ and
performing some simple algebraic manipulations completes the proof:
\begin{eqnarray*}
\Delta_t(\cT) & \leq &(1+\tau_t)|\cL^{out}(\cT)| + (1+\tau_t)\Big[\sum_{\cE \in \vec{\cE}(\cT)} |\cE \cap \cF|
- \loss(\vec{\cE}(\cT))\Big]
- \Big[\sum_{\cE \in \vec{\cE}(\cT)} |\cE \cap \cF|-\aloss_{t-1}(\vec{\cE}(\cT))\Big] \\
& & - \tau_t \beta(\cT) + \errorterm_2(\cT)+(1+\tau_t)\errorterm_3(\cT) \\
& = & (1+\tau_t)|\cL^{out}(\cT)| - \tau_t \Big[\beta(\cT) - \sum_{\cE \in \vec{\cE}(\cT)} |\cE \cap \cF|\Big]
- (1+\tau_t) \cdot \loss(\vec{\cE}(\cT)) + \aloss_{t-1}(\vec{\cE}(\cT)) \\
&&+\errorterm_2(\cT)+(1+\tau_t)\errorterm_3(\cT) \\
& = & (1+\tau_t)|\cL^{out}(\cT)| - \tau_t \cdot \outside(\vec{\cE}(\cT))
- (1+\tau_t) \cdot \loss(\vec{\cE}(\cT)) + \aloss_{t-1}(\vec{\cE}(\cT)) \\
&&+\errorterm_2(\cT)+(1+\tau_t)\errorterm_3(\cT).
\end{eqnarray*}
%
\end{proof}

\begin{claim}
$\loss(\vec{\cE}(\cT)) \geq K_t[\aloss_{t-1}(\vec{\cE}(\cT))-\delta_{t-1} \width(\cT)]$, where
$K_1=0$ and $K_t = 1/(1+\tau_{t-1})$ for $t \geq 2$.
\end{claim}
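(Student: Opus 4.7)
The plan is a direct induction: the claim falls out from applying the inductive hypothesis of \Thm{basic correctness} at level $t-1$ box-by-box and summing.  The $t=1$ case will be trivial, and the $t\geq 2$ case is a one-line rearrangement after converting the $\alis$-error bound into an $\aloss$-bound.

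First I would dispatch the base case $t=1$.  Since $K_1=0$, the stated inequality reduces to $\loss(\vec{\cE}(\cT))\ge 0$, which holds trivially because $\loss(\cE)\ge 0$ for every box.  (Note that no inductive guarantee is available at level $0$ because $\tau_0$ is undefined; this is exactly why the authors set $K_1=0$.)

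For $t\ge 2$, the inductive hypothesis gives, for every subbox $\cE\subseteq\cU$, the bound
\[
|\liserror_{t-1}(\cE)| \;\le\; \tau_{t-1}\loss(\cE) + \delta_{t-1}\width(\cE).
\]
I would unwind the definition $\aloss_{t-1}(\cE)=|\cE\cap\cF|-\alis_{t-1}(\cE)$ and use the one-sided consequence $\alis_{t-1}(\cE)\ge \lis(\cE)-\tau_{t-1}\loss(\cE)-\delta_{t-1}\width(\cE)$ to obtain
\[
\aloss_{t-1}(\cE) \;\le\; (1+\tau_{t-1})\loss(\cE) + \delta_{t-1}\width(\cE),
\]
valid for each $\cE\in\vec{\cE}(\cT)$.

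Summing over $\cE\in\vec{\cE}(\cT)$ and using the definitions of $\loss(\vec{\cE}(\cT))$ and $\aloss_{t-1}(\vec{\cE}(\cT))$ yields
\[
\aloss_{t-1}(\vec{\cE}(\cT)) \;\le\; (1+\tau_{t-1})\loss(\vec{\cE}(\cT)) + \delta_{t-1}\sum_{\cE\in\vec{\cE}(\cT)}\width(\cE).
\]
Since $\vec{\cE}(\cT)$ is a subsequence of the grid chain $\vec{\cD}(\cT)$ spanning $\cT$, its boxes have pairwise disjoint index intervals contained in $X(\cT)$, so $\sum_{\cE}\width(\cE)\le \width(\cT)$.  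Substituting this and rearranging gives exactly $\loss(\vec{\cE}(\cT))\ge \tfrac{1}{1+\tau_{t-1}}[\aloss_{t-1}(\vec{\cE}(\cT))-\delta_{t-1}\width(\cT)]$, i.e.\ the claim with $K_t=1/(1+\tau_{t-1})$.  (If the bracket is negative, the conclusion is weaker than $\loss\ge 0$, so it still holds.)  There is no real obstacle here; the only thing to verify carefully is the width-summation bound, which holds because $\vec{\cE}(\cT)$ inherits the disjoint-index-interval structure of a grid chain over $\cT$.
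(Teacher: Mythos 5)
Your proof is correct and follows essentially the same route as the paper: the $t=1$ case is handled by nonnegativity of loss, and for $t\geq 2$ you invoke the inductive guarantee $\aloss_{t-1}(\cE)\leq(1+\tau_{t-1})\loss(\cE)+\delta_{t-1}\width(\cE)$, rearrange, and sum. Your explicit note that $\sum_{\cE}\width(\cE)\leq\width(\cT)$ makes precise a step the paper leaves implicit, and your observation that the inductive hypothesis applies to the boxes in $\vec{\cE}(\cT)$ lines up with the paper's remark that these boxes are by construction untainted.
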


\begin{proof} For $t=1$, simply note that the loss is always non-negative.

Suppose $t \geq 2$.
By construction of $\vec{\cE}$, for each $\cE \in \vec{\cE}$, $(\cE,t-1)$ is not tainted.
Using the induction hypothesis, we can relate $\loss(\cE)$ to $\aloss_{t-1}(\cE)$.
The proof is completed by summing the following bound over all $\cE \in \vec{\cE}(\cT)$.

\begin{eqnarray*}
\aloss_{t-1}(\cE) \leq (1+\tau_{t-1}) \loss(\cE) + \delta_{t-1} \width(\cE) \\
\Longleftrightarrow \loss(\cE) \geq \frac{\aloss_{t-1}(\cE)}{1+\tau_{t-1}} - \frac{\delta_{t-1} \width(\cE)}{1+ \tau_{t-1}}
\end{eqnarray*}
\end{proof}
Combining this claim with \Clm{delta} gives:


\begin{eqnarray*}
\Delta_t(\cT) & \leq &(1+\tau_t)|\cL^{out}(\cT)| -\tau_t \cdot \outside(\vec{\cE}(\cT)) + [1-K_t(1+\tau_t)]\cdot\aloss_{t-1}(\vec{\cE}(\cT))\\
&& +\errorterm_2(\cT)+(1+\tau_t)\errorterm_3(\cT)+K_t\delta_{t-1} \width(\cT). 
\end{eqnarray*}
We use $\errorterm_4(\cT)$ and $\errorterm_5(\cT)$ to eliminate $|\cL^{out}|$ and $\aloss_{t-1}(\vec{\cE}(\cT))$.
\begin{eqnarray}
\Delta_t(\cT) & \leq &(1+\tau_t)[\mu_t\cdot\outside(\vec{\cE}(\cT)) + \errorterm_4(\cT)] -\tau_t \cdot \outside(\vec{\cE}(\cT)) + \nonumber \\
&&[1-K_t(1+\tau_t)]\mu^{-1}_t[(1-\mu_t)\cdot\outside(\vec{\cE}(\cT)) + \errorterm_5(\cT)]
 +\errorterm_2(\cT)+(1+\tau_t)\errorterm_3(\cT)+K_t\delta_{t-1} \width(\cT) \nonumber \\
& = & [(1+\tau_t)\mu_t - \tau_t + (1-K_t(1+\tau_t))(\mu^{-1}_t - 1)] \outside(\vec{\cE}(\cT)) \nonumber \\
& & +\errorterm_2(\cT)+(1+\tau_t)\errorterm_3(\cT) + (1+\tau_t)\errorterm_4(\cT) + [1-K_t(1+\tau_t)]\mu^{-1}_t \errorterm_5(\cT) + K_t\delta_{t-1} \width(\cT) \label{eq:errors}
\end{eqnarray}
Our focus is on the coefficient of $\outside(\vec{\cE}(\cT))$ and the parameters will be chosen to ensure
this is negative.
\begin{eqnarray*}
& & (1+\tau_t)\mu_t - \tau_t + [1-K_t(1+\tau_t)](\mu^{-1}_t - 1) \\
& = & (1+\tau_t)\mu_t - (1+\tau_t) + K_t(1+\tau_t) + [1-K_t(1+\tau_t)]\mu^{-1}_t \\
& = & \mu^{-1}_t[ (1+\tau_t)\mu^2_t - (1+\tau_t)(1-K_t) \mu_t - K_t(1+\tau_t) + 1] \\
& = & \mu^{-1}_t\{ (1+\tau_t)[\mu^2_t - (1-K_t)\mu_t - K_t] + 1\}
\end{eqnarray*}
Setting $\mu_t = (1-K_t)/2$ (which is positive since $K_t < 1$) minimizes the inner quadratic. The condition that this expression
is non-positive is equivalent to the following.
\begin{align*}
(1+\tau_t)[(1-K_t)^2/4 - (1-K_t)^2/2 - K_t] \leq -1 
\Longleftrightarrow (1+\tau_t) \geq \frac{4}{(1+K_t)^2}
\end{align*}
%
%
%
%
%
%
For $t=1$, we have $K_t=0$ and the requirement is $\tau_t \geq 3$.  For $t \geq 2$,
$K_t=1/(1+\tau_{t-1})$ and the requirement becomes:
\[
\tau_t \geq \frac{4\tau_{t-1} + 3\tau^2_{t-1}}{(2+\tau_{t-1})^2}
\]
Guessing a solution of the form $\tau_t = C/t$, one verifies that $\tau_t=4/t$ works.  The
righthand side becomes

\begin{eqnarray*}
\frac{\frac{16}{t-1}+\frac{48}{(t-1)^2}}{(2+\frac{4}{t-1})^2} = \frac{16(t-1)+48}{(2(t-1)+4)^2} = \frac{4t+8}{(t+1)^2} \leq \frac{4t+8+\frac{4}{t}}{(t+1)^2} = \frac{4}{t}. 
\end{eqnarray*}
%

Thus the coefficient of $\outside(\vec{\cE}(T))$ in \Eqn{errors} is nonpositive.  To further simplify the bound
note that $K_t = (t-1)/(t+3) \leq 1$	and  $(1-K_t(1+\tau_t))/\mu_t  =2 (1-K_t(1+\tau_t))/(1-K_t) \leq 2$.
Thus from \Eqn{errors} we deduce: 
\begin{eqnarray*}
\Delta_t(\cT)
 &\leq & \errorterm_2(\cT)+(1+\tau_t)\errorterm_3(\cT)+(1+\tau_t)\errorterm_4(\cT)+ \errorterm_5(\cT)(1-K_t(1+\tau_t))/\mu_t+ K_t\delta_{t-1} \width(\cT) \\
 &\leq & \errorterm_2(\cT)+ 5\errorterm_3(\cT)+ 5 \errorterm_4(\cT)+ 2\errorterm_5(\cT)+ (\delta_{t}-\delta_1) \width(\cT).
\end{eqnarray*}
Summing over $\cT$ completes the proof of the claim.
\smallskip

{\bf Remark.} The reader should note that in the proof of \Clm{primary}, 
the only thing used about the algorithm $\terminalbox$ is \Lem{terminal}. The
exact details of \emph{how} the terminal chain was chosen are not used. Similarly, the particular choice of $\delta_1$ was
not used; we only use the fact that $\delta_t = t\delta_1$. 
The assumptions on the random seed are just Assumption 1 and that fact that $(\cB,t)$ is not tainted.
The specifics of the definition of tainted do no enter this proof.
All the calculations were basically algebraic manipulations (most details about the algorithm
have been pushed into the secondary error terms).
We mention this because our improved algorithm 
will be obtained by modifying $\terminalbox$ and changing $\delta_1$ (but not $\tau_t$ or $\mu_t$).
In analyzing the quality of approximation in the second
algorithm, we can reuse the analysis of this section.
%
%
%
%
\subsection{Bounding the secondary error terms}
\label{subsec:secondary analysis}

In this section we prove \Clm{secondary}.
We examine each of the error terms separately and show that each one contributes at most $\frac{1}{5\log(n)}=\frac{\delta_1}{5}$ to the secondary error, and so the total secondary error is at most $\delta_1$.

\begin{claim}
\label{clm:et1+et2}
If  $(\cB,t)$ is not tainted, $|\errorterm_1| \leq \taint\width(\cB)$ and $\sum_\cT \errorterm_2(\cT) \leq 2\taint \width(\cB)$.
\end{claim}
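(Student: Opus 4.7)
The two bounds have quite different characters. The first is essentially a restatement of the first tainting condition, while the second requires a careful decomposition of the critical chain into ``clean'' and ``dirty'' parts, using the second tainting condition to control the dirty part.

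For the first bound, the plan is immediate. Since $(\cB,t)$ is not tainted, the first clause in \Def{tainted} must fail, which gives exactly $|\alis_t(\cB) - |Good_t(\cB)|| \leq \taint\width(\cB)$. Since $\errorterm_1 = \alis_t(\cB) - |Good_t(\cB)|$, the conclusion follows.

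For the second bound, I would first unfold the sum. The critical chain $\vec{\cC}$ is by construction the concatenation of the grid chains $\vec{\cC}(\cT)$ for $\cT \in \vec{\cT}$, so
\[
\sum_{\cT\in\vec{\cT}} \errorterm_2(\cT) \;=\; \sum_{\cC\in\vec{\cC}} \bigl(\alis_{t-1}(\cC) - |Good_{t-1}(\cC)|\bigr),
\]
and $\vec{\cC}$ is by definition a spanning terminal-compatible grid chain for $\cB$. The plan is then to split the critical boxes into two classes: the \emph{clean} ones, for which $(\cC,t-1)$ is not tainted (or falls outside the domain of tainting, i.e., $t-1 = 0$ or $\width(\cC)=1$), and the \emph{dirty} ones, for which $(\cC,t-1)$ is tainted. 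For a clean $\cC$, the first clause of \Def{tainted} fails (or the edge-case analysis of $\classify_{t-1}$ and $\approxlis_{t-1}$ gives error $0$), so $|\alis_{t-1}(\cC)-|Good_{t-1}(\cC)|| \leq \taint\width(\cC)$. For a dirty $\cC$, I would use the trivial bound $|\alis_{t-1}(\cC)-|Good_{t-1}(\cC)|| \leq \width(\cC)$, which is valid because $\approxlis_{t-1}$ outputs a value in $[0,\width(\cC)]$ and $Good_{t-1}(\cC)\subseteq X(\cC)$.

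Combining these and using $\sum_{\cC\in\vec{\cC}} \width(\cC) = \width(\cB)$ (telescoping across the box chain spanning $\cB$), together with the fact that $(\cB,t)$ being untainted forces $\sum_{\cC \text{ dirty}} \width(\cC) \leq \taint\width(\cB)$ via the second tainting clause applied to this specific grid chain $\vec{\cC}$, yields
\[
\sum_{\cT} \errorterm_2(\cT) \;\leq\; \taint\sum_{\cC \text{ clean}} \width(\cC) \;+\; \sum_{\cC \text{ dirty}} \width(\cC) \;\leq\; \taint\width(\cB) + \taint\width(\cB) \;=\; 2\taint\width(\cB).
\]
The main (mild) technical point will be verifying that the critical chain $\vec{\cC}$ really counts as a spanning terminal-compatible grid chain in the sense used by \Def{tainted}, so that clause (2) of that definition applies to it directly; this is exactly the observation from \Sec{terminal} that concatenating the $\vec{\cC}(\cT)$'s spans $\cB$. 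Apart from that, the argument is a routine dichotomy on clean versus dirty critical boxes.
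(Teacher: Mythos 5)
Your proof is correct and follows essentially the same route as the paper: both arguments split the critical boxes into tainted and untainted ones, use the trivial width bound on the tainted ones (whose total width is $\leq \taint\width(\cB)$ by clause two of the tainting definition, since $\vec{\cC}$ is a spanning terminal-compatible grid chain), and use $\taint\width(\cC)$ on the untainted ones via clause one. Your handling of the degenerate cases ($\width(\cC)=1$ or $t-1=0$), where the tainting definition vacuously declares the pair untainted, is a small extra care the paper elides but which you resolve correctly.
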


\begin{proof}
The first inequality follows immediately from the definition of (non)tainted boxes: $|\errorterm_1| = |\alis_t(\cB)-|Good_t(\cB)|| \leq \taint \width(\cB)$.

For the second inequality,
recall that $\vec{\cC}$ is the critical chain for $\cB$ and $\vec{\cC}(\cT)$ is the portion of the critical chain in terminal box $\cT$.
Let $\vec{\cC}^*\subset \vec{\cC}$ be the set of critical boxes $\cC \in \vec{\cC}$ 
such that $(\cC,t-1)$ is tainted.
Since $(\cB,t)$ is not tainted, $\sum_{\cC \in \vec{\cC}^*} \width(\cC) \leq \taint \width(\cB)$.
We therefore have:

\begin{eqnarray*}
\sum_\cT \errorterm_2(\cT) & = & \sum_{\cC \in \vec{\cC}} \alis(\cC)-|Good_{t-1}(\cC)|\\
& \leq & \sum_{\cC \in \vec{\cC}} |\alis(\cC)-|Good_{t-1}(\cC)||\\
& \leq	& \sum_{\cC \in \vec{\cC}^*}\width(\cC) + \sum_{\cC \in \vec{\cC}-\vec{\cC}^*} \taint\width(\cC)\\
& \leq &  \taint\width(\cB) + \taint\width(\cB) \ =  2\taint \width(\cB).
\end{eqnarray*}
\end{proof}
   
Since $\taint=1/10\log(n)$, both 
$|\errorterm_1|$ and  $\sum_{\cT \in \vec{\cT}} \errorterm_2(\cT)$ are  bounded above by $\width(\cB)/(5 \log n)$.


We move on to the error term $\errorterm_3(\cT)= |\cL^{in}(\cT)| - \sum_{\cE \in \vec{\cE}(\cT)} |\cE \cap \cL|.$
%
\begin{claim}
\label{clm:et3} 
\begin{eqnarray*}
\sum_{\cT \in \vec{\cT}} \errorterm_3(\cT)
& \leq & (\gridprec + \taint) \width(\cB).
\end{eqnarray*}
\end{claim}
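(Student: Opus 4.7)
The plan is to split $\errorterm_3(\cT)$ into two contributions: the LIS points inside $\cT$ that lie outside the grid chain $\vec{\cD}(\cT)^{\cup}$, and the LIS points that lie inside $\vec{\cD}(\cT)$ but fall into tainted grid boxes (these are precisely the boxes of $\vec{\cD}(\cT)$ that intersect $\cL$ but were dropped from $\vec{\cE}(\cT)$). Concretely,
\[
\errorterm_3(\cT) \;=\; |\cL^{in}(\cT) \setminus \vec{\cD}(\cT)^{\cup}| \;+\; \sum_{\cE \in \vec{\cD}(\cT),\ (\cE,t-1)\text{ tainted}} |\cE \cap \cL|.
\]

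For the first term I would apply the grid approximation lemma (\Lem{grid-approx}). Under Assumption 1 the grid $\Gamma(\cT)$ returned by $\buildgrid(\cT,\gridprec,\cdot)$ is $\gridprec$-fine, so \Lem{grid-approx} guarantees the existence of some $D(\Gamma(\cT))$-chain that misses at most $\gridprec \width(\cT)$ points of $\cL^{in}(\cT)$. Since $\vec{\cD}(\cT)$ was chosen by $\gridchain$ to maximize the number of $\cL$-points it captures, it too must miss at most $\gridprec \width(\cT)$ of the points of $\cL^{in}(\cT)$. Summing over $\cT \in \vec{\cT}$ and using the fact that the strips $\strip{\cT}{\cB}$ partition $\cB$, the total first-term contribution is at most $\gridprec \sum_{\cT} \width(\cT) = \gridprec \width(\cB)$.

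For the second term, the concatenation of the $\vec{\cD}(\cT)$ over all $\cT \in \vec{\cT}$ is, by construction, a spanning terminal-compatible grid chain for $\cB$. Since $(\cB,t)$ is not tainted, the second clause of \Def{tainted} says the total width of tainted boxes appearing in this chain is at most $\taint \width(\cB)$. Because $\cF$-points have distinct indices, for any box $\cE$ we have $|\cE \cap \cL| \leq |\cE \cap \cF| \leq \width(\cE)$, so the second-term contribution is at most
\[
\sum_{(\cE,t-1)\text{ tainted in some }\vec{\cD}(\cT)} \width(\cE) \;\leq\; \taint \width(\cB).
\]

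Adding the two bounds yields $\sum_{\cT} \errorterm_3(\cT) \leq (\gridprec + \taint)\width(\cB)$, as claimed. The only subtlety I anticipate is confirming that the concatenated chain qualifies as a ``spanning terminal-compatible grid chain'' in the sense of \Def{tainted}; this is immediate from the construction of $\vec{\cD}(\cT)$ inside $\gridchain$ and the fact that the terminal boxes tile $\cB$, so the argument is essentially routine bookkeeping once the split above is in hand.
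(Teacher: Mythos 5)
Your proof is correct and takes essentially the same route as the paper: split $\errorterm_3(\cT)$ into the $\cL$-points outside $\vec{\cD}(\cT)^{\cup}$ (bounded via the grid approximation lemma by $\gridprec\width(\cT)$) and the $\cL$-points sitting in tainted grid boxes (bounded via the non-taintedness of $(\cB,t)$ by $\taint\width(\cB)$ after summing). One small but worth-flagging misstatement: $\vec{\cD}(\cT)$ is \emph{not} chosen by $\gridchain$ -- the chain computed by $\gridchain$ is $\vec{\cC}(\cT)$, which maximizes $\sum_{\cC}\alis_{t-1}(\cC)$. The object $\vec{\cD}(\cT)$ is a purely analytical construct, defined (in the setup for the analysis) as the grid chain of $\Gamma(\cT)$ maximizing $|\cL\cap\vec{\cD}(\cT)^{\cup}|$, and the algorithm never computes it. Your conclusion is unaffected since what you actually use is exactly this maximality property (combined with the existence statement from \Lem{grid-approx}), but the attribution should be corrected.
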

\begin{proof} Focus on a single terminal box $\cT$.
By the grid approximation lemma (\Lem{grid-approx}), we can choose the chain $\vec{\cD}(\cT)$ so that it misses at most
$\gridprec\width(\cT)$ points of $\cL^{in}(\cT)$. (Note that $\vec{\cE}$ is the subsequence consisting of non-tainted boxes.) Let the total width of the tainted boxes in $\vec{\cD}(\cT)$ be denoted as $b(\cT)$.
Hence, $\errorterm_3(\cT) \leq \gridprec\width(\cT) + b(\cT)$. Summing over all $\cT$, $\sum_{\cT \in \vec{\cT}} \errorterm_3(\cT) \leq \gridprec\width(\cB) + \sum_{\cT \in \vec{\cT}} b(\cT)$.
Since $(\cB,t)$ is not tainted, the latter sum is at most $\taint\width(\cB)$. 
\end{proof}

Substituting  $\gridprec = 1/(C_1\log n)^3$ and $\taint = 1/(10\log n)$,
gives $\sum_{\cT \in \vec{\cT}} \errorterm_3(\cT) \leq \width(\cB)/(5\log n)$.


Next we consider the error term $\errorterm_4(\cT) = |\cL_{out}(\cT)| - \mu_t\cdot\outside(\vec{\cE}(\cT))$.  

\begin{claim}
\label{clm:et4}
\begin{eqnarray*}
\sum_{\cT \in \vec{\cT}} \errorterm_4 (\cT)& \leq & \Big(2\gridprec+\frac{4\gamma \log n}{\rho}\Big) \width(\cB) \leq \width(\cB)/(5\log n).
\end{eqnarray*}
\end{claim}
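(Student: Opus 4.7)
The plan is to bound the sum by tracking two contributions: the LIS points of $\cL^{out}(\cT)$ via the safe-splitter property of the boundary splitters of $\cT$, and the relationship of these to $\outside(\vec{\cE}(\cT))$ via the Grid Approximation Lemma.

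First I would observe that every point $P \in \cL^{out}(\cT)$ is either a \emph{top-flap} point with $y(P) > y_T(\cT)$ or a \emph{bot-flap} point with $y(P) < y_B(\cT)$. The top (resp.~bottom) boundary of $\cT$ was set by a specific splitter produced by $\findsplitter$ inside $\terminalbox$: let $s^{TR}$ be the index of the last left-child split on the root-to-$\cT$ path in the $t$-splitter tree (so $y_T(\cT) = f(s^{TR})$), and analogously $s^{BL}$ for the last right-child split. Let $\cH^{TR}, \cH^{BL}$ be the corresponding ancestor boxes in the splitter tree. By Proposition~\ref{prop:ass 1 cons} (which uses Assumption 1), these splitters are $(\mu_t, 2\gamma\width(\cH))$-safe for $\strip{\cH}{\cB}$. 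Crucially, every top-flap point is a violation with $F(s^{TR})$ and every bot-flap point is a violation with $F(s^{BL})$.

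Next I would apply the safe-splitter bound on the minimal adjacent substrip of $\strip{\cH^{TR}}{\cB}$ (respectively $\strip{\cH^{BL}}{\cB}$) containing $X(\cT)$ minus the splitter's index; this substrip essentially equals $\strip{\cT}{\cB}$. The safe property then yields
\[
|\cL^{out}(\cT)| \;\le\; \mu_t\cdot|\cF\cap\cS^{TR}| + \mu_t\cdot|\cF\cap\cS^{BL}| + 2\gamma(\width(\cH^{TR})+\width(\cH^{BL})) + O(1).
\]
Summing the additive error over all $\cT \in \vec{\cT}$, I would observe that the non-terminal boxes $\cH^{TR}, \cH^{BL}$ lie on root-to-leaf paths in the $t$-splitter tree; since each split shrinks the width by a factor $(1-\rho)$, all such paths have length at most $O(\log n/\rho)$, and the total width of splitter-tree boxes at any level is at most $\width(\cB)$. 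Thus this additive contribution sums to at most $(4\gamma\log n/\rho)\width(\cB)$, producing the second term in the claimed bound.

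The main obstacle is converting the ``safe property'' bound $\mu_t(|\cF\cap\cS^{TR}| + |\cF\cap\cS^{BL}|) \approx 2\mu_t\beta(\cT)$ into the tighter $\mu_t\cdot\outside(\vec{\cE}(\cT))$ required by the claim. To close this gap I would exploit that $\cL^{out}(\cT)$ is entirely disjoint from $\vec{\cE}(\cT)^{\cup} \subseteq \cT$, together with the Grid Approximation Lemma (\Lem{grid-approx}) applied to the grid $\Gamma(\cT)$: the boundary splitters $F(s^{TR}), F(s^{BL})$ can be treated as $\cT$-boundary gridpoints, so the $\cF$-points inside $\vec{\cE}(\cT)$ are comparable to both boundary splitters and thus are \emph{not} counted as violations in either flap. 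This lets me replace each $|\cF\cap\cS|$ by $\outside(\vec{\cE}(\cT))$ up to at most $\gridprec\width(\cT)$ additive slack per terminal, which sums to $2\gridprec\width(\cB)$ and yields the first term of the claim. Finally, substituting $\gridprec=1/(C_1\log n)^3$ and $\rho=1/(C_1\log n)$ from \basicmain{}'s parameter table gives $(2\gridprec + 4\gamma\log n/\rho)\width(\cB)\le\width(\cB)/(5\log n)$ for sufficiently large $C_1$.
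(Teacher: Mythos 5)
There is a genuine gap in the step that converts the safe-splitter bound into a term proportional to $\outside(\vec{\cE}(\cT))$. You apply the safe property of $s^{BL}$ and $s^{TR}$ to a substrip that "essentially equals $\strip{\cT}{\cB}$", which gives $|\cL^{out}(\cT)|\lesssim 2\mu_t\beta(\cT)+O(\gamma\width(\cH))$. You then try to "replace each $|\cF\cap\cS|$ by $\outside(\vec{\cE}(\cT))$ up to $\gridprec\width(\cT)$ slack" by observing that $\cF$-points in $\vec{\cE}(\cT)^{\cup}$ are comparable to both boundary splitters. But the safe property bounds $viol(s,\cS)$ by $\mu|\cF\cap\cS|+L$ where $|\cF\cap\cS|$ counts \emph{all} $\cF$-points in $\cS$, not just violations; the fact that points in $\vec{\cE}(\cT)^\cup$ are non-violations does not let you shrink the $\mu|\cF\cap\cS|$ term. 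When the LIS nearly fills $\cT$, one has $\outside(\vec{\cE}(\cT))\approx 0$ while $\beta(\cT)$ remains large, so the proposed replacement carries $\Theta(\beta(\cT))$ slack rather than $O(\gridprec\width(\cT))$, and summing over $\cT$ produces an $\Omega(\mu_t|\cB\cap\cF|)$ error that swamps the target bound.

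The missing idea is the choice of strip. Instead of $\strip{\cT}{\cB}$, the paper applies the safe property of $P=P_{BL}(\cT)$ (resp.\ $Q=P_{TR}(\cT)$) to the \emph{minimal} substrip $\cS^-(\cT)$ (resp.\ $\cS^+(\cT)$) anchored at the appropriate edge of $\strip{\cT}{\cB}$ and just wide enough to contain $\cL^-(\cT)$ (resp.\ $\cL^+(\cT)$). Because $\cL$ is increasing, $\cS^-(\cT)$ and $\cS^+(\cT)$ are disjoint, and crucially only the single end-box $\cE_L$ (resp.\ $\cE_R$) of $\vec{\cE}(\cT)$ can intersect them, each contributing at most $\gridprec\width(\cT)$ $\cF$-points. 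This yields the lower bound $\outside(\vec{\cE}(\cT))\geq |\cS^-\cap\cF|+|\cS^+\cap\cF|-2\gridprec\width(\cT)$, which is exactly what lets $\mu_t|\cS^\pm\cap\cF|$ be absorbed into $\mu_t\cdot\outside(\vec{\cE}(\cT))$. Your final summation over splitter-tree levels (each level contributing $\gamma\width(\cB)$, depth at most $(\log n)/\rho$) is correct and matches the paper, but without the minimal-strip argument the per-$\cT$ bound it is summing does not hold.
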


The main part of the claim is \Lem{et4}, which is formulated more generally to allow for its reuse later.  
Recall that in \Sec{terminal}, we constructed a tree associated with calls to $\terminalbox_t(\cdot,\cB)$
whose leaves were the terminal boxes and whose intermediate nodes were boxes that occur along some execution of $\terminalbox_t(\cdot,\cB)$.
Each intermediate node was also associated to a unique splitter selected to split the associated box.   
These splitters make up the interior increasing point sequence $\cP^{\circ}(\vec{\cT})$ associated with the terminal
chain $\vec{\cT}$.  For each $P \in \cP^{\circ}(\vec{\cT})$ we defined
$\cH(P)$ to be the box in the splitter tree that was split by $P$, and noted that by Assumption 1,
$P$ is $(\mu_t,2\gamma \width(\cH(P)))$-safe for $\cH(P)$.  

In the improved algorithm to be presented in \Sec{improved}, the parameter $\gamma$ will not be fixed, and the splitter selected for different boxes may
be selected based on a different $\gamma$.  To anticipate this we introduce the notation $\gamma(P)$ to be the value of $\gamma$ that was used
when $P$ was selected as splitter for $\cH(P)$ and so Assumption 1 gives that $P$ is a $(\mu_t,2\gamma(P) \width(\cH(P)))$-splitter for $\cH(P)$.
Nothing in the previous analysis relied on $\gamma$ being fixed (indeed, this is the first time in the proof of \Thm{basic correctness}  that $\gamma$ has been 
used.)

\begin{figure*}[tb]
  \centering
    \includegraphics[width=0.3\textwidth]{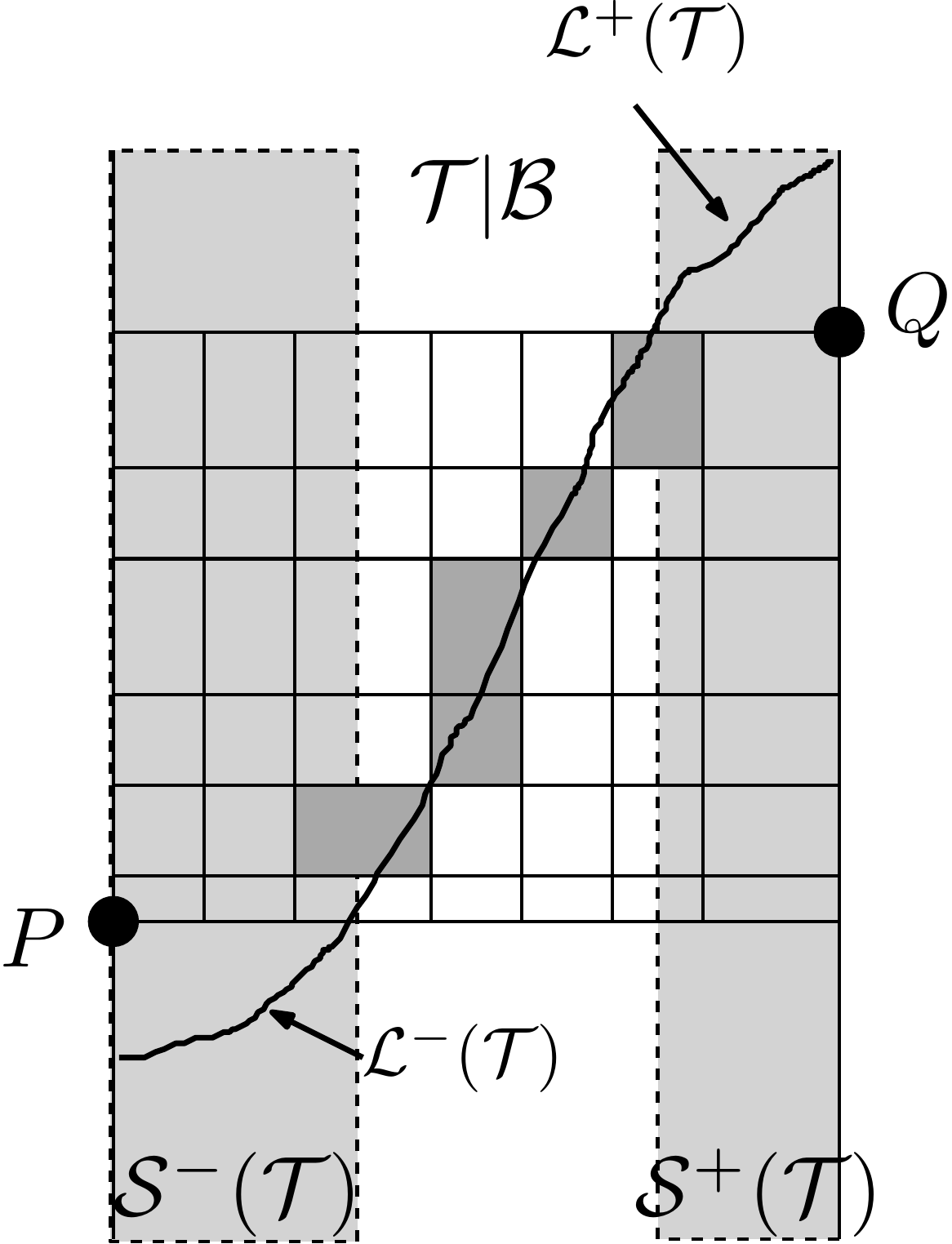}
    \caption{The LIS $\cL$ is depicted as a monotonically increasing freeform curve. The boxes in $\vec{\cE}$ are colored dark gray, 
    and the strips $\cS^-(\cT)$, $\cS^+(\cT)$ are in light gray.
}\label{fig:et4}
\end{figure*}

\begin{lemma}
\label{lem:et4}
\begin{eqnarray*}
\sum_{\cT \in \vec{\cT}} \errorterm_4(\cT) & \leq & 2\gridprec \width(\cB) + 4\sum_{P \in \cP^{\circ}(\vec{\cT})}\gamma(P)\width(\cH(P)).
\end{eqnarray*}
\end{lemma}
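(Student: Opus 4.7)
The plan is to account for every $Q \in \cL^{out}(\cT)$ via a unique ``excluding splitter'' $P \in \cP^{\circ}(\vec{\cT})$ on the root-to-$\cT$ path in the splitter tree, and then invoke the safety guarantee of $P$ provided by \Prop{ass 1 cons}.

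First, I set up the bijection. Recall that for each terminal $\cT$ the splitter tree induces a descending chain $\cB = \cT_0 \supset \cT_1 \supset \cdots \supset \cT_k = \cT$ with splitters $P_{i+1}$ for which $\cH(P_{i+1}) = \cT_i$. For $Q \in \cL^{out}(\cT)$ there is a unique index $i$ with $Q \in \cT_i \setminus \cT_{i+1}$. A short case analysis (mirroring the analysis in \Sec{terminal}: since $x(Q) \in X(\cT) \subseteq X(\cT_{i+1})$, the exit must be vertical) shows that this excluder $P := P_{i+1}$ satisfies $Q \in \cH(P)$ and $Q \not\sim P$, with $Q$ lying in the upper-left quadrant of $\cH(P)$ exactly when $Q \in \cL^{out+}(\cT)$, and in the lower-right quadrant exactly when $Q \in \cL^{out-}(\cT)$. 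The converse (every LIS violation of $P$ in a quadrant of $\cH(P)$ arises this way from the unique terminal $\cT$ with $x(Q) \in X(\cT)$) is symmetric. Thus
\[
\sum_{\cT \in \vec{\cT}} |\cL^{out}(\cT)| \;=\; \sum_{P \in \cP^{\circ}(\vec{\cT})} V_P^{\cL},
\]
where $V_P^{\cL} := |\{Q \in \cL \cap \cH(P) : Q \not\sim P\}|$. The identical correspondence applied to $\cF$-points yields $\sum_\cT |\cF \cap (\cS^+(\cT)\cup \cS^-(\cT))| = \sum_P V_P^{\cF}$, and since $\outside(\vec{\cE}(\cT)) \geq |\cF \cap (\cS^+(\cT)\cup \cS^-(\cT))|$ (the remaining contribution to $\outside$ being the $\cF$-points of $\cT$ missed by $\bigcup\vec{\cE}(\cT)$), we get $\sum_\cT \outside(\vec{\cE}(\cT)) \geq \sum_P V_P^{\cF}$.

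Second, I apply safety. By \Prop{ass 1 cons}, each $P \in \cP^{\circ}(\vec{\cT})$ is $(\mu_t, 2\gamma(P)\width(\cH(P)))$-safe for $\cH(P)$. The two halves $\cS_P^- = (x_L(\cH(P)),x(P)-1]\times Y(\cH(P))$ and $\cS_P^+ = (x(P),x_R(\cH(P))]\times Y(\cH(P))$ are adjacent $\cH(P)$-strips, and all violations of $P$ lying in them are exactly the $NW$ and $SE$ violations respectively, so \Def{safe} gives
\[
V_P^{\cF} \;\leq\; \mu_t\,|\cF \cap (\cS_P^- \cup \cS_P^+)| + 4\gamma(P)\,\width(\cH(P)).
\]
Combined with $V_P^{\cL} \leq V_P^{\cF}$, a term-by-term rearrangement (isolating the ``comparable to $P$'' contribution $|\cF \cap (\cS_P^-\cup\cS_P^+)| - V_P^{\cF}$) produces the per-splitter inequality $V_P^{\cL} - \mu_t V_P^{\cF} \leq 4\gamma(P)\width(\cH(P)) + \mu_t \cdot (\text{$\cF$-points in $\cH(P)$ comparable to $P$})$.

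Finally, I assemble. Writing $\sum_\cT \errorterm_4(\cT) = \sum_P V_P^{\cL} - \mu_t\sum_\cT \outside(\vec{\cE}(\cT))$ and pulling in the two facts above,
\[
\sum_\cT \errorterm_4(\cT) \;\leq\; \Bigl(\sum_P V_P^{\cL} - \mu_t\sum_P V_P^{\cF}\Bigr) \;-\; \mu_t\sum_\cT |\cF \cap (\cT \setminus \textstyle\bigcup\vec{\cE}(\cT))|.
\]
The first parenthesized quantity telescopes (using the bijection again) to $4\sum_P \gamma(P)\width(\cH(P))$ plus the sum of the ``comparable'' slack across splitters; this slack is exactly absorbed by $\mu_t \sum_\cT |\cF \cap (\cT \setminus \bigcup\vec{\cE}(\cT))|$ modulo a small residue. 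The residue is bounded by $2\gridprec\width(\cB)$ via \Prop{buildgrid}, using that $\Gamma(\cT)$ is $\gridprec$-fine so that the $\cF$-points in $\cT$ missed by any $D(\Gamma(\cT))$-chain number at most $\gridprec\width(\cT)$ (cf.\ the grid approximation argument in \Clm{et3}), summed over $\cT$ to give $\gridprec\width(\cB)$; the factor of $2$ accommodates both the grid-miss and the possibly-empty-LIS boxes in $\vec{\cD}(\cT)\setminus\vec{\cE}(\cT)$ handled under Assumption~2.

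The main obstacle is the bookkeeping in the final assembly: ensuring that the over-counting of $\cF$-points across the nested $\cH(P)$'s in the safety bounds is exactly absorbed by $\mu_t\sum_\cT \outside(\vec{\cE}(\cT))$ (not merely by $\mu_t \sum_P V_P^{\cF}$), which is why both the $\cS^{\pm}(\cT)$ part and the $\cT \setminus \bigcup\vec{\cE}(\cT)$ part of $\outside$ must be used. The bijection of the first step is the key device enabling this clean cancellation.
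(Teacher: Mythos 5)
Your bijection between points of $\cL^{out}(\cT)$ and their ``excluding splitters'' is a nice observation and the decomposition
$\sum_\cT\errorterm_4(\cT) = \sum_P V_P^{\cL} - \mu_t\sum_P V_P^{\cF} - \mu_t\sum_\cT|\cF\cap(\cT\setminus\bigcup\vec{\cE}(\cT))|$
is in fact an exact identity (modulo a slip: the identity you state, $\sum_\cT|\cF\cap(\cS^+(\cT)\cup\cS^-(\cT))| = \sum_P V_P^{\cF}$, is false since the left side depends on $\cL$ through the definition of $\cS^{\pm}(\cT)$ while the right side does not; the bijection applied to $\cF$-points gives $\sum_\cT|\cF\cap((\strip{\cT}{\cB})\setminus\cT)| = \sum_P V_P^{\cF}$). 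However, the subsequent bound does not go through, and the failure is structural, not just bookkeeping.

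The gap is in how you apply the safety guarantee. You take $\cS_P^{\pm}$ to be the \emph{full} left and right halves of $\cH(P)$. Safety of $P$ for $\strip{\cH(P)}{\cB}$ then yields
$V_P^{\cL} \leq \mu_t\,|\cF\cap(\cS_P^-\cup\cS_P^+)| + 4\gamma(P)\width(\cH(P))$,
and the ``slack'' $\mu_t\bigl(|\cF\cap(\cS_P^-\cup\cS_P^+)| - V_P^{\cF}\bigr)$ you must absorb is of order $\mu_t\,\width(\cH(P))$ (nearly every $\cF$-point in $\cH(P)$, or in $\strip{\cH(P)}{\cB}$, is comparable to $P$). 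Summing over $P\in\cP^{\circ}(\vec{\cT})$ then double-counts massively: each $\cF$-point contributes to this slack once for \emph{every} ancestor splitter of its terminal box, so the total is $\Theta\bigl(\mu_t\,\width(\cB)\cdot(\text{depth of the splitter tree})\bigr)$. The intended absorption target $\mu_t\sum_\cT|\cF\cap(\cT\setminus\bigcup\vec{\cE}(\cT))|$ counts each $\cF$-point at most once and is bounded by $\mu_t\width(\cB)$, so it cannot absorb the slack; no $2\gridprec\width(\cB)$ residue covers the gap. The per-splitter quantity $V_P^{\cL}-\mu_t V_P^{\cF}$ can also be strongly negative, and the cancellations that make the global sum small are invisible to a one-sided per-$P$ bound.

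The missing idea is that safety must be applied to \emph{tight} strips chosen around the LIS, not to the full halves of $\cH(P)$. Take $\cS^-(\cT)$ to be the shortest $\strip{\cT}{\cB}$-strip starting at the left edge of $\cT$ and containing all of $\cL^-(\cT)$, and similarly $\cS^+(\cT)$; these are still $\strip{\cH(P)}{\cB}$-strips adjacent to the bottom-left corner $P$ of $\cT$ (resp.\ top-right corner $Q$). Because $\cS^-(\cT)$ stops where $\cL$ enters $\cT$ and every $\cE\in\vec{\cE}(\cT)$ contains an LIS point, only the extreme boxes $\cE_L,\cE_R$ can intersect $\cS^-(\cT)\cup\cS^+(\cT)$; hence $|\cF\cap(\cS^-(\cT)\cup\cS^+(\cT))|\leq\outside(\vec{\cE}(\cT))+2\gridprec\width(\cT)$. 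Safety at $P$ and $Q$ then gives $|\cL^-(\cT)|\leq\mu_t|\cF\cap\cS^-(\cT)|+2\gamma(P)\width(\cH(P))$ and the analogue for $\cL^+(\cT)$, and substituting the previous inequality produces the per-$\cT$ bound $\errorterm_4(\cT)\leq 2\gamma(P)\width(\cH(P))+2\gamma(Q)\width(\cH(Q))+2\mu_t\gridprec\width(\cT)$. Summing over $\cT$ (each interior splitter appearing once as a bottom-left and once as a top-right corner) gives the lemma. The crucial device, absent from your argument, is that the tight strips make $|\cF\cap\cS|$ a subset (up to $O(\gridprec\width(\cT))$) of the quantity $\outside(\vec{\cE}(\cT))$ you are comparing against, closing the loop without any global absorption.
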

\begin{proof}
Looking at the definition of $\errorterm_4(\cT)$, 
our first goal is to obtain a lower bound on 
$\outside(\vec{\cE}(\cT)) = \beta(\cT)-\sum_{\cE \in \vec{\cE}(\cT)}|\cE \cap \cF|$, the number of $\cF$-points in the strip $\strip{\cT}{\cB}$ that lie outside of $\vec{\cE}$.

Note that $\cL^{out}(\cT)$ can be split
into the set $\cL^{-}(\cT)$ of points that lie below $\cT$ and $\cL^{+}(\cT)$ of points that lie above $\cT$.
Refer to \Fig{et4}.
Let $\cS^-(\cT)$ be the smallest strip starting from the left edge of $\strip{\cT}{\cB}$ that contains all of $\cL^{-}(\cT)$ (so $\cS^-(\cT)$ is
either empty, or has its rightmost edge defined by the largest $x$ for which $F(x) \in \cL$ and lies below $\cT$).
Similarly, let $\cS^+(\cT)$ be the smallest strip that ends at the right edge of $\strip{\cT}{\cB}$ and contains all of $\cL^+(\cT)$.
Since $\cL$ is increasing, $\cS^-(\cT)$ and $\cS^+(\cT)$ are disjoint. 

The overlap of $\cS^-(\cT) \cup \cS^+(\cT)$ with $\vec{\cE}(\cT)^{\cup}$ is small.
We claim that only the leftmost box $\cE_L$ of
$\vec{\cE}$ can overlap $\cS^-(\cT)$ (and a similar argument shows that only the rightmost box $\cE_R$ can overlap $\cS^+(\cT)$). 
The right edge of $\cS^-(\cT)$ occurs
at the rightmost point of $\cL^-$. The right edge of $\cE_L$ must be to the right of that because
$\cE_L$ contains at least one point in $\cL^{in}$.
Hence,
\begin{eqnarray*}
\outside(\vec{\cE}(\cT)) & \geq & |(\cS^-(\cT)\cup \cS^+(\cT)) \cap \cF| - |(\cE_L \cup \cE_R) \cap \cF|\\
& \geq & |(\cS^-(\cT)\cup \cS^+(\cT)) \cap \cF| - 2\gridprec \width(\cT),
\end{eqnarray*}
where the second inequality 
uses the fact that each of the grid strips of $\strip{\cT}{\cB}$ have width at most $\gridprec \width(\cT)$.
We now have:

\begin{eqnarray*}
\errorterm_4(\cT) 
& = & |\cL^{out}(\cT)| - \mu_t\cdot\outside(\vec{\cE}(\cT))\\
 & \leq & |\cL^-(\cT)|+|\cL^+(\cT)| - \mu_t(|(\cS^-(\cT)\cup \cS^+(\cT)) \cap \cF| - 2\gridprec \width(\cT))\\
& \leq &(|\cL^-(\cT)|-\mu_t|\cS^-(\cT) \cap \cF|)+(|\cL^+(\cT)| - \mu_t| \cS^+(\cT) \cap \cF|) + 2\mu_t\gridprec \width(\cT).
\end{eqnarray*}

We now bound the first two terms in the final expression.
Let $P$ be the bottom left point of $\cT$ and $Q$ be the top right point of $\cT$.
All of the points in $\cL^-(\cT)$ are in violation with $P$. (Refer to \Fig{et4}.)	  Since
$P$  is a $(\mu_t,2\gamma(P) \width(\cH(P)))$-splitter for $\cH(P)$, there are at most
$\mu|\cS^-(\cT) \cap \cF| + 2\gamma(P) \width(\cH(P))$ points of $\cF$ that are in violation with $P$
and so $|\cL^-(\cT)| - \mu|\cS^-(\cT) \cap \cF| \leq 2\gamma(P) \width(\cH(P))$.
Similarly  $|\cL^+(\cT)| - \mu|\cS^+(\cT) \cap \cF| \leq 2\gamma(Q) \width(\cH(Q))$. 
Thus:

\begin{eqnarray*}
\errorterm_4(\cT) 
& \leq &2\gamma(P)\width(\cH(P))+2\gamma(Q)\width(\cH(Q)) + 2\mu_t\gridprec \width(\cT).
\end{eqnarray*}

Finally, we sum the inequality in the lemma over all terminal boxes $\cT$. 
Observing that $\mu_t \leq 1$ and  that each point $P$ of the point sequence
$\cP^{\circ}(\vec{\cT})$ gets included twice (once as a bottom left point and once as an upper right point), we
get:
\begin{eqnarray*}
\sum_{\cT \in \vec{\cT}^w} \errorterm_4(\cT) & \leq & 2\gridprec \width(\cB) + 4\sum_{P \in \cP^{\circ}(\vec{\cT})}\gamma(P) \width(\cH(P)).
\end{eqnarray*}
\end{proof}

\begin{proof}[of \Clm{et4}]
We use the previous lemma with $\gamma(P)=\gamma$ for all $P$.
For a given level of the splitter tree the sum of $\width(\cH)$ is equal to $\width(\cB)$ so the total sum is
bounded by $\width(\cB)d$ where $d$ is the depth of the splitter tree.  Since each splitter is $\rho$-balanced, the width of a box
at depth $d$ in the splitter tree is at most $n(1-\rho)^d \leq ne^{-\rho d}$. This must be at least 1, so $d \leq (\log n)/\rho$. 
Thus we have:
\begin{eqnarray}
\label{eterm 4 bound}
\sum_{\cT \in \vec{\cT}} \errorterm_4(\cT) & \leq & \Big(2 \gridprec + \frac{4\gamma (\log n)}{\rho}\Big)\width(\cB).
\end{eqnarray}
Applying $\rho = C_1/\log n$ and $\alpha = \gamma = (1/C_1\log n)^3$, we get the final inequality.
\end{proof}
Finally, we consider $\errorterm_5(\cT)= \mu_t\cdot \aloss_{t-1}(\vec{\cE}(\cT)) - (1-\mu_t)\cdot \outside(\vec{\cE}(\cT))$. Let us review the basic terms. We have a terminal box $\cT$ inside $\cB$ andgrid $\Gamma(\cT)$. $\vec{\cD}(\cT)$ is the grid-chain with the largest number of points in $\cL$.
Tainted boxes and those containing no point of $\cL$ are removed from $\vec{\cD}(\cT)$ to get
the chain $\vec{\cE}(\cT)$. 

\begin{proposition} \label{prop:et5} $\errorterm_5(\cT) \leq \mu_t\cdot\aloss_{t-1}(\vec{\cD}(\cT))
- (1-\mu_t)\cdot \outside(\vec{\cD}(\cT))$.
\end{proposition}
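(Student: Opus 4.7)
The approach is a direct algebraic rearrangement that isolates the contribution of the boxes dropped when forming $\vec{\cE}(\cT)$ from $\vec{\cD}(\cT)$, followed by a per-box control of the estimator $\alis_{t-1}$.

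Let $R = \vec{\cD}(\cT) \setminus \vec{\cE}(\cT)$; by definition of $\vec{\cE}(\cT)$ these are precisely the boxes that are either tainted at level $t-1$ or disjoint from $\cL$. From the definitions of $\aloss_{t-1}$ and $\outside$ as sums over the chain,
\[
\aloss_{t-1}(\vec{\cD}(\cT)) \;=\; \aloss_{t-1}(\vec{\cE}(\cT)) + \sum_{\cD \in R} \aloss(\cD),
\qquad
\outside(\vec{\cD}(\cT)) \;=\; \outside(\vec{\cE}(\cT)) - \sum_{\cD \in R} |\cD \cap \cF|.
\]
Plugging these into $\mu_t\,\aloss_{t-1}(\vec{\cD}(\cT)) - (1-\mu_t)\,\outside(\vec{\cD}(\cT))$, subtracting $\errorterm_5(\cT)$, and expanding $\aloss(\cD) = |\cD \cap \cF| - \alis_{t-1}(\cD)$, the proposition reduces to
\[
\sum_{\cD \in R} \bigl(|\cD \cap \cF| - \mu_t\,\alis_{t-1}(\cD)\bigr) \;\ge\; 0.
\]

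To establish this I would argue box by box. The first action of $\classify_{t-1}(x,\cD)$ is to return \emph{bad} whenever $F(x) \notin \cD$, so $|Good_{t-1}(\cD)| \le |\cD \cap \cF|$. For $\cD \in R$ that is untainted, the tainting inequality then yields $\alis_{t-1}(\cD) \le |Good_{t-1}(\cD)| + \taint\,\width(\cD) \le |\cD \cap \cF| + \taint\,\width(\cD)$, and the corresponding summand is at least $(1-\mu_t)|\cD \cap \cF| - \mu_t\,\taint\,\width(\cD)$. For a tainted $\cD$ I only have the trivial $\alis_{t-1}(\cD) \le \width(\cD)$, whose summand is at least $-\mu_t\,\width(\cD)$.

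The hard part will be the tainted boxes in $R$, whose individual contributions can be strictly negative. I would close the gap by invoking the standing hypothesis that $(\cB,t)$ is untainted: by the second clause of \Def{tainted}, the total width of tainted level-$(t-1)$ grid boxes along any spanning terminal-compatible grid chain of $\cB$ is at most $\taint\,\width(\cB)$. Summed across all terminal boxes $\cT \in \vec{\cT}$, both the $\taint$-slack from untainted removed boxes and the trivial slack from tainted ones contribute only $O(\taint\,\width(\cB))$ in total, which is absorbed into the secondary-error budget in the subsequent analysis of $\sum_{\cT}\errorterm_5(\cT)$.
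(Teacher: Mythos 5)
Your algebraic reduction is correct: with $R = \vec{\cD}(\cT)\setminus\vec{\cE}(\cT)$, the difference between the claimed upper bound and $\errorterm_5(\cT)$ is exactly $\sum_{\cD\in R}\bigl(|\cD\cap\cF| - \mu_t\alis_{t-1}(\cD)\bigr)$. The paper's own proof is much shorter: since $\vec{\cE}(\cT)$ is a subsequence of $\vec{\cD}(\cT)$, it simply observes that $\aloss_{t-1}(\vec{\cE}(\cT)) \leq \aloss_{t-1}(\vec{\cD}(\cT))$ and $\outside(\vec{\cE}(\cT)) \geq \outside(\vec{\cD}(\cT))$, and then the claim follows immediately because $0 \leq \mu_t \leq 1$. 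That one-liner tacitly uses $\aloss(\cD)\geq 0$ on every removed box (equivalently $\alis_{t-1}(\cD)\leq |\cD\cap\cF|$), whereas your identity shows that the weaker aggregate inequality $\sum_{\cD\in R}|\cD\cap\cF|\geq\mu_t\sum_{\cD\in R}\alis_{t-1}(\cD)$ is all that is needed.

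The gap is that your write-up does not actually prove the proposition. You reduce it to a nonnegativity statement, correctly observe that the per-box summands can be negative (nothing in $\approxlis$ caps the sampling estimate $\alis_{t-1}(\cD)$ at $|\cD\cap\cF|$, so $\aloss(\cD)$ can be negative by up to $\taint\width(\cD)$ for an untainted box, or by up to $\width(\cD)$ for a tainted one), and then propose to charge the resulting deficit to the secondary-error budget in the later analysis of $\sum_{\cT}\errorterm_5(\cT)$. But \Prop{et5} is stated as an unconditional per-$\cT$ inequality with no slack term; deferring a negative contribution to a downstream computation is a comment about how the proposition is used in \Clm{et5} and \Clm{et5i}, not a proof of the proposition itself. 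If you believe the per-$\cT$ bound can genuinely fail — and that does appear to be a real, if minor, imprecision in the paper's one-line monotonicity argument — the right repair is to restate the proposition with an explicit additive $O(\taint\width(\cT))$ term and then verify that the downstream claims absorb it, rather than assert the original inequality while privately planning to recover the slack elsewhere.
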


\begin{proof} Since $\vec{\cE}(\cT)$ is a subsequence of $\vec{\cD}(\cT)$,
$\aloss_{t-1}(\vec{\cE}(\cT)) \leq \aloss_{t-1}(\vec{\cD}(\cT))$ and 
$\outside(\vec{\cE}(\cT)) \geq \outside(\vec{\cD}(\cT))$.
\end{proof}

%
\begin{claim}
\label{clm:et5}
For every terminal box $\cT$, if $\cT$ is narrow then $\errorterm_5(\cT) \leq 0$ and if $\errorterm_5(\cT)$ is wide then
$\errorterm_5(\cT) 
 \leq  4\rho\width(\cT) \leq \frac{1}{5 \log(n)} \width(\cT)$.

\end{claim}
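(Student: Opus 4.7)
The plan is to split into the narrow and wide cases, reflecting the two possible causes of termination of the loop inside $\terminalbox_t$.

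For narrow $\cT$ ($\width(\cT) \leq \omega = 1/\rho$), I would exploit that $\rho > \gamma = \gridprec$, so $\width(\cT) \leq 1/\gridprec$. Consequently $\buildgrid$ falls into its trivial branch and $\Gamma(\cT)$ contains every column of $X(\cT)$, forcing every box of $\vec{\cD}(\cT) \supseteq \vec{\cE}(\cT)$ to have width $1$. By \Prop{narrow}, $\aloss_{t-1}$ vanishes on each such box, so $\aloss_{t-1}(\vec{\cE}(\cT)) = 0$; combined with $\outside(\vec{\cE}(\cT)) \geq 0$, the definition of $\errorterm_5$ immediately yields $\errorterm_5(\cT) \leq 0$.

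For the wide case ($\width(\cT) > \omega$), I would first invoke \Prop{et5} to pass from $\vec{\cE}(\cT)$ to $\vec{\cD}(\cT)$, then use the trivial bound $\aloss_{t-1}(\cD) \leq |\cD \cap \cF|$ (valid since $\alis_{t-1}(\cD) \geq 0$). Writing $\chi = \beta(\cT)$, $\chi^{in} = |\vec{\cD}(\cT)^{\cup} \cap \cF|$, $\chi^{out} = \outside(\vec{\cD}(\cT))$, a short manipulation collapses the bound to $\mu_t \chi - \chi^{out}$, which is exactly the quantity the dichotomy lemma controls. I would apply \Lem{dichotomy} with $\cR = \strip{\cT}{\cB}$, strip decomposition $\vec{\cS}$ induced by $\Gamma(\cT)$, parameter $\mu = \mu_t$, and box chain $\vec{\cD} = \vec{\cD}(\cT)$; its conclusion $\chi^{in} \leq (1-\mu_t)\chi + \mu_t |S|$ rearranges to $\mu_t \chi - \chi^{out} \leq \mu_t |S|$.

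To bound $|S|$, I would use terminality of $\cT$: the call $\findsplitter(\cT, \cB, \mu_t, \gamma\width(\cT), \rho)$ returned $\splitterfound = {\tt FALSE}$, and by Assumption 1 this run was reliable, so \Cor{fails} yields at most $3\rho\width(\cT)$ nondegenerate $(\mu_t, \gamma\width(\cT))$-safe splitters for $\strip{\cT}{\cB}$. Since each grid strip has width $s(x) \leq \lceil \gamma\width(\cT) \rceil$, indices that are $(\mu_t, s(x))$-safe in the dichotomy sense are $(\mu_t, \gamma\width(\cT))$-safe up to a constant factor, giving $|S| \leq 4\rho\width(\cT)$. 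Combining, $\errorterm_5(\cT) \leq 4\mu_t \rho \width(\cT) \leq 4\rho\width(\cT)$, and the final bound $\width(\cT)/(5\log n)$ follows from $\rho = 1/(C_1\log n)$ for sufficiently large $C_1$.

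The main obstacle is aligning the ``safe'' condition the dichotomy requires (calibrated to the strip widths $s(x)$) with the condition \Cor{fails} delivers (calibrated to the threshold $L = \gamma\width(\cT)$ passed to $\findsplitter$). Resolving this relies on $s(x)$ being at most a constant multiple of $\gamma\width(\cT)$, which is guaranteed by $\buildgrid$; any ceiling/rounding slack is absorbed into $C_1$. All the rest is essentially algebraic rearrangement and plugging in the parameter table.
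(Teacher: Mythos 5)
Your proof is correct and follows essentially the same route as the paper's: trivial grid in the narrow case, and dichotomy lemma plus \Cor{fails} in the wide case, both arriving at the bound $\errorterm_5(\cT) \leq \mu_t|S|$. The only cosmetic difference is that you invoke the second conclusion of \Lem{dichotomy} (on $\chi^{in}$) while the paper uses the first (on $\chi^{out}$); these are equivalent, and your attribution of the slack from $3\rho$ to $4\rho$ to ``rounding/constant factor'' is slightly off --- the paper's $+1$ comes from the single possible degenerate splitter not counted by \Cor{fails}, and $3\rho\width(\cT)+1 \leq 4\rho\width(\cT)$ uses that $\cT$ is wide.
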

\begin{proof} For convenience, we abbreviate $\vec{\cD}(\cT)$ by  $\vec{\cD}$. 
We apply \Lem{dichotomy} to the box $\strip{\cT}{\cB}$ with the strip decomposition given by the grid $\Gamma(\cT)$.
We get $(1-\mu_t)\cdot \outside(\vec{\cD}) \geq \mu_t\cdot|U|$, where $U$ is the set of $\vec{\cD}$-points that 
are $\mu_t$-unsafe in $\Gamma(\cT)$. 
Hence, $\errorterm_5(\cT) \leq \mu_t(\aloss_{t-1}(\vec{\cD}) - |U|)$.

%
%
%

Suppose $\cT$ is narrow (that is, $\width(\cT) \leq 1/\rho$).
By the definition of $\buildgrid$, the grid $\Gamma(\cT)$ has
$X(\Gamma(\cT))=X(\cT)$, and so each box $\cD \in \vec{\cD}$ has width 1. 
For any such box $\alis_{t-1}(\cD)=|\cD \cap \cF|$, so $\aloss_{t-1}(\cD) = 0$.
Hence, $\errorterm_5(\cT) \leq 0$.
%

Suppose $\cT$ is wide (that is, $\width(\cT) > 1/\rho$).
We can crudely bound $\aloss_{t-1}(\vec{\cD}) \leq |\vec{\cD}^\cup \cap \cF|$,
so $\errorterm_5 \leq \mu_t |S|$, where $S$ is the set of $\cD$-points
that are $\mu_t$-\emph{safe} in $\Gamma(\cT)$.
%
%
%

Every box in $\vec{\cD}$ has
width at most $\gridprec \width(\cT) \leq \gamma \width(\cT)$ (since $\vec{\cD}$ is a $\Gamma(\cT)$-chain).
So any $\mu_t$-safe point in $\Gamma(\cT)$ is $(\mu_t,\gamma \width(\cT))$-safe.  $\cT$ is a wide terminal box and
$\findsplitter_t(\cT,\cB,\mu_t,\gamma \width(\cT), \rho)$ failed to find a splitter. By Assumption 1 (reliability of $\findsplitter)$
and \Cor{fails}, the total number of nondegenerate $(\mu_t,\gamma\width(\cT))$-safe indices for $\strip{\cT}{\cB}$
is at most $3\rho\width(\cT)$.  Therefore, $|S| \leq 3\rho\width(\cT) + 1 \leq 4\rho\width(\cT)$ (where the last
inequality follows from the fact that $\cT$ is wide), completing the proof.
%
\end{proof}

We have thus shown that each of the five error terms is bounded by $\width(\cB)/5\log(n)$ and so the
total secondary error is $\width(\cB)/\log(n)=\delta_1 \width(\cB)$.  
This completes the proof of \Clm{secondary} the approximation error bound for the first algorithm.

\section{The improved algorithm}
\label{sec:improved}

The previous algorithm had running time $(\log n)^{O(1/\taupar)}$ where $\tau$ is the primary approximation
parameter.  
In this section we show how to modify the algorithm so that the running time is
$(1/\deltapar\taupar)^{O(1/\taupar)} (\log n)^{c}$ for some absolute constant $c$, where $\deltapar$ is the secondary error parameter, thereby proving \Thm{alg2}.   
The improved running time is obtained by making some subtle (and somewhat mysterious) changes to the
basic algorithm.  Formally the changes are  minor, and we will start by completely describing the revisions made to the
basic algorithm without
motivating the reasons for the changes.  Then we provide an informal discussion of how we arrived at these
changes.  (The reader may wish to read this discussion before reading the formal description.) 
Finally we prove the properties of the improved algorithm.

\subsection{Formal description of the algorithm}
\label{sec:improve-alg}

The basic algorithm consists of 6 main procedures,$\basicmain$, $\approxlis$, $\classify$, $\critbox$, $\terminalbox$ and $\gridchain$,
and uses two additional procedures $\findsplitter$ and $\buildgrid$.
The new algorithm has the same structure, with the only changes being that  the internal structure of
$\terminalbox$ is modified, and some  auxiliary parameters are changed, and some new parameters are added.
The change in parameter definitions is done by replacing $\basicmain$ by $\improvedmain$.

In the procedure below $C_2$ is a sufficiently large constant.
The key new parameter is $\errorcont$, called the {\em error controller}.  Most of the other parameters are defined in terms of $\errorcont$.  Increasing this parameter reduces the error, but also increases the
running time.  The running time of the algorithm will turn out to be $(1/\errorcont)^{O(\tau)}(\log n)^c$.
This parameter did not appear explicitly in 
the basic algorithm, but implicitly it was set to 
$\Omega(\log n)$.  In this algorithm we take it to be $\Omega(1/(\taupar\deltapar))$, which is what allows us
to make the exponent of $\log n)$ a constant independent of $\taupar$ and $\deltapar$.

\vspace{-10pt}
\begin{center}
	\fbox{\begin{minipage}{\columnwidth}  
$\improvedmain(n,f,\taupar,\deltapar)$\\
Output: Approximation to $\lis_f$.
\begin{enumerate}
\item Fix global parameters (unchanged throughout algorithm).
%
\begin{center}
  \begin{tabular}{ | l || c | c |}
    \hline
    {\bf Name} & {\bf Symbol} & {\bf Value} \\ \hline
		Maximum level & $t_{\max}$ & $\lceil 4/\taupar \rceil$ \\ \hline
    Error controller & $\errorcont$ & $\max(C_2, t_{\max}/\deltapar)$ \\ \hline
    Sample size parameter & $\ssz$ & $\sszval$ \\ \hline
    Grid precision parameter & $\gridprec$ & $\gridprecval$ \\ \hline
    Width threshold & $\omega$ & $1/\gridprec$ \\ \hline
    Tainting parameter & $\taint$ & $\taintval$ \\ \hline
    Primary splitter parameter & $\mu_r$ & $\frac{2}{r+3}$ \\ \hline
    Secondary splitter parameter & $\gamma_j$ & $16^j\alpha/(\log n)^4$ \\ \hline
    Splitter balance parameter & $\rho_j$ & $(\gamma_j)^{1/4}=2^j\alpha^{1/4}/\log n$ \\ \hline

  \end{tabular}
\end{center}

\item Let box  $\cU$ be the  box $[1,n] \times [1,\maxval]$.
\item  Return $\approxlis_{t_{\max}}(\cU)$.
\end{enumerate}
\end{minipage}}
\end{center}

Comparing $\basicmain$ and $\improvedmain$ one sees that $t_{\max}$ and the function $\mu_r$
are unchanged.    The parameters $\ssz$, $\gridprec$, $\omega$ and $\taint$ are
now polynomials (or inverse polynomials) in $\errorcont$ rather than $\log(n)$.  
The parameters $\gamma$ and $\rho$ are now replaced by functions
$\gamma_j$ and $\rho_j$.  This reflects the following key change in the algorithm:
In the basic algorithm, the procedure $\terminalbox$ terminates the first time that $\findsplitter$ fails to
find a splitter, or when the width of the box dropped below the threshold $\omega$.  
In the improved algorithm, this is not the case.   We view the improved
$\terminalbox$ as proceeding in phases where a phase
ends whenever $\findsplitter$ fails.  The index $j$
keeps track of the {\em phase number} and is incremented each time $\findsplitter$ fails to find a splitter
and the values of $\gamma$ and $\rho$ are increased to $\gamma_j$ and $\rho_j$ 
and $\findsplitter$ is repeated. Increasing $\gamma$ means that a splitter
is allowed to have more violations while increasing $\rho$ means that the we require any selected splitter to be 
closer to the center of the interval. Thus as the phases proceed,  one of the splitter 
conditions is relaxed and the other is made more stringent.
In the basic algorithm, there was a fixed parameter $\omega$  used in $\terminalbox$ to provide
a threshhold to determine whether the current box was ``narrow'' or not and to provide one of the criteria
for the procedure to halt.  In the improved $\terminalbox$ there is a variable parameter $\theta$ and the only stopping
criterion for $\terminalbox$ is that the current box has width below $\theta$. The 
parameter $\omega$ is used for the initial value for $\theta$. 
The value of $\theta$ may increase
at the beginning of each new phase.  If $\findsplitter$ fails on box $\cT$ in phase $j$, then $\theta$ is reset to
the maximum of its present value and $ \gamma_j \width(\cT)/\gridprec$.

The following auxiliary parameters are not used explicitly by the algorithm, but appear in the analysis:

\begin{center}
  \begin{tabular}{ | l || c | c |}
    \hline
    {\bf Name} & {\bf Symbol} & {\bf Value} \\ \hline
    Base additive error & $\delta_1$ & $\deltaval$ \\ \hline
    $t$-level additive error & $\delta_t$ & $t\delta_1$ \\ \hline
    Taint probability & $\taintprob$ & $\frac{1}{C_2}\gridprec^5$    \\ \hline
  \end{tabular}
\end{center}

Below is the improved version of $\terminalbox$.

\vspace{-10pt}
\begin{center}
	\fbox{\begin{minipage}{\columnwidth}
$\terminalbox_t(x,\cB)$  
\algcomment{$x \in X(\cB)$}\\
Output: subbox $\cT$.\\
\begin{enumerate} 
\item Initialize: $\cT \longleftarrow \cB$; $j \longleftarrow 0$;
$\theta \longleftarrow \omega$; $\gamma \longleftarrow \gamma_0$; $\rho \longleftarrow \rho_0$.
\item Repeat until $\width(\cT) \leq \theta$:
\begin{enumerate}
\item Run $\findsplitter(\cT,\cB,\mu_t,\gamma \width(\cT),\rho)$ which
returns the boolean variable $\splitterfound$ and the index
$splitter$.
\item If $\splitterfound={\tt TRUE}$ then
\begin{enumerate}
\item If $x \leq splitter$ then replace $\cT$ by the
box $\bx(P_{BL}(\cT),F(splitter))$.
\item If $x > splitter$ then replace $\cT$ by the
box $\bx(F(splitter), P_{TR}(\cT))$.
\end{enumerate}
\item else (so $\splitterfound = {\tt FALSE}$ and new phase starts)
\begin{enumerate}
\item $\theta \longleftarrow \max(\theta, \gamma \width(\cT)/\gridprec)$.
\item $j \longleftarrow j+1$; $\gamma \longleftarrow \gamma_j$; $\rho \longleftarrow \rho_j$. 
\end{enumerate}
\end{enumerate}
\item Return $\cT$.
\end{enumerate}
\end{minipage}}
\end{center}

We will prove the following property of $\approxlis$ with the improved version of $\terminalbox$:

\begin{theorem}
\label{thm:improved approxlis}  
Suppose $\approxlis$ is run with the global parameters set as in $\improvedmain$ using
the improved version of $\terminalbox$.
On input a box $\cB \subseteq \cU$ and an integer $t$, $\approxlis_t(\cB)$:
\begin{itemize}
\item  runs in time $(1/\errorcont)^{O(t)}(\log n)^c$, and
\item outputs a value that, with probability at least $3/4$, is a
{\em $(\tau_t,\delta_t )$}-approximation to $\lis(\cB)$, where:
\begin{eqnarray*}
\tau_t&=& \frac{4}{t} \\
\delta_t& = & \frac{t}{\errorcont}.
\end{eqnarray*} 
\end{itemize}
\end{theorem}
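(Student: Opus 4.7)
The plan is to follow the structure of the proof of \Thm{basic approxlis}, reusing its primary-error analysis verbatim and only redoing the running-time bound, the two probabilistic assumptions, and the bounds on the five secondary error terms to account for the new parameter settings and the phased behavior of the revised $\terminalbox$.

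First I would set up the analogs of Assumptions 1 and 2 and show they hold jointly with probability at least $3/4$ over all random bits. Assumption 1 (reliability of $\findsplitter$ and $\buildgrid$) fails per call with probability $n^{-\Omega(\log n)}$, which is negligible. Assumption 2 is harder because the tainting tolerance $\taint=1/(10\errorcont)$ is not polylogarithmic; by \Prop{hoeff2} each single box-level pair fails the first tainting condition with probability at most $\taintprob=\alpha^5/C_2$. Since the total number of box-level pairs that can ever be encountered in a single execution is bounded by $(1/\errorcont)^{O(t_{\max})}(\log n)^c$ (by the running-time analysis below), a union bound over the pairs that actually arise gives total tainting failure probability at most $1/4$ for $C_2$ sufficiently large.

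Next I would verify that the running-time recurrence $A_t \leq P_1 C_t + P_2$, $C_t \leq C_{t-1}+P_3 A_{t-1}+P_4$ still holds with $P_1=\ssz=O(\errorcont^3)$, $P_3 = O(1/\gridprec^2) = O(\errorcont^8)$, and $P_2,P_4 = (\log n)^{O(1)}$. The only new ingredient is bounding the work inside phased $\terminalbox$. The number of phases is $O(\log(\log n/\alpha^{1/4}))$, because $\rho_j=2^j\alpha^{1/4}/\log n$ exceeds $1/2$ once $j$ passes this threshold (after which $\findsplitter$ cannot succeed and $\theta$ grows past $\width(\cT)$). Within phase $j$, each split shrinks $\width(\cT)$ by a factor $(1-\rho_j)$, giving at most $(\log n)/\rho_j$ iterations, all at $(\log n)^{O(1)}$ cost per $\findsplitter$ call. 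Solving the recurrence gives $A_t \leq (1/\errorcont)^{O(t)}(\log n)^c$.

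For correctness, I would appeal to the Remark after \Clm{primary}: its proof used only \Lem{terminal} and the non-taintedness of $(\cB,t)$, and is insensitive both to how the terminal chain is chosen and to the specific value of $\delta_1$. So I can apply the same bound on $\sum_{\cT}\Delta_t(\cT)$ verbatim and it remains to show that the combined secondary error is at most $\delta_1\width(\cB) = \width(\cB)/\errorcont$. The bounds for $\errorterm_1,\errorterm_2,\errorterm_3$ are inherited with $\taint=1/(10\errorcont)$ and $\alpha=\gridprecval$ in place of the earlier polylogarithmic values, each contributing $\width(\cB)/(5\errorcont)$ as before.

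The main obstacle is re-bounding $\errorterm_4(\cT)$ and $\errorterm_5(\cT)$ across phases. For $\errorterm_4$, \Lem{et4} is formulated to allow a phase-dependent $\gamma(P)$, and I would sum $\sum_P \gamma(P)\width(\cH(P))$ by grouping splitters according to the phase $j$ in which they were selected: all boxes $\cH(P)$ for splitters found in phase $j$ at a given tree depth have disjoint index sets, so their widths sum to at most $\width(\cB)$, and within phase $j$ the depth of boxes contributing is at most $(\log n)/\rho_j$. Using $\gamma_j=16^j\alpha/(\log n)^4$ and $\rho_j=2^j\alpha^{1/4}/\log n$, the per-phase contribution is $\gamma_j(\log n)\width(\cB)/\rho_j = (8^j\alpha^{3/4}/(\log n)^2)\width(\cB)$, which sums geometrically across the $O(\log\log n)$ phases to $O(\alpha^{3/4}/(\log n))\width(\cB)$, well below $\width(\cB)/(5\errorcont)$. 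For $\errorterm_5(\cT)$, the crucial observation is that when a terminal box is reached the current $\theta$ has been updated after the last phase failure to $\gamma_j\width(\cT^\star)/\alpha$ for the failure box $\cT^\star$, so the terminal boxes inside $\cT^\star$ have total width at most $\width(\cT^\star)$ and each has width at most $\gamma_j\width(\cT^\star)/\alpha$. Applying the dichotomy lemma (\Lem{dichotomy}) and \Cor{fails} to $\cT^\star$ in its failing phase bounds the number of $\mu_t$-safe indices by $3\rho_j\width(\cT^\star)$; combining these with the small per-terminal-box widths as in the proof of \Clm{et5}, and summing over all phase-failure boxes of $\cB$ (whose widths sum to at most $\width(\cB)$ at each tree level) gives $\sum_\cT \errorterm_5(\cT) \leq O(\rho_j+\gamma_j/\alpha)\width(\cB)$ per phase, again summing geometrically to the desired $\width(\cB)/(5\errorcont)$.
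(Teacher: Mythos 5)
Two of the central steps in your plan would not go through as stated.

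The tainting argument is the first problem. You propose a union bound over the box-level pairs that actually arise in one execution, and argue there are $(1/\errorcont)^{O(t_{\max})}(\log n)^c$ of them. But the per-pair failure probability, with $\taint = 1/(10\errorcont)$ and $\ssz = 100\errorcont^3$, is only $e^{-\Omega(\errorcont)}$, and $\errorcont$ is a constant whenever $\taupar,\deltapar$ are constants, while the number of pairs carries a $(\log n)^c$ factor. So the union bound is off by a factor that grows with $n$; to fix it this way you would be forced to make $\errorcont$ (and hence $\ssz$) grow like $\log\log n$, which is exactly the $n$-dependence the improved algorithm is designed to eliminate. The paper explicitly flags this: ``A simple union bound does not work and our accounting needs to be more careful.'' The actual proof (\Lem{tainted}) is an induction on $t$ that shows $\Pr[(\cB,t)\text{ tainted}]\leq \taintprob$, by treating the widths of ``blue'' grid strips as independent bounded random variables and applying a Chernoff-type upper-tail bound (\Prop{cher}). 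The recursive definition of taintedness is what lets a small per-call error probability propagate without amplification; a one-shot union bound cannot substitute for it.

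The $\errorterm_4$ bound is the second problem. You bound the number of phase-$j$ splitters on a root-to-leaf path by $(\log n)/\rho_j$, yielding a per-phase contribution of $8^j\gridprec^{3/4}/(\log n)^2$. With ratio $8>1$ this geometric sum is dominated by its \emph{last} term, and since the last phase has index $j_{\max}=\Theta(\log\log n)$, the sum is roughly $8^{j_{\max}}\gridprec^{3/4}/(\log n)^2\approx \gridprec^{3/4}\log n$, not $\gridprec^{3/4}/\log n$ as you wrote. For constant $\errorcont$ this is $\Theta(\log n/\errorcont^3)$, which is not $O(1/\errorcont)$ once $n$ is large, so the secondary error bound fails. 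What you are missing is the width-threshold constraint: within phase $j\geq 1$, the current box can shrink only down to $\theta_j \geq \gamma_{j-1}\width(\cH_1)/\gridprec$, i.e.\ by a factor of at most $\gridprec/\gamma_{j-1}$ rather than by a factor of $n$. This gives $d_j(x)\leq (1/\rho_j)^2$ (\Prop{phase}), so $\sum_j \gamma_j d_j(x)\leq\sum_j \gamma_j/\rho_j^2=\sum_j\rho_j^2\leq 2\rho_{j_{\max}}^2=O(\sqrt{\gridprec})$, which is what makes the bound $O(1/\errorcont)$. Finally, your treatment of $\errorterm_5$ gestures in the right direction but should be more explicit that the dichotomy lemma must be applied \emph{once} to a single chain spanning all of $\cB$ (not per terminal box as in \Clm{et5}), with the per-$\cT$ counts of safe indices transferred to ancestor boxes where $\findsplitter$ actually failed via \Prop{bound-s}; the paper singles this out as a genuine subtlety of the improved analysis.
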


\Thm{alg2} follows immediately from \Thm{improved approxlis}:

\begin{proof}[of \Thm{alg2}]
$\improvedmain(n,f,\taupar,\deltapar)$ 
returns the output of $\approxlis_{t_{\max}}(\cU)$.  By \Thm{improved approxlis}, this gives a
$(4/t_{\max},t_{\max}/\errorcont)$-approximation to $\lis_f$ which runs in time $(1/\errorcont)^{O(t_{\max})}\log(n)^c$.  Using the fact that $t_{\max} = \lceil 4/\taupar \rceil$ and $\errorcont=t_{\max}/\deltapar$,
gives the desired running time and approximation error for $\improvedmain$.
\end{proof}

In the next subsection we provide some intuition for the improved algorithm.  The remainder of the section
is devoted to the proof of \Thm{improved approxlis}.  The proof is structured to parallel
that of \Thm{basic approxlis}.   We start with the running time analysis.  We revisit the $t$-critical tree,
highlighting the differences that arise in the improved algorithm.  Also, we revisit the randomness assumptions, modifying them as needed for the new analysis.  The primary and secondary error terms are defined
as in the basic algorithm, and the primary error analysis is unchanged.  The secondary error analysis reuses
much of the secondary error analysis from the basic algorithm,  but there are some significant
differences.

\subsection{Intuition for the improved algorithm}

Before analyzing the improved algorithm, we discuss two lines of thought that led to the improvement.
Our first line of thought gave the desired speed-up, but the algorithm was technically cumbersome. 
In trying to simplify the algorithm we found an alternative way to think about the improvement,
leading to the simplified version presented here.

Consider the main contributions  to the running time of the first algorithm.
The algorithm is recursive with recursion depth $t_{\max}=\Theta(1/\taupar)$.
Let $Time_t$ denote the running time of $\approxlis_t$ on instances of size at most $n$.  
Roughly,
$Time_t=R \times Time_{t-1}+S$ where $R$ is the number of recursive calls to $\approxlis_{t-1}$,
and $S$ which is the cost of all other computation performed by $\approxlis_t$, and this gives an overall running time of
$O(R^t S)$.  In the basic algorithm, $R$ and $S$ are both $(\log n)^{\Theta(1)}$, giving a time bound
of $(\log(n))^{O(1/\taupar)}$.  To improve the algorithm
we seek to reduce $R$.

What determines $R$?  In the main $\approxlis$ procedure we start by selecting a sample index set of size $\ssz=\log n^{\theta(1)}$.
For each sample index $x$ we eventually have to run $\gridchain_t(\cT)$ where $\cT$ is the terminal box
selected for $x$.  Within $\gridchain_t(\cT)$ recursive calls are made to $\approxlis_{t-1}$.  The number of
such calls is $1/\gridprec^{\Theta(1)}$.  Thus the total number of recursive calls
to $\approxlis_{t-1}$ is $\ssz (1/\gridprec)^{\Theta(1)}$.  Since $\ssz$ and $1/\gridprec$ are both
$(\log n)^{\Theta(1)}$, we would have to reduce each of them in order to make $R$ independent of $n$.

Note that the costs of $\terminalbox_t$ (which includes calls to $\findsplitter$), and the cost
of $\buildgrid$, which are both $((\log n))^{\Theta(1)}$, are part of the ``nonrecursive'' cost that contribute to
$S$ but not $R$. We can ignore them in this discussion.

Reducing $\ssz$ is fairly straightforward.
The choice of  $\ssz$ was originally made to guarantee that all pairs $(\cB,t)$ are untainted.
In other words, $\approxlis_t$ accurately estimates the number of $Good_t$ points
because $\ssz$ is large enough. In the proof of \Prop{assumption 2}, we take a union bound over all possible pairs $(\cB,t)$ (basically
all possible $\approxlis$ calls).  Since the number
of potential calls is $n^{\Theta(1)}$, we needed $\ssz$ to be $(\log n)^{\Omega(1)}$.
The first idea for reducing $\ssz$ is to relax the union bound so that it only pertains to the $(\log n)^{O(1)}$
calls that are actually made by the algorithm. This would allow us
to reduce $\ssz$ to $(\log\log n)^{O(1)}$. This
still has a dependence on $n$ (albeit a much better one). 

In both the original analysis, and the improved analysis sketched above, an execution is considered unsuccessful
unless {\em every} recursive call to $\approxlis_u$ that is actually performed gives a ``sufficiently good''
estimate for $|Good_u|$.  This is quite a stringent requirement, and in the improved algorithm 
we relax this criterion for success and allow a small
fraction of calls to $\approxlis_u(\cC)$ to give a poor estimate to $|Good_u(\cC)|$.
More precisely when we make a recursive call to $\approxlis(\cT)$ we will allow the probability 
that $\approxlis_t$ errs significantly on its
estimate of the number of $Good_t$ points to be $\taintprob$ (as defined above)
which is much larger than $n^{-\Omega(1)}$.
The challenge in doing is this is that errors might accumulate over different recursive levels
so that the estimate produced at the top level is inaccurate with high probability.  It turns out
this does not happen, and
the  recursive notion of tainted was defined as an analytical tool to prove this. 
This allows $\ssz$ to be some polynomial in $1/\tau$.

The more challenging problem is to deal with $\gridprec$, which is the grid precision parameter.
The grid constructed inside a terminal box has size ($\poly(1/\gridprec)$).  In order to perform the longest path
computation on the digraph $D(\Gamma|\cT)$ we 
make a recursive call to $\approxlis$  in each grid box.  
This leads to a contribution of $(1/\gridprec)^{O(t)}$
in the run time (since $t$ is the depth of the recursion), and since $\gridprec$ is $\Theta(1/\log(n))$
this is significant.   If $\gridprec$ could be taken to be a constant, this would give us the kind of run time
we are aiming for.

Unfortunately, the analysis of the first algorithm does not allow for such an improvement. 
The analysis of $\errorterm_5(\cT)$ forces the grid to be fine, specified as $\gridprec \leq \gamma$ (refer to the proof of \Clm{et5}). 
The parameter $\gamma$ decides the additive loss of a splitter, and the grid precision must be finer than this.
This is crucial for the application of \Lem{dichotomy}.

Our  analysis of the basic algorithm also requires that $\gamma$ be small.
When we analyze the
error term $\sum_{\cT} \errorterm_4(\cT)$, we obtain an upper bound of  $\gamma (\log n) \width(\cB)/\rho$ (refer to \Clm{et4}). 
This error term accounts for the points that are eliminated (perhaps incorrectly)
because they were in violation with some selected splitter.
To ensure that this term is small we need  $\gamma=\log(n)^{-\Theta(1)}$.  Since $\gridprec \leq \gamma$,
we get the same upper bound on $\gridprec$.

The key insight to overcoming this barrier is that the longest path computation that we perform on the grid is very similar to an exact LIS computation. 
It
can be formulated
as a weighted generalization of LIS on an index set of size $1/\gridprec$.  So rather than do an exact longest path
computation we could instead approximate the longest path computation by applying the ideas of our LIS
algorithm recursively.   This can be indeed be done.  However, 
in this form the algorithm becomes rather complicated and confusing since we have recursion arising in two ways: the basic
algorithm was already recursive, and now another recursion is piled on top.

Fortunately, by  unwinding this new layer of recursion, one can construct
an algorithm that avoids it altogether.  Indeed this new layer of recursion is implicitly
implemented simply by introducing the phases in $\terminalbox$ and adjusting some of the parameters dynamically.  To see how this arises, let us consider a recursive algorithm of
the type proposed above.  When we arrive at a terminal box (because we fail to find a splitter), rather
than do an exact longest path computation inside
of $\gridchain$ we recursively call an approximation algorithm.  In this recursive call, the index set size will be $1/\gridprec$
which is $(\log n)^{O(1)}$, and when we look for splitters, the $\gamma$ parameter will be $1/(\log(1/\gridprec))^{O(1)}$
which will be $(\log\log n )^{O(1)}$, which is significantly smaller than that of the basic algorithm.  
This suggests
that (rather than impose this additional recursive structure) when the algorithm fails to find a good splitter
for a particular
value of $\gamma$ we increase $\gamma$ and try again.  
In order to control all the error terms,
we have to increase $\rho$ and the width threshold $\theta$ (the minimum width that allows us
to stop and declare a terminal box). 
We think of $\terminalbox$ as acting in phases indexed by $j=0,1,\ldots$.  
Phase $j$ ends either because the width of the current box drops below $\theta$ (in which
case $\terminalbox$ terminates) or because $\findsplitter$ fails, in which case phase $j$ ends and phase $j+1$ begins
with $\gamma,\rho$ and $\theta$ all adjusted upwards.

The parameters $\gamma$ and $\rho$ are initially $1/(\log n)^{\Theta(1)}$ as in the original algorithm.
With the recursive view described above,  there is a rapid decrease in the parameters because
the size of the longest path subproblem being solved is polylogarithmic in the size of the initial problem.
This suggests that after $j$ phases the parameters should be the reciprocal of the $j$th iterated logarithm.
This works, but our analysis allows considerable flexibility in choosing the parameters. To simplify
the technical details,
we make a particularly simple choice:
we  double $\rho$ after each phase and multiply $\gamma$ by 16, always maintaining $\gamma=\rho^4$.  

The stopping condition for $\terminalbox$ is now changed.  Before we stopped the first time
$\findsplitter$ failed.  Now we continue until the width of the terminal box drops below the threshhold.

The above approach allows us to set $\gridprec = (1/t)^{O(1)}$.
Note that the term $O(\gridprec \width(\cT))$ still appears in $\errorterm_3(\cT)$ and $\errorterm_4(\cT)$,
leading to a secondary error of $t^{-\Omega(1)}$.

\subsection{Running time analysis of the improved algorithm}
We now turn to the  proof of \Thm{improved approxlis}. We begin with the run time
analysis, and mimic the anaylsis for the basic algorithm with some minor changes.

Let $A_t=A_t(n)$ be the running time of $\approxlis_t$  and
$C_t=C_t(n)$ be the running time of $\classify_t$ on boxes of width at most $n$.

As before  we use $P_i=P_i(n)$ to denote functions of the form $a_i ((\log n))^{b_i}$, where $a_i,b_i$ are constants that
are independent of $n$ and $t$.  We also use $Q_i=Q_i(\errorcont)$ to denote functions of $\errorcont$ of the form $c_i(\errorcont)^{d_i}$ where $c_i,d_i$ are constants.

\begin{claim} \label{clm:imp-time}
For all $t \geq 1$,
%
\begin{eqnarray*}
A_t & \leq & Q_1 C_t + Q_2. \\
C_t & \leq & C_{t-1}+ Q_4 A_{t-1}+P_1Q_3.
\end{eqnarray*}
\end{claim}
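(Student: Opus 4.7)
The plan is to mimic the structure of the proof of \Clm{runtime}, carefully tracking which factors depend on $n$ (contributing to the $P_i$-terms) and which depend on $\errorcont$ (contributing to the $Q_i$-terms). The first recurrence is straightforward: a call to $\approxlis_t$ on a box of width at most $n$ either outputs $|\cB \cap \cF|$ in time $(\log n)^{O(1)}$ (the width 1 case), or it draws $\ssz = 100\errorcont^3$ random indices from $X(\cB)$ and invokes $\classify_t$ on each. The sampling and aggregation work is bounded by some $Q_2(\errorcont) + P_i(n)$ term, but because $\ssz$ is already polynomial in $\errorcont$, the dominant term is $\ssz \cdot C_t = Q_1(\errorcont)\cdot C_t$ with $Q_1 = \ssz$.

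For the second recurrence, recall $\classify_t$ either returns \emph{bad}/\emph{good} trivially, or it calls $\critbox_t$ once and then $\classify_{t-1}$ once. The final recursive call accounts for the $C_{t-1}$ term. The procedure $\critbox_t$ invokes $\terminalbox_t$ once and $\gridchain_t$ once. Inside $\gridchain_t$, \Prop{buildgrid} guarantees a grid of size $|X(\Gamma)||Y(\Gamma)| = O(1/\gridprec^3) = O(\errorcont^{12})$, and one recursive call to $\approxlis_{t-1}$ is made per grid box. Setting $Q_4 = O(1/\gridprec^3)$, this contributes $Q_4 A_{t-1}$ plus an additive longest-path computation whose cost is absorbed into $Q_3$.

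The main obstacle is bounding the cost of $\terminalbox_t$, which is where the $P_1 Q_3$ term must arise. Unlike the basic algorithm, $\terminalbox_t$ proceeds in phases indexed by $j \geq 0$, with $\gamma = \gamma_j = 16^j \gridprec/(\log n)^4$ and $\rho = \rho_j = 2^j \gridprec^{1/4}/\log n$. Within phase $j$, each successful call to $\findsplitter$ shrinks $\width(\cT)$ by a factor $(1-\rho_j)$, so the number of successful iterations in phase $j$ is at most $O((\log n)/\rho_j)$, plus one failing call that triggers phase $j+1$. By \Prop{splitter}, each $\findsplitter$ call costs $O((\width(\cT)/L)^3(\log n)^4/\rho_j) = O((\log n)^4/(\gamma_j^3 \rho_j))$, where $L = \gamma_j \width(\cT)$. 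Multiplying iterations by cost per iteration and summing over phases, the total cost of $\terminalbox_t$ is at most
\[
\sum_{j\geq 0} \frac{(\log n)^5}{\gamma_j^3 \rho_j^2}
= \sum_{j \geq 0} \frac{(\log n)^{5+12+2}}{16^{3j}\cdot 2^{2j}\cdot \gridprec^3 \gridprec^{1/2}}
= O\Big(\frac{(\log n)^{19}}{\gridprec^{7/2}}\Big),
\]
where the geometric series in $j$ converges (each subsequent phase is geometrically cheaper because the increase in $\gamma$ and $\rho$ dominates the decrease in the factors $1/\gamma_j^3\rho_j^2$). The number of phases is bounded by $O(\log\log n + \log(1/\gridprec))$, which is anyway swallowed by the geometric sum. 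This gives a bound of the form $P_1(n) \cdot Q_3(\errorcont)$ for some polylog $P_1$ and some polynomial $Q_3$ in $\errorcont$, completing the second recurrence.

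Once both recurrences are established, the desired conclusion (used in the next corollary) that $A_t, C_t \leq (1/\errorcont)^{O(t)}(\log n)^c$ follows by the same unrolling as in \Cor{runtime}, with the base case $A_0, C_0 = (\log n)^{O(1)}$. The key point is that the exponent of $\log n$ does not grow with $t$ because the $P_1 Q_3$ term lives outside the recursion on $t$: at each level we only multiply by $Q_1 Q_4$, which depends only on $\errorcont$.
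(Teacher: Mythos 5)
Your proof is correct and follows essentially the same route as the paper's: the first recurrence via $Q_1=\ssz$, the $C_{t-1}$ and $Q_4A_{t-1}$ terms from the single recursive call to $\classify_{t-1}$ and the $\poly(1/\gridprec)$ grid boxes in $\gridchain_t$, and the $P_1Q_3$ term from $\terminalbox_t$. The only difference is presentational — the paper simply bounds the number of $\terminalbox$ iterations uniformly by $(\log n)/\rho_0$ (and the per-call cost of $\findsplitter$ by its worst case), whereas you carry out the phase-by-phase geometric-sum accounting explicitly; both yield the same $(\log n)^{O(1)}\errorcont^{O(1)}$ form.
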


\begin{proof}
The first recurrence with $Q_1=\ssz$ is immediate from the definition of 
$\approxlis_t$.   The function $Q_2$ is an upper bound on the cost of operations excluding calls to $\classify$.

For the second recurrence, the final recursive call to $\classify_{t-1}$  gives the $C_{t-1}$ term.
The rest of the cost comes from  $\critbox_t$ which invokes $\terminalbox_t$, which involves several iterations
where the cost of each iterations is dominated by the cost of $\findsplitter$.  Each iteration reduces the size of the
box $\cT$ by at least a $(1-\rho_0)$ factor so the number of iterations is at most $\log n/\rho_0$.  The cost of $\findsplitter$ is $(\log n/\errorcont)^{O(1)}$
so the cost of $\terminalbox_t$ is included in the term $P_1Q_3$. 
$\critbox_t$ then calls $\gridchain_t$.  This involves building a grid of
size $(1/\errorcont)^{O(1)}$ and making one call to $\approxlis_t$ for each grid box, which accounts for the term $Q_4 A_{t-1}$.  
$\gridchain_t$ finds a longest path in  the grid digraph, which can be absorbed into the $P_1Q_3$ term.  
\end{proof}

\begin{corollary} \label{cor:imp-time}
For given input parameters $\taupar, \deltapar$, both $A_{t_{\max}}$ and $C_{t_{\max}}$ are in $(1/(\taupar\deltapar))^{O(1/\taupar)}(\log n)^{O(1)}$.
\end{corollary}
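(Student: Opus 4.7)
The plan is to solve the two coupled recurrences from \Clm{imp-time} in the standard way: substitute the bound on $A_t$ into the bound on $C_t$ to get a single recurrence in $C_t$, solve it, and then translate back to $A_t$ via the first recurrence.

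First I would substitute $A_{t-1} \leq Q_1 C_{t-1} + Q_2$ into the second inequality to obtain
\[
C_t \;\leq\; (1 + Q_1 Q_4)\, C_{t-1} \;+\; Q_2 Q_4 \;+\; P_1 Q_3.
\]
Iterating this inequality $t$ times yields
\[
C_t \;\leq\; (1+Q_1 Q_4)^t \cdot C_0 \;+\; (Q_2 Q_4 + P_1 Q_3) \sum_{i=0}^{t-1}(1+Q_1 Q_4)^i,
\]
so $C_t \leq (1 + Q_1 Q_4)^{t+1}\bigl(C_0 + Q_2 Q_4 + P_1 Q_3\bigr)$. The base case $C_0$ is polylogarithmic since $\classify_0$ just checks $F(x) \in \cB$ and returns \emph{bad} whenever $\width(\cB)>1$, so it contributes only a $(\log n)^{O(1)}$ factor.

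Next I would plug in the values of the parameters. By definition $\errorcont = \max(C_2, t_{\max}/\deltapar)$ with $t_{\max} = \lceil 4/\taupar \rceil$, so $\errorcont = O(1/(\taupar\deltapar))$. Since each $Q_i$ is of the form $c_i \errorcont^{d_i}$, we have $1 + Q_1 Q_4 = (1/(\taupar\deltapar))^{O(1)}$ and likewise $Q_2 Q_4 = (1/(\taupar\deltapar))^{O(1)}$. The $P_1 Q_3$ term contributes an additional $(\log n)^{O(1)}$ factor. Setting $t = t_{\max} = O(1/\taupar)$ therefore gives
\[
C_{t_{\max}} \;\leq\; \bigl(1/(\taupar\deltapar)\bigr)^{O(1/\taupar)} (\log n)^{O(1)}.
\]
Finally, plugging this into $A_{t_{\max}} \leq Q_1 C_{t_{\max}} + Q_2$ and absorbing the extra factor of $Q_1 = (1/(\taupar\deltapar))^{O(1)}$ into the exponent gives the same asymptotic bound for $A_{t_{\max}}$.

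There is no real obstacle here; the only thing to be careful about is correctly separating the $\errorcont$-dependence (which appears in the base of the exponential in $t$) from the $\log n$-dependence (which appears only as an additive $P_1 Q_3$ term and hence stays to a constant power independent of $t$). This separation is what converts $(\log n)^{O(t_{\max})}$ in the basic algorithm into $(1/(\taupar\deltapar))^{O(1/\taupar)}(\log n)^{O(1)}$ here, and it follows directly from the fact that the cost of the non-recursive work inside $\gridchain$ and $\terminalbox$ is bounded by $P_1 Q_3$ rather than entering the recursive multiplier.
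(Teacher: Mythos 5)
Your proposal is correct and follows essentially the same approach as the paper: substitute the $A$-recurrence into the $C$-recurrence, solve the resulting linear recurrence in $C_t$, and plug in $t_{\max}=O(1/\taupar)$ and $\errorcont=O(1/(\taupar\deltapar))$. You are slightly more explicit about unrolling the recurrence and handling the base case, but the key observation is identical — the $(\log n)$-dependence enters only through the additive $P_1Q_3$ term (and the base case), while the multiplier $1+Q_1Q_4$ driving the exponential in $t$ depends only on $\errorcont$.
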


\begin{proof} Using the recurrence for $A_{t-1}$ to eliminate $A_{t-1}$ from
the recurrence for $C_t$, which gives a linear recurrence for $C_t$ in terms of $C_{t-1}$
whose solution has the form $C_t \leq P_5Q_5(Q_1Q_4+1)^t$.  This leads also to  $A_t \leq P_6Q_6(Q_1Q_4+1)^t$,  which are both $(\log n)^{O(1)}(\errorcont)^{O(t)}$.
Since $t_{\max} = O(1/\taupar)$ and $\errorcont = O(1/(\taupar\deltapar))$, the bound follows.
\end{proof}

\subsection{Revisiting the t-splitter tree and the terminal chain}
\label{sec:revised}

In preparation for the improved error analysis, we revisit the terminal tree, and replace Assumption 2 by a relaxed version.
Recall the $t$-splitter tree described in \Sec{terminal}.  We will construct a similar tree that also
reflects the phase structure of the improved algorithm.

A box-phase pair is a pair $(\cT,j)$.  The execution of $\terminalbox_t(x,\cB)$ generates a sequence $(\cT_0,j_0)=(\cB,0),(\cT_1,j_1),\ldots$
of box-phase pairs corresponding to each call of $\findsplitter$.
This is called the {\em $t$-trace} of $x$.  We define the {\em $t$-execution tree} for $\cB$
whose paths from the root correspond to the $t$-traces of all indices.

Classify each box-phase pair $(\cT,j)$ as {\em split} or {\em splitterless} depending on whether \linebreak
$\findsplitter_t(\cT,\cB,\mu_t,\gamma_j \width(\cB),\rho_j)$ 
succeeds or fails  to find a splitter.  For a split pair $(\cT,j)$, let $s_j(\cT)$ be the splitter found, 
and define the left-child and right-child of $(\cT,j)$ to  be  $(\bx(x_{BL}(\cT),F(s)),j)$ and $(\bx(F(s),x_{TR}(\cT)),j)$.  
If $(\cT,j)$ is splitterless, define its child to be $(\cT,j+1)$.   This defines a DAG on the box-phase pairs.

The subgraph of nodes reachable from $(\cB,0)$ is a rooted tree in which every node has 1 or 2 children (depending on whether
it is splitterless or split).  It is convenient to add an additional (splitterless) root $(\cB,-1)$ with unique child $(\cB,0)$.
We assign a number $\omega(\cT,j)$ to each splitterless pair $(\cT,j)$. We set $\omega(\cT,j)=\frac{\gamma_j}{\gridprec}\width(\cT)$ for $j \geq 0$,
and $\omega(\cB,-1)=\omega$.
Define $\theta(\cT,j)$ for every pair $(\cT,j)$ to be the maximum of $\omega(\cR,i)$
over all splitterless ancestors $(\cR,i)$ of $(\cT,j)$.  Observe that along any path from the root 
$\theta(\cT,j)$ is nondecreasing  
while $|\cT|$ is nonincreasing.  Truncate every path at the first node where
$|\cT| \leq \theta(\cT,j)$.   This is the $t$-execution tree for $\cB$.

For each $x \in X(\cB)$, define
$\Pi_E(x,\cB)$ to be the set of nodes in the $t$-execution tree whose box includes $x$ in its index set.
The following lemma is immediate from the (improved) definition of  $\terminalbox$.

\begin{lemma}
\label{lem:execution}
For every $x \in F^{-1}(\cB)$, $\Pi_E(x,\cB)$ 
is a root-to-leaf path in the tree and is equal to the $t$-trace
of $x$.  In particular, the box corresponding to the leaf that is reached is the terminal box that is returned
by $\terminalbox_t(x,\cB)$, and is equal to
$\vec{\cT}[x]$ (the unique box $\cT \in \vec{\cT}$ such that $x \in X(\cT)$).
\end{lemma}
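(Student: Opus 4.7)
The plan is to prove this lemma by tracking the parallel evolution of the $t$-execution tree and a run of $\terminalbox_t(x,\cB)$, using induction on the number of iterations of the repeat loop. The idea is to show that the sequence of box-phase pairs traversed by $\terminalbox_t(x,\cB)$ is exactly the sequence of nodes in the tree that contain $x$ in the index set of their box, and that this sequence is a root-to-leaf path ending at the terminal box.

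First I would establish the inductive invariant: after the $k$th iteration of the repeat loop, the pair $(\cT_k,j_k)$ held by the execution satisfies (i) $(\cT_k,j_k)$ is a node of the $t$-execution tree, (ii) the sequence $(\cB,0)=(\cT_0,j_0),(\cT_1,j_1),\ldots,(\cT_k,j_k)$ is an initial segment of a root-to-leaf path in the tree, and (iii) $x \in X(\cT_k)$. The base case ($k=0$) is immediate since $\terminalbox_t$ initializes to $(\cB,0)$ and $x \in X(\cB)$ by hypothesis (via $x \in F^{-1}(\cB) \subseteq X(\cB)$). For the inductive step, there are two cases depending on the outcome of $\findsplitter_t(\cT_k,\cB,\mu_t,\gamma_{j_k}\width(\cT_k),\rho_{j_k})$: if it returns $\splitterfound=\text{\tt TRUE}$ with splitter $s$, then $\terminalbox$ replaces $\cT_k$ by either $\bx(P_{BL}(\cT_k),F(s))$ or $\bx(F(s),P_{TR}(\cT_k))$ according to whether $x \leq s$ or $x>s$, and in either case the resulting box is precisely the unique child (left or right) of $(\cT_k,j_k)$ in the tree whose index set contains $x$; if $\findsplitter$ fails, the pair becomes $(\cT_k,j_k+1)$, which matches the unique child of a splitterless node.

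Next I would use the invariant to show that $\Pi_E(x,\cB)$ contains exactly the nodes visited during execution. For inclusion of the trace in $\Pi_E(x,\cB)$: the invariant (iii) gives $x \in X(\cT_k)$ for every node on the trace. For the converse, I would argue by downward induction on the tree that no other node $(\cR,i)$ has $x \in X(\cR)$: at a split node, the index sets of the two children partition $X(\cT)$, so $x$ lies in exactly one; at a splitterless node, the unique child has the same box, so $\Pi_E(x,\cB)$ passes through it. Hence the set $\Pi_E(x,\cB)$ forms a single root-to-leaf path agreeing with the trace.

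Finally, termination requires showing the trace ends at a leaf of the truncated tree. The execution halts precisely when $\width(\cT) \leq \theta$, where $\theta$ is updated in lockstep with the splitterless ancestors, so the running value of $\theta$ equals $\theta(\cT_k,j_k)$ as defined in Section~\ref{sec:revised}; thus the halting condition of $\terminalbox$ coincides with the truncation condition defining leaves. Since most of these steps amount to unfolding definitions, I expect no real obstacle beyond carefully matching the dynamically updated variables $\gamma,\rho,\theta$ in $\terminalbox$ to the statically defined functions $\gamma_j,\rho_j,\theta(\cT,j)$ on the tree, and verifying that the left/right child assignment in the tree lines up with the $x \leq s$ versus $x > s$ branching inside the procedure.
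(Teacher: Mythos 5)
Your proposal is correct and follows essentially the same route the paper implicitly intends: the paper declares this lemma ``immediate from the (improved) definition of $\terminalbox$,'' and your proof is precisely the careful unfolding of those definitions by induction on loop iterations, matching the dynamically-updated $\gamma,\rho,\theta$ in the procedure to the statically-defined $\gamma_j,\rho_j,\theta(\cT,j)$ on the tree. One small wrinkle worth noticing (not a flaw in your argument so much as in the statement's bookkeeping): the tree has an added splitterless root $(\cB,-1)$ whose box also contains $x$, so $\Pi_E(x,\cB)$ strictly includes that extra node while the $t$-trace as defined starts at $(\cB,0)$; the intended reading is that the two coincide after prepending $(\cB,-1)$, and your base case silently adopts that reading.
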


For a box $\cH$ that appears in the $t$-execution tree, the set of nodes labeled by $\cH$ is a path and the sequence of
phase numbers increase by 1.  All of the nodes in the path except the last have exactly one child (which is its successor
in the path.)  The last occurence of $\cH$ is either a leaf or a split node.  We define $\phase(\cH)$ to be the highest phase
that $\cH$ reaches.   Box $\cH$ is a {\em multiphase box} if the associated path has two or more nodes
and is a {\em uniphase box} otherwise.  Note that if $\cH$ is a multiphase box then $(\cH,\phase(\cH)-1)$ is
unsplittable and lies on the execution path that reaches $(\cH,\phase(\cH))$. We state a simple proposition.

\begin{proposition} \label{prop:phase-box} For some positive $j$, let $(\cH,j)$ and $(\cH',j)$ be unsplittable nodes in the $t$-execution tree.
Then $X(\cH)$ and $X(\cH')$ are disjoint.
\end{proposition}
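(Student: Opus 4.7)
The plan is to argue that the lowest common ancestor of the two nodes in the $t$-execution tree must be a split node, and then use the fact that the two children of a split node partition its index set.

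First, I would show that neither of the two nodes can be a proper ancestor of the other. Recall from the construction of the tree that a splitterless node $(\cT,i)$ has a unique child $(\cT,i+1)$, and by induction every descendant of $(\cT,i)$ lies at a phase strictly greater than $i$. Thus if $(\cH,j)$ is unsplittable and $(\cH',j)$ were a proper descendant, then $j > j$, a contradiction. Hence neither $(\cH,j)$ nor $(\cH',j)$ is an ancestor of the other, and in particular $(\cH,j) \neq (\cH',j)$ forces them to diverge at some lowest common ancestor $(\cR,i)$.

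Next, I would argue this LCA must be a split node. At the LCA, the two paths leading to $(\cH,j)$ and $(\cH',j)$ use different children. A splitterless node has only one child, so the LCA has at least two children and must be split. Let $s = s_i(\cR)$ be its splitter. Its left and right children have index intervals $(x_L(\cR), s]$ and $(s, x_R(\cR)]$ respectively, which are disjoint.

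Finally, I would observe that along any root-to-node path in the tree, index sets are monotone nonincreasing: splitting shrinks $X$ to one of two disjoint subintervals, and phase-incrementing keeps the box (hence $X$) identical. A simple induction then gives $X(\cH) \subseteq (x_L(\cR),s]$ and $X(\cH') \subseteq (s,x_R(\cR)]$ (or vice versa), from which $X(\cH) \cap X(\cH') = \emptyset$ follows immediately. There is no real obstacle here; the only subtlety is remembering that phase-increment edges preserve the box, so the ``neither is an ancestor of the other'' step needs the observation about phase strictly increasing below an unsplittable node rather than just box strictly shrinking.
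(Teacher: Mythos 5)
Your proof is correct and uses essentially the same ingredients as the paper's: that phase strictly increases below an unsplittable node, and that sibling subtrees of a split node have disjoint index sets. The paper phrases it as a short proof by contradiction (assume overlap, then one is an ancestor of the other via containment of edges, and deduce a phase contradiction), while you give the direct/contrapositive version via the LCA; the substance is the same.
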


\begin{proof} Since the edges in the $t$-execution tree denote containment, if $X(\cH) \cap X(\cH') \neq \emptyset$, then
(wlog) $\cH$ is contained in $\cH'$. There is a root-to-leaf path passing through $(\cH',j)$ and then $(\cH,j)$.
Because $(\cH',j)$ is unsplittable, the subsequent node in this path must be $(\cH',j+1)$. But phases
cannot decrease as we go down this path, so $(\cH,j)$ cannot appear.
\end{proof}

We can construct a new tree by contracting all of the occurences of each box into a single node.  This gives a binary
tree of boxes and this is the $t$-splitter tree.  Each node in this tree is a box $\cH$.
The reader can readily verify that the $t$-splitter tree satisfies the  bulleted properties of $R(\cB)$ in
\Sec{terminal} and \Lem{terminal}.  As we already noted in formulating \Lem{et4},
the parameter $\gamma$  now depends on $P$.

Define $\Pi(x,\cB)$ to be the path in the $t$-terminal tree corresponding to $x$.  The non-terminal nodes of $\Pi(x,\cB)$
are precisely the split nodes in $\Pi_E(x,\cB)$ (the $t$-trace of $x$ defined earlier).  
The phase numbers of nodes along the path are non-decreasing. 
It is not hard to see that $\gamma_j$ (where $j$ indexes the phase) can never become too large. This is because that threshold $\theta$
increases as $j$ increases, and eventually this will exceed the width of the box $\cT$. 

For $j \geq 0$, let $d_j(x)$ be the number of non-terminal nodes in $\Pi(x,\cB)$ at phase $j$. 
Each of these nodes represents a successful splitter.
This number cannot be too large, since the width of the box decreases with each split.
We make these arguments formal.
%

\begin{prop}
\label{prop:phase} Let $j$ be a nonnegative integer and $x \in X(\cB)$ be such that $d_j(x) \neq 0$. 
\begin{enumerate}
\item $\gamma_{j} \leq 16\gridprec$ and $\rho_j \leq 2\gridprec^{1/4}$.
\item $d_j(x) \leq (1/\rho_j)^2$.
\end{enumerate}
\end{prop}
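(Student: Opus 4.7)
The plan is to exploit two simple invariants of $\terminalbox_t(x,\cB)$: the threshold $\theta$ is monotonically non-decreasing throughout the execution, and every successful split at phase $j$ is performed with a $\rho_j$-balanced splitter and hence multiplies the current box's width by at most $(1-\rho_j)$. Since the loop persists only while $\width(\cT) > \theta$, these two invariants translate directly into bounds on how high the phase counter $j$ can climb and on how many splits $d_j(x)$ can occur within a given phase.

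For Part 1, the case $j=0$ is immediate from the definitions: $\gamma_0 = \gridprec/(\log n)^4 \leq 16\gridprec$ and $\rho_0 = \gridprec^{1/4}/\log n \leq 2\gridprec^{1/4}$. For $j \geq 1$, observe that phase $j$ is entered only after phase $j-1$ fails on some box $\cT$; at that failure the algorithm sets $\theta \leftarrow \max(\theta,\,\gamma_{j-1}\width(\cT)/\gridprec)$, and phase $j$ then resumes on this same box $\cT$. For $d_j(x) \geq 1$ we need at least one further successful split, which requires the loop condition $\width(\cT) > \theta$ to hold, so in particular $\width(\cT) > \gamma_{j-1}\width(\cT)/\gridprec$, i.e.\ $\gamma_{j-1} < \gridprec$. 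Substituting $\gamma_{j-1}=16^{j-1}\gridprec/(\log n)^4$ gives $16^{j-1} < (\log n)^4$, hence $16^j \leq 16(\log n)^4$ and $2^j \leq 2\log n$. Plugging these back into the definitions of $\gamma_j$ and $\rho_j$ yields $\gamma_j \leq 16\gridprec$ and $\rho_j \leq 2\gridprec^{1/4}$.

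For Part 2, let $w$ denote the width of the box at the start of phase $j$, and let $\cT^{(0)},\cT^{(1)},\ldots,\cT^{(d_j(x))}$ be the successive boxes on $\Pi(x,\cB)$ visited during phase $j$. Since each split uses a $\rho_j$-balanced splitter, $\width(\cT^{(d_j(x)-1)}) \leq (1-\rho_j)^{d_j(x)-1} w$. The $d_j(x)$-th split was only attempted because the loop condition held just before it, i.e.\ $\width(\cT^{(d_j(x)-1)}) > \theta$. Combining with the lower bound $\theta \geq \gamma_{j-1} w/\gridprec$ for $j \geq 1$ (and with $\theta \geq 1/\gridprec$, $w \leq n$ for $j=0$), we obtain $(1-\rho_j)^{d_j(x)-1} > \gamma_{j-1}/\gridprec = 16^{j-1}/(\log n)^4$. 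Taking natural logs and using $\ln\!\bigl(1/(1-\rho_j)\bigr) \geq \rho_j$ yields $d_j(x) \leq 1 + (4\ln\log n)/\rho_j$. Comparing with $1/\rho_j^2$, the desired bound follows provided $\rho_j \cdot 4\ln\log n \leq 1$; since $\rho_j \leq 2\gridprec^{1/4} = 2/(C_2\errorcont)$ from Part 1 and $C_2$ is taken sufficiently large, this holds.

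The main obstacle will be the bookkeeping: tracking the exact value of $\theta$ at each phase boundary, keeping straight how the identities $\gamma_j=16^j\gridprec/(\log n)^4$ and $\rho_j=\gamma_j^{1/4}$ interact across phases, and verifying that the numerical slack built into the parameter choices of $\improvedmain$ (in particular the factor $1/(\log n)^4$ in $\gamma_j$ and the fourth-root relation $\rho_j=\gamma_j^{1/4}$) is tight enough to pass from the logarithmic estimate $d_j(x) \lesssim (\log\log n)/\rho_j$ to the quadratic bound $1/\rho_j^2$. The underlying calculations themselves are elementary manipulations of the parameter definitions.
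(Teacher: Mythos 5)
Part~1 of your proof is correct and essentially matches the paper's: from $d_j(x) \neq 0$ you observe the loop must continue in phase $j$, so $\theta < \width(\cT)$ while $\theta \geq \gamma_{j-1}\width(\cT)/\gridprec$, giving $\gamma_{j-1} < \gridprec$ and hence $\gamma_j \leq 16\gridprec$, $\rho_j \leq 2\gridprec^{1/4}$. (You take a slight detour through $16^{j-1} < (\log n)^4$ where the paper goes directly from $\gamma_j = 16\gamma_{j-1}$, but either works.)

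Part~2 has a genuine gap, and it is located exactly where you flagged the potential trouble. Starting from $(1-\rho_j)^{d-1} \geq 16^{j-1}/(\log n)^4$, you drop the $16^{j-1}$ factor and deduce $d \leq 1 + (4\ln\log n)/\rho_j$. The paper instead retains all of $\gamma_{j-1}/\gridprec = (\rho_j/2)^4/\gridprec$ and obtains $d \leq 1 + 4\ln(2/\rho_j)/\rho_j$; since $\ln(\gridprec(2/\rho_j)^4) = 4\ln\log n - (j-1)\ln 16$, you have simply discarded the negative term $-(j-1)\ln 16$. This discard is fatal. To conclude $d \leq 1/\rho_j^2$ you then need $4\rho_j\ln\log n + \rho_j^2 \leq 1$, and you justify this via ``$\rho_j \leq 2/(C_2\errorcont)$ and $C_2$ is sufficiently large.'' But $C_2$ and $\errorcont = \max(C_2, t_{\max}/\deltapar)$ are constants independent of $n$ --- the whole point of the improved algorithm --- while $\ln\log n$ grows without bound. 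For the top phases (where $2^j$ is near its maximum $\Theta(\log n)$, so $\rho_j$ is $\Theta(\gridprec^{1/4})$, a constant, and $1/\rho_j^2$ is also a constant) your bound $1 + 4\ln\log n/\rho_j$ grows like $\ln\log n$ and eventually exceeds $1/\rho_j^2$. The dropped term $(j-1)\ln 16$ is precisely what cancels most of $4\ln\log n$ at those phases. The paper's formulation $4\ln(2/\rho_j)/\rho_j$ keeps this cancellation intact and compares to $1/\rho_j^2$ uniformly via the elementary inequality $x + 4\ln(2/x) \leq 1/x$ for $x \in (0,1/16)$. To repair your argument, keep the quantity $\gamma_{j-1}/\gridprec$ expressed as $(\rho_j/2)^4/\gridprec$ (rather than $16^{j-1}/(\log n)^4$) before taking logs, so that the final comparison is against $\ln(2/\rho_j)$ rather than $\ln\log n$.
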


\begin{proof} Fix $j$ such that $d_j(x) \neq 0$. 
For simplicity of notation write $d$ for $d_j(x)$.
Let $\cH_1,\ldots,\cH_{d}$ ($d \geq 1$) be the sequence of non-terminal boxes in $\Pi(x,\cB)$ at
phase $j$.  By the relationship between the $t$-execution tree and the $t$-splitter tree,
$(\cH_1,j),\ldots,(\cH_d,j)$ is  a path in the $t$-execution tree and $(\cH_1,j-1)$ is the
parent of $(\cH_1,j)$.   Let $\theta_j$ be the width threshhold at the beginning of phase $j$.
We must have $\theta_j < \width(\cH_d)$ since otherwise one of the boxes $\cH_1,\ldots,\cH_d$ would
be a terminal box.

For the first part of the proposition, it is enough to prove the first inequality since
$\rho_j=\gamma_j^{1/4}$. 
For $j=0$, $\gamma_0 \leq \gridprec$ by definition.
For $j \geq 1$, by the definition of $\theta_j$ in $\terminalbox$,
$\theta_j \geq \width(\cH_1)\gamma_{j-1}/\gridprec$ which 
is at most $\width(\cH_1)$. Hence $\gamma_{j-1} \leq \gridprec$ and $\gamma_j =16 \gamma_{j-1} \leq 16 \gridprec$.

Now consider the second part of the proposition.
During phase $j$ each selected splitter is $\rho_j$-balanced and so $\width(\cH_d) \leq \width(\cH_1)(1-\rho_j)^{d-1} \leq \width(\cH_1)e^{-\rho_j(d-1)}$.

For $j=0$, we have $1 \leq \width(\cH_d) \leq n e^{-\rho_0(d-1)}$, so $d \leq 1 + (\ln n)/\rho_0$. 
This is at most $1/\rho_0^2$ by the definition of $\rho_0$. 

For $j \geq 1$, $\width(\cH_d)$
must be at least $\theta_j \geq \width(\cH_1) \gamma_{j-1}/\gridprec$.  Combining this with the
previous upper bound on $\width(\cH_d)$, we have
$e^{-\rho_j(d-1)} \geq \gamma_{j-1}/\gridprec = \rho_{j-1}^4/\gridprec = (\rho_j/2)^4/\gridprec$.  Solving this
final inequality for $d$ gives:
 
\begin{eqnarray*}
d \leq \frac{\rho_j+\ln (\gridprec(2/\rho_j)^4)}{\rho_j} \leq \frac{\rho_j + 4\ln(2/\rho_j)}{\rho_j} 
\leq \left(\frac{1}{\rho_j}\right)^2.
\end{eqnarray*}

To justify the final inequality, observe that $\rho_j \leq 2\gridprec^{1/4} =2/\errorcont \leq 2/C_2 \leq 1/16$.
Furthermore, for any $x \in (0,1/16)$, 
$x+4\ln(2/x) \leq 1/x$.
\end{proof}

Call a terminal box $\cT$ {\em narrow} or {\em wide} depending on whether $\width(\cT) \leq \omega$
or $\width(\cT)>\omega$.  

\begin{prop}
\label{prop:wide}
For every wide terminal box $\cT$ there is a multiphase box $\cH$ on the $t$-splitter path to $\cT$ such that
$\width(\cT) \leq \gamma_{\phase(\cH)-1} \width(\cH)/\gridprec$.
\end{prop}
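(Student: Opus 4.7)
The plan is to track how the threshold variable $\theta$ evolves along the execution path that produces $\cT$ and then combine this with the termination condition. This is purely a bookkeeping argument based on the code of $\terminalbox_t$ and the relationship between the $t$-execution tree and the $t$-splitter tree described in Section~\ref{sec:revised}.

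First, I would fix any index $x \in X(\cT)$ and consider the execution of $\terminalbox_t(x,\cB)$, whose trace corresponds to the root-to-leaf path $\Pi_E(x,\cB)$ in the $t$-execution tree ending at $(\cT, \phase(\cT))$. Inspecting the pseudocode, $\theta$ is initialized to $\omega$ and is only modified in the \textbf{else} branch (i.e., when $\findsplitter$ fails), where it is updated to $\max(\theta, \gamma_j \width(\cT')/\gridprec)$ for the current box $\cT'$ and phase $j$. In particular, $\theta$ is non-decreasing and each update corresponds exactly to a splitterless node $(\cH, j)$ on $\Pi_E(x, \cB)$. Translating to the $t$-splitter tree, these splitterless nodes are precisely the multiphase boxes $\cH$ on the $t$-splitter path to $\cT$, with the corresponding update value being $\gamma_{\phase(\cH)-1}\width(\cH)/\gridprec$ (since each such $\cH$ reaches its highest phase $\phase(\cH)$ immediately after the failed split at phase $\phase(\cH)-1$).

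Second, at termination the loop condition $\width(\cT) \leq \theta$ must hold, so the final value of $\theta$ is at least $\width(\cT)$. Since this final $\theta$ equals the maximum of $\omega$ and the update values $\gamma_{\phase(\cH)-1}\width(\cH)/\gridprec$ over all multiphase boxes $\cH$ on the $t$-splitter path to $\cT$, we have
\[
\width(\cT) \;\leq\; \max\Bigl(\omega,\; \max_{\cH \text{ multiphase on path to } \cT} \gamma_{\phase(\cH)-1}\width(\cH)/\gridprec \Bigr).
\]

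Finally, since $\cT$ is wide we have $\width(\cT) > \omega$, so the maximum cannot be attained by the first term. Hence there exists at least one multiphase box $\cH$ on the path to $\cT$ for which $\width(\cT) \leq \gamma_{\phase(\cH)-1} \width(\cH)/\gridprec$, which is the desired conclusion. There is no real obstacle here beyond unpacking the definitions carefully; the only subtlety is the off-by-one in the phase indexing, namely that a multiphase box $\cH$ with highest phase $\phase(\cH)$ triggered its $\theta$-update while still at phase $\phase(\cH)-1$, which is why the $\gamma$ appearing in the bound is indexed by $\phase(\cH)-1$ rather than $\phase(\cH)$.
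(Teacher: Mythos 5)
Your argument is essentially the paper's own: track the monotone increases of $\theta$, note termination requires $\theta \geq \width(\cT)$, and use wideness to rule out $\omega$ as the binding value. One small imprecision: a multiphase box $\cH$ can be splitterless in several consecutive phases, so it is associated to several $\theta$-updates $\gamma_i\width(\cH)/\gridprec$, not a single one; but since $\gamma_i$ is increasing and the last splitterless phase of $\cH$ is $\phase(\cH)-1$, the largest of these is $\gamma_{\phase(\cH)-1}\width(\cH)/\gridprec$, so your "max over multiphase boxes" formula for the final $\theta$ is still correct and the conclusion follows.
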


\begin{proof}
Since $\cT$ is a terminal box, the execution of $\terminalbox$ for any $x \in X(\cT)$ follows the exact same
path in the $t$-execution tree and ends with $\cT$, and the final value
$\theta^*$ of the width threshhold is at least $\width(\cT)$.    
Since $\cT$ is wide, $\theta^* \geq \width(\cT)> \omega$, which implies that 
$\theta$ increased at least once during the execution of $\terminalbox$.   The parameter $\theta$ can only change
when a new phase begins. This happens when a box is found to be unsplittable, and $\theta$ only increases.   
Consider the  change of $\theta$ to $\theta^*$.   This corresponds to 
an  unsplittable pair $(\cH,i)$ on the $t$-execution path to $(\cT,\phase(\cT))$
such that $\theta^*=\omega(\cH,i)=\width(\cH) \gamma_i/\gridprec$.  We therefore have 
$\width(\cH) \gamma_i/\gridprec \geq \width(\cT)$.  Since $\gamma_i$ increases with $i$
and $\phase(\cH)-1$ is the largest phase for which $\cH$ is unsplittable, $\width(\cH) \gamma_{\phase(\cH)-1}/\gridprec \geq \width(\cH) \gamma_i/\gridprec \geq \width(\cT)$.
\end{proof}

\subsection{Revisiting tainted boxes} \label{sec:tainted}

Previously, the tainting parameter $\taint$ was $1/(10\log n)$ and the sample size $\ssz$ was $10(\log n)^2$.
The chance of (large) error in the estimate of $\approxlis_t(\cB)$ for $|Good_t(\cB)|$ was small
enough to use a union bound argument. We easily concluded that with high probability, no box was tainted.
%

Now, the parameters are $\ssz = \sszval$, $\taint = \taintval$, $\taintprob = \gridprec^5/C_2$,
and $\gridprec = \gridprecval$.
Since $\errorcont$ is only bounded below by a fixed constant $C_2$, there is a non-negligible
chance that a single call to $\approxlis_t(\cB)$ errs in its estimate.
A simple union bound does not work and our accounting needs to be more careful.  
We will prove by induction that the probability that box-level pair $(\cB',t')$ is tainted is small.

We remind the reader that $\cB$ has a chain of terminal boxes that spans $\cB$. Furthermore,
we have a grid $\Gamma(\cT)$ in each terminal box $\cT$.  A {\em spanning terminal-compatible grid chain} for $\cB$
is a chain of boxes that is obtained by selecting a spanning grid chain for each terminal box $\cT$
and concatenating them together.

\begin{lemma}
\label{lem:tainted}
Assume that the secondary random bits are fixed in a way such that Assumption 1 holds.
For all $(\cB,t)$, the probability with respect to the primary random bits that $(\cB,t)$ is tainted is at most  $\taintprob$,
where $\taintprob=\gridprec^t/C_2$ is the taint probability defined in \Sec{improve-alg}
\end{lemma}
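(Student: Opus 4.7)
The plan is to induct on $t$, bounding the two conditions of the definition of tainting separately and taking a union bound. For the base case $t=1$, condition~2 is vacuous, since no pair $(\cC,0)$ can be tainted (the definition requires $t\geq 1$). Only condition~1 can occur, and by Proposition~\ref{prop:hoeff2} applied to the sample of $\ssz=100\errorcont^3$ indices drawn uniformly by $\approxlis_1(\cB)$ to estimate $|Good_1(\cB)|/\width(\cB)$, the event $|\approxlis_1(\cB)-|Good_1(\cB)||>\taint\width(\cB)$ has probability at most $2e^{-2\taint^2\ssz}=2e^{-2\errorcont}$, which is below $\taintprob$ once $C_2$ is taken to be a sufficiently large absolute constant.

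For the inductive step $t\geq 2$, condition~1 is handled exactly as in the base case (the Hoeffding argument is oblivious to $t$). The real task is to bound the probability of condition~2. Since Assumption~1 fixes the secondary bits, the terminal chain $\vec{\cT}$ of $\cB$ and each grid $\Gamma(\cT)$ are deterministic, and therefore so is the collection of all spanning terminal-compatible grid chains. I need to bound the probability that the maximum, over such chains, of the total $(t-1)$-tainted width is at least $\taint\width(\cB)$. A direct union bound over chains cannot work because the number of such chains may be exponential in the number of terminal boxes, so the plan is to dominate this maximum by a sum of bounded \emph{independent} indicators and apply the Chernoff bound of Proposition~\ref{prop:cher}.

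Specifically, for each terminal box $\cT$ and each adjacent pair of grid columns $(x_{j-1}^\cT,x_j^\cT]$ of $\Gamma(\cT)$, let $V_j^\cT$ be the indicator that at least one arc of $D(\Gamma(\cT))$ joining these two columns corresponds to a box tainted at level $t-1$. Every spanning chain uses exactly one arc per column pair, all arcs within a column pair share the width $x_j^\cT-x_{j-1}^\cT$, and the column pairs inside $\cT$ partition $X(\cT)$; hence the maximum tainted width over chains is at most $W':=\sum_{\cT,j}(x_j^\cT-x_{j-1}^\cT)\,V_j^\cT$. By the inductive hypothesis each individual arc's box is tainted with probability at most $\taintprob$, and a union bound over the at most $|Y(\Gamma(\cT))|^2=O(1/\gridprec^4)$ arcs in a single column pair gives $\EX[V_j^\cT]\leq O(\taintprob/\gridprec^4)=O(\gridprec/C_2)$, so $\EX[W']\leq O(\gridprec\width(\cB)/C_2)$.

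The hardest step will be establishing mutual independence of the $V_j^\cT$ across distinct $(\cT,j)$. Here the random-bit conventions of Section~\ref{subsec:random} are essential: whether a box $\cC$ is tainted is a deterministic function of the primary random bits indexed by calls to $\approxlis$, $\classify$, and their subroutines on sub-boxes of $\cC$, and hence only on bits tagged by box arguments whose index sets lie in $X(\cC)$. Distinct $(\cT,j)$ correspond to pairwise disjoint index ranges inside $\cB$ (distinct terminal boxes have disjoint index ranges, and distinct column pairs inside the same terminal box do as well), so the ensembles of primary bits that determine $V_j^\cT$ for different $(\cT,j)$ are disjoint, making these indicators independent. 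Rescaling each term of $W'$ by $\gridprec\width(\cB)$ yields a sum of independent $[0,1]$-random variables with total mean $O(1/C_2)$, while the target deviation is $\taint/\gridprec=\Omega(C_2^4\errorcont^3)$; Proposition~\ref{prop:cher} then gives $\pr[W'\geq \taint\width(\cB)]\leq 2^{-\Omega(C_2^4\errorcont^3)}$, far below $\taintprob$ for $C_2$ a sufficiently large absolute constant. A final union bound over conditions~1 and~2 closes the induction.
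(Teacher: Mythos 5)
Your proposal is correct and takes essentially the same route as the paper's own proof: induction on $t$, Hoeffding for condition~1, and for condition~2 a reduction to a sum of independent $[0,1]$-valued indicators — one per grid strip (your ``column pair'') of each terminal box — whose expectation is bounded via the inductive hypothesis and a union bound over the $O(1/\gridprec^4)$ grid boxes per strip, followed by the upper-tail bound of Proposition~\ref{prop:cher}. Your $V_j^{\cT}$ and $W'$ are the paper's ``blue'' indicator and $\sum_i Y_i$ (rescaled), your independence argument via disjoint index ranges and disjoint primary random bits is the same as the paper's, and the only cosmetic differences are that you anchor the induction at $t=1$ where the paper anchors it at $t=0$, and the paper explicitly remarks that narrow terminal boxes contribute only zero indicators (which your setup gives automatically since width-$1$ grid boxes are never tainted).
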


\begin{proof}
We prove the result by induction on  $t$.
 For the base case, we note that
if $t=0$, then $(\cB,t)$ is a leaf of the instance tree and not tainted by definition.
Suppose $t\geq 1$.
By applying the definition of tainted, the inductive claim follows immediately from
the following two statements:

\begin{enumerate}
\item 
 $\pr[|\approxlis_t(\cB) - |Good_t(\cB)|| > \taint \width(\cB)]$  is at most $\taintprob/2$.
\item   
The probability that $\cB$ has a spanning terminal-compatible grid chain $\vec{\cC}$  such that the
total width of the boxes $\{\cC \in \vec{\cC}| \textrm{$(\cC,t-1)$ is tainted}\}$ is at least $\taint \width(\cB)$, is at most
$\taintprob/2$.
%
\end{enumerate}

By \Prop{hoeff2}. we obtain that
$\pr[|\approxlis_t(\cB)-|Good_t(\cB)|| > \taint \width(\cB)]$ is at most $2e^{-2\taint^2 \ssz} \leq 2e^{-\errorcont} \leq 1/\errorcont^{23} \leq \psi/2$, to prove the first statement.


The proof of the second statement involves more work (and  induction).
Consider a terminal box $\cT$.  By \Prop{buildgrid}, the 
grid $\Gamma=\Gamma(\cT)$ has at most $|X(\Gamma)| \leq 3/\gridprec$ and $|Y(\Gamma)| \leq 16/\gridprec^2$. The points
$|X(\Gamma)|$ divide the box  $\cT$ into at most $3/\gridprec$ grid strips. For wide $\cT$, each of these has width at most $\gridprec \width(\cT)$.
A grid strip $\cS$ of $\cT$ is said to be \emph{blue}
if there is a grid box $\cC$ in the strip, such that $(\cC,t-1)$ is tainted. 
Note that if $\cT$ is narrow, no grid strip of $\Gamma(\cT)$ is blue (since boxes of width $1$ are by definition
not tainted).  

The event whose probability we wish to upper bound in contained in the following event: the total width of blue grid strips
in $\cB$ is at least $\taint\width(\cB)$. This, in turn, is contained in the following event $E$:
the sum over all wide terminal boxes $\cT$ of $\cB$ of the width of blue grid strips of $\cT$ is at least $\taint\width(\cB)$.
Index the collection of all grid strips of terminal boxes of $\cB$
arbitrarily as $\cS_1, \cS_2, \cdots$. Define random variable $Y_i$ as follows.
If $\cS_i$ is blue, $Y_i = \width(\cS_i)/(\gridprec \width(\cB))$. Otherwise, $Y_i = 0$.

\begin{claim} \label{clm:yi} The random variables $Y_i$ are independently distributed in $[0,1]$.
The event $E$ is contained in the event $[\sum_i Y_i \geq \errorcont^3]$.
Furthermore, $\EX[\sum_i Y_i] \leq 1$. 
\end{claim}

\begin{proof} Suppose $\cS_i$ is contained in (wide) terminal box $\cT$.
The variable $Y_i$ is either $0$ or $ \width(\cS_i)/(\gridprec \width(\cB))$ $\leq \gridprec \width(\cT)/(\gridprec \width(\cB))
\leq 1$. All the grid strips of interest are disjoint,  so the random variables $Y_i$
are independent (because the primary random bits they depend on are disjoint).

Note that $\sum_i Y_i = \sum_{i: \cS_i \textrm{blue}} \width(\cS_i)/(\gridprec \width(\cB))$.
When $E$ occurs, $\sum_i Y_i \geq \taint \width(\cB)/(\gridprec \width(\cB)) \geq \errorcont^3$.

The lower bound on $\EX[\sum_i Y_i]$ is where the  induction appears.
The number of grid boxes lying in $\cS_i$ is at most $|Y(\Gamma(\cT))|^2 \leq 256/\gridprec^4$.  By the induction hypothesis,
for each such grid box $\cC$, the probability that $(\cC,t-1)$ is tainted is at most $\taintprob$.
By a union bound, the probability that $\cS_i$ is blue is at most $256 \taintprob/\gridprec^4$.
By linearity of expectation, $\EX[\sum_i Y_i] \leq (256 \taintprob/\gridprec^4)\sum_i \width(\cS_i)/(\gridprec \width(\cB))$
$\leq 256\taintprob/\gridprec^5 \leq 1$.
\end{proof}
We can now apply \Prop{cher} and deduce
$\Pr[\sum_i Y_i \geq {\errorcont}^3] < 2^{-\errorcont^3}$ $< 1/\errorcont^{23} \leq \taintprob/2$.
This completes the proof of \Lem{tainted}.
\end{proof}

\subsection{Error analysis of the improved algorithm}
\label{subsec:revised error}

The formal statement concerning the error of the improved algorithm is:

\begin{theorem}
\label{thm:improved correctness}
Let $\cB$ be a box in $\cU(f)$.
Assume that the random bits used satisfy Assumption 1.
For any $1 \leq t \leq t_{\max}$, if $(\cB,t)$ is not tainted:

\begin{eqnarray*}
\liserror_t(\cB) & \leq & \frac{4}{t}\loss(\cB)+\frac{\width(\cB)}{\errorcont}.
\end{eqnarray*}
\end{theorem}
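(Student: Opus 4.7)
The proof will follow the same structural template as \Thm{basic correctness}. Specifically, I will define the primary error and secondary error terms $\errorterm_1,\errorterm_2(\cT),\errorterm_3(\cT),\errorterm_4(\cT),\errorterm_5(\cT)$ using exactly the formulas from \Sec{correctness}, so that the goal reduces to establishing the analogues of \Clm{primary} and \Clm{secondary} with the improved parameters $\delta_1 = 1/\errorcont$ and $\taint = 1/(10\errorcont)$.

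The primary error analysis carries over verbatim from the proof of \Clm{primary}. As highlighted in the remark after that claim, the argument only uses \Lem{terminal}, the identity $\delta_t = t\delta_1$, Assumption 1, and the fact that $(\cB,t)$ is not tainted; none of the specific phase structure of the improved $\terminalbox$ is invoked. The $t$-splitter tree described in \Sec{revised} satisfies the same structural properties (binary tree with the terminal chain spanning $\cB$), so the algebraic manipulations and the choice $\mu_t = 2/(t+3)$, $\tau_t = 4/t$ go through unchanged, giving a bound of $(\delta_t - \delta_1)\width(\cB) + \sum_\cT (\errorterm_2(\cT)+5\errorterm_3(\cT)^+ + 5\errorterm_4(\cT)^+ + 2\errorterm_5(\cT)^+)$.

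For the secondary error, the bounds on $\errorterm_1$ and $\sum_\cT \errorterm_2(\cT)$ follow as in \Clm{et1+et2} using that $(\cB,t)$ is not tainted, yielding $O(\taint)\width(\cB) = O(\width(\cB)/\errorcont)$. The bound for $\sum_\cT \errorterm_3(\cT)$ follows from the grid approximation lemma plus the tainted-mass bound exactly as in \Clm{et3}, this time giving $(\gridprec+\taint)\width(\cB) = O(\width(\cB)/\errorcont)$. For $\sum_\cT \errorterm_4(\cT)$, I apply the general form of \Lem{et4}, which allows the splitter parameter $\gamma(P)$ to vary across the splitter tree. I group the points $P \in \cP^\circ(\vec{\cT})$ by their phase $j$; by \Prop{phase}, at each phase $j$ the total width $\sum_P \width(\cH(P))$ at that phase is at most $(1/\rho_j)^2$ copies of per-root-to-leaf width, which summed over root-to-leaf paths yields $\width(\cB)/\rho_j$; and the first part of \Prop{phase} shows phases higher than $O(\log(\errorcont))$ cannot occur. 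Multiplying $\gamma_j\width(\cB)/\rho_j$ and summing the geometric series $\sum_j \gamma_j/\rho_j = \sum_j \gamma_j^{3/4}$ gives a total at most $O(\gridprec^{3/4})\width(\cB) = O(\width(\cB)/\errorcont)$.

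The real obstacle, and where the improved algorithm's phase structure becomes essential, is bounding $\errorterm_5(\cT)$ for a \emph{wide} terminal box $\cT$, since in this case $\findsplitter$ need not have failed inside $\cT$ itself. I will apply \Lem{dichotomy} to $\strip{\cT}{\cB}$ with the grid $\Gamma(\cT)$, reducing the problem to bounding the number of $\mu_t$-safe indices of the chosen grid chain. By \Prop{wide}, there exists a multiphase ancestor $\cH$ on the splitter path to $\cT$ with $\gridprec\width(\cT) \leq \gamma_{\phase(\cH)-1}\width(\cH)$; since each grid box in $\Gamma(\cT)$ has width at most $\gridprec\width(\cT)$, every $\mu_t$-safe index for $\Gamma(\cT)$ is in particular $(\mu_t,\gamma_{\phase(\cH)-1}\width(\cH))$-safe for $\strip{\cH}{\cB}$ when viewed at the scale of $\cH$. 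Since $(\cH,\phase(\cH)-1)$ is unsplittable, Assumption 1 and \Cor{fails} bound the number of such safe splitters for $\strip{\cH}{\cB}$ by $3\rho_{\phase(\cH)-1}\width(\cH)$. Charging the safe indices of $\cT$ against the total width of the corresponding multiphase ancestors---which are pairwise disjoint in index set at each fixed phase by \Prop{phase-box}---the total contribution sums to $O(\sum_j \rho_{j-1}\gamma_{j-1}/\gridprec \cdot \width(\cB))$, which is again dominated by $O(\width(\cB)/\errorcont)$. For narrow $\cT$, the argument of \Clm{et5} gives $\errorterm_5(\cT)\leq 0$ since each grid box has width $1$. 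Summing the five secondary terms yields a total bounded by $\delta_1\width(\cB) = \width(\cB)/\errorcont$, completing the proof.
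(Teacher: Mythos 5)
There is a genuine gap in your treatment of $\errorterm_5$, and it is precisely the subtlety the paper flags as the crux of the improved analysis. You propose to apply the dichotomy lemma (\Lem{dichotomy}) to $\strip{\cT}{\cB}$ \emph{separately} for each wide terminal box $\cT$, so the resulting safe/unsafe classification is with respect to $\strip{\cT}{\cB}$. You then claim that ``every $\mu_t$-safe index for $\Gamma(\cT)$ is in particular $(\mu_t,\gamma_{\phase(\cH)-1}\width(\cH))$-safe for $\strip{\cH}{\cB}$'' where $\cH$ is the multiphase ancestor from \Prop{wide}. This implication is backwards. Unpacking Definition~\ref{def:safe}: $s$ being $(\mu,L)$-safe for a strip $\cR$ requires \emph{every} $\cR$-strip adjacent to $s$ to be $(\mu,L)$-accepting. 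Since $X(\cT)\subseteq X(\cH)$, the $\strip{\cH}{\cB}$-strips adjacent to $s$ are a \emph{superset} of the $\strip{\cT}{\cB}$-strips adjacent to $s$: the former include strips that extend past the boundary of $\cT$ and may contain many more violations. Thus ``safe for $\strip{\cH}{\cB}$'' $\Rightarrow$ ``safe for $\strip{\cT}{\cB}$,'' and \emph{not} the other way around. Consequently, the \Cor{fails}-style bound of $3\rho_{\phase(\cH)-1}\width(\cH)$ on the safe-for-$\strip{\cH}{\cB}$ points does \emph{not} upper-bound the set of safe-for-$\strip{\cT}{\cB}$ points produced by your per-box application of the dichotomy lemma, and the charging argument collapses. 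This is the exact obstruction the paper describes: because $\findsplitter$ failed on the ancestor $\cH$ rather than on $\cT$ itself, the per-terminal-box accounting of the basic algorithm breaks.

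The paper's actual resolution is to apply \Lem{dichotomy} once, \emph{globally}, to $\cB$ with the strip decomposition $\vec{\cS}$ obtained by concatenating the grid strip decompositions over all terminal boxes. Under that application, the safe/unsafe classification is relative to $\cB$ itself (D5 of the dichotomy hypotheses), and since every $\strip{\cH}{\cB}$-strip is also a $\cB$-strip, ``safe for $\cB$'' \emph{is} a stronger condition than ``safe for $\strip{\cH}{\cB}$.'' This lets one charge the safe points in each wide $\cT$ to the safe-for-$\strip{\cH}{\cB}$ set of the ancestor $\cH$, where the $\findsplitter$ failure gives the needed bound (\Prop{bound-s}); disjointness at fixed phase (\Prop{phase-box}) and the geometric decay of $\rho_j$ then finish the argument. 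Your decomposition into $\errorterm_1$--$\errorterm_4$, the reuse of \Clm{primary}, and the phase-summation estimate for $\errorterm_4$ via \Lem{et4} are all in line with the paper (modulo an arithmetic slip of $\width(\cB)/\rho_j$ versus $\width(\cB)/\rho_j^2$ in $\errorterm_4$, which does not affect the conclusion), but the $\errorterm_5$ step as written is unsound and needs to be replaced by the global application of the dichotomy lemma.
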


\begin{proof}[of \Thm{improved approxlis}]  The claimed run time for
\Thm{improved approxlis} follows from \Cor{imp-time}.  By \Prop{assumption 1},
Assumptions 1 fails is $n^{O(\log(n)}$.  By \Lem{tainted}, the probability
that $(\cB,t)$ is tainted is at most $\taintprob$. So the probability that neither of these happens
is (easily) at least $3/4$.  So we apply
\Thm{improved correctness} to get the claimed bounds.
\end{proof}

So we are left to prove  \Thm{improved correctness}.
the improved algorithm has the same global structure as the original algorithm, and in particular
satisfies \Lem{terminal}, \Prop{narrow}, and \Prop{good}.  Assumption 1 still holds with probability $1-n^{-\Omega(\log n)}$.
We can break down the proof of \Thm{improved correctness} into proving \Clm{primary} (for the primary error terms)
and \Clm{secondary} (for the secondary error terms).
Furthermore, as noted in the remark at the end of \Sec{primary}, 
we can reuse \Clm{primary} and its proof as is.

Hence, it only remains to prove  \Clm{secondary} for our setting, which we restate below
for convenience.
%
%
\begin{eqnarray}
\nonumber
|\errorterm_1|+\sum_{\cT \in \vec{\cT}} (\errorterm_2(\cT) + 5 \errorterm_3(\cT)^+ + 5\errorterm_4(\cT)^+ +5 \errorterm_5(\cT)^+)& \leq & \delta_1\width(\cB)
= \frac{\width(\cB)}{\errorcont}.
\end{eqnarray}
%
%
%
We can use \Clm{et1+et2} and \Clm{et3} and their proofs from the basic analysis.
Using the new values $\taint=1/10\errorcont$ and $\gridprec=1/\errorcont^5$ gives a bound of $\width/5\errorcont$
on each of the contributions of $\errorterm_1$, $\errorterm_2$ and $\errorterm_3$.

To prove the bound on  $\errorterm_4(\cT) = |\cL^{out}(\cT)| - \mu_t \cdot \outside(\vec{\cE}(\cT))$, 
we use the
following variant of \Clm{et4}. 

\begin{claim}
\label{clm:et4imp}
\begin{eqnarray*}
\sum_{\cT \in \vec{\cT}} \errorterm_4 (\cT)& \leq & \width(\cB)/5\errorcont.
\end{eqnarray*}
\end{claim}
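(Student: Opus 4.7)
The plan is to reuse the general form of Lemma~\ref{lem:et4}, which was already written to allow the splitter parameter $\gamma(P)$ to vary from one splitter point $P \in \cP^\circ(\vec{\cT})$ to another. Applied here, $\gamma(P)$ is precisely $\gamma_{j(P)}$, where $j(P)$ is the phase in which $P$ was selected as the splitter of $\cH(P)$. So the task reduces to bounding the sum
\[
\Sigma \;:=\; \sum_{P \in \cP^\circ(\vec{\cT})} \gamma_{j(P)} \width(\cH(P)),
\]
plus the negligible $2\gridprec \width(\cB)$ term.

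First I would group the sum by phase, writing $\Sigma = \sum_j \gamma_j \sum_{\cH:\phase(\cH)=j} \width(\cH)$, where the inner sum runs over the non-terminal boxes $\cH$ in the $t$-splitter tree whose splitter was selected at phase $j$. Next, I would exchange the order of summation to rewrite the inner sum as $\sum_{x \in X(\cB)} d_j(x)$, since for each non-terminal box $\cH$ at phase $j$, every $x \in X(\cH)$ contributes to $d_j(x)$ exactly once. Then I would invoke Proposition~\ref{prop:phase}, which gives $d_j(x) \leq 1/\rho_j^2$ for every index $x$, so that $\sum_{\cH:\phase(\cH)=j} \width(\cH) \leq \width(\cB)/\rho_j^2$. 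Using the relation $\gamma_j = \rho_j^4$ built into the parameter definitions, this yields $\Sigma \leq \width(\cB) \sum_j \rho_j^2$.

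The remaining task is to evaluate $\sum_j \rho_j^2$. Since $\rho_j = 2^j \gridprec^{1/4}/\log n$, the terms form a geometric series with ratio $4$. By Proposition~\ref{prop:phase}, only phases with $\rho_j \leq 2\gridprec^{1/4}$ actually occur, so the largest term in the geometric sum is $O(\gridprec^{1/2})$, and the total sum is dominated by this last term: $\sum_j \rho_j^2 = O(\gridprec^{1/2})$. Combined with Lemma~\ref{lem:et4}, this gives $\sum_{\cT} \errorterm_4(\cT) = O(\gridprec \width(\cB) + \gridprec^{1/2} \width(\cB))$. Since $\gridprec = 1/\errorcont^5$, this is at most $(2/\errorcont^5 + O(1/\errorcont^{5/2}))\width(\cB)$, which is bounded by $\width(\cB)/(5\errorcont)$ whenever the constant $C_2$ in the definition $\errorcont \geq \max(C_2, t_{\max}/\deltapar)$ is chosen sufficiently large.

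The only non-routine step is the second one—realizing that the per-phase sum of box widths at phase $j$ can be bounded via $\sum_x d_j(x)$, and that Proposition~\ref{prop:phase} then delivers exactly the $1/\rho_j^2$ factor needed to pair with $\gamma_j$ so that the geometric series sums cleanly. The rest is careful parameter bookkeeping using the identity $\gamma_j = \rho_j^4$.
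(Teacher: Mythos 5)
Your proposal is correct and mirrors the paper's own argument: both start from Lemma~\ref{lem:et4}, double-count the sum $\sum_{P}\gamma(P)\width(\cH(P))$ over index--phase incidences, invoke Proposition~\ref{prop:phase} to get $d_j(x) \le 1/\rho_j^2$ and $\rho_j \le 2\gridprec^{1/4}$, and use the built-in identity $\gamma_j = \rho_j^4$ so that the whole thing collapses to the geometric series $\sum_j \rho_j^2 = O(\sqrt{\gridprec})$. The only difference is cosmetic — the paper fixes an index $x$ and sums over its phases (defining $\tilde\gamma(x)=\sum_{j\le \phase(x)}\gamma_j d_j(x)$) before summing over $x$, whereas you group by phase $j$ first and then sum over $x$ — and both orderings lead to the same bound of $O(\sqrt{\gridprec})\,\width(\cB) \le \width(\cB)/(5\errorcont)$.
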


\begin{proof}
Recall that \Lem{et4} was proved in enough generality to apply to the present situation and gave us:
\begin{eqnarray}
\label{eq:et4}
\sum_{\cT \in \vec{\cT}} \errorterm_4(\cT) & \leq & 2\gridprec \width(\cB) + 4\sum_{P \in \cP^{\circ}(\vec{\cT})}\gamma(P)\width(\cH(P)).
\end{eqnarray}

To analyze the final sum we consider for each index $x \in X(\cB)$, the root-to-leaf path $\Pi(x)$ in the $t$-splitter tree.  Let $\cP(x)$ denote the
set of splitters encountered along that path.  Recall that for $j \geq 0$, $d_j(x)$ is the number of splitters in $\cP(x)$
that were found in phase $j$. Define 
$\tilde{\gamma}(x)=\sum_{P \in \cP(x)} \gamma(P)=\sum_{j \leq \phase(x)} \gamma_jd_j(x)$.  
The summation on the righthand side
of \Eqn{et4} is equal to $\sum_{x \in X(\cB)} \tilde{\gamma}(x)$.

Using the inequalities $d_j(x) \leq (1/\rho_j)^2$ and $\rho_j \leq 2\gridprec^{1/4}$, and using the
fact that $\rho_j$ is proportional to $2^j$,
\Prop{phase} we have:

\begin{equation*}
\tilde{\gamma}(x) \leq \sum_{j = 0}^{\phase(x)} \gamma_j/(\rho_j)^2=\sum_{j=0}^{\phase(x)} (\rho_j )^2 \leq 2 (\rho_{\phase(x)})^2 \leq 8\sqrt{\gridprec}.
\end{equation*}

Summing over $x \in X(\cB)$  and substituting into the above bound yields the upper bound 
$(2\gridprec+32\sqrt{\gridprec})\width(\cB) \leq 34\sqrt{\gridprec}\width(\cB) \leq \width(\cB)/5\errorcont$.
\end{proof}

Finally, we bound $\sum_{\cT \in \vec{\cT}}\errorterm_5(\cT) = \sum_{\cT \in \vec{\cT}}
[\mu_t\cdot\aloss_{t-1}(\vec{\cE}(\cT)) - (1-\mu_t)\cdot \outside(\vec{\cE}(\cT))]$.
%

\begin{claim} \label{clm:et5i}
\begin{eqnarray*}
\sum_{\cT \in \vec{\cT}} \errorterm_5(\cT) & \leq & \width(\cB)/5\errorcont.
\end{eqnarray*}
\end{claim}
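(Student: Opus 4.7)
The plan is to mirror the structure of Claim \ref{clm:et5} from the basic analysis, with a key modification to handle wide terminal boxes whose existence was not caused by a direct failed call to $\findsplitter$ on $\cT$ itself. For narrow terminal boxes (those with $\width(\cT) \leq \omega = 1/\gridprec$), the procedure $\buildgrid$ forces $X(\Gamma(\cT)) = X(\cT)$, so every grid box has width one, $\aloss_{t-1}(\vec{\cE}(\cT)) = 0$, and hence $\errorterm_5(\cT) \leq 0$, exactly as before.

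For wide terminal boxes, I will first apply \Prop{et5} and \Lem{dichotomy} to $\strip{\cT}{\cB}$ with the strip decomposition induced by $\Gamma(\cT)$, obtaining $\errorterm_5(\cT) \leq \mu_t |S_\cT|$, where $S_\cT$ is the set of $(\mu_t, \gridprec\width(\cT))$-safe indices inside $\vec{\cD}(\cT)^{\cup}$. To bound these in aggregate, I will use \Prop{wide} to assign each wide $\cT$ a multiphase ancestor $\cH(\cT)$ at phase $j(\cT) = \phase(\cH(\cT))-1$, on which $\findsplitter$ failed during phase $j(\cT)$ with parameter $\gamma_{j(\cT)}\width(\cH(\cT))$, and for which $\gridprec\width(\cT) \leq \gamma_{j(\cT)}\width(\cH(\cT))$. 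Grouping wide terminal boxes by their pair $(\cH,j)$, I will apply \Lem{dichotomy} at the larger scale of $\strip{\cH}{\cB}$: the strip decomposition is the concatenation of the individual $\Gamma(\cT)$-decompositions over the group, and the compatible box chain is the concatenation of the $\vec{\cD}(\cT)$'s, which join cleanly at the shared corners $P_{TR}(\cT_i)=P_{BL}(\cT_{i+1})$ inherited from the splitter-tree structure. This setup will yield $\sum_\cT \errorterm_5(\cT) \leq \mu_t |S_{\cH}|$ across the group, where $S_\cH$ is the set of indices $(\mu_t, \gamma_j\width(\cH))$-safe for $\strip{\cH}{\cB}$, since every grid strip in the combined decomposition has width at most $\gamma_j\width(\cH)$ (for wide $\cT$'s by the defining inequality, and trivially for narrow $\cT$'s since their strip widths are $1$).

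Since $\findsplitter$ failed on $\cH$ in phase $j$ with parameter $\gamma_j\width(\cH)$, Assumption 1 and \Cor{fails} give $|S_\cH| \leq 3\rho_j\width(\cH)$. Summing over $\cH \in \mathcal{H}_j$, \Prop{phase-box} gives $\sum_{\cH \in \mathcal{H}_j}\width(\cH) \leq \width(\cB)$, so the contribution from phase $j$ is at most $3\mu_t\rho_j\width(\cB)$. By \Prop{phase}, only phases with $\rho_j \leq 2\gridprec^{1/4}$ can contribute; since $\rho_j$ doubles in $j$, the telescoping geometric sum gives $\sum_j \rho_j = O(\gridprec^{1/4}) = O(\errorcont^{-5/4})$. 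Thus $\sum_\cT \errorterm_5(\cT) = O(\errorcont^{-5/4})\width(\cB)$, which is at most $\width(\cB)/(5\errorcont)$ provided $C_2$ is chosen sufficiently large.

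The main obstacle is the lifting of the dichotomy argument from the $\strip{\cT}{\cB}$ level to the $\strip{\cH}{\cB}$ level. I need to verify that the concatenated box chain across $\cT$'s in a given group forms a valid chain compatible with the merged strip decomposition, and that the per-strip safety parameter used by \Lem{dichotomy} truly matches the parameter $\gamma_j\width(\cH)$ against which \Cor{fails} gives a bound. Chain compatibility follows directly from the structure of the $t$-splitter tree, but care is required with wide terminal boxes whose responsible multiphase ancestor differs from $\cH$; the clean resolution is to canonicalize the choice in \Prop{wide} (for instance, always take the deepest multiphase ancestor for each $\cT$) and group accordingly, so that every wide $\cT$ is accounted for in exactly one $(\cH,j)$ group, while verifying that for each group all participating grid strips respect the bound $\gamma_j\width(\cH)$.
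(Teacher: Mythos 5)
The high-level skeleton of your argument matches the paper's: handle narrow terminal boxes trivially, then for wide ones bound the number of safe splitters via the responsible multiphase ancestor $\cH$ (from \Prop{wide}), apply Assumption~1 and \Cor{fails}, use \Prop{phase-box} to get disjointness within each phase, and sum a geometric series in $\rho_j$. However, your first step contains a genuine gap that the subsequent ``lifting'' cannot repair. You apply \Lem{dichotomy} separately to each $\strip{\cT}{\cB}$, obtaining $\errorterm_5(\cT) \leq \mu_t|S_\cT|$ where $S_\cT$ consists of indices that are safe \emph{for} $\strip{\cT}{\cB}$. But being $(\mu_t,L)$-safe for $\cR$ requires passing the violation test on all $\cR$-strips adjacent to the index; since $\strip{\cT}{\cB}$ is a \emph{sub}strip of $\strip{\cH}{\cB}$, the strips to be tested form a smaller collection, so safety for $\strip{\cT}{\cB}$ is a strictly \emph{weaker} condition than safety for $\strip{\cH}{\cB}$. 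Consequently $S_\cT$ is not contained in the set of indices safe for $\strip{\cH}{\cB}$, and $\sum_{\cT}|S_\cT| \leq |S_\cH|$ does not follow. Re-running \Lem{dichotomy} at the $\strip{\cH}{\cB}$ scale does not retroactively shrink the quantity $|S_\cT|$ that your first application already committed you to; the two applications produce incomparable bounds rather than a chain of inequalities.

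The paper avoids this by applying \Lem{dichotomy} \emph{once}, to the box $\cB$ itself, with the single concatenated chain $\vec{\cD}$ over the full concatenated strip decomposition of $\cB$. This produces a set $u(\cT)$ of unsafe indices whose safety notion is measured with respect to $\cB$ — and safety for $\cB$ \emph{does} imply safety for the substrip $\strip{\cH}{\cB}$ for any intermediate $\cH$, so \Prop{bound-s} can transfer the count downward to the scale at which $\findsplitter$ actually failed. Your per-group version also has a second, smaller problem: the strip decomposition on which you would invoke \Lem{dichotomy} at the $\strip{\cH}{\cB}$ scale must partition all of $\strip{\cH}{\cB}$, which includes terminal boxes not in group $(\cH,j)$; those strips need not satisfy the width bound $\gamma_j\width(\cH)$, so the claim that ``every grid strip in the combined decomposition'' respects that bound is not guaranteed. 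If you replace the per-$\cT$ applications with a single global application at the $\cB$ level (and then separate degenerate splitters as the paper does), the rest of your argument goes through.
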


This claim is analogous to \Clm{et5} but the proof has a crucial difference.
In \Clm{et5}, we obtained a bound for $\errorterm_5(\cT)$ for each $\cT$ separately and summed the bound.  In the analysis of the improved algorithm we no longer are able to separately bound each of the terms $\errorterm_5(\cT)$, rather we must
look at the entire sum and bound it.   This difference represents an important subtlety in the improved algorithm.

We explain what goes wrong if we try to follow the proof of \Clm{et5}.
Focus on the case when $\cT$ is wide.
The dichotomy lemma, \Lem{dichotomy}, gives the bound $(1-\mu_t)\cdot \outside(\vec{\cE}(\cT)) \geq \mu_t|U|$, where $U$ is (roughly speaking) the set of unsafe points in  $\vec{\cE}(\cT)$.
For this to be a useful bound, we must ensure that $|U|$ is small.
In the basic algorithm, $\findsplitter_t$ had been called on $\cT$ and failed, and this
ensures that there are few safe points in $\cT$.
In the improved algorithm, a failure of $\findsplitter_t$ does not terminate the procedure, but rather 
leads to a new phase
The terminal box $\cT$ is  not chosen
because $\findsplitter_t$ failed, but rather
because the threshold $\theta$ rises above $\width(\cT)$.   
 
When $\cT$ was declared terminal, the value of $\theta$ was equal to $\gamma_j \width(\cH)/\gridprec$ where $\cH$ is an
ancestor of $\cT$ in the splitter tree, such that $\findsplitter_t$ failed on $\cH$ in phase $j$.
This failure implies that the number of safe splitters with respect to 
$\strip{\cH}{\cB}$ is small compared to $\width(\cH)$. We cannot conclude that the number of safe
splitters with respect to $\strip{\cT}{\cB}$ is a small fraction of $\width(\cT)$, which is required
to bound $|U|$.
%

To overcome this problem, we need to take a more gloabal view.  Rather than
apply the dichotomy lemma to  each grid chain $\vec{\cD}(\cT)$, we   apply the dichotomy lemma 
to a single chain $\vec{\cD}$ that is obtained by piecing together all the grid chains $\vec{\cD}(\cT)$ (as defined in \Sec{stripnot}). 

\begin{proof} By \Prop{et5},
$$ \sum_{\cT \in \vec{\cT}} \errorterm_5(\cT) \leq 
\mu_t\cdot \sum_{\cT \in \vec{\cT}} \aloss_{t-1}(\vec{\cD}(\cT))
- (1-\mu_t)\cdot \sum_{\cT \in \vec{\cT}}\outside(\vec{\cD}(\cT)).$$

For each terminal box $\cT$, the grid $\Gamma(\cT)$ defines a strip decomposition $\vec{\cS}(\cT)$ of $\strip{\cT}{\cB}$ and $\vec{\cD}(\cT)$
is a chain compatible with $\vec{\cS}(\cT)$.  Let $\vec{\cS}$ denote the concatenation of $\vec{\cS}(\cT)$ (in left-to-right order), which
gives a strip decomposition of $\cB$, and let $\vec{\cD}$ denote the concatenation of $\vec{\cD}(\cT)$, which is a chain compatiable
with $\vec{\cS}$.  
We apply the dichotomy lemma to $\cB$ with strip decomposition $\vec{\cS}$ and get:
$$ (1-\mu_t)\cdot \sum_{\cT \in \vec{\cT}}\outside(\vec{\cD}(\cT)) \geq
\mu_t \cdot \sum_{\cT \in \vec{\cT}} |u(\cT)|,$$
%
%
%
where $u(\cT)$ is the set of $(\mu_t,\vec{\cD})$-unsafe indices $x$ satisfying $F(x) \in \vec{\cD}^{\cup} \cap \cT$. Thus,
$$ \sum_{\cT \in \vec{\cT}} \errorterm_5(\cT) \leq 
\mu_t\cdot \sum_{\cT \in \vec{\cT}} [\aloss_{t-1}(\vec{\cD}(\cT))
- |u(\cT)|].$$

%

Partition $\vec{\cT}$ into $\vec{\cT}^{narrow}$ and $\vec{\cT}^{wide}$ depending on
whether $\width(\cT) \leq 1/\gridprec$ or $\width(\cT) > 1/\gridprec$.
For a summand $\cT \in \vec{\cT}^{narrow}$  we have (as in the proof of \Clm{et5})
$\aloss_{t-1}(\vec{\cD}(\cT)) = 0$
and the summand is nonpositive.

For $\cT \in \vec{\cT}^{wide}$, each summand is (trivially) at most  
$|\vec{\cD}^{\cup} \cap \cT| - |u(\cT)| = |s(\cT)|$,
where $s(\cT)$ is the set of $(\mu_t,\vec{\cD})$-\emph{safe} indices $x$ satisfying $F(x) \in \vec{\cD}^{\cup} \cap \cT$.
Hence, it suffices
to bound $\sum_{\cT \in \vec{\cT}^{wide}}|s(\cT)|$.
In the proof of \Clm{et5}, we used the fact that $\findsplitter$ failed on $\cT$.   
For the improved algorithm, we will argue that $\findsplitter$ failed on an ancestor $\cH$ of $\cT$ in
the $t$-splitter tree and that every point of $s(\cT)$ is safe for $\strip{\cH}{\cB}$.

Recall that a box $\cH$  is a multiphase box if the path of boxes
in the $t$-execution tree corresponding to $\cH$ has more than one box.
\begin{prop}
\label{prop:bound-s}
For any wide terminal box $\cT$, there is a multiphase box $\cH$ in the $t$-splitter tree such that for $j=\phase(\cH)$, 
$(\cH,j-1)$ is an unsplittable pair along the $t$-execution tree path of $\cT$
and every index $s \in s(\cT)$ is $(\mu_t,\gamma_{j-1}\width(\cT))$-safe for $\strip{\cH}{\cB}$.
Furthermore if $s$ is nondegenerate (with respect to $\cT$) then it is also non-degenerate with respect to $\strip{\cH}{\cB}$.
\end{prop}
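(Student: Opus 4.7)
The plan is to take $\cH$ to be precisely the multiphase box produced by Proposition~\ref{prop:wide}. That proposition gives $\cH$ on the $t$-splitter path from $\cB$ to $\cT$ such that, setting $j := \phase(\cH)$, we have $\width(\cT) \leq \gamma_{j-1}\width(\cH)/\gridprec$. Because $\cH$ is multiphase with highest phase $j$, the pair $(\cH, j-1)$ is unsplittable in the $t$-execution tree and sits on the root-to-leaf path ending at the leaf for $\cT$ (this is how phase transitions are recorded in $\terminalbox$). This verifies the structural part of the conclusion about $\cH$.

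Next I fix any $s \in s(\cT)$. By the definition of $s(\cT)$, $F(s) \in \vec{\cD}^{\cup} \cap \cT$, and in particular $s \in X(\cT) \subseteq X(\cH)$, so $s$ is a genuine splitter for $\strip{\cH}{\cB}$. For the safety condition, recall that the dichotomy-lemma setup applied to $\cB$ with the strip decomposition $\vec{\cS}$ (the concatenation of the grid strip decompositions of the terminal boxes) gives that $s$ is a $(\mu_t, s(s))$-safe splitter for $\cB$, where $s(s)$ is the width of the $\vec{\cS}$-strip containing $s$. Since $s \in X(\cT)$ and $\cT$ is wide, that strip is one of the grid strips of $\Gamma(\cT)$, so $s(s) \leq \gridprec\width(\cT)$.

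Now I transfer the safety to $\strip{\cH}{\cB}$. The key observation is that $\strip{\cH}{\cB}$ is itself a $\cB$-strip, so every $\strip{\cH}{\cB}$-strip has the full vertical extent of $\cB$ and is already a $\cB$-strip. Therefore every $\strip{\cH}{\cB}$-strip adjacent to $s$ is a $\cB$-strip adjacent to $s$, and inherits the acceptance bound $Z(s, \cS) \leq s(s) \leq \gridprec\width(\cT) \leq \gamma_{j-1}\width(\cH)$, where the last step is exactly Proposition~\ref{prop:wide}. Since $\width(\cT) \leq \width(\cH)$, this yields the required safety bound of the proposition. For the nondegeneracy clause, if $s$ is nondegenerate with respect to $\cT$ then $s \neq x_R(\cT)$; combining with $x_R(\cT) \leq x_R(\cH) = x_R(\strip{\cH}{\cB})$ (which follows from $\cT \subseteq \cH$) gives $s \neq x_R(\strip{\cH}{\cB})$, so $s$ is nondegenerate for $\strip{\cH}{\cB}$.

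The only non-routine step is the safety conversion: a fine per-column strip-width bound of $\gridprec\width(\cT)$ coming from the grid inside $\cT$ must be absorbed into the much coarser $\gamma_{j-1}\width(\cH)$ scale, which is the parameter at which $\findsplitter$ failed on $(\cH, j-1)$. Proposition~\ref{prop:wide} is engineered to be exactly the quantitative bridge that makes this conversion possible; everything else in the proof is definitional bookkeeping about the families of $\cB$-strips vs.\ $\strip{\cH}{\cB}$-strips and the containment $\cT \subseteq \cH$.
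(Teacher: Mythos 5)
Your proof follows the same route as the paper's and is essentially correct: you pick $\cH$ from Proposition~\ref{prop:wide}, transfer safety from $\cB$ down to the substrip $\strip{\cH}{\cB}$ by observing that every $\strip{\cH}{\cB}$-strip adjacent to $s$ is also a $\cB$-strip adjacent to $s$, and then apply the quantitative bound $\gridprec\width(\cT) \leq \gamma_{j-1}\width(\cH)$ from Proposition~\ref{prop:wide}.

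However, your closing clause ``since $\width(\cT) \leq \width(\cH)$, this yields the required safety bound'' is a non sequitur: $(\mu,L)$-safety only propagates upward to $(\mu,L')$-safety for $L' \geq L$, and $\gamma_{j-1}\width(\cT) \leq \gamma_{j-1}\width(\cH)$, so the implication runs the wrong way. Indeed the bound as literally stated in the proposition cannot hold in general (already for $j=1$ one has $\gamma_0 = \gridprec/(\log n)^4 < \gridprec$, while a grid strip of $\Gamma(\cT)$ can be as wide as $\gridprec\width(\cT)$, which exceeds $\gamma_0\width(\cT)$). This is a typo in the paper's own statement; the safety parameter should read $\gamma_{j-1}\width(\cH)$, which is exactly what the paper's proof establishes via $\width(\cD) \leq \gridprec\width(\cT) \leq \gamma_{j-1}\width(\cH)$, and is exactly what \Cor{fails} needs downstream, since the failed call on the unsplittable pair $(\cH,j-1)$ was $\findsplitter(\cH,\cB,\mu_t,\gamma_{j-1}\width(\cH),\rho_{j-1})$. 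So if you simply delete the appeal to $\width(\cT)\leq\width(\cH)$ and record that what you proved is $(\mu_t,\gamma_{j-1}\width(\cH))$-safety, your argument is precisely the paper's.
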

\begin{proof}
By \Prop{wide} there is an ancestor $\cH$ in the $t$-splitter tree such that $\gridprec \width(\cT) \leq \gamma_{j-1}\width(\cH)$.
Let $x \in s(\cT)$,  and let $\cD$ be $\vec{\cD}[x]$ (the box of $\vec{\cD}$ with $x \in X(\cD)$). 
We claim that $x$ is
$(\mu_t,\gamma_{j-1}\width(\cT))$-safe  for $\strip{\cH}{\cB}$.  We are given that $x$ is
$(\mu_t,\vec{\cD})$-safe, which means that it is $(\mu_t,\width(\cD))$-safe for $\cB$ and is
therefore $(\mu_t,\width(\cD))$-safe for the substrip $\strip{\cH}{\cB}$ of $\cB$.
Since $\width(\cD) \leq \gridprec \width(\cT) \leq \gamma_{j-1}\width(\cH)$, we have the desired conclusion.

If $s$ is nondegenerate with respect to $\cT$ then it is not equal to $x_R(\cT)$.  Since $\cT$ is a subbox of $\cH$,
$s \neq x_R(\cH)$.
\end{proof}
Since each terminal box $\cT$ has at most one degenerate splitter, we can bound the number of degenerate
splitters in a wide terminal box by $s(\cT) \leq \gridprec \width(\cT)$ (since any wide
box has width at least $\omega = 1/\gridprec$). The sum over all wide $\cT$ yields a bound of $\gridprec \width(\cB)$.

To account for the non-degenerate splitters, \Prop{bound-s} allows us to sum over all safe points in multiphase boxes.
For multiphase box $\cH$, let $s'(\cH)$ be the set of nondegenerate indices that are $(\mu_t,\gamma_{\phase(\cH)-1})$-safe for $\strip{\cH}{\cB}$.
Let $\vec{\cH}^j$ denote the sequence of multiphase boxes with $\phase(\cH)=j$ and $\width(\cH) > \omega$. 
Combining with the simple bound for degenerate splitters,
$$ \sum_{\cT \in \vec{\cT}^{wide}} s(\cT) \leq \gridprec \width(\cB) +\sum_{j} \sum_{\cH \in \vec{\cH}^j} |s'(\cH)| $$

Since $(\cH,\phase(\cH)-1)$ is unsplittable, by Assumption 1 and \Cor{fails}, $|s'(\cH)| \leq 3\rho_{\phase(\cH)-1}\width(\cH)$.

By \Prop{phase-box}, all boxes in $\vec{\cH}^j$ have disjoint index intervals, so $\sum_{\cH \in \vec{\cH}^j} \leq \width(\cB)$.
We put it all together and note that $\rho_j$ is a geometric progression. We use $j_{\max}$ to denote the largest
possible value of $\phase(\cH)$ and bound $\rho_{j_{\max}} \leq 2\gridprec^{1/4}$ (\Prop{phase}).
\begin{eqnarray*}
\sum_{\cT \in \vec{\cT}^{wide}} s(\cT) & \leq & \gridprec \width(\cB)+ \sum_{j} \sum_{\cH \in \vec{\cH}^j} 3\rho_{j-1}\width(\cH)\\
& \leq & \gridprec \width(\cB)+ \sum_{j} 3\rho_{j-1} \width(\cB)\\
& \leq &  \gridprec \width(\cB) + 6 \rho_{j_{\max}-1}  \width(\cB) \\
& \leq & (\gridprec + 3 \rho_{j_{\max}}) \width(\cB)\\ 
& \leq &  7\gridprec^{1/4}\width(\cB) \leq \width(\cB)/5\errorcont,
\end{eqnarray*}
useing the assumption that $C_2$ is sufficiently large.
\end{proof}  

This completes the proof of the approximation error bound for the improved algorithm.

\section{Acknowledgements} The second author would like to thank Robi Krauthgamer and David Woodruff
for useful discussions.

\bibliographystyle{alpha}
\bibliography{lis}

\end{document}